\documentclass[12pt,letterpaper]{article}

\usepackage[T1]{fontenc}
\usepackage{lmodern}
\usepackage[utf8]{inputenc}
\usepackage{geometry}
\usepackage{ragged2e}
\usepackage{nicefrac}
\usepackage{booktabs}
\usepackage{amsfonts}
\usepackage{dsfont}
\usepackage{cancel}
\usepackage{mathtools}
\usepackage{datetime}
\usepackage{bm}
\usepackage{paralist}
\usepackage{changepage}
\usepackage{amsthm}
\usepackage{verbatim}
\usepackage{amsmath,amssymb}
\usepackage{graphicx}
\usepackage{emptypage}
\usepackage{mathabx}
\usepackage{newlfont}
\usepackage{tikz-cd}
\usepackage{tikz}
\usepackage[all,cmtip]{xy}
\usepackage[bottom]{footmisc}
\usepackage[titletoc,title]{appendix}
\usepackage{chngcntr}
\usepackage{apptools}
\usepackage{listings}
\usepackage{color}
\usepackage{sgame, tikz} 

\usepackage{kpfonts}  
\usepackage{natbib}
\AtAppendix{\counterwithin{lem}{section}}
\usepackage{caption}
\usepackage{subcaption}
\usepackage{float}
\usepackage{thmtools,thm-restate}
\usepackage{epigraph}
\usepackage{xr-hyper}
\usepackage[colorlinks=true,allcolors=blue]{hyperref}

\makeatletter
\renewcommand\@makefnmark{\hbox{\@textsuperscript{\normalfont\color{black}\@thefnmark}}}
\makeatother
\DeclareMathOperator*{\argmax}{\arg\!\max}
\DeclareMathOperator*{\argmin}{\arg\!\min}
\newcommand*\supp{\mathrm{supp}}
\usepackage{wasysym}
\hypersetup{colorlinks,linkcolor=blue,allcolors=blue}
\usepackage{sgamevar}
\hypersetup{urlcolor=blue}

\theoremstyle{plain} 
\newtheorem{cor}{Corollary} 
\newtheorem{prop}{Proposition}
 
\newtheorem{theorem}{Theorem}
\newtheorem{lemma}{Lemma}
\theoremstyle{definition}

\allowdisplaybreaks
\makeatletter
\renewenvironment{proof}[1][\proofname]{
  \par\pushQED{\qed}\normalfont
  \topsep6\p@\@plus6\p@\relax
  \trivlist\item[\hskip\labelsep\bfseries#1\@addpunct{.}]
  \ignorespaces
}{
  \popQED\endtrivlist\@endpefalse
}
\makeatother
\theoremstyle{remark} 
\newtheorem{rmk}{Remark}

\DeclareMathOperator{\E}{\mathds{E}}
\renewcommand{\P}{\mathds{P}}
\newcommand{\Pc}{\mathcal{P}}

\newcommand{\C}{\mathcal{C}}

\newcommand{\R}{\mathds{R}}
\newcommand{\Q}{\mathds{Q}}
\newcommand{\cQ}{\mathcal{Q}}
\newcommand{\N}{\mathds{N}}

\newcommand{\K}{\mathcal{K}}

\newcommand{\Exp}{\mathrm{exp}}

\newcommand{\ub}{\overline}

\newcommand{\ind}{\mathds{1}}

\newcommand{\Ext}{\mathrm{ext}}

\renewcommand{\epsilon}{\varepsilon}
\renewcommand*\d{\mathop{}\!\mathrm{d}}

\usepackage{setspace}

\theoremstyle{definition}

\reversemarginpar

\newdate{draftdate}{02}{10}{2023}
\usdate

\usepackage[nameinlink]{cleveref}
\crefname{manualasm}{assumption}{assumptions}
\crefname{taggedtheoremx}{theorem}{theorems}
\crefname{taggedcorollaryx}{corollary}{corollaries}
\crefname{cor}{corollary}{corollaries}
\crefalias{prop}{proposition}
\crefname{claim}{claim}{claims}
\crefname{ex}{example}{examples}
\crefname{defn}{definition}{definitions}
\crefname{rmk}{remark}{remarks}
\crefname{alg}{algorithm}{algorithms}
\crefname{lem}{lemma}{lemma}
\usepackage{xparse}
\makeatletter
\newlength{\eqtaghshift}
\newlength{\eqtagvshift}
\NewDocumentCommand{\eqtag}{ O{\eqtaghshift} O{\eqtagvshift} m m }{
  \hypertarget{#4}{}
  \vadjust{
    \nobreak
    \smash{
      \raisebox{#2}{
        \hbox to 0pt{
          \hspace*{#1}
          \hbox to \textwidth{\hfill \textnormal{(#3)}}
          \hss
        }
      }
    }
  }
}

\makeatother

\DeclareRobustCommand{\weqref}[2]{
  \leavevmode\nobreak\hbox{\hyperlink{#2}{\textup{(#1)}}}
}

\begin{document}
\title{\textbf{Stochastic Optimization and Coupling}\thanks{We thank Ryota Iijima, Ravi Jagadeesan, Jesse Shapiro, Ludvig Sinander, Andrzej Skrzypacz, and Tomasz Strzalecki for helpful comments and suggestions.}
}
\author{Frank Yang\thanks{Department of Economics, Harvard University. Email: fyang@fas.harvard.edu.} \and Kai Hao Yang\thanks{School of Management, Yale University. Email: kaihao.yang@yale.edu.}}
\date{\today
}
\maketitle
\begin{abstract}
We study optimization problems in which a linear functional is maximized over probability measures that are dominated by a given measure according to an integral stochastic order in an arbitrary dimension. We show that the following four properties are equivalent for any such order: (i) the test function cone is closed under pointwise minimum, (ii) the value function is affine, (iii) the solution correspondence has a convex graph with decomposable extreme points, and (iv) every ordered pair of measures admits an order-preserving coupling. As corollaries, we derive the extreme and exposed point properties involving integral stochastic orders such as multidimensional mean-preserving spreads and stochastic dominance. Applying these results, we generalize Blackwell's theorem by completely characterizing the comparisons of experiments that admit two equivalent descriptions---through instrumental values and through information technologies. We also show that these results immediately yield new insights into information design, mechanism design, and decision theory.
\\

\noindent\textbf{Keywords:} Stochastic orders, extreme points, Blackwell order, Strassen's theorem, comparisons of experiments, Bayes' rule, information design, mechanism design.

\end{abstract}
\setcounter{page}{1}
\newpage

\addtocontents{toc}{\protect\setcounter{tocdepth}{2}} 
\newpage

\section{Introduction}

Many economic problems involve comparing or choosing probability measures. A celebrated example is the Blackwell order, which compares the distributions of posterior beliefs to understand which posterior distributions result from a more informative experiment that uniformly improves the payoff of a decision maker for all decision problems. The Blackwell order is an example of \textit{\textbf{integral stochastic orders}} where the comparisons are defined with respect to a set of test functions, which in Blackwell's case is the set of convex functions representing the indirect utility functions from decision problems.

We study abstract optimization problems involving probability measures that are dominated by a given measure according to an integral stochastic order. Formally, we aim to understand the properties of 
\[\max_{\nu \preceq_\C \mu} \int f(x) \nu(\d x)\,,\]
where $\mu$ is a given measure on some abstract set $X$, $\C$ is a cone of test functions that defines an integral stochastic order, and $\nu$ is the endogenous measure being chosen. We obtain new structural properties of this optimization problem. As corollaries, we derive the extreme and exposed point properties involving integral stochastic orders such as multidimensional mean-preserving spreads and stochastic dominance. Applying these results, we generalize Blackwell's theorem by completely characterizing the comparisons of experiments that admit two equivalent descriptions---through instrumental values and through information technologies. We also show that these results immediately yield new insights into information design, mechanism design, and decision theory.

\paragraph{Abstract Results.}\hspace{-2mm}Before discussing the economic applications, we start by describing our abstract results. Let $V^\star_f(\mu)$ denote the \textit{\textbf{value function}} of the optimization problem and $X^\star_f(\mu)$ denote its \textit{\textbf{solution correspondence}}. It is easy to see that $V^\star_f(\mu)$ is a concave functional and $X^\star_f(\mu)$ is convex-valued, but in general both objects can be very complex. Our results pertain to necessary and sufficient conditions on the stochastic order under which the optimization problem drastically simplifies and the stochastic comparison itself yields a tractable structure. 

We say that the cone of test functions $\C$ is \textit{\textbf{min-closed}} if it is closed under pointwise minimum, i.e., $\min\{g_1, g_2\} \in \C$ for $g_1, g_2 \in \C$. We say that the stochastic order $\preceq_\C$ admits \textit{\textbf{order-preserving couplings}} if for any $\nu \preceq_\C \mu$ there exists a Markov kernel $P$ such that $\nu = P * \mu$ and $P * \delta_x \preceq_\C \delta_x$ for all $x$, i.e., there exists a Strassen-type coupling.\footnote{As we explicitly define later, we use the notation $P*\mu$ to denote the measure obtained by applying the transition kernel $P$ to $\mu$. That is, $P*\mu(\d y):=\int P(\d y \mid x)\mu(\d x)$.}

Our key structural result is an equivalence theorem (\Cref{thm:main}) that establishes the following four-way equivalence: 
\begin{itemize}
    \item[(a)] $\C$ is min-closed.
    \item[(b)] $V^\star_f(\mu)$ is affine for all $f$. 
    \item[(c)] $\preceq_\C$ admits order-preserving couplings. 
    \item[(d)] $X^\star_f(\mu)$ has a convex graph whose extreme points are ``decomposable'' for all $f$.
\end{itemize}
By ``decomposable'' extreme points, we mean that every extreme point $(\mu, \nu)$ of the graph of $X^\star_f$ is one where $\mu$ is an extreme point of its convex domain, and $\nu$ is an extreme point of $X^\star_f(\mu)$ given $\mu$---we refer to such a graph as a \textit{\textbf{trapezoid graph}}, as illustrated by \Cref{fig:trapezoid}. 

\begin{figure}
\centering
\tikzset{
solid node/.style={circle,draw,inner sep=1.25,fill=black},
hollow node/.style={circle,draw,inner sep=1.25}
}
\begin{tikzpicture}[scale=8]
\fill [blue!30!white] (0,0.2)--(0.5,0.1)--(0.5,0.4)--(0,0.3); 
\draw [<->, ultra thick] (0,0.6) node (yaxis) [above] {\footnotesize$X^\star_f(\mu)$}
        |- (0.6,0) node (xaxis) [right] {\footnotesize$\mu$};
\draw [very thick] (0,0.2)--(0.5,0.1);
\draw [very thick] (0,0.3)--(0.5,0.4);
\draw (0,0) node [below=2pt] {\footnotesize $0$}; 
\draw (0.5,0) node [below=2pt] {\footnotesize $1$}; 
\draw (0,0.3) node [solid node] {};
\draw (0,0.2) node [solid node] {};
\draw (0.5,0.1) node [solid node] {};
\draw (0.5,0.4) node [solid node] {};
\draw [very thick] (0.5,0.1)--(0.5,0.4);
\end{tikzpicture}
\caption{Trapezoid Graph. $X=\{0,1\}$.} 
\medskip
    \small
    \justifying
    \noindent 
    The colored area represents the graph of a solution correspondence $X^\star_f$. The extreme points of this set are pairs $(\mu,\nu)$, where $\mu$ is an extreme point of $\Delta(X)$ and $\nu$ is an extreme point of $X^\star_f(\mu)$.
\label{fig:trapezoid}
\end{figure}
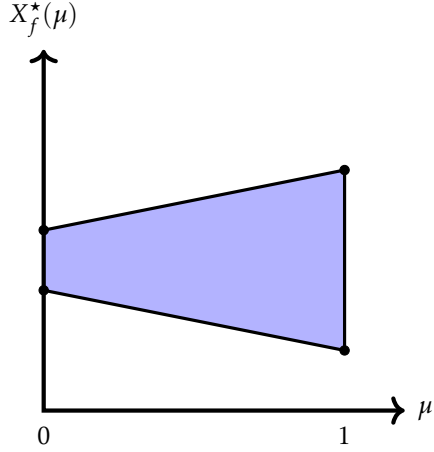

The equivalence theorem immediately yields two sets of applications. The first set of applications derives properties of stochastic orders defined by min-closed test functions, and then derives consequences about the value function and the solution correspondence---in particular, the result shows that (i) we can solve such problems in a pointwise fashion using an appropriate $\C$-envelope of $f$ (\Cref{cor:pointwise}), (ii) the extreme points of a chain of ordered measures decompose (\Cref{cor:chain}), (iii) every nested optimization problem where the second mover chooses a dominated measure according to the order admits a decomposable extreme-point characterization, and when the first mover's objective is linear, it reduces to a linear optimization problem (\Cref{cor:nested-optimization}), and (iv) every extreme and exposed point of the set of measures dominated by a given measure (``orbits'') can be characterized in a pointwise fashion by a unique order-preserving coupling (\Cref{prop:exposed}). 

To illustrate these results, as examples, we consider stochastic orders defined by multidimensional mean-preserving spreads and stochastic dominance. In particular, since concave functions are closed under pointwise minimum, we immediately derive the exposed point structure of the $\mathrm{MPS}(\mu)$ orbit given a fixed measure $\mu$ (\Cref{prop:mps}), which nests the one-dimensional case studied in \citet*{kleiner2021extreme} as a special case. Moreover, since convex functions are \textit{not} closed under pointwise minimum, by \Cref{thm:main}, we also show that the $\mathrm{MPC}(\mu)$ orbit (mean-preserving contraction) cannot admit a structurally similar characterization, which provides a precise explanation about the difference between these two orbits observed by \citet*{kleiner2021extreme}. In contrast, the stochastic orbit defined by stochastic dominance upper bound $\mathrm{LSD}(\mu)$ and that defined by stochastic dominance lower bound $\mathrm{HSD}(\mu)$ behave completely symmetrically because both nondecreasing functions and nonincreasing functions are cones that satisfy the min-closure property---in particular, just like $\mathrm{MPS}(\mu)$, our structural results immediately yield a characterization of the exposed points for these two orbits (\Cref{prop:fosd}), which nests the one-dimensional case studied in \citet{yang2024monotone}. 

Since \Cref{thm:main} is an equivalence theorem, besides applying it for optimization problems, powerful applications of the result are axiomatic in nature---our second set of applications takes a structural property of stochastic orders as given and characterizes the set of stochastic orders satisfying the structural property. In particular, the $(a) \implies (c)$ direction of \Cref{thm:main} is an immediate consequence of the well-known Strassen's theorem (\citealt{Strassen1965}).  \Cref{thm:main} shows that Strassen's theorem has a converse---the only way to admit order-preserving couplings in the Strassen sense is to have the min-closure property for the test functions, which is also equivalent to the value function being affine in the stochastic optimization problem. Given the wide application of Strassen's theorem in various settings, these equivalence results are extremely helpful for delineating the limits of possible stochastic comparisons that satisfy desirable properties such as the Blackwell order. To further illustrate, we now discuss in detail how we apply the structural result to fully characterize comparisons of experiments that generalize Blackwell's theorem by admitting two equivalent descriptions in the Blackwell sense.

\paragraph{Consistent Comparisons of Experiments.}\hspace{-2mm}We study the comparisons of posterior belief distributions. Let $X=\Delta(\Omega)$ denote the simplex of beliefs where $\Omega$ is some state space. The Blackwell order $\preceq_B$ is an order over the distributions of posterior beliefs $\mu \in \Delta(X)$. The celebrated Blackwell theorem (\citealt{Blackwell1951Comparison}) shows that the Blackwell order has two equivalent descriptions: 
\begin{itemize}
    \item[(i)] A posterior distribution higher in the Blackwell order leads to a higher expected value for any convex indirect utility function. \qquad \qquad  \qquad   \quad     \textbf{(Value Description)}
    \item[(ii)] A posterior distribution higher in the Blackwell order can be obtained from the other distribution by a martingale transition kernel. \quad        \textbf{(Information Description)}
\end{itemize}
Thus, the Blackwell order $\preceq_B$ on $\Delta(X)$ can be thought of as either comparing belief distributions according to the instrumental value they generate for solving a class of problems, or comparing belief distributions according to whether one can be obtained from the other via additional information. 

We characterize the set of all orders $\preceq$ on $\Delta(X)$ that admit such a dual representation in the Blackwell sense. Consider any cone of indirect utility functions $\C$, which may or may not be a subset of convex functions, representing the payoffs obtained given posterior beliefs.\footnote{For example, a subset of convex functions can be interpreted as a subset of decision problems, and a superset of convex functions can be interpreted as including additional \textit{persuasion} problems.} Any such $\C$ via the (\textbf{Value Description}) defines a comparison $\preceq_\C$ on the belief distributions. Now, consider any set of belief transition kernels $\mathcal{P}$, which may or may not be a subset of martingales, representing the feasible belief transitions upon observing new information. Any such $\Pc$ via the (\textbf{Information Description}) defines a comparison $\preceq^{\Pc}$ on the belief distributions as well. 

We say that an order $\preceq$ comparing distributions of beliefs is \textit{\textbf{Blackwell-consistent}} if there exist some $\C$ and some $\Pc$ such that $\preceq = \preceq_\C = \preceq^{\Pc}$, in which case we call the dual representation $(\C, \Pc)$ a \textit{\textbf{consistent pair}}. A Blackwell-consistent order has the desirable property that any description of information (through $\Pc$) has a dual representation in terms of its instrumental values (through $\C$). Applying \Cref{thm:main}, our second main result (\Cref{thm:main2} and \Cref{thm:main-compare}) characterizes all Blackwell-consistent orders---it turns out that the following are equivalent: 
\begin{itemize}
    \item[(a)] $\preceq$ is Blackwell-consistent.
    \item[(b)] $\preceq = \preceq_\C$ for some $\C$ that is closed under pointwise maximum.  
    \item[(c)] $\preceq = \preceq^\Pc$ for some $\Pc$ that is closed under kernel composition.
\end{itemize}
Moreover, we also characterize the set of consistent pairs $(\C, \Pc)$. We say that a pair $(\C, \Pc)$ is \textit{\textbf{Blackwell-invariant}} if (i) $\C$ is the set of indirect utility functions for which the payoff weakly increases under any transition kernel from $\Pc$ starting from any prior, and (ii) $\Pc$ is the set of transition kernels under which the payoff weakly increases for any indirect utility function from $\C$ under any prior. \Cref{thm:main-compare} shows that $(\C, \Pc)$ is a consistent pair if and only if $(\C, \Pc)$ is Blackwell-invariant, i.e., they form a \textit{\textbf{fixed point}} when iterating on the \textbf{(Value Description)} and \textbf{(Information Description)}. As a consequence, we show that every Blackwell-consistent order admits a \textit{\textbf{unique}} pair of descriptions $(\C, \Pc)$, exactly pinned down by Blackwell invariance.

If the agent updates according to Bayes' rule, then the test functions must at least include all the affine functions (action-independent payoffs). Then, for a consistent comparison in our sense, the closure under pointwise maximum implies that the test functions must include all convex functions, and hence the order can only be a \textit{\textbf{strengthening}} of the Blackwell order---the Blackwell order is the weakest Bayes-plausible Blackwell-consistent order. Any strict weakening of the Blackwell order such as the Lehmann order (\citealt{Lehmann}) cannot be Blackwell-consistent while maintaining Bayes plausibility. This shows that the Blackwell order is, in fact,  necessary for consistent comparisons under Bayes' rule. A natural question is whether Bayes' rule itself is necessary to admit a consistent order if one wants to compare all experiments that are comparable in the Blackwell-garbling sense. We formulate this question among all prior-dependent systematic distortion updating rules and show that an updating rule can yield a consistent comparison in the Blackwell sense if and only if it is a \textit{\textbf{divisible updating rule}} (\citealt{cripps2018divisible}) which is exactly Bayes' rule under a homeomorphic transformation. Together, these results show that both Bayes' rule and the Blackwell order itself are almost necessary to admit two equivalent descriptions if we want to compare all Blackwell-comparable experiments. 

Whenever an order is consistent, by \Cref{thm:main}, it also implies the stochastic optimization problem yields a tractable structure. Exploiting this implication, we also show how our characterizations also deliver insights into constrained and non-Bayesian information design problems. Indeed, under Bayes' rule, a \textit{\textbf{constrained information design}} problem defines a subset of martingales $\mathcal{P}$ achievable with the constrained technology---if the subset $\mathcal{P}$ is composition-closed, then by \Cref{thm:main} and \Cref{thm:main2}, we immediately obtain an envelope characterization. The constrained envelope is defined by the set of test functions $\C$ that is consistent with $\Pc$ and is weakly below the concave envelope in the unconstrained case. Using the envelope characterization, we show that if constrained information design is beneficial whenever unconstrained information design is beneficial (i.e., the constraints are not too limiting), then the optimal constrained signal \textit{cannot} be less informative (in the Blackwell sense) than the optimal unconstrained signal, and is always \textit{more} informative (in the Blackwell sense) with two states. An example of a composition-closed technology is privacy-preserving signals (\citealt{strack2024privacy}), for which we immediately obtain a new optimality characterization.  In the case of \textit{\textbf{non-Bayesian information design}}, we show that for any prior-dependent systematic distortion updating rule (\citealt{de2022non}), dynamic information design in multiple stages, unlike the Bayesian case, always yields a strictly higher payoff under \textit{some} objective of the designer, unless the updating rule is a divisible updating rule in which case a one-shot signal is sufficient. 

\paragraph{Nested Optimization and Stackelberg Principals.}\hspace{-2mm}For any order $\preceq_\C$ where $\C$ satisfies our min-closure property, as a consequence of \Cref{thm:main}, even \textit{\textbf{nested optimization}} problems admit a simple structure due to the trapezoid graph property. In our final set of applications (\Cref{sec:stackelberg}), we consider a class of problems that we call \textit{\textbf{Stackelberg-principal}} problems where (i) the leading principal first selects a measure $\mu$ from a compact convex set $M$, and (ii) the second principal then selects a measure $\nu$ dominated by $\mu$ according to some order $\preceq_\C$. Whenever $\preceq_\C$ satisfies min-closure, every leader-optimal equilibrium of this game can be succinctly characterized, and there always exists one where $\mu^\star$ is an extreme point of $M$ and $\nu^\star$ is an extreme point of the orbit given $\mu^\star$. 

We then illustrate the consequences of this general characterization via a series of examples arising from information design, decision theory, and mechanism design. First, we consider the \textit{\textbf{sequential persuasion}} problem (\citealt{LiNorman2021SequentialPersuasion}) and show that there always exists a sequential extreme equilibrium where every sender chooses a belief distribution that is an extreme point of the mean-preserving spread of the belief distribution chosen by the previous sender. In such an equilibrium, unlike the silent equilibria in \citet{LiNorman2021SequentialPersuasion} and \citet{wu2023sequential}, every sender sends a signal, and along any history, the signals sent have the same structure \textit{as if} there were no following senders. Second, we consider the \textit{\textbf{robust persuasion}} problem (\citealt{DworczakPavan2022RobustPersuasion})---there the first principal is the sender and the second principal is the adversarial nature. We illustrate how to apply our results there and show connections to their separation theorem which characterizes the robust solutions that are best-case optimal among worst-case optimal ones. Third, we consider the model of \textit{\textbf{objective ambiguity aversion}} (\citealt{olszewski2007preferences,Ahn2008AmbiguityWithoutStateSpace}) where ambiguity aversion is modeled via preferences over menus/ambiguity sets of lotteries. We ask when an outside observer by observing choices over lotteries $\mu$ can distinguish between (i) an objectively ambiguity-averse DM who has some ambiguity set $A(\mu)$ and (ii) an expected utility DM who faces no ambiguity. We show that in the case where $A(\mu)$ is described by a stochastic orbit (i.e., $\{\nu: \nu \preceq_\C \mu\}$), the outside observer can distinguish between these two models essentially if and only if $\C$ is not min-closed. In particular, for ambiguity sets of the form where the actual lottery could be a mean-preserving contraction of the reference lottery $\mu$, we can distinguish these two models; however, for ambiguity sets taking the form of  mean-preserving spreads or stochastic dominance shifts, we cannot distinguish between these two models. Lastly, we consider the model of \textit{\textbf{optimal property right}} design (\citealt{DworczakMuir2024}). Here, the leading principal is a designer who posts a menu of lotteries over a good and associated prices, which become a buyer's outside options. The second principal is a seller who contracts with the buyer as in a standard mechanism design framework but must satisfy the type-dependent IR constraints resulting from the designer's initial menu. The designer and the seller have different preferences over the final allocations. However, the main result of \citet{DworczakMuir2024} shows that the optimal property design by the designer is a simple option-to-own menu that consists of one posted price, just like in standard monopoly pricing. \citet{DworczakMuir2024} show this by developing a generalized ironing procedure and observing that the ironing procedure is actually a linear operator of the outside option function, which then leads to a surprising extreme point structure. We show that this argument can be more broadly viewed as the trapezoid property of the nested optimization problem where the designer first chooses a distribution of posted prices, and then the seller chooses another distribution of posted prices subject to giving every type a weakly higher payoff, which defines an integral stochastic order.  

\paragraph{Related Literature.}\hspace{-2mm}This paper is related to several streams of literature. Our main abstract result is closely related to Strassen's theorem and convex duality. Strassen's theorem (\citealt[][]{Strassen1965}) shows that the concave order admits an order-preserving coupling. It is known that the concave order of Strassen's theorem can be further generalized to integral stochastic orders defined by min-closed test functions (\citealt{meyer1966}). Our abstract result expands this to a four-way equivalence. In particular, we show that the existence of order-preserving couplings also implies that the stochastic order must be defined by min-closed test functions, which is the converse to Strassen's theorem. Moreover, we show that these are also equivalent to having affine value functions and trapezoid solution graphs. In that regard, our results are also related to optimization problems with mean-preserving spreads constraints, where the values are characterized by concave envelopes (e.g., \citealt[][]{Kemperman1968,Myerson1981,AumannMaschler1995,Kamenica2011,BeiglbockNutz2014}). Our four-way equivalence explains precisely why some stochastic optimization problems are more tractable than others, due to the equivalence relation between the affine value property/trapezoid graph property and the min-closure property. 

The main characterization leads to an immediate characterization of extreme points of chains of measures ordered by a single integral stochastic order, which generalizes the characterization of \citet{Ciosmak2023} for the convex order. Moreover, we characterize extreme points of multidimensional mean-preserving spread and first-order stochastic dominance orbits, which extends the characterization of \citet*{kleiner2021extreme} and \citet{yang2024monotone} respectively.\footnote{See also \citet{nikzad2023constrained} and \citet{candogan2021optimal} for extreme points of mean-preserving contraction orbits with linear constraints; and \citet{augias2025economicsconvexfunctionintervals} for extreme points of mean-preserving spread intervals.} Our multidimensional MPS orbits characterization also complements \citet{kleiner2024extreme}, who characterize the exposed points of MPC orbits. A one-dimensional probability measure can be equivalently viewed as a monotone function via its CDF; however, this equivalence breaks down for multidimensional probability measures.\footnote{Indeed, recall that a bivariate CDF $F$ must also satisfy $F(x'_1,x'_2) - F(x_1,x'_2) - F(x'_1,x_2) + F(x_1,x_2) \ge 0$ for all $x_1 < x'_1$ and $x_2 < x'_2$.} \citet{yang2025multidimensional} characterize the extreme points of multidimensional monotone functions and show applications to mechanism and information design problems.

Our second main result contributes to the literature on comparisons of experiments pioneered by \citet{Blackwell1951Comparison}. We generalize Blackwell's theorem by completely characterizing all possible comparisons of experiments that admit two equivalent descriptions in the Blackwell sense. Our characterization establishes a five-way equivalence and identifies if-and-only-if conditions under which an integral stochastic order on the marginal distributions admits an equivalent coupling order. As we show, this simultaneously connects Strassen's theorem on the existence of order-preserving couplings, the Skorokhod-type embedding theorems of general Markov processes (\citealt{rost1971stopping}), and the characterizations of optimal values of stopping problems via excessive functions (\citealt{dynkin1963optimal}). Since we identify all orders of experiments that can be equivalently represented by the instrumental values of the experiments, our results are related to other orderings of experiments (e.g., \citealt[][]{Lehmann,chen2025experiments}), as well as attempts to compare experiments and persuasion problems with non-Bayesian updating rules and heterogeneous priors (e.g., \citealt[][]{alonso2016bayesian,de2022non,kobayashi2025dynamic,azrieli2025sequential}). We show that the only non-Bayesian updating rules consistent with Blackwell-style comparisons are exactly the divisible updating rules defined by \citet{cripps2018divisible}---which implies that dynamic information design yields a strict benefit under some objective of the designer if and only if the updating rule is not divisible.\footnote{Divisible updating rules are homeomorphic to Bayes' rule but they are \textit{not} Blackwell-monotone (\citealt{whitmeyerforthcomingBlackwell}). As we discuss, combining \citet{whitmeyerforthcomingBlackwell} and our result, it follows that the garbling-based comparison admits a dual representation if and only if the updating rule is divisible, and the class of dualizing functions is the class of decision problems if and only if it is Bayes' rule.} As we demonstrate, our characterization result also helps to identify constrained information design problems that are particularly tractable and admit a similar belief-based characterization as in the unconstrained problems. These problems include privacy-preserving persuasion (\citealt*{strack2024privacy,he2024private}), as well as sequential sampling and persuasion problems (e.g., \citealt*[][]{Wald,henry2019research,morris2019wald,ball2021experimental,ni2023sequential}). 

In terms of other economic applications, as we demonstrate, many mechanism and information design problems have a nested structure---which we term Stackelberg-principal problems---in which the trapezoid-graph property delivers immediate insights. This includes sequential persuasion with multiple senders (\citealt{LiNorman2021SequentialPersuasion}); robust persuasion with a worst-case Nature who supplies additional information (\citealt{DworczakPavan2022RobustPersuasion}); a particular form of objective ambiguity aversion where the ambiguity sets are stochastic orbits (\citealt{olszewski2007preferences,Ahn2008AmbiguityWithoutStateSpace}); and optimal property right design (\citealt{DworczakMuir2024}). 

The remainder of the paper proceeds as follows. \Cref{sec:model} introduces the notation and the primitives.  \Cref{sec:abstract} presents our main abstract result. \Cref{sec:compare} presents our main application to the comparisons of experiments and its implications for constrained and non-Bayesian information design. \Cref{sec:stackelberg} presents our additional economic applications with nested optimization. \Cref{sec:conclusion} concludes.

\section{Primitives}\label{sec:model}

\paragraph{Notation.}\hspace{-2mm}Let $X$ be a compact Polish space endowed with the Borel $\sigma$-algebra $\mathcal{B}(X)$. Let $C(X)$ denote the set of continuous functions on $X$, endowed with the sup-norm. Let $\Delta(X)$ be the set of probability measures on $X$, endowed with the weak-* topology. 

For any $\mu \in \Delta(X)$ and any measurable Markov kernel $P: X \rightarrow \Delta(X)$, denote by $P*\mu$ the induced measure on $X$. That is,  
\[
[P * \mu] (A):=\int_X P(A\mid y) \mu(\d y) \eqtag[-2.5em]{\textbf{Transition Kernel}}{eq:transition-kernel}
\]
for all $A \in \mathcal{B}(X)$. For any measurable function $g:X \to \R$, denote by $g*P$ the expected value of $g$ under $P$. That is, 
\[
[g*P](x):=\int_X g(y) P(\d y\mid x)\,, \eqtag[-2.5em]{\textbf{Conditional Expectation}}{eq:conditional-expectation}
\]
for all $x \in X$. For any two transition kernels $P_1,P_2$, denote by $P_1 \circ P_2$ the composition of $P_1$ and $P_2$. That is, 
\[
[P_1\circ P_2] (A\mid x):=\int_X P_1(A\mid y)P_2(\d y\mid x)\,, \eqtag[-2.5em]{\textbf{Composition}}{eq:composition}
\]
for all measurable $A \subseteq X$ and for all $x \in X$. For any $x \in X$, let $\delta_x$ be the Dirac measure on $x$. For any convex and compact subset $M$ of $\Delta(X)$, $\Ext(M)$ denotes the set of \textit{\textbf{extreme points}} of $M$, and $\Exp(M)$ denotes the set of \textit{\textbf{exposed points}} of $M$.

\paragraph{Integral Stochastic Orders.}\hspace{-2mm}Let $\C$ be a convex cone of bounded upper semicontinuous functions such that (i) $\C \cap C(X)$ is sup-norm closed, (ii) $\C$ contains all constants, (iii) $\C$ admits a subset of continuous functions that can approximate any function in $\C$ from above,\footnote{That is, for all $g \in \C$, there exists a decreasing sequence $\{g_n\} \subseteq \C \cap C(X)$ such that $\{g_n\}\downarrow g$ pointwise.} and (iv) $\C$ is closed under bounded decreasing pointwise limits.\footnote{That is, if $\{g_n\}\downarrow g$ pointwise and $\{g_n\} \subseteq \C$, then $g \in \C$.}  

We say that $\C$ is \textit{\textbf{min-closed}} if the functions in $\C$ are closed under pointwise minimum: 
\begin{equation*}\label{eq:min-closure}
\min\{g_1,g_2\} \in \C\, \, \,\, \text{ for all } g_1,g_2 \in \C\,. \eqtag[-2.5em][0.1\baselineskip]{\textbf{Min-Closure}}{eq:min-closure}
\end{equation*}
Examples of min-closed $\C$ include nondecreasing, nonincreasing, concave, nondecreasing concave, nonincreasing concave functions (when $X$ carries the relevant order/convex structure). Non-examples include convex functions (they are max-closed). 

For any bounded upper semicontinuous function $f$, define the $\C$-\textbf{\emph{envelope}} of $f$ as 
\[
\overline{f}(x):=\inf\Big\{g(x): g \in \C\,, g \geq f\Big\}\, \tag{\textbf{$\mathcal{C}$-Envelope}}
\]
for all $x \in X$.  It can be readily shown that $\overline{f} \in \C$ if $\C$ is min-closed (see \Cref{lem:envelope} in the appendix).

For any $\mu,\nu \in \Delta(X)$, we say that $\mu$ $\C$-\textbf{\emph{dominates}} $\nu$, denoted by 
\[
\nu \preceq_\C \mu\,,
\]
if for any $g \in \C$, 
\[
\int_X g(x) \nu(\d x) \leq \int_X g(x)\mu( \d x)\,.
\]
Note that any integral stochastic order $\preceq_\C$ is reflexive and transitive, but need not be antisymmetric. Thus, formally, $\preceq_\C$ is a preorder. Throughout the paper, we use the term order to mean a preorder.\footnote{One can also impose antisymmetry without affecting any of our results, in which case the term order means a partial order.} We say that $\preceq_{\C}$ admits \textit{\textbf{order-preserving couplings}} if for every $\nu \preceq_{\C} \mu$, there exists some $P$ such that $\nu = P * \mu$ and $P* \delta_x \preceq_\C \delta_x$ for all $x$. 

\paragraph{Stochastic Optimization.}\hspace{-2mm}Fix any bounded upper semicontinuous function $f: X \to \R$. Our main results study the solutions to the following stochastic optimization problem
\begin{equation*}
\sup_{\nu \in \Delta(X):\, \nu \preceq_\C \mu} \int_X f \d \nu\,. 
\eqtag[-2.5em]{\textbf{Optimization}}{eq:stochastic-optimization}
\end{equation*}
Since $\Delta(X)$ is compact and the orbit set $\{\nu: \nu \preceq_\C \mu\}$ is closed and hence compact (see \Cref{lem:compactK} in the appendix), \weqref{\textbf{Optimization}}{eq:stochastic-optimization} has a solution. 

For any $\mu \in \Delta(X)$, let 
\[
V^\star_f(\mu):=\max_{\nu \in \Delta(X):\, \nu \preceq_\C \mu} \int_X f \d \nu \eqtag[-2.5em]{\textbf{Value Function}}{eq:value}
\]
be the \textit{\textbf{value function}} of \weqref{\textbf{Optimization}}{eq:stochastic-optimization} and let 
\[
X^\star_f(\mu):=\argmax_{\nu \in \Delta(X):\, \nu \preceq_\C \mu} \int_X f \d \nu \eqtag[-2.5em]{\textbf{Solution Correspondence}}{eq:sol}
\]
be the \textit{\textbf{solution correspondence}} of \weqref{\textbf{Optimization}}{eq:stochastic-optimization}. By standard arguments and Berge’s maximum theorem, it is clear that $V^\star_f$ is concave and upper semicontinuous on $\Delta(X)$, and $X^\star_f$ is compact-valued and upper-hemicontinuous on $\Delta(X)$. 

For any closed and convex subset $M$ of $\Delta(X)$, let
\[
G^M_f:=\big\{(\mu,\nu): \mu \in M\,, \nu \in X^\star_f(\mu)\big\} \subseteq \Delta(X) \times \Delta(X) \eqtag[-2.5em][0.4\baselineskip]{\textbf{Solution Graph}}{eq:solgraph}
\]
be the \textit{\textbf{graph}} of the solution correspondence $X^\star_f$ given domain $M$. Note that since $X^\star_f$ is upper-hemicontinuous, $G_f^M$ is compact. 

We say that $X^\star_f$ has the
\textit{\textbf{trapezoid graph}} property if for any closed convex $M$, (i) its graph $G^M_f$ is convex, and (ii) the extreme points of the graph are decomposable:
\[\Ext\big(G^M_f\big)=\Big\{(\mu,\nu): \mu \in \Ext(M)\,, \nu \in \Ext(X^\star_f(\mu))\Big\} \,.\eqtag[-2.5em]{\textbf{Decomposability}}{eq:decomp}\]

\paragraph{Nested Stochastic Optimization.}\hspace{-2mm}In the above problem, we take the reference measure $\mu$ as given, but our results also pertain to a nested optimization problem where $\mu$ is chosen by a first mover. In particular, for any continuous quasi-convex functional $W: G^M_f  \to \R$, consider the problem of selecting a pair of measures from the graph $G^M_f$ to maximize $W$, 
\begin{equation*}
\text{ } \qquad \max_{(\mu,\nu) \in G^M_f} W(\mu,\nu)\,. \eqtag[-2.5em][0.9\baselineskip]{\textbf{Nested Optimization}}{eq:nested-optimization}
\end{equation*}
That is, the first mover has preferences over $(\mu, \nu)$ given by $W$; the second mover has a linear objective given by $f$, with indifferences broken in favor of the first mover.

\section{Abstract Results}\label{sec:abstract}

\begin{theorem}[Equivalence Theorem]\label{thm:main}
The following are equivalent: 
\begin{itemize}
    \item[(a)] $\C$ is min-closed.
    \item[(b)] $V^\star_f$ is affine for all $f$.
    \item[(c)] $\preceq_{\C}$ admits order-preserving couplings.
    \item[(d)] $X^\star_f$ has the trapezoid graph property for all $f$.
\end{itemize}
\end{theorem}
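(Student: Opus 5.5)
The plan is to prove the cycle (a)$\Rightarrow$(c)$\Rightarrow$(b)$\Rightarrow$(a), and then attach (d) by showing (b)$\wedge$(c)$\Rightarrow$(d)$\Rightarrow$(b). Since (a), (b), (c) will already be equivalent, this closes the loop.

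\textbf{(a)$\Rightarrow$(c).} This is Strassen's theorem in the generalized form of \citet{meyer1966} for min-closed cones of test functions, and I will invoke it directly.

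\textbf{(c)$\Rightarrow$(b).} Set $h(x):=V^\star_f(\delta_x)$.
\begin{itemize}
\item \emph{Upper bound.} For $\nu\preceq_\C\mu$ take an order-preserving coupling $P$ with $\nu=P*\mu$. Since $P*\delta_x\preceq_\C\delta_x$, we get $\int f\,\d\nu=\int [f*P]\,\d\mu\le\int h\,\d\mu$.
\item \emph{Lower bound.} The correspondence $x\mapsto X^\star_f(\delta_x)$ is compact-valued and upper hemicontinuous, so Kuratowski--Ryll-Nardzewski gives a measurable selection $x\mapsto P(\cdot\mid x)\in X^\star_f(\delta_x)$.
\item For every $g\in\C$ we have $g*P\le g$ pointwise, because $P(\cdot\mid x)\preceq_\C\delta_x$. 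Integrating against $\mu$ gives $P*\mu\preceq_\C\mu$.
\item Hence $V^\star_f(\mu)\ge\int f\,\d(P*\mu)=\int h\,\d\mu$.
\end{itemize}
Together, $V^\star_f(\mu)=\int h\,\d\mu$, which is affine in $\mu$.

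\textbf{(b)$\Rightarrow$(a).} Affinity of $V^\star_f$ means $V^\star_f(\mu)=\int h_f\,\d\mu$ with $h_f(x)=V^\star_f(\delta_x)$.
\begin{itemize}
\item Take $g_1,g_2\in\C\cap C(X)$ and $f=\min\{g_1,g_2\}$.
\item Trivially $h_f\ge f$, using $\nu=\delta_x$.
\item If $\nu\preceq_\C\delta_x$, then $\int f\,\d\nu\le\int g_i\,\d\nu\le g_i(x)$ for $i=1,2$, so $h_f\le f$. Thus $h_f=f$.
\item Consequently, for every $\nu\preceq_\C\mu$, $\int f\,\d\nu\le V^\star_f(\mu)=\int f\,\d\mu$. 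In words, $f$ respects the order.
\end{itemize}
It remains to show that any continuous function respecting $\preceq_\C$ lies in $\C$. This is a bipolar argument.
\begin{itemize}
\item $\C\cap C(X)$ is a sup-norm closed convex cone containing constants.
\item If $f\notin\C$, Hahn--Banach gives a signed measure $\sigma$ with $\int c\,\d\sigma\le 0$ for all $c\in\C\cap C(X)$ and $\int f\,\d\sigma>0$.
\item Because $\pm1\in\C$, $\sigma(X)=0$. Writing $\sigma=t(\nu-\mu)$ with $\nu,\mu\in\Delta(X)$ and $t>0$, we get $\nu\preceq_\C\mu$ (by approximating from above via (iii)) but $\int f\,\d\nu>\int f\,\d\mu$. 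This is a contradiction.
\item For general upper semicontinuous $g_1,g_2\in\C$, approximate each from above by decreasing continuous sequences $g_i^n$ as in (iii). Then $\min\{g_1^n,g_2^n\}\downarrow\min\{g_1,g_2\}$, and (iv) gives the claim.
\end{itemize}

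\textbf{(d)$\Rightarrow$(b).} Given $(\mu_i,\nu_i)\in G_f^M$ with $\nu_i\in X^\star_f(\mu_i)$, convexity of the graph says the mixture $(\lambda\mu_1+(1-\lambda)\mu_2,\ \lambda\nu_1+(1-\lambda)\nu_2)$ lies in $G_f^M$. So the mixed $\nu$ is optimal, and its value $\lambda V^\star_f(\mu_1)+(1-\lambda)V^\star_f(\mu_2)$ equals $V^\star_f$ of the mixture. Hence $V^\star_f$ is affine.

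\textbf{(b)$\wedge$(c)$\Rightarrow$(d).}
\begin{itemize}
\item \emph{Convexity of the graph.} Mixtures of feasible pairs remain feasible because $\preceq_\C$ is preserved under mixing, and affinity of $V^\star_f$ makes them optimal.
\item \emph{Extreme points are decomposable.} Let $(\mu,\nu)$ be extreme in $G_f^M$ and suppose $\mu=\tfrac12(\mu_1+\mu_2)$ with $\mu_1\neq\mu_2$ in $M$. Take an order-preserving coupling $P$ with $\nu=P*\mu$ and set $\nu_i=P*\mu_i$. Then $\nu_i\preceq_\C\mu_i$, and
\[
\int f\,\d\nu=\tfrac12\Big(\int f\,\d\nu_1+\int f\,\d\nu_2\Big)\le\tfrac12\big(V^\star_f(\mu_1)+V^\star_f(\mu_2)\big)=V^\star_f(\mu).
\]
Since $\nu$ is optimal, equality holds, so each $\nu_i$ is optimal. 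This exhibits $(\mu,\nu)$ as a midpoint of two distinct graph points, a contradiction. Hence $\mu\in\Ext(M)$. Given that, any decomposition of $(\mu,\nu)$ forces $\mu_i=\mu$, and then $\nu\in\Ext(X^\star_f(\mu))$.
\item \emph{Reverse inclusion.} The same observation applies: if $\mu\in\Ext(M)$ and $\nu\in\Ext(X^\star_f(\mu))$, any decomposition must have $\mu_i=\mu$ and then $\nu_i=\nu$, so $(\mu,\nu)$ is extreme in the graph.
\end{itemize}

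\textbf{Main obstacles.} I expect the main difficulty in (b)$\Rightarrow$(a) to be the separation step, specifically showing that the order determines $\C$ among upper semicontinuous functions using only properties (i)--(iv). In (c)$\Rightarrow$(b), the delicate point is the measurability of the optimal selection, together with checking $P*\mu\preceq_\C\mu$ for upper semicontinuous, rather than continuous, test functions. For the latter I would use monotone convergence along the continuous approximants from (iii).
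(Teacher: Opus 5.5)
Your proof is correct in outline but takes a different route from the paper's. The paper runs the cycle (a)$\Rightarrow$(b)$\Rightarrow$(c)$\Rightarrow$(d)$\Rightarrow$(a):
\begin{itemize}
\item (a)$\Rightarrow$(b) comes from Fenchel--Rockafellar duality, which identifies $V^\star_f(\mu)=\int\overline f\,\d\mu$ with the $\C$-envelope.
\item (b)$\Rightarrow$(c) first builds order-preserving kernels at exposed points of $\mathcal{K}_\mu$ by pointwise measurable selection. It then reaches all of $\mathcal{K}_\mu$ via Klee's Krein--Milman theorem for exposed points and a weak-* limit of couplings.
\end{itemize}
You instead take (a)$\Rightarrow$(c) from Strassen--Meyer as a black box. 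Your (c)$\Rightarrow$(b) and your (b)$\Rightarrow$(a) separation argument coincide in substance with the first step of the paper's (c)$\Rightarrow$(d) and with its (d)$\Rightarrow$(a).

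What your route buys: it is shorter, and your (c)$\Rightarrow$(b) yields the barycentric formula directly. Your midpoint inequality for decomposability is also a bit more elementary than the paper's argument, which uses $\mu$-a.e.\ pointwise optimality of $P$ and $\mu_i\ll\mu$.

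What the paper's route buys: it is self-contained and gives an independent proof of Strassen's theorem, which the authors emphasize. It also produces the explicit envelope formula, which your argument never delivers and on which \Cref{cor:pointwise} and several later arguments rely.

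To make your black-box step airtight, apply Meyer's theorem to $\C\cap C(X)$. This cone is min-closed and, by (iii), induces the same order as $\C$. If the theorem only gives $P*\delta_x\preceq_\C\delta_x$ for $\mu$-a.e.\ $x$, redefine $P(\,\cdot\,\mid x)=\delta_x$ on the exceptional null set.

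One step needs repair. In (b)$\Rightarrow$(a) you assert that affinity of $V^\star_f$ ``means'' $V^\star_f(\mu)=\int V^\star_f(\delta_x)\,\mu(\d x)$. Affinity alone gives this only for finitely supported $\mu$, but the measure $\mu$ produced by your separation argument is arbitrary. In general the implication is false: the total mass of the atomless part of $\mu$ is an affine functional on $\Delta([0,1])$ that vanishes at every Dirac measure yet equals $1$ at Lebesgue measure. You need two further facts:
\begin{itemize}
\item $V^\star_f$ is upper semicontinuous;
\item upper semicontinuous affine functions on $\Delta(X)$ satisfy the barycentric formula.
\end{itemize}
The paper invokes exactly this, citing Corollary~4.22 of \citet{lukevs2009integral}.
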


The proof of \Cref{thm:main} can be found in the appendix. Below, we briefly sketch the main steps of the proof.  We prove the equivalence through the order:

\[(a) \implies (b) \implies (c) \implies (d) \implies (a) \,.\]

To prove $(a) \implies (b)$, we use standard arguments in convex duality. In particular, we apply the Fenchel-Rockafellar duality theorem. However, if we were to assume that $f$ is continuous, that every $g \in \C$ is continuous, and that $\C$ is closed under the sup norm, then there is a more direct proof in the form of the minimax theorem. To see this, note that the dual cone of $\C$ is given by 
\[\C^* = \Big\{\eta \in M(X): \int g(x) \eta(\d x) \geq 0 \text{ for all $g \in \C$} \Big\}\,,\]
where $M(X)$ is the space of finite signed Borel measures (equipped with the weak-* topology). The primal problem \weqref{\textbf{Optimization}}{eq:stochastic-optimization} can then be written as: 
\[\sup_{\nu \in \Delta(X)} \int f(x) \nu (\d x) \text{ subject to }  \mu - \nu \in \C^*\,. \eqtag[-2.5em][0.8\baselineskip]{\textbf{Primal}}{eq:primal}\]
By the bipolar theorem, the dual cone of $\C^*$ is $\C$ itself, and hence we can identify the ``multiplier'' of the above constraint as $g \in \C$. Thus, the Lagrangian is 
\[\int f(x) \nu(\d x) + \int g(x) (\mu - \nu)(\d x)\,.\]
In particular, the value of the problem is equal to 
\[V^\star_f(\mu)=\sup_{\nu \in \Delta(X)} \inf_{g \in \C} \int f(x) \nu(\d x) + \int g(x) (\mu - \nu)(\d x)\,,\]
which, by Sion's minimax theorem, is equal to 
\[\inf_{g \in \C} \sup_{\nu \in \Delta(X)} \int f(x) \nu(\d x) + \int g(x) (\mu - \nu)(\d x)\,,\]
which is equivalent to
\[\inf_{g \in \C} \Big\{\int g(x) \mu(\d x) +   \max_{x \in X} [f(x) - g(x)]\Big\}\,.\]
Now, note that since $\C$ includes all constants, and the above value is invariant to shifting $g$ by a constant, the above problem can be written as 
\[\inf_{g \in \C} \int g(x) \mu(\d x) \text{ subject to } g(x)- f(x) \geq 0 \text{ for all $x$}\,. \eqtag[-2.5em][0.8\baselineskip]{\textbf{Dual}}{eq:dual}\]
This proves that there is no duality gap. Now, importantly, note that \weqref{\textbf{Min-Closure}}{eq:min-closure} of $\C$ implies that an optimal solution to the above dual problem is simply $g = \overline{f} \in \C$: Indeed, for any $g \in \C$ and $g \geq f$, by the definition of $\overline{f}$, we have that for every $x \in X$, 
\[\overline{f}(x) \leq g(x)\,,\]
and hence $\int \overline{f}(x) \mu(\d x) \leq \int g(x) \mu(\d x)$. This proves dual attainment, and hence strong duality. Importantly, the dual solution $\overline{f}$ is independent of $\mu$. It follows immediately that 
\[V^\star_f = \int \overline{f}(x) \mu(\d x)\,,\]
which is affine in $\mu$.

The proof of $(b) \implies (c)$ relies on the Krein-Milman theorem. Specifically, we first show that the orbit
\[
\mathcal{K}_\mu:=\{\nu: \nu \preceq_\C \mu\}
\]
is convex and compact. Then, we show that $(b)$ implies that every exposed point $\nu$ of $\mathcal{K}_\mu$ must be such that 
\[
\nu=P*\mu
\]
for a transition kernel $P$ with $P*\delta_x \preceq_\C \delta_x$ for all $x$. Indeed, fix an exposed point $\nu$ of $\mathcal{K}_\mu$ and a continuous function $f$ for which $\nu$ is the unique solution to \weqref{\textbf{Optimization}}{eq:stochastic-optimization}. Since $V^\star_f$ is affine, we have 
\[
\int f \d \nu=V^\star_f(\mu)=\int V^\star_f(\delta_x) \mu(\d x)=\int \left(\max_{\eta: \eta \preceq_\C \delta_x} \int f\d\eta\right)\mu(\d x)\,.
\]
That is, the value of the problem \weqref{\textbf{Optimization}}{eq:stochastic-optimization} is the same as the average value of a family of pointwise optimization problems 
\begin{equation*}
\max_{\eta: \eta \preceq_\C \delta_x} \int f \d \eta \eqtag[-2.5em]{\textbf{Pointwise Optimization}}{eq:pointwise}\,.
\end{equation*}
Using standard measurable selection arguments, we select a transition kernel $P$ where $P(\,\cdot\,\mid x)$ is a solution of the pointwise optimization problem for all $x \in X$. Then, $P*\mu$ must also solve \weqref{\textbf{Optimization}}{eq:stochastic-optimization}. But since $\nu$ is the unique solution by construction, we have $\nu=P*\mu$. With $(c)$ established for exposed points, we can then apply the Krein-Milman theorem for exposed points (\citealt{Kleejr1958}) to represent every element $\nu \in \mathcal{K}_\mu$ as the limit of a sequence $\{\nu_n\}$ of convex combinations of exposed points. For each element $\nu_n$ along the sequence, by linearity, there also exists a transition kernel $P_n$ with $P_n*\delta_x \preceq_\C \delta_x$ for all $x$ such that $\nu_n=P_n*\mu$. By appropriate approximation arguments, $(c)$ then follows after taking limits.

The proof of $(c) \implies (d)$ is based on first principles. It is clear that $(c)$ implies that $V_f^\star$ is affine, which in turn implies that $G_f^M$ has a convex graph. To see that $G_f^M$ has decomposable extreme points, note that for any extreme point $(\mu,\nu)$ of $G^M_f$, clearly $\nu$ is an extreme point of $X^\star_f(\mu)$. To see that $\mu$ must be an extreme point of $M$, suppose the contrary. Then, there exist distinct $\mu_1,\mu_2 \in M$ such that $\frac{1}{2}\mu_1+\frac{1}{2}\mu_2=\mu$. Since $\nu$ is in $X^\star_f(\mu)$, $(c)$ implies that there exists a transition kernel $P$ such that $\nu=P*\mu$, and that $P$ solves \weqref{\textbf{Pointwise Optimization}}{eq:pointwise} for $\mu$-almost every $x \in X$. Now, let $\nu_i:=P*\mu_i$ for $i\in \{1,2\}$. It follows that $\nu_i \preceq_\C \mu_i$ for all $i \in \{1,2\}$, and that---since $P(\,\cdot\,\mid x)$ is a solution of the pointwise maximization problem $\mu$-almost surely and hence $\mu_i$-almost surely as $\mu_i \ll \mu$---we have 
\[
\int f \d \nu_i=V_f^\star(\mu_i)
\]
for $i \in \{1,2\}$. This implies that $(\mu_i,\nu_i) \in G_f^M$. However,  $(\mu,\nu) = \frac{1}{2}(\mu_1,\nu_1)+\frac{1}{2}(\mu_2,\nu_2)$ is a mixture of two distinct elements in $G_f^M$, contradicting the extremality of $(\mu,\nu)$.

Lastly, the proof of $(d) \implies (a)$ uses a separating hyperplane argument. First, we show that $(d)$ immediately implies that $V^\star_f$ is affine for all upper semicontinuous $f$. Then, suppose that $\C$ is not min-closed. Then there exist $g_1,g_2 \in \C$ such that $k:=\min\{g_1,g_2\}$ is not in $\C$. Therefore, there exists a separating hyperplane that separates $k$ and $\C$. Together with the properties of $\C$, we then show that this separating hyperplane corresponds to two probability measures $\mu,\nu \in \Delta(X)$ such that $\nu \preceq_\C \mu$, and 
\[
\int k \d \nu>\int k \d \mu\,.
\]
Then, for any $x \in X$ and for any $\eta \preceq_\C \delta_x$, since $g_1,g_2 \in \C$, 
\[
\int k \d \eta \leq \min\left\{\int g_1 \d \eta, \int g_2 \d \eta\right\} \leq \min\{g_1(x),g_2(x)\}=k(x)\,.
\]
Therefore, $V^\star_k(\delta_x)=k(x)$. But then, since $\nu \preceq_\C \mu$ and $V^\star_k$ is affine, we have 
\[
\int k \d \nu \leq V^\star_k(\mu) =\int V_k^\star(\delta_x) \mu(\d x)=\int k \d \mu< \int k \d \nu\,,
\]
a contradiction. 

\begin{rmk}
It is noteworthy that throughout the proof, we do not use any existing results besides the duality theorem, the Krein-Milman theorem, and the separating hyperplane theorem. In particular, $(a) \implies (c)$ is in fact the celebrated Strassen theorem (\citealt{Strassen1965,meyer1966}). Our approach leads to an alternative proof using duality and the Krein-Milman theorem. More importantly, we also prove that Strassen's theorem has a converse---the existence of order-preserving couplings implies that the stochastic order must be defined by a min-closed test function cone. 
\end{rmk}
\begin{rmk}
 We state \Cref{thm:main} for the optimization problem where the constraint set is $\{\nu: \nu \preceq_\C \mu\}$. Of course, we can also flip the order by considering $\{\nu: \mu \preceq_\C \nu\}$, in which case all results in \Cref{thm:main} continue to hold with the corresponding directional changes: (i) min-closure replaced by max-closure, (ii) the definition of order-preserving coupling replaced by the opposite direction, and (iii) the $\C$-envelope of $f$ that characterizes the value function $V^\star_f$ replaced by the $[-\C]$-envelope of $f$. 
\end{rmk}

\subsection{Consequences under Min-Closure}

\Cref{thm:main} immediately implies a series of structural properties about optimization involving integral stochastic orders satisfying the min-closure property. 

\begin{cor}[Pointwise Optimization and Envelope]\label{cor:pointwise}
For any $\preceq_\C$ where $\C$ is min-closed, \[
V^\star_f(\mu)=\int_X V^\star_f(\delta_x)\mu(\d x) = \int_X \overline{f}(x) \mu(\d x)\,.
\]
\end{cor}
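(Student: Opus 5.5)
The corollary follows from the proof of $(a)\implies(b)$ in \Cref{thm:main}, combined with a direct evaluation of $V^\star_f$ at Dirac measures. I would establish the two equalities separately: first $V^\star_f(\mu)=\int \overline{f}\,\d\mu$, then $V^\star_f(\delta_x)=\overline{f}(x)$ for every $x$. The first equality in the statement then follows by substitution. Affineness of $V^\star_f$ from part (b) gives the first equality for finitely supported $\mu$. For general $\mu$, however, affineness alone does not integrate over $\mu$, so I prefer to go through the envelope directly.

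\emph{Step 1 (upper bound).} Since $\C$ is min-closed, \Cref{lem:envelope} gives $\overline{f}\in\C$, and by construction $\overline{f}\ge f$. Hence for every $\nu\preceq_\C\mu$,
\[
\int f\,\d\nu \le \int \overline{f}\,\d\nu \le \int \overline{f}\,\d\mu ,
\]
where the second inequality is the definition of $\nu\preceq_\C\mu$ applied to the test function $\overline{f}\in\C$. Therefore $V^\star_f(\mu)\le\int\overline{f}\,\d\mu$ for every $\mu$, including $\mu=\delta_x$, where it reads $V^\star_f(\delta_x)\le \overline{f}(x)$.

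\emph{Step 2 (lower bound, no duality gap).} The reverse inequality $V^\star_f(\mu)\ge\int\overline{f}\,\d\mu$ is the main content. I would invoke the strong duality established in the proof of $(a)\implies(b)$. That argument shows
\[
V^\star_f(\mu)=\inf\Big\{\int g\,\d\mu:\ g\in\C,\ g\ge f\Big\},
\]
and that this infimum is attained at $\overline{f}$, since $\overline{f}\le g$ pointwise for every feasible $g$. Hence $V^\star_f(\mu)=\int\overline{f}\,\d\mu$. Specializing to $\mu=\delta_x$ gives $V^\star_f(\delta_x)=\overline{f}(x)$, and integrating in $x$ against $\mu$ yields the first equality.

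The main obstacle is the absence of a duality gap for merely upper semicontinuous $f$ and $g$. The sketch in the text handles the continuous case via Sion's minimax theorem; the general case rests on the Fenchel--Rockafellar argument in the appendix, which I take as given from \Cref{thm:main}. If one wanted to avoid it, I would try approximating $f$ from above by continuous $f_n\downarrow f$. One then shows $\overline{f_n}\downarrow\overline{f}$ using property (iii) of $\C$ (approximation from above by continuous members) together with closure under decreasing limits. Finally one passes to the limit in $V^\star_{f_n}(\mu)$ by monotone convergence, using compactness of the orbit $\mathcal{K}_\mu$ (\Cref{lem:compactK}) and upper semicontinuity to exchange the limit and the max. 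A secondary, routine point is measurability of $x\mapsto\overline{f}(x)$, which holds because $\overline{f}\in\C$ is upper semicontinuous; this makes $\int V^\star_f(\delta_x)\,\mu(\d x)$ well defined.
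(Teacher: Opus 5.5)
Your proposal is correct and follows the paper's own argument: the upper bound $V^\star_f(\mu)\le\int\overline{f}\,\d\mu$ via \Cref{lem:envelope}, the reverse inequality from the Fenchel--Rockafellar duality in the proof of $(a)\implies(b)$ of \Cref{thm:main}, and then specializing to $\mu=\delta_x$ to get $V^\star_f(\delta_x)=\overline{f}(x)$. One small precision: the paper states that duality with the infimum over $\C\cap C(X)$ rather than over $\C$, but since that infimum is still at least $\int\overline{f}\,\d\mu$, combining it with your Step 1 gives exactly the equality you need.
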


\Cref{cor:pointwise} identifies the affine value function by the $\C$-envelope of $f$ and observes that the original optimization admits a pointwise solution where the solution can be obtained by treating each $x$ separately and maximizing over distributions $\eta_x \preceq_\C \delta_x$. 

\begin{cor}[Extreme Points of Chains]\label{cor:chain}
Consider any $\preceq_\C$ where $\C$ is min-closed and any compact convex subset $M$ of $\Delta(X)$ that is closed under $\preceq_\mathcal{C}$. The following are equivalent: 
\begin{itemize}
    \item[(a)] $(\mu_i)_{i=1}^n$ is an extreme point of the set $\{(\tilde{\mu}_i)_{i=1}^n \in M^n: \tilde{\mu}_{i+1} \preceq_\C \tilde{\mu}_i\}$.
    \item[(b)] $\mu_1$ is an extreme point of $M$ and $\mu_{i}$ is an extreme point of the set $\{\tilde{\mu} \in M: \tilde{\mu} \preceq_\C \mu_{i-1}\}$ for all $i \in \{2,\ldots,n\}$.  
\end{itemize}
\end{cor}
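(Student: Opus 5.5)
The plan is to prove both directions directly, using the order-preserving couplings supplied by \Cref{thm:main}. Since $\C$ is min-closed, part (c) of \Cref{thm:main} gives, for every $\nu \preceq_\C \mu$, a kernel $P$ with $\nu = P*\mu$ and $P*\delta_x \preceq_\C \delta_x$ for all $x$. I read ``$M$ closed under $\preceq_\C$'' as downward closure: $\mu\in M$ and $\nu \preceq_\C \mu$ imply $\nu \in M$. Hence $\{\tilde\mu\in M:\tilde\mu\preceq_\C\mu\}=\mathcal{K}_\mu$ for $\mu\in M$. Write $S_n$ for the chain set in (a).

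\emph{Preliminary observation.} Such a kernel $P$ maps the order into itself: if $a \preceq_\C b$ then $P*a \preceq_\C b$. Indeed, for $g\in\C$ we have $(g*P)(x)=\int g\,\d(P*\delta_x)\le g(x)$, so
\[
\int g\,\d(P*a)=\int (g*P)\,\d a\le \int g\,\d a\le \int g\,\d b\,.
\]
In particular $P*a\preceq_\C a$. By downward closure, $P*a\in M$ whenever $a\in M$. This is the only place where measurability is an issue, and \Cref{thm:main}(c) already provides measurable kernels.

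\emph{(b)$\Rightarrow$(a).} This direction is a simple induction. Suppose $(\mu_i)=\tfrac12(a+b)$ with $a,b\in S_n$. Since $\mu_1\in\Ext(M)$, we get $a_1=b_1=\mu_1$. Now assume $a_{i-1}=b_{i-1}=\mu_{i-1}$. Then $a_i,b_i\in\mathcal{K}_{\mu_{i-1}}$ and they average to $\mu_i$, which is extreme in $\mathcal{K}_{\mu_{i-1}}$. Hence $a_i=b_i=\mu_i$, and by induction $a=b$.

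\emph{(a)$\Rightarrow$(b).} Fix $i$ and suppose $\mu_i$ fails the required extremality. That is, $\mu_i=\tfrac12(\nu+\nu')$ with $\nu\neq\nu'$ both in $M$ (if $i=1$) or both in $\mathcal{K}_{\mu_{i-1}}$ (if $i\ge2$). I build two distinct chains in $S_n$ averaging to $(\mu_j)$ as follows.
\begin{itemize}
\item For $j<i$, keep the coordinates equal to $\mu_j$.
\item Put $\nu,\nu'$ in coordinate $i$.
\item For each $j>i$, use \Cref{thm:main}(c) to pick a kernel $P_j$ with $\mu_j=P_j*\mu_{j-1}$ and $P_j*\delta_x\preceq_\C\delta_x$. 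Then propagate: $a_j=P_j*a_{j-1}$ and $b_j=P_j*b_{j-1}$, starting from $a_i=\nu$ and $b_i=\nu'$.
\end{itemize}
By linearity of $\mu\mapsto P*\mu$, the averages reproduce $\mu_j$ for all $j>i$. By the preliminary observation, $a_j\preceq_\C a_{j-1}$ and $a_j\in M$, and likewise for $b$. At index $i$ the constraint holds by the choice of $\nu,\nu'$. So $a,b\in S_n$, they are distinct in coordinate $i$, and $(\mu_j)=\tfrac12(a+b)$. This contradicts (a).

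The main obstacle is precisely this tail-propagation step. Perturbing a middle coordinate alone would break the constraint $\mu_{i+1}\preceq_\C\mu_i$. The couplings of \Cref{thm:main}(c), which is where min-closure enters, are what let us push the perturbation forward consistently. Equivalently, one could run the argument through the trapezoid property (d) with $f\equiv 0$, for which $X^\star_0(\mu)=\mathcal{K}_\mu$.
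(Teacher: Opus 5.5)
Your proof is correct and is essentially the argument the paper intends. The paper gives no separate proof and presents the corollary as an immediate consequence of \Cref{thm:main}; your tail-propagation step is exactly the coupling argument from the paper's proof of (c)$\Rightarrow$(d), iterated along the chain. Equivalently, it is the trapezoid property with $f\equiv 0$, where $X^\star_0(\mu)=\mathcal{K}_\mu$. Reading ``closed under $\preceq_\C$'' as downward closure is the intended interpretation, and it is precisely what gives $\{\tilde\mu\in M:\tilde\mu\preceq_\C\mu\}=\mathcal{K}_\mu$ and keeps the propagated measures $P_j*a_{j-1}$ inside $M$.
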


\Cref{cor:chain} shows that the extreme points of a chain of ordered measures $(\mu_i)_{i=1,\dots,n}$ admit a simple decomposable structure whenever the order $\preceq_\C$ satisfies the min-closure property. An example of \Cref{cor:chain} is when $\C$ consists of the set of concave functions on $\R^m$, and $M = \Delta(X)$, in which case we recover the extreme point structure of a pair of measures ordered by the convex order as characterized in  \citet{Ciosmak2023}.

\begin{cor}[Reduction of Nested Optimization]\label{cor:nested-optimization}
For any $\preceq_\C$ where $\C$ is min-closed and any  \weqref{\textbf{Nested Optimization}}{eq:nested-optimization}, there exists $(\mu,\nu)$ solving \weqref{\textbf{Nested Optimization}}{eq:nested-optimization} such that $\mu \in \Ext(M)$ and $\nu \in \Ext(X^\star_f(\mu))$. 
\end{cor}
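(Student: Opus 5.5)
We will derive \Cref{cor:nested-optimization} from the trapezoid graph property in \Cref{thm:main}, combined with Bauer's maximum principle for quasi-convex functionals. Since $\C$ is min-closed, the implication $(a)\implies(d)$ of \Cref{thm:main} gives two facts: for the given closed convex $M$, the solution graph $G^M_f$ is convex, and its extreme points decompose as
\[
\Ext\big(G^M_f\big)=\big\{(\mu,\nu): \mu\in\Ext(M),\ \nu\in\Ext(X^\star_f(\mu))\big\}\,.
\]
We already know that $G^M_f$ is compact, because $X^\star_f$ is upper hemicontinuous with compact values and $M$ is compact. So $G^M_f$ is a nonempty compact convex subset of the locally convex space $M(X)\times M(X)$ with the product weak-$*$ topology. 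It remains to show that a continuous quasi-convex $W$ attains its maximum on $G^M_f$ at an extreme point.

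This is the standard maximum-principle argument. Since $W$ is continuous and $G^M_f$ is compact, the maximum value $W^\star$ is attained. Consider the set
\[
F:=\{(\mu,\nu)\in G^M_f: W(\mu,\nu)=W^\star\}\,.
\]
It is nonempty and closed, hence compact. It is also a face of $G^M_f$. To see this, suppose a proper convex combination of two points of $G^M_f$ lies in $F$. Quasi-convexity gives
\[
W^\star \le \max\{W(\text{first point}),\,W(\text{second point})\}\,,
\]
so at least one endpoint lies in $F$. This only gives an extremal-type property, not a full face, so we follow Bauer's approach. By Zorn's lemma, pick a minimal nonempty closed subset $S\subseteq G^M_f$ with the property that $S\cap F\neq\emptyset$ and, whenever a proper convex combination of two points of $G^M_f$ lies in $S$, both points lie in $S$.

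Alternatively, we can avoid the face issue: $F$ is compact, so by Krein--Milman it has an extreme point... but $F$ need not be convex. The cleaner route is this. Take any maximizer $(\mu_0,\nu_0)$. By Krein--Milman, $G^M_f=\overline{\mathrm{co}}\,\Ext(G^M_f)$. Quasi-convexity makes the upper level set $\{W< W^\star\}\cap G^M_f$ relatively open. If every extreme point satisfied $W<W^\star$, the compact set $\overline{\Ext(G^M_f)}$ would need to be handled, which is delicate.

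The cleaner argument is as follows. The lower level set $L:=\{(\mu,\nu)\in G^M_f: W\le c\}$ is closed and convex for every $c$. Suppose every extreme point had $W\le c$ for some $c<W^\star$. Then $\Ext(G^M_f)\subseteq L$, so $G^M_f=\overline{\mathrm{co}}\,\Ext(G^M_f)\subseteq L$, a contradiction. Hence $\sup_{\Ext(G^M_f)} W=W^\star$.

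The main obstacle is attainment, since $\Ext(G^M_f)$ need not be closed. We handle it by applying the same level-set argument to the compact convex face obtained via Bauer's maximum principle. Bauer's principle applies directly to upper semicontinuous convex functions; for quasi-convex $W$, we use the version for continuous quasi-convex functions on compact convex sets, whose proof is identical. It uses the minimal closed extremal subset $S$ above: by minimality, $S$ is a single point, and that point is extreme in $G^M_f$ and lies in $F$.

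Finally, we read off the decomposition. The extreme maximizer $(\mu,\nu)$ satisfies $\mu\in\Ext(M)$ and $\nu\in\Ext(X^\star_f(\mu))$, which proves the corollary.
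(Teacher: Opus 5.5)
Your reduction is the same as the paper's. Min-closure gives, via $(a)\Rightarrow(d)$ of \Cref{thm:main}, that $G^M_f$ is compact and convex with decomposable extreme points. The paper then treats the rest as immediate. Everything therefore rests on one step: that the continuous quasi-convex $W$ attains its maximum on $G^M_f$ at an extreme point.

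\textbf{That step is asserted, not proved.} Bauer's proof for convex $W$ works because the argmax set $F$ is itself a closed extremal set. Inside a minimal closed extremal subset of $F$, any non-constant continuous linear functional cuts out a strictly smaller closed extremal set that still lies in $F$, which forces a singleton.

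For quasi-convex $W$ you correctly note that $F$ is not extremal. You then take a minimal closed extremal $S$ with $S\cap F\neq\emptyset$. But the separation step now produces a smaller closed extremal set that need not meet $F$, so minimality does not force $S$ to be a point.

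In fact the claim is circular. Suppose $S$ is not a singleton. Every extreme point of $\overline{\mathrm{co}}(S)$ lies in $S$ and is extreme in $G^M_f$. Each such point gives a proper closed extremal singleton, so by minimality it lies outside $F$. Ruling this case out is exactly the maximum principle for $W$ on $\overline{\mathrm{co}}(S)$. So "whose proof is identical" is false, and "by minimality, $S$ is a single point" assumes what you want to prove.

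Your argument with the closed convex sets $\{W\le c\}$ is correct, but as you say, it only gives $\sup_{\Ext(G^M_f)}W=W^\star$, not attainment.

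\textbf{How to close it.} You can cite the quasi-convex form of Bauer's maximum principle, which is what the paper implicitly relies on. Or you can prove it with the strict sublevel set $U=\{W<W^\star\}$, which is relatively open and convex. You then need the following fact: in a metrizable compact convex $K$ (here $G^M_f\subseteq\Delta(X)^2$), a relatively open convex set containing $\Ext(K)$ is all of $K$. A sketch:
\begin{itemize}
\item For $\epsilon>0$, let $O_\epsilon=\{z\in K:(1-\epsilon)z+\epsilon K\subseteq U\}$. These sets are relatively open (by compactness) and convex, and they exhaust $U\supseteq\Ext(K)$ as $\epsilon\downarrow0$.
\item Given $x\in K$, take a representing measure $m$ with $m(\Ext(K))=1$ (Choquet's theorem).
\item Split compact subsets of a coarse $O_{\epsilon_0}$ and a fine $O_{\epsilon_1}$ into finitely many small cells, so that each cell's barycenter $b_i$ stays in its $O_{\epsilon_i}$.
\item Choose the fine level so that the leftover mass is at most $\sum_i m(P_i)\,\epsilon_i/(1-\epsilon_i)$. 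This slack absorbs the leftover into one common $y\in K$.
\item Then $x$ is a finite convex combination of the points $(1-\epsilon_i)b_i+\epsilon_i y\in U$, so $x\in U$.
\end{itemize}
Applied to $K=G^M_f$, this rules out $F\cap\Ext(G^M_f)=\emptyset$. If $W$ is convex, for instance linear, your extremal-set argument already works as written.
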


\Cref{cor:nested-optimization} shows that \weqref{\textbf{Nested Optimization}}{eq:nested-optimization} admits simple solutions where the first mover simply chooses an extreme point from the feasible space $M$ even after taking into account the response by the second mover. In fact, as \Cref{thm:main} reveals, when the first mover's objective $W(\mu, \nu)$ is a linear functional and hence can be written as $\int w_A(x) \mu(\d x) + \int w_B(x) \nu(\d x)$, there exists a modified linear objective $\widetilde{W}$ where $w_B$ is modified to be $\widetilde{w}_B:=w_B*P^\star_f$ and  $P^\star_f(\,\cdot\,\mid x)$ is the pointwise solution for the second mover, such that the first mover can simply optimize the modified linear objective \textit{as if} the second mover did not exist. We describe our results and applications of nested optimization in detail in \Cref{sec:stackelberg}. 

For any $(\mu, \nu)$ with $\nu \preceq_\C \mu$, we say that they are \textit{\textbf{uniquely rationalized}} if they admit a unique order-preserving kernel $P$, i.e., there exists a $\mu$-a.e. unique $P$ such that $\nu = P* \mu$ and $P *\delta_x \preceq_\C \delta_x$ for $\mu$-a.e. $x$.  The following result characterizes structural properties of extreme and exposed points of stochastic orbits for any integral stochastic order satisfying the min-closure property: 

\begin{prop}[Extreme and Exposed Points of Stochastic Orbits]\label{prop:exposed}
For any $\preceq_{\C}$ and any $\mu \in \Delta(X)$, let 
\[\mathcal{K}_\mu:= \Big\{\nu: \nu \preceq_\C \mu \Big\}\,.\]
If $\C$ is min-closed, then:
\begin{enumerate}
    \item [(i)] $\nu\in \Ext(\mathcal{K}_\mu)$ if and only if $(\mu, \nu)$ is uniquely rationalized by an order-preserving kernel $P$ where $P*\delta_x$ is an extreme point of $\mathcal{K}_{\delta_x}$ $\mu$-a.e.
    \item [(ii)] $\nu\in \Exp(\mathcal{K}_\mu)$ if and only if $(\mu, \nu)$ is uniquely rationalized by an order-preserving kernel $P$ where $P*\delta_x$ is an exposed point of $\mathcal{K}_{\delta_x}$ $\mu$-a.e. by the same continuous function $f$.    
\end{enumerate}
\end{prop}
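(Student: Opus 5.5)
Throughout, I will use the order-preserving couplings guaranteed by \Cref{thm:main}(c) together with the pointwise value identity of \Cref{cor:pointwise}. The central object is the set
\[\mathcal{R}_\mu := \{P : P*\delta_x \in \mathcal{K}_{\delta_x} \text{ for $\mu$-a.e. } x\},\]
identified $\mu$-a.e. It is convex, and the map $P \mapsto P*\mu$ is affine from $\mathcal{R}_\mu$ onto $\mathcal{K}_\mu$. Surjectivity is exactly (c), while $P*\mu \preceq_\C \mu$ holds because $\int g \, \d(P*\mu) = \int (g*P)\, \d\mu \le \int g \, \d\mu$ for every $g\in\C$.

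\emph{Part (i), necessity.} Let $\nu\in\Ext(\mathcal{K}_\mu)$ and pick $P\in\mathcal{R}_\mu$ with $\nu=P*\mu$.
\begin{itemize}
\item \emph{Pointwise extremality.} Suppose $P*\delta_x$ is not extreme in $\mathcal{K}_{\delta_x}$ on a set of positive $\mu$-measure. By a measurable selection, write $P=\tfrac12 P_1+\tfrac12 P_2$ with $P_1,P_2\in\mathcal{R}_\mu$ and $P_1*\delta_x\neq P_2*\delta_x$ on that set. Then $\nu=\tfrac12 P_1*\mu+\tfrac12 P_2*\mu$, so extremality forces $P_1*\mu=P_2*\mu$.
\item \emph{Uniqueness.} If $P,Q\in\mathcal{R}_\mu$ both rationalize $\nu$, I want $P=Q$ $\mu$-a.e.
\end{itemize}
Neither bullet yields a contradiction yet. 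The key device is localization: for a Borel $A$ with $0<\mu(A)<1$, split $\mu=\mu(A)\mu_A+\mu(A^c)\mu_{A^c}$ and define
\[\nu'=(P_1\ind_A+P\ind_{A^c})*\mu,\qquad \nu''=(P_2\ind_A+P\ind_{A^c})*\mu.\]
Both lie in $\mathcal{K}_\mu$ and average to $\nu$, so $\nu'=\nu''$, i.e.\ $P_1*(\mu|_A)=P_2*(\mu|_A)$ for every $A$. Taking $A$ to range over a countable generating family and using a convergence-determining countable family of continuous functions, this gives $P_1*\delta_x=P_2*\delta_x$ for $\mu$-a.e.\ $x$, a contradiction. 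The same localization applied to $P,Q$ with the mixture $\tfrac12P+\tfrac12Q$ gives uniqueness: $(\tfrac12P+\tfrac12Q)*\mu=\nu$, and splitting the localized versions $P\ind_A+Q\ind_{A^c}$ and $Q\ind_A+P\ind_{A^c}$ forces $P*(\mu|_A)=Q*(\mu|_A)$ for all $A$.

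\emph{Part (i), sufficiency.} Suppose $\nu=\tfrac12\nu_1+\tfrac12\nu_2$ with $\nu_i\in\mathcal{K}_\mu$, and take couplings $\nu_i=P_i*\mu$ with $P_i\in\mathcal{R}_\mu$. Then $\tfrac12P_1+\tfrac12P_2$ rationalizes $\nu$, so uniqueness gives $\tfrac12P_1+\tfrac12P_2=P$ $\mu$-a.e. Pointwise extremality of $P*\delta_x$ then forces $P_i*\delta_x=P*\delta_x$ a.e., hence $\nu_i=\nu$.

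\emph{Part (ii).} Let $\nu$ be exposed by a continuous $f$, so $X^\star_f(\mu)=\{\nu\}$. By the argument sketched for $(b)\Rightarrow(c)$ in \Cref{thm:main}, the affine value $V^\star_f(\mu)=\int V^\star_f(\delta_x)\,\mu(\d x)$ forces every rationalizing $P$ to solve the pointwise problem $\mu$-a.e. Applying the localization above to two pointwise maximizers (or to a maximizer and any other rationalizer) gives uniqueness of $P$ and uniqueness of the pointwise argmax, i.e.\ $P*\delta_x$ is exposed by the same $f$ $\mu$-a.e.

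Conversely, suppose $P*\delta_x$ is the unique maximizer of $\int f \, \d\eta$ over $\mathcal{K}_{\delta_x}$, $\mu$-a.e. If $\nu'\in\mathcal{K}_\mu$ attains $V^\star_f(\mu)=\int\overline f\,\d\mu$, rationalize it by $Q$. Then $Q$ must be pointwise optimal a.e., so $Q=P$ a.e. and $\nu'=\nu$. Hence $\nu$ is exposed by $f$.

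\emph{Main obstacle.} The hardest step is the localization in (i): passing from ``$P_1*(\mu|_A)=P_2*(\mu|_A)$ for all $A$'' to pointwise equality of kernels, together with the measurable-selection decomposition of a non-extreme $P*\delta_x$ into two distinct halves. For the decomposition, I would first try the Kuratowski–Ryll-Nardzewski selection theorem applied to the compact-valued correspondence $x\mapsto\{(\eta_1,\eta_2)\in\mathcal{K}_{\delta_x}^2:\tfrac12\eta_1+\tfrac12\eta_2=P*\delta_x\}$, choosing a measurable selection that is off the diagonal wherever possible.
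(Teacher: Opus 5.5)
Your proposal is correct. For sufficiency in (i), uniqueness in (i), and both directions of (ii) it is essentially the paper's argument. The paper proves uniqueness by the same swap: it exchanges $P_1$ and $P_2$ on $A=\{x:P_1(B\mid x)>P_2(B\mid x)\}$ for a set $B$ from a countable generating algebra, which is your localization with one well-chosen $A$. It handles (ii) the same way, through pointwise optimality under the exposing $f$.

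The genuine difference is the pointwise-extremality step in (i). The paper lifts to the compact convex set $\Gamma_\mu\subseteq\Delta(X\times X)$ of order-preserving couplings with first marginal $\mu$. It invokes Proposition 8.26 of Simon, $\Ext(\K_\mu)\subseteq T(\Ext(\Gamma_\mu))$ for the second-marginal map $T$, and concludes from uniqueness that the rationalizing coupling is extreme in $\Gamma_\mu$. It then shows that an extreme coupling disintegrates into pointwise extreme kernels. You stay with kernels and use extremality of $\nu$ directly, which is more elementary. Your localization is not even needed there: once uniqueness is established, $P=\tfrac12P_1+\tfrac12P_2$ and extremality of $\nu$ give $P_1*\mu=P_2*\mu=\nu$, so uniqueness already forces $P_1=P_2$ $\mu$-a.e. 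What the coupling formulation buys the paper is a clean home for merely $\mu$-measurable kernels, which is what its selection step produces.

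The step that needs more than you wrote is that selection. Kuratowski--Ryll-Nardzewski applied to $F(x):=\{(\eta_1,\eta_2)\in\K_{\delta_x}\times\K_{\delta_x}:\tfrac12\eta_1+\tfrac12\eta_2=P(\,\cdot\,\mid x)\}$ cannot by itself keep you off the diagonal. The point $(P(\,\cdot\,\mid x),P(\,\cdot\,\mid x))$ always lies in $F(x)$, and the off-diagonal part of $F(x)$ is not closed. There are two standard fixes:
\begin{itemize}
\item Use a uniformization theorem on the analytic set of $x$ where $P(\,\cdot\,\mid x)$ is not extreme. The paper does this with Jankov--von Neumann, obtaining a universally measurable selection.
\item Select from a measurable argmax. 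Take a countable separating family $\{\psi_n\}\subseteq C(X)$. At every non-extreme point, the value of $\max_{(\eta_1,\eta_2)\in F(x)}\bigl(\int\psi_n\,\d\eta_1-\int\psi_n\,\d\eta_2\bigr)$ is positive for some $n$, because $F(x)$ is symmetric. The measurable maximum theorem plus KRN on the sets where this value is positive then yields the off-diagonal decomposition. It also shows that the non-extreme set is measurable, so ``positive $\mu$-measure'' makes sense.
\end{itemize}
The second fix is the device the paper itself uses in (ii), with $\argmax$/$\argmin$ of $\int\varphi_n\,\d\eta$ over the pointwise solution set. You need it in your (ii) as well, to produce a second pointwise maximizer that differs from $P$ on a set of positive $\mu$-measure; your sketch takes that for granted.
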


\Cref{prop:exposed} shows that for any convex orbit set $\mathcal{K}_\mu$ where $\preceq_\C$ satisfies the min-closure property, the extreme and exposed points admit a simple structure: (i) $\nu$ is an extreme point if and only if it can be uniquely constructed via an order-preserving kernel from $\mu$ that is pointwise extremal; (ii) $\nu$ is an exposed point if and only if it can be uniquely constructed via an order-preserving kernel from $\mu$ that is pointwise exposed by the same function $f$. \Cref{prop:exposed} nests the extreme-point characterization of \citet{roy1976extremal} for the convex order as a special case. As we will demonstrate, these properties are extremely helpful for explicitly identifying the extreme and exposed points of the orbits defined by multidimensional mean-preserving spreads and multidimensional stochastic dominance.

\subsection{Exposed Points of Multidimensional MPS and FOSD}
In what follows, we demonstrate the application of \Cref{prop:exposed} by characterizing exposed points of multidimensional MPS and FOSD orbits.\footnote{For simplicity, we focus on the characterization of exposed points, as they are already dense in the set of extreme points by Straszewicz's theorem (\citealt{Kleejr1958}). Alternatively, we could directly apply part (i) of \Cref{prop:exposed} to characterize the extreme points. Unlike in the one-dimensional case, not all extreme points are exposed, but the characterizations are largely similar. } 

\paragraph{Multidimensional MPS.}\hspace{-2mm}Let 
\[\text{MPS}(\mu) := \big\{\nu: \nu \preceq_{\text{concave}} \mu\big\}\,,\]
where $\preceq_{\text{concave}}$ is the concave order, i.e., $\C$ is the set of concave functions on $\R^n$. Since concave functions are closed under pointwise minimum, by \Cref{thm:main}, optimization problems involving $\text{MPS}(\mu)$ admit simple structures---in particular, \Cref{cor:pointwise,cor:chain,cor:nested-optimization} and \Cref{prop:exposed} immediately apply. Now, to illustrate, we apply \Cref{prop:exposed} to further characterize the exposed points of $\text{MPS}(\mu)$.

For any $f$, the $\C$-envelope $\overline{f}$ is simply the concavification of $f$, which is a.e. differentiable. Let $\mathcal{A}$ be the set of differentiability points of $\overline{f}$. For any $x \in \mathcal{A}$, let $h^x$ be the unique supporting hyperplane of $\overline{f}$. Let $\mathcal{H}:=\{h^x\}_{x \in \mathcal{A}}$. Define 
\[S := \Big\{x: \overline{f}(x) = f(x)\Big\}\,.\]
For any $h \in \mathcal{H}$, let 
\[R_h := \Big\{x: \overline{f}(x) = h(x) \Big\}\,,\]
and let $S^c_h := R_h \cap S^c$. Then, note that $\big\{S^c_h\big\}_{h \in \mathcal{H}}$ is a partition of $S^c \cap \mathcal{A}$. We refer to $\big\{S^c_h\big\}_{h \in \mathcal{H}}$
as the \textit{\textbf{affine envelope regions}}.
For any $h \in \mathcal{H}$, let 
\[T_h := \Big\{x: f(x) = h(x)\Big\}\,.\]
We say that $f$ has a \textit{\textbf{strict contact}} if $x \not \in \text{conv}(T_{h^x} \backslash \{x\})$ for a.e. $x \in S \cap \mathcal{A}$.
We say an affine envelope region $S^c_h$ is \textit{\textbf{simplicial-touching}} if 
\[R_h \text{ is a simplex} \text{ and } \Ext(R_h) = T_h \,.\]
If $S^c_h$ is simplicial-touching, then for any $x \in S^c_h$, there exists a unique mixture of extreme points of $R_h$ that represents $x$---call it the unique \textit{\textbf{barycentric splitting}} of $x$.
\begin{figure}[t]
  \begin{subfigure}[b]{0.45\textwidth}
    \centering
    \includegraphics[scale=0.4]{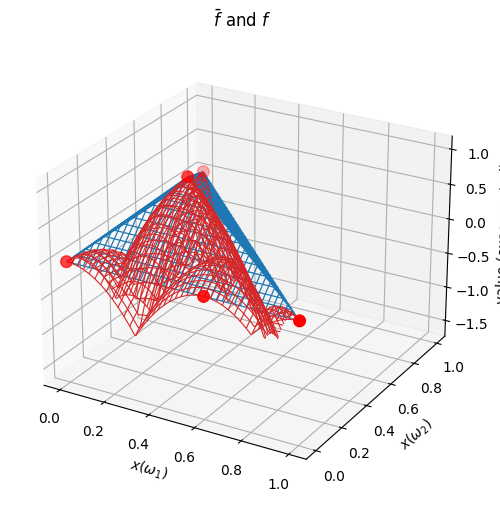}    
    \caption{$f$ and $\overline{f}$\label{fig:mps-both}}

  \end{subfigure}
  \qquad \qquad 
  \begin{subfigure}[b]{0.35\textwidth}
    \centering
    \includegraphics[scale=0.35]{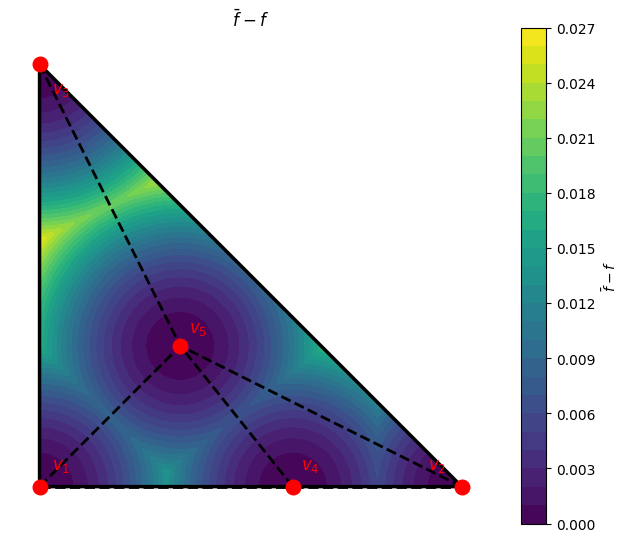}     
    \caption{$\mathrm{supp}(\nu)$\label{fig:mps-heat}}
  \end{subfigure}
  \caption{An exposed point $\nu$ of $\mathrm{MPS}(\mu)$}
  \label{fig:mps-exposed}
\end{figure}

\begin{prop}[Exposed Points of Multidimensional MPS]\label{prop:mps}
For any full-dimensional compact convex $X \subseteq \R^n$ and any $\mu \in \Delta(X)$ that is mutually absolutely continuous with respect to the Lebesgue measure, the following statements are equivalent: 
\begin{itemize}
    \item[(a)] $\nu$ is an exposed point of $\emph{MPS}(\mu)$.
    \item[(b)] There exists a continuous $f$ with a strict contact and simplicial-touching affine envelope regions $S^{c}_{h^x}$ for a full-measure set of $x \in S^c$, and $\nu = P^f * \mu$, where $P^f(\,\cdot\,\mid x) = \delta_x$ for $x \in S$ and $P^f(\,\cdot\,\mid x)$ is the unique barycentric splitting for the full-measure set of $x \in S^c$.
\end{itemize}
\end{prop}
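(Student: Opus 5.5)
The plan is to reduce everything to \Cref{prop:exposed}(ii) together with \Cref{cor:pointwise}, applied to the concave order. Concave functions are min-closed, so $\nu \in \Exp(\mathrm{MPS}(\mu))$ holds if and only if $(\mu,\nu)$ is uniquely rationalized by a martingale kernel $P$ such that $P*\delta_x$ is exposed in $\mathcal{K}_{\delta_x}$ by one common continuous $f$ for $\mu$-a.e.\ $x$. Here $\mathcal{K}_{\delta_x}$ is the set of measures with barycenter $x$. By \Cref{cor:pointwise}, an $\eta \in \mathcal{K}_{\delta_x}$ solves the pointwise problem for $f$ iff $\int f\,\d\eta = \overline{f}(x)$, where $\overline{f}$ is the concavification. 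The whole proof therefore reduces to describing, for a single $x$, when the set of maximizers $\{\eta:\mathrm{bar}(\eta)=x,\ \int f\d\eta=\overline f(x)\}$ is a singleton. Uniqueness of the rationalizing kernel should then follow automatically: any order-preserving kernel $Q$ with $Q*\mu=\nu$ must be pointwise optimal $\mu$-a.e. (because $\nu$ is optimal and the value is affine), so $Q=P$ a.e. whenever the pointwise maximizers are unique.

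\emph{Pointwise characterization.} Fix $x\in\mathcal A$, where $\overline f$ has the unique supporting hyperplane $h^x$. Any maximizer $\eta$ with barycenter $x$ must satisfy $f=\overline f=h^x$ on $\supp\eta$, since $h^x\ge\overline f\ge f$ and $\int h^x\d\eta=h^x(x)=\overline f(x)$. Hence the maximizers are exactly the probability measures on $T_{h^x}$ with barycenter $x$. I would split into two cases.

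If $x\in S$, then $\delta_x$ is a maximizer. It is the unique one iff $x$ is not a nontrivial barycenter of points of $T_{h^x}$, that is, iff $x\notin\mathrm{conv}(T_{h^x}\setminus\{x\})$. This is the strict contact condition.

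If $x\in S^c$, then $x\in S^c_h$ with $h=h^x$, and the maximizers are measures on $T_h\subseteq R_h$ with barycenter $x$. Such a measure is unique iff $x$ has a unique representation as a mixture of points of $T_h$. I would show this forces $\mathrm{conv}(T_h)$ to be a simplex with vertex set $T_h$; a Carathéodory/Radon argument gives non-uniqueness as soon as $T_h$ is affinely dependent and $x$ lies in the relative interior of the relevant hull. In addition, $\mathrm{conv}(T_h)=R_h$, because $\overline f$ is affine on $R_h$ and equals the concavification of $f$ restricted to the contact set. This is exactly the simplicial-touching condition, and the unique maximizer is the barycentric splitting.

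Mutual absolute continuity with Lebesgue measure lets me pass between ``$\mu$-a.e.'' and ``full-measure set'' and ignore the null set $X\setminus\mathcal A$ (Rademacher/Alexandrov for the concave $\overline f$).

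With this in hand, the two implications go as follows. For (b)$\Rightarrow$(a), the given $f$ has a unique pointwise maximizer $P^f*\delta_x$ for a.e.\ $x$, so \Cref{prop:exposed}(ii) applies; uniqueness of the rationalization follows from the remark above. For (a)$\Rightarrow$(b), take the common exposing $f$ from \Cref{prop:exposed}(ii). Pointwise uniqueness a.e. then yields strict contact on $S\cap\mathcal A$ and simplicial-touching on a full-measure subset of $S^c$, and the kernel must coincide with $P^f$.

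The main obstacle I expect is the geometric step that pointwise uniqueness on $S^c_h$ forces $R_h$ itself, not just $\mathrm{conv}(T_h)$, to be a simplex with $\Ext(R_h)=T_h$. Two things need care there. The first is showing $R_h=\mathrm{conv}(T_h)$, which uses continuity of $f$ and compactness, so that the concave envelope on a face is attained by the hull of contact points. The second is handling uniqueness holding only for a.e.\ $x$ in the region. I would argue that if $T_h$ were affinely dependent, the set of points with non-unique representation has nonempty relative interior in $R_h$ and positive measure in $S^c$ (summing over a positive-measure family of $h$). If the measure count across regions gets delicate, I would fall back to stating conditions per region for $\mu$-a.e.\ $x$.
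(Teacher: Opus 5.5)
Your reduction to \Cref{prop:exposed}(ii) matches the paper's route, and so do the following pieces:
\begin{itemize}
\item identifying the pointwise maximizers as the probability measures on $T_{h^x}$ with barycenter $x$;
\item the strict-contact case;
\item the argument that $R_h=\mathrm{conv}(T_h)$;
\item the Radon argument;
\item the (b)$\Rightarrow$(a) direction.
\end{itemize}

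The gap is in (a)$\Rightarrow$(b), at the step you flag yourself and leave open. Uniqueness of the splitting at $x$ tells you something about the whole region $R_{h^x}$ only when $x\in\mathrm{ri}(R_{h^x})$. If $x$ lies on the relative boundary, let $F$ be the minimal face of $\mathrm{conv}(T_h)$ containing $x$. Every splitting of $x$ is then supported on $T_h\cap F$, so uniqueness only yields affine independence of $T_h\cap F$. For example, take $R_h$ a square with $T_h$ its four vertices: every point on an edge has a unique splitting, yet $R_h$ is not a simplex.

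So you must show that the set of $x$ lying on the relative boundary of their own region is Lebesgue-null. Your substitute argues that affinely dependent $T_h$ gives non-uniqueness on $\mathrm{ri}(R_h)$, with positive measure coming from ``summing over a positive-measure family of $h$.'' That presupposes the very fact you need. The regions are typically lower-dimensional, so a single bad region contributes nothing. A positive-measure set of $x$ with bad regions yields a positive-measure set of non-uniqueness points only if $\bigcup_h \mathrm{rb}(R_h)$ is null. Your fallback of stating the conditions per region for $\mu$-a.e.\ $x$ would weaken the proposition rather than prove it.

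The paper supplies the missing ingredient as \Cref{lem:ri-contact-ae}. Set $g=-\overline f$ on $X$ and $g=+\infty$ off $X$. Fenchel duality gives $R_{h^x}=\partial g^*(-\nabla\overline f(x))$. The bad set is therefore contained in $\bigcup_{p}\mathrm{rb}\,\partial g^*(p)$, which is Lebesgue-null by Theorem 3.3 of \citet{drusvyatskiy2011generic}, a nontrivial result on relative boundaries of subdifferentials of a convex function. Intersecting with $\mathcal A$ and the a.e.\ uniqueness set gives the full-measure set $\mathcal Z\subseteq S^c$. On $\mathcal Z$ your Radon argument applies, and the rest of your outline goes through.
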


\Cref{fig:mps-exposed} illustrates an exposed point $\nu$ of the set $\text{MPS}(\mu)$. \Cref{fig:mps-both} plots the associated exposing function $f$ (in red) and its concave envelope $\overline{f}$ (in blue), and marks the points where $\overline{f}=f$. It can be readily seen from \Cref{fig:mps-both} that $f$ has a strict contact and simplicial-touching affine envelope regions. \Cref{fig:mps-heat} plots the heat map of $\overline{f}-f$, and the dots correspond to the points where $\overline{f}=f$, which indicate the support of $\nu$. 

An immediate consequence of \Cref{prop:mps} is that any exposed point of $\text{MPS}(\mu)$ takes the form of $P*\mu$, where the kernel $P$ has the property that either (i) it does not move the mass; or (ii) on a collection of non-overlapping simplices (e.g., triangles in $\R^2$), it splits each point in a simplex into the vertices of the simplex. This immediately generalizes the exposed point characterization of \citet*{kleiner2021extreme} in the one-dimensional case, where the convex regions are simply intervals.

\paragraph{Multidimensional MPC.}\hspace{-2mm}Now, consider the orbit 
\[\text{MPC}(\mu) := \big\{\nu: \mu \preceq_{\text{concave}} \nu\big\} = \big\{\nu: \nu \preceq_{\text{convex}} \mu \big\}\,.\]
Since the set of convex functions is not closed under pointwise minimum---it is closed under pointwise maximum---it follows that optimization involving $\text{MPC}(\mu)$ generally admits complex structures. For example, by \Cref{thm:main}, the value function $V^\star_f$ must be strictly concave on $\Delta(X)$ for some $f$. See \citet{kleiner2024extreme} for the characterization of the structure of Lipschitz exposed points of $\text{MPC}(\mu)$ for finitely-supported $\mu$. 

\paragraph{Multidimensional Stochastic Dominance.}\hspace{-2mm}Let 
\[\text{LSD}(\mu) := \big\{\nu: \nu \preceq_{\text{nondecreasing}} \mu\big\}\,,\]
where $\preceq_{\text{nondecreasing}}$ is the stochastic dominance order, i.e., $\C$ is the set of nondecreasing functions on $\R^n$. Similarly, let 
\[\text{HSD}(\mu) := \big\{\nu:   \mu \preceq_{\text{nondecreasing}} \nu\big\} = \big\{\nu:   \nu \preceq_{\text{nonincreasing}} \mu\big\}\,.\]
Since both nondecreasing functions and nonincreasing functions are closed under pointwise minimum, by \Cref{thm:main}, optimization problems involving $\text{LSD}(\mu)$ and $\text{HSD}(\mu)$ admit simple structures---in particular, \Cref{cor:pointwise,cor:chain,cor:nested-optimization}  and \Cref{prop:exposed} immediately apply. Now, to illustrate, we apply \Cref{prop:exposed} to further characterize the exposed points of $\text{LSD}(\mu)$ and $\text{HSD}(\mu)$.

We start with $\text{LSD}(\mu)$. For any $f$, the $\C$-envelope $\overline{f}$ is simply the monotone envelope of $f$, i.e. $\overline{f}(x) = \max_{y \leq x} f(y)$. Define 
\[S := \Big\{x: \overline{f}(x) = f(x)\Big\}\,.\]
For any $z \in \text{Ran}(\overline{f})$, let 
\[R_z := \Big\{x: \overline{f}(x) = z\Big\}\,, \]
and 
\[T_z := S \cap R_z = \Big\{x: \overline{f}(x) = f(x) = z\Big\}\,.\]
We say that $f$ has a \textit{\textbf{strict monotone contact}} if for a.e. $x \in S$, we have $f(x) > f(y)$ for all $y \leq x$ and $y \neq x$ (in the coordinatewise order). For any $z \in \text{Ran}(\overline{f})$, let  
\[S^c_z:= R_z \cap S^c\,.\]
Clearly, $\{S^c_z\}_{z \in \text{Ran}(\overline{f})}$ is a partition of $S^c$. We write $z^x = \overline{f}(x)$. We refer to $\{S^c_z\}$ as \textit{\textbf{monotone envelope regions}}. We say a monotone envelope region $S^c_z$ is \textit{\textbf{downward staircase-like}} if there exists some upward-closed set $A_z \subset [0, 1]^n$ such that 
\[R_z \backslash A_z = \biguplus_{\underline{x} \in T_z}\Big(\big\{\,x: x \geq \underline{x} \,\big\}\, \backslash \,A_z\Big) \,.\]
If $S^c_z$ is downward staircase-like, then for any $x \in R_z \backslash A_z$, there exists a unique \textit{\textbf{downward transport}} that deterministically transports $x$ to $\underline{x}$ for some $\underline{x} \in T_z$. 

\begin{prop}[Exposed Points of Multidimensional FOSD]\label{prop:fosd}
For any $\mu \in \Delta([0, 1]^n)$ that is mutually absolutely continuous with respect to the Lebesgue measure, the following statements are equivalent: 
\begin{itemize}
    \item[(a)] $\nu$ is an exposed point of $\emph{LSD}(\mu)$.
    \item[(b)] There exists a continuous $f$ with a strict monotone contact and downward staircase-like monotone envelope regions $S^{c}_{z^x}$ for a full-measure set of $x \in S^c$ with $x \in R_{z^x}\backslash A_{z^x}$, and $\nu = P^f * \mu$, where $P^f(\,\cdot\,\mid x) = \delta_x$ for $x \in S$ and $P^f(\,\cdot\,\mid x)$ is the unique downward transport for the full-measure set of $x \in S^c$.
\end{itemize}
\end{prop}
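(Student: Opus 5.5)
We will derive \Cref{prop:fosd} from part (ii) of \Cref{prop:exposed}, applied with $\C$ the nondecreasing functions on $[0,1]^n$. This cone is min-closed, so $\overline{f}(x)=\max_{y\le x}f(y)$. The first step is to describe the pointwise orbit. Since $\C$ contains the indicators of upper sets, we have $\eta\preceq_\C\delta_x$ if and only if $\supp(\eta)\subseteq\{y:y\le x\}$. So $\mathcal{K}_{\delta_x}$ is the set of probability measures on the lower box below $x$. Its extreme points are the Dirac measures $\delta_y$ with $y\le x$. Moreover, $\eta$ maximizes $\int f\d\eta$ over $\mathcal{K}_{\delta_x}$ exactly when $\eta$ is supported on $\argmax_{y\le x}f(y)$. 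Hence $\mathcal{K}_{\delta_x}$ is exposed at a single point by $f$ if and only if $\argmax_{y\le x}f(y)$ is a singleton $\{\underline{x}(x)\}$, in which case the solution is $\delta_{\underline{x}(x)}$.

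For (b)$\Rightarrow$(a), we fix $f$ as in (b) and check that for $\mu$-a.e.\ $x$ the pointwise argmax is a singleton and coincides with $P^f(\cdot\mid x)$. If $x\in S$, then strict monotone contact gives $f(x)>f(y)$ for all $y<x$, so the argmax is $\{x\}$. If $x\in S^c$, then $x\in R_{z^x}\setminus A_{z^x}$, and every maximizer $y\le x$ satisfies $\overline{f}(y)=f(y)=z^x$, i.e.\ $y\in T_{z^x}$. The disjointness in the staircase decomposition $\biguplus_{\underline{x}\in T_z}(\{x'\ge\underline{x}\}\setminus A_z)$ forces exactly one $\underline{x}\in T_{z^x}$ with $\underline{x}\le x$. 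So the argmax is that single point, and the pointwise solution is the downward transport. Uniqueness of the pointwise solutions gives unique rationalization, and \Cref{prop:exposed}(ii) then shows that $\nu=P^f*\mu$ is exposed.

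For (a)$\Rightarrow$(b), we take an exposed $\nu$ and an exposing continuous $f$. By \Cref{prop:exposed}(ii) there is a unique kernel $P$ with $P*\delta_x$ exposed by this $f$ for $\mu$-a.e.\ $x$. By the first step, $\argmax_{y\le x}f(y)=\{\underline{x}(x)\}$ is a singleton for $\mu$-a.e.\ $x$, hence Lebesgue-a.e.\ by mutual absolute continuity, and $P(\cdot\mid x)=\delta_{\underline{x}(x)}$. From this we read off the two conditions in (b).
\begin{itemize}
\item[(1)] \emph{Strict monotone contact.} If $x\in S$, then $x$ itself attains the max, so uniqueness of the argmax gives $f(x)>f(y)$ for all $y<x$. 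This holds for a.e.\ $x\in S$.
\item[(2)] \emph{Staircase structure.} Fix $z$ and let $A_z$ be the upward-closed set of points in $R_z$ having at least two maximizers in $T_z$ below them, together with the points outside $R_z$ that lie above $R_z$. The last part needs care so that $A_z$ is genuinely upward closed. On $R_z\setminus A_z$ each point lies above exactly one element of $T_z$, which gives the disjoint decomposition. Since a.e.\ point has a unique argmax, a.e.\ $x$ falls in the corresponding $R_{z^x}\setminus A_{z^x}$, and $P$ is the downward transport.
\end{itemize}

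The main obstacle is this last bookkeeping step. We need to show that $A_z$ can be chosen upward closed while keeping the decomposition exact, and to handle level sets $R_z$ of positive measure versus null sets. Here the key observation is that if $x\le x'$ lie in $R_z$, then the set of maximizers below $x$ is contained in the set below $x'$, which is what makes "multiple maximizers" an upward-closed property. Measurability of $x\mapsto\underline{x}(x)$, needed for $P^f$ to be a kernel, follows from the measurable selection already used in \Cref{thm:main}.
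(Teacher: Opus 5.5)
Your proposal is correct and follows essentially the paper's route: reduce to \Cref{prop:exposed}(ii), note that the pointwise optimum at $x$ is supported on $\argmax_{y\le x}f(y)$, read off strict monotone contact at points of $S$, and build $A_z$ from points lying above two distinct elements of $T_z$ together with points where the envelope exceeds $z$. The bookkeeping you flag as the main obstacle disappears if, as the paper does, you take $A_z:=\{x:\exists\, t\neq t'\in T_z \text{ with } t,t'\le x\}\cup\{x:\overline{f}(x)>z\}$ without intersecting with $R_z$. Both pieces are then upward closed by inspection (the second because $\overline{f}$ is monotone), the decomposition holds for every $z$, and no distinction between level sets of positive and zero measure is needed.
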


\begin{figure}[t]
  \begin{subfigure}[b]{0.45\textwidth}
    \centering
    \includegraphics[scale=0.3]{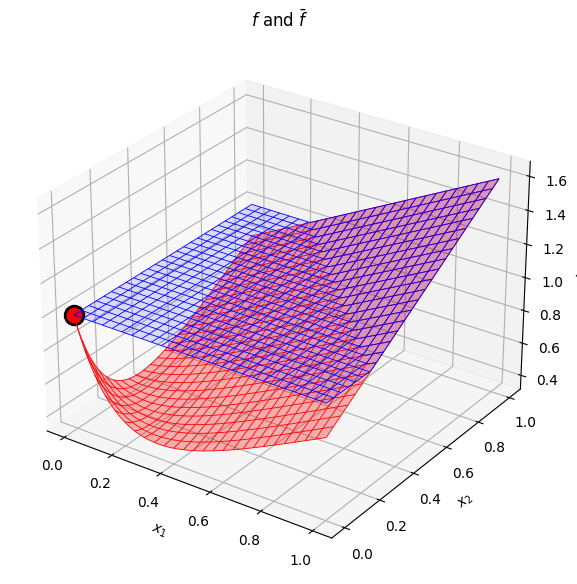}     
    \caption{$f$ and $\overline{f}$\label{fig:fosd-both}}
 
  \end{subfigure}
  \quad 
  \begin{subfigure}[b]{0.45\textwidth}
    \centering
    \includegraphics[scale=0.3]{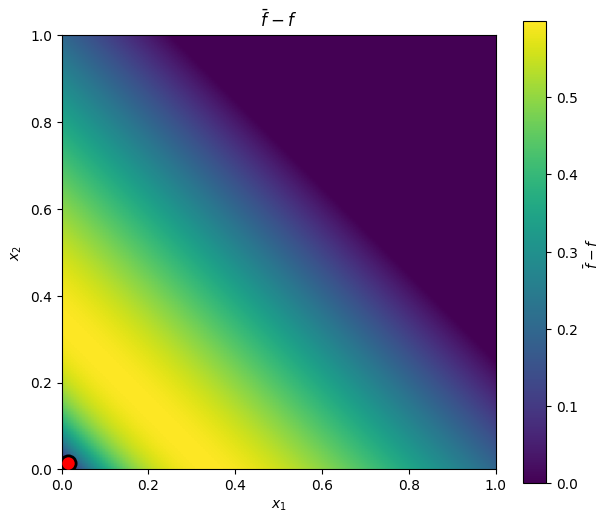}     
    \caption{$\mathrm{supp}(\nu)$\label{fig:fosd-heat}}
  \end{subfigure}
  \caption{An exposed point $\nu$ of $\mathrm{LSD}(\mu)$}
  \label{fig:fosd-exposed}
\end{figure}

\Cref{fig:fosd-exposed} illustrates an exposed point $\nu$ of $\text{LSD}(\mu)$. \Cref{fig:fosd-both} plots the associated exposing function $f$ and its monotone envelope $\overline{f}$. It can be seen from \Cref{fig:fosd-both} that $f$ has a strict monotone contact. \Cref{fig:fosd-heat} plots the heat map of $\overline{f}-f$. The region where $\overline{f}=f$ consists of the point $(0,0)$ and the top-right corner. Every point not in this region has a unique downward transport to $(0,0)$.   

An immediate consequence of \Cref{prop:fosd} is that any exposed point of $\text{LSD}(\mu)$ takes the form of $P*\mu$, where the kernel $P$ has the property that either (i) it does not move the mass; or (ii) on a collection of non-overlapping differences of two nested upsets that are downward staircase-like, it transports each point in any such region down to a corresponding unique minimal element. This immediately generalizes the exposed point characterization of \citet{yang2024monotone} in the one-dimensional case, where the differences of two nested upsets are simply intervals. 

The case of $\text{HSD}(\mu)$ is exactly symmetric to $\text{LSD}(\mu)$, in contrast to the asymmetry that arises between $\text{MPS}(\mu)$ and $\text{MPC}(\mu)$. In particular, it can be characterized symmetrically by using \textit{\textbf{decreasing envelope regions}}, the differences of nested downward-closed sets that form \textit{\textbf{upward staircase-like}} shapes, and the \textit{\textbf{upward transport}} maps. We omit the details here.

\section{Consistent Comparisons of Experiments}\label{sec:compare}
In this section, we apply our main characterization to generalize Blackwell's theorem. In a pioneering paper, \citet{Blackwell1951Comparison} identifies an order on experiments that has a dual representation. On the one hand, the Blackwell order has an instrumental-value-based representation: an experiment Blackwell dominates another if and only if one yields a higher value under \emph{all} statistical decision problems than the other. On the other hand, the Blackwell order has an information-technology-based interpretation: an experiment Blackwell dominates another if and only if one can be obtained by adding noise to (i.e., garbling) the other. In the space of distributions of posterior beliefs induced by an experiment under a given prior, this dual representation essentially identifies an order $\preceq$ over $\Delta(X)$, a convex cone of test functions (i.e., a class of problems) $\C$, and a set of information technologies $\Pc$, such that the order $\preceq$ can be simultaneously represented by comparing values of problems $g \in \C$, as well as being coupled by some information technologies $P \in \Pc$. In this regard, Blackwell's theorem identifies such an order where $\C$ is the set of convex functions and $\Pc$ is the set of martingales.  

Given Blackwell's theorem, one natural question is: in general, which orders $\preceq$ over $\Delta(X)$ admit such dual representations? When is it the case that an abstract order $\preceq$ over $\Delta(X)$ can be simultaneously interpreted as an order that reflects instrumental values, as well as an order that reflects the information content of distributions over posterior beliefs? In this section, we characterize all such orders. To begin with, we first introduce some notation and terminology. 

Throughout this section, fix a compact Polish state space $\Omega$, and let $X := \Delta(\Omega)$ be the space of beliefs over $\Omega$. Let $\C_0$ be the set of bounded lower semicontinuous functions on $X$ and $\Pc_0$ be the set of measurable Markov kernels. For any $\Pc \subseteq \Pc_0$ and for any $x \in X$, let 
\[
\Pc_x:=\Big\{P(\,\cdot\,\mid x): P \in \Pc \Big\}\,.
\]
We say that $\Pc \subseteq \Pc_0$ is \textit{\textbf{rectangular}} if $P \in \Pc$ whenever $P(\,\cdot\,\mid x) \in \Pc_x$ for all $x \in X$. We say that a pair $(\C,\Pc)$ is \textit{\textbf{regular}} if $\C \subseteq \C_0$ is a convex cone such that $\C \cap C(X)$ is sup-norm closed, contains all constant functions, admits continuous approximations from below, and is closed under bounded increasing pointwise limits; while $\Pc \subseteq \Pc_0$ has a closed graph,\footnote{That is, the graph $\{(x,\eta): x\in X\,,\, \eta \in \mathcal{P}_x\}$ is closed in $X \times \Delta(X)$.} contains the identity kernel, and is convex and rectangular. We impose regularity on $(\C,\Pc)$ throughout this section unless stated otherwise. 

For any $\C$, define the following comparison:
\[\mu \preceq_{\C} \nu \,\,\text{ if }\,\, \int g(x) \nu(\d x) \geq \int g(x) \mu( \d x)\,, \forall\, g \in \C\,. \eqtag[-2.5em][0.6\baselineskip]{\textbf{Value Description}}{eq:payoff}\]
For any $\Pc$, define the following comparison:
\[\mu \preceq^{\Pc} \nu \,\,\text{ if }\,\, \exists \, P \in \Pc \text{ s.t. } \nu = P * \mu\,. \eqtag[-2.5em][0.2\baselineskip]{\textbf{Information Description}}{eq:information} \]
For any order $\preceq$ on $\Delta(X)$, we say that $\preceq$ is \textit{\textbf{Blackwell-consistent}} if there exists some (regular) $(\C, \Pc)$ such that 
\[\preceq_{\C} \,= \,\preceq \,= \, \preceq^\mathcal{P}\,, \eqtag[-2.5em][0.2\baselineskip]{\textbf{Blackwell Consistency}}{eq:blackwell-consistency}\]
in which case we call $(\C, \Pc)$ a \textit{\textbf{Blackwell-consistent pair}}.

We say that $\C$ is \textit{\textbf{max-closed}} if for any $g_1,g_2 \in \C$, $\max\{g_1,g_2\} \in \C$. As an immediate consequence of \Cref{thm:main}, our next result characterizes all Blackwell-consistent orders: 

\begin{theorem}\label{thm:main2}
An order $\preceq$ is Blackwell-consistent if and only if $\preceq = \preceq_\C$ for a max-closed $\C$.
\end{theorem}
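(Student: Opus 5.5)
The plan is to reduce \Cref{thm:main2} to the flipped version of \Cref{thm:main} described in the second remark after it. In this section the order is oriented upward: $\mu \preceq_\C \nu$ means $\nu$ dominates $\mu$ on every lower semicontinuous $g \in \C$. Set $\C' := -\C$. Then $\C'$ is a cone of upper semicontinuous functions, and the regularity of $\C$ (continuous approximation from below, closure under increasing limits) turns into exactly the standing assumptions of \Cref{sec:abstract} for $\C'$. Moreover $\mu \preceq_\C \nu$ if and only if $\nu \preceq_{\C'} \mu$, and $\C$ is max-closed if and only if $\C'$ is min-closed. So everything below is \Cref{thm:main}, specifically the equivalence (a)$\iff$(c), applied to $\C'$ and read back in the original orientation.

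\emph{Sufficiency.} Suppose $\preceq = \preceq_\C$ with $\C$ max-closed (and regular). Define
\[
\Pc := \big\{P \in \Pc_0 : \delta_x \preceq_\C P*\delta_x \text{ for all } x\big\}.
\]
\begin{itemize}
\item $\Pc \subseteq \preceq_\C$: if $\nu = P*\mu$ with $P\in\Pc$, then for each $g \in \C$ we have $\int g\,\d\nu = \int (g*P)\,\d\mu \ge \int g\,\d\mu$, since $g*P(x) \ge g(x)$ pointwise. Hence $\mu \preceq_\C \nu$.
\item $\preceq_\C \subseteq \Pc$: \Cref{thm:main} (a)$\Rightarrow$(c) applied to $\C'$ gives, for every $\mu \preceq_\C \nu$, an order-preserving kernel $P$ with $\nu = P*\mu$ and $\delta_x \preceq_\C P*\delta_x$ for all $x$. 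That is, $P \in \Pc$.
\end{itemize}
So $\preceq_\C = \preceq^\Pc$. It remains to check that $\Pc$ is regular.
\begin{itemize}
\item The identity kernel lies in $\Pc$.
\item $\Pc$ is convex and rectangular, because its defining constraint is imposed separately at each $x$.
\item The graph $\{(x,\eta) : \delta_x \preceq_\C \eta\}$ is closed. I expect this to be the delicate step. I would use continuous approximations from below: write each $g\in\C$ as an increasing limit of continuous $g_n\in\C$. Then $\eta \mapsto \int g\,\d\eta$ is lower semicontinuous. I still need to handle the limit in the $x$-coordinate, where $g(x)$ is only lower semicontinuous. For that I would restrict the test to the continuous functions in $\C$, which suffice to define $\preceq_\C$ by monotone convergence, so the constraint becomes an intersection of closed conditions.
\end{itemize}

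\emph{Necessity.} Suppose $\preceq_\C = \preceq = \preceq^\Pc$ for a regular pair. I would show that $\preceq_{\C'}$ admits order-preserving couplings and then invoke \Cref{thm:main} (c)$\Rightarrow$(a) to get that $\C'$ is min-closed, i.e. $\C$ is max-closed.
\begin{enumerate}
\item \emph{Every $P \in \Pc$ is pointwise order-preserving.} For each $x$ we have $\delta_x \preceq^\Pc P*\delta_x$. Since $\preceq^\Pc = \preceq_\C$, this gives $\delta_x \preceq_\C P*\delta_x$, equivalently $P*\delta_x \preceq_{\C'} \delta_x$.
\item \emph{Couplings exist.} If $\nu \preceq_{\C'} \mu$, then $\mu \preceq_\C \nu$, so $\mu \preceq^\Pc \nu$. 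Hence there is $P\in\Pc$ with $\nu = P*\mu$, and by step 1 this $P$ is an order-preserving coupling.
\end{enumerate}
Thus (c) holds for $\C'$, and the conclusion follows. The one point to verify is that the standing assumptions on $\C'$ in \Cref{sec:abstract} match the regularity of $\C$; this is a direct translation under $g\mapsto -g$.

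The main obstacle is therefore not the logic, which is a direct transfer of \Cref{thm:main}, but the regularity check in the sufficiency direction, in particular closedness of the graph of $\Pc$. My first attempt is the continuous-functions reduction above. If that fails, I would replace $\Pc$ by the closure of its graph and show via the same monotone approximation that the closure still satisfies the defining inequality.
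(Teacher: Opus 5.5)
Your proposal is correct and follows the paper's proof essentially verbatim. Sufficiency takes $\Pc = \Psi(\C)$, the kernels with $\delta_x \preceq_\C P*\delta_x$, and applies \Cref{thm:main} (a)$\Rightarrow$(c) to $[-\C]$; necessity shows every $P \in \Pc$ is pointwise order-preserving, so $\preceq_{[-\C]}$ admits order-preserving couplings and (c)$\Rightarrow$(a) yields max-closure. Your closed-graph argument, which tests only against continuous elements of $\C$ and uses monotone convergence, is exactly how the paper verifies regularity of $\Psi(\C)$ in \Cref{lem:basic}(b), so the fallback is not needed.
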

\begin{proof}[Proof of \Cref{thm:main2}]
\textbf{The ``if'' direction:} Suppose $\preceq = \preceq_\C$ for a max-closed $\C$. Let 
\[\Pc := \Big\{P \in \Pc_0: \delta_x \preceq_\C  P *\delta_x \text{ for all $x$} \Big\}\,.\]
Note that $\mu\preceq \nu \iff \nu \preceq_{[-\C]} \mu$ where $[-\C]$ is a convex cone that is min-closed. Thus, for any $\mu\preceq \nu$, by \Cref{thm:main},\footnote{It can be verified that the technical assumptions required by \Cref{thm:main} hold for $[-\C]$ if $\C$ is regular.}  there exists some $P$ such that $\nu = P * \mu$ where $P * \delta_x  \preceq_{[-\C]} \delta_x$, i.e., $\delta_x \preceq_{\C} P * \delta_x$ for all $x$, and hence $P \in \Pc$.  Therefore, $\preceq_\C \implies \preceq^\Pc$. Moreover, for any $\nu = P * \mu$ for some $P \in \Pc$, since $\preceq_\C$ is an integral stochastic order and $\delta_x \preceq_\C P * \delta_x$ for all $x$, it follows immediately that $\mu \preceq_\C \nu$. Thus, $\preceq^\Pc \implies \preceq_\C$ and hence these two are equivalent. Thus, $\preceq$ is Blackwell-consistent.\footnote{It can be verified that the constructed $\Pc$ is regular; see \Cref{lem:basic} in the appendix.}  

\textbf{The ``only if'' direction:} Suppose $\preceq$ is Blackwell-consistent. Then there exists some regular $(\C, \Pc)$ such that $\preceq_\C = \preceq = \preceq^\Pc$. We claim that $\preceq_{[-\C]}$ admits order-preserving couplings. Indeed, fix any $\nu \preceq_{[-\C]} \mu$. Then, $\mu \preceq_{\C} \nu$ and hence $\nu = P * \mu$ for some $P \in \Pc$. For any $x$, note that by definition $\delta_x \preceq^{\Pc} P * \delta_x$ and hence $\delta_x \preceq_\C P * \delta_x$ by Blackwell consistency. Thus, for all $x \in X$ and all $g \in \C$, we have  
\[\int g(y) P(\d y\mid x) \geq g(x)\]
and hence 
\[\int [-g](y) P(\d y\mid x) \leq [-g](x)\,.\]
Thus, $P * \delta_x \preceq_{[-\C]} \delta_x$ for all $x$. Therefore, $\preceq_{[-\C]}$ admits order-preserving couplings. By \Cref{thm:main}, $[-\C]$ is min-closed and hence $\C$ is max-closed, as desired. 
\end{proof}

\Cref{thm:main2} asserts that $\preceq$ is Blackwell-consistent---admitting simultaneous value-based and information-based representations---if and only if the class of test functions $\C$ in its value description is closed under pointwise maximum. Intuitively, it means that whenever $g_1, g_2 \in \C$ are two problems, the problem whose value at each posterior belief $x$ is $\max\{g_1(x), g_2(x)\}$---corresponding to choosing the better of the two problems at each belief---must also belong to $\C$. This ``choose the better problem'' closure property turns out to be the \textit{exact} condition that bridges the value-based and information-based descriptions of an order. 

Now, a few natural questions arise. What is the key structure of $\Pc$ in the dual space that mirrors the max-closure property of $\C$? Which pairs $(\C, \Pc)$ are consistent? Given a Blackwell-consistent order, are there multiple pairs $(\C, \Pc)$ that represent the same order? We now answer all of them in our next result.

For any $P_1,P_2 \in \Pc_0$, recall from \weqref{\textbf{Composition}}{eq:composition} that $P_1 \circ P_2$ denotes the composition of the two kernels. We say that $\Pc \subseteq \Pc_0$ is \textit{\textbf{composition-closed}} if for any $P_1,P_2 \in \Pc$, $P_1\circ P_2 \in \Pc$. As we show, it turns out that composition-closure property of $\Pc$ is the \textit{exact} condition that mirrors the max-closure property of $\C$, each of which is equivalent to Blackwell consistency. 

To characterize the consistent pairs, we build on the construction given in the proof of \Cref{thm:main2}. To start, recall the notation from \weqref{\textbf{Conditional Expectation}}{eq:conditional-expectation}, $[g*P](x) := \int g(y) P(\d y \mid x)$. Define two operators $\Phi$ and $\Psi$ as follows: For any $\C \subseteq \C_0$, let 
\[
\Psi(\C):=\Big\{P \in \Pc_0: g*P \geq g\,, \, \, \forall\, g \in \C\Big\}\, \eqtag[-2.5em]{\textbf{Improving Kernels}}{eq:improving-kernels}
\]
denote the set of belief transitions that are uniformly improving across all test functions in $\C$. For any $\Pc \subseteq \Pc_0$, let 
\[
\Phi(\Pc):=\Big\{g \in \C_0: g*P \geq g\,, \, \, \forall\, P \in \Pc\Big\}\, \eqtag[-2.5em]{\textbf{Improving Utilities}}{eq:improving-utilities}
\]
denote the set of test functions that are uniformly improving across all belief transitions in $\Pc$. We say that $(\C, \Pc)$ is \textit{\textbf{Blackwell-invariant}} if 
\[ \Psi(\C)=\Pc  \text{ and } \Phi(\Pc)=\C \eqtag[-2.5em][0.2\baselineskip]{\textbf{Blackwell Invariance}}{eq:blackwell-invariance}\,.\]
A Blackwell-invariant pair $(\C,\Pc)$ has a straightforward interpretation. It means that the available information technologies $\Pc$ are exactly the ones that are valuable for problems in $\C$ (i.e., $\Pc=\Psi(\C)$). In the meantime, the problems $\C$ are exactly the ones for which the available information technologies $\Pc$ are valuable (i.e., $\C=\Phi(\Pc)$).

Finally, we say that $\mathcal{P}$ is \textit{\textbf{one-shot sufficient}} if for all $f$ and all $x$, we have
\[\sup_{P \in \mathcal{P}} \int f(y) P(\d y \mid x) = \sup_{P \in \mathcal{P} \otimes \mathcal{P}} \int f(y) P(\d y \mid x), \eqtag[-2.5em][0.6\baselineskip]{\textbf{One Shot}}{eq:one-shot-sufficient} \]
where 
\[
\Pc \otimes \Pc:=\{P_1 \circ P_2: P_1,P_2 \in \Pc\}
\] 
denotes the set of all (twice) compositions of elements in $\Pc$. It is clear that if $\mathcal{P}$ is composition-closed, then it is one-shot sufficient. 

In the case of the Blackwell order, $\C$ is the cone of convex functions and $\Pc$ is the set of martingale transition kernels. Clearly, $\C$ is max-closed and $\Pc$ is composition-closed. Moreover, $\Psi(\mathcal{C}) = \Pc$ and  $\Phi(\Pc) = \mathcal{C}$. That is, the test functions $\C$ and transition kernels $\Pc$ associated with the Blackwell order are in fact Blackwell-invariant. This turns out not to be a coincidence. Our next result shows that (i) the composition-closure property of $\Pc$ mirrors the max-closure property of $\C$, each of which characterizes Blackwell consistency, and (ii) Blackwell consistency is equivalent to Blackwell invariance, which turns out to be also equivalent to the one-shot sufficiency property of the set of transition kernels for optimization problems. 

In general, different convex cones $\C$ and $\C'$ or different sets of kernels $\Pc$ and $\Pc'$ might induce the same comparisons, i.e., $\preceq_\C=\preceq_{\C'}$ or $\preceq^{\Pc}=\preceq^{\Pc'}$. Since Blackwell consistency is only a property about the induced order, one might naturally expect that there might be multiplicity involved in determining the consistent pairs. However, perhaps surprisingly, every Blackwell-consistent order admits a \textit{\textbf{unique}} consistent pair that characterizes it. 

\begin{theorem}[Blackwell-Consistent Pairs]\label{thm:main-compare}
The following are equivalent: 
\begin{itemize}
    \item[(a)] $(\C, \Pc)$ is Blackwell-consistent.
    \item[(b)] $(\C, \Pc)$ is Blackwell-invariant.
    \item[(c)] $\Pc$ is one-shot sufficient, and $\C = \Phi(\Pc)$. 
    \item[(d)] $\Pc$ is composition-closed, and $\C = \Phi(\Pc)$.
    \item[(e)] $\mathcal{C}$ is max-closed, and $\mathcal{P} = \Psi(\C)$. 
    \end{itemize}
Moreover, every Blackwell-consistent order $\preceq$ admits a unique consistent pair $(\C, \Pc)$.   
\end{theorem}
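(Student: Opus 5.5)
I will route everything through (a), using \Cref{thm:main2} and the construction in its proof, and prove the implications (a)$\Rightarrow$(b)$\Rightarrow$(c)$\Rightarrow$(d)$\Rightarrow$(e)$\Rightarrow$(a). The engine is the pair of operators $\Psi,\Phi$, which form an antitone Galois connection between cones of test functions and sets of kernels. Indeed, $\Pc\subseteq\Psi(\C)$ if and only if $\C\subseteq\Phi(\Pc)$. Consequently $\Phi\Psi\Phi=\Phi$ and $\Psi\Phi\Psi=\Psi$.

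\textbf{(a)$\Rightarrow$(b).} Suppose $\preceq_\C=\preceq^\Pc$. Applying this to Dirac measures, $\delta_x\preceq^\Pc P*\delta_x$ for every $P\in\Pc$, so $g*P\ge g$ for all $g\in\C$. This gives $\Pc\subseteq\Psi(\C)$ and $\C\subseteq\Phi(\Pc)$.

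For $\Psi(\C)\subseteq\Pc$, take $P\in\Psi(\C)$. Then $\delta_x\preceq_\C P*\delta_x$, so $P(\cdot\mid x)\in\Pc_x$ by consistency. Rectangularity then gives $P\in\Pc$.

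For $\Phi(\Pc)\subseteq\C$, take $g\in\Phi(\Pc)$. For any $\mu\preceq_\C\nu$ we have $\nu=P*\mu$ with $P\in\Pc$, hence $\int g\,\d\nu\ge\int g\,\d\mu$. So $g$ is monotone for $\preceq_\C$. I then need the \emph{bipolar} fact that every regular function monotone for $\preceq_\C$ already lies in $\C$, i.e.\ $\C$ is the maximal generator of its own order. I would prove this exactly as in the $(d)\Rightarrow(a)$ step of \Cref{thm:main}. If $g\notin\C$, separate $g$ from the closed cone $\C$ by a signed measure. Normalizing (using that constants lie in $\C$) yields $\mu,\nu$ with $\mu\preceq_\C\nu$ but $\int g\,\d\nu<\int g\,\d\mu$, a contradiction. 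This separation/closure step, including the handling of semicontinuous rather than continuous $g$ via the approximation and monotone-limit regularity axioms, is the main technical obstacle. It is also exactly what yields \emph{uniqueness}: once (a)$\Leftrightarrow$(b) holds, $\C$ is determined by the order as the set of all regular $\preceq$-monotone functions, and $\Pc=\Psi(\C)$.

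\textbf{(b)$\Rightarrow$(c)$\Rightarrow$(d).} Given (b), $\Pc=\Psi(\C)$. This set is composition-closed: if $g*P_1\ge g$ and $g*P_2\ge g$, then $g*(P_1\circ P_2)=(g*P_1)*P_2\ge g*P_2\ge g$, using positivity of $P_2$. Composition-closure in turn yields one-shot sufficiency.

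For (c)$\Rightarrow$(d), take $P_1,P_2\in\Pc$ and suppose $P_1\circ P_2\notin\Pc$. Since $\Pc$ is rectangular, some $x$ has $(P_1\circ P_2)(\cdot\mid x)\notin\Pc_x$. The set $\Pc_x$ is convex and closed, so I separate this point from it by a continuous $f$. That gives $\int f\,\d[(P_1\circ P_2)(\cdot\mid x)]>\sup_{P\in\Pc}\int f\,\d P(\cdot\mid x)$, contradicting \weqref{\textbf{One Shot}}{eq:one-shot-sufficient}. The identity $\C=\Phi(\Pc)$ carries over unchanged.

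\textbf{(d)$\Rightarrow$(e).} First, $\Phi(\Pc)$ is always max-closed. If $g_1,g_2\in\Phi(\Pc)$, then $\max\{g_1,g_2\}*P\ge \max\{g_1*P,g_2*P\}\ge\max\{g_1,g_2\}$.

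It remains to show $\Pc=\Psi(\Phi(\Pc))$. The inclusion $\supseteq$ is the hard one. Given $P\in\Psi\Phi(\Pc)$, consider the order generated by $\Pc$ from $\delta_x$. The value $\sup_{Q\in\Pc}\int f\,\d Q(\cdot\mid x)$ plays the role of an excessive majorant. Composition-closure makes the function $x\mapsto\sup_{Q\in\Pc}\int f\,\d Q(\cdot\mid x)$ lie in $\Phi(\Pc)$, in the spirit of Dynkin's characterization. Hence $\int f\,\d P(\cdot\mid x)$ is bounded by that supremum for every continuous $f$. Closedness and convexity of $\Pc_x$, together with separation, then give $P(\cdot\mid x)\in\Pc_x$, and rectangularity gives $P\in\Pc$.

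\textbf{(e)$\Rightarrow$(a).} This is precisely the ``if'' direction in the proof of \Cref{thm:main2}, applied to the constructed $\Pc=\Psi(\C)$.
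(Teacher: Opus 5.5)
\textbf{Gap in (a)$\Rightarrow$(b).} Your chain of implications is right, but the step you flag yourself in (a)$\Rightarrow$(b), namely $\Phi(\Pc)\subseteq\C$, is not established, and the method you propose fails for lower semicontinuous $g$.

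\begin{itemize}
\item Separation ``as in (d)$\Rightarrow$(a) of \Cref{thm:main}'' lives in $C(X)$, whose dual is the signed measures. But $\C$ is not a closed subset of $C(X)$, and $g\in\Phi(\Pc)$ need not be continuous.
\item The trick used there works only because the continuous approximants $k^n=\min\{g_1^n,g_2^n\}$ inherit the property being exploited, namely $V^\star_{k^n}(\delta_x)=k^n(x)$. Here, continuous $g_n\uparrow g$ need not be $\preceq_\C$-monotone. Separating some $g_n\notin\C$ from $\C\cap C(X)$ gives $\mu\preceq_\C\nu$ with $\int g_n\,\d\nu<\int g_n\,\d\mu$, which says nothing about $g$.
\item At best, a pointwise separation or duality at Dirac masses shows that $g$ is a pointwise supremum of elements of $\C\cap C(X)$ lying below it. That still does not put $g$ in $\C$, because $\C$ is closed only under bounded \emph{increasing sequential} limits, and a non-directed supremum is not such a limit.
\end{itemize}

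\textbf{How to close it.} The missing ingredient is max-closure of $\C$, which you already have: for a consistent pair, $\C$ is max-closed by the proof of \Cref{thm:main2}.
\begin{itemize}
\item Define $\widehat g(x):=\sup\{k(x):k\in\C,\ k\le g\}$. Then $-\widehat g$ is the $[-\C]$-envelope of the upper semicontinuous function $-g$. Since $[-\C]$ is min-closed, \Cref{lem:envelope} gives $\widehat g\in\C$.
\item Apply the zero-gap duality from the proof of (a)$\Rightarrow$(b) in \Cref{thm:main} to $[-\C]$ at $\mu=\delta_x$. This gives $\widehat g(x)=\min\{\int g\,\d\eta:\delta_x\preceq_\C\eta\}=\min_{\eta\in\Pc_x}\int g\,\d\eta$.
\item The last minimum equals $g(x)$, because $g\in\Phi(\Pc)$ and $\delta_x\in\Pc_x$. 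Hence $g=\widehat g\in\C$.
\end{itemize}
This is how the paper argues. With this step in place, your uniqueness argument goes through. Indeed, once $\preceq^\Pc=\preceq$, the set $\Phi(\Pc)$ is exactly the set of bounded lsc $\preceq$-monotone functions, and $\Pc=\Psi(\C)$.

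\textbf{Sign slip in (d)$\Rightarrow$(e).} The function $u_f(x)=\max_{\eta\in\Pc_x}\int f\,\d\eta$ is excessive, meaning $u_f*Q\le u_f$ for $Q\in\Pc$ by composition-closure and measurable selection. So it is $-u_f$ that lies in $\Phi(\Pc)$; note also that $-u_f$ is lsc by Berge's theorem, which membership in $\C_0$ requires. With this correction the argument works:
\begin{itemize}
\item $P\in\Psi\Phi(\Pc)$ together with $f\le u_f$ (from $\mathrm{Id}\in\Pc$) gives $\int f\,\d P(\cdot\mid x)\le u_f(x)$ for every continuous $f$.
\item Separation from the closed convex set $\Pc_x$, plus rectangularity, then gives $P\in\Pc$.
\end{itemize}
This pointwise argument at Dirac masses is a valid and lighter alternative to the paper's route. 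The paper proves $\preceq_\C\Rightarrow\preceq^\Pc$ for general $\mu$ via compactness of $\{P*\mu:P\in\Pc\}$ (\Cref{lem:compactm}), whereas you only need compactness of $\Pc_x$.

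Your direct rectangularity argument for $\Psi(\C)\subseteq\Pc$, and the remaining steps, match the paper.
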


\subsection{Which Orders are Consistent under Bayesian Updating?}\label{sec:Bayes-plausible}
With \Cref{thm:main2} and \Cref{thm:main-compare},  we now further specialize to Blackwell's environment, and characterize which orders of Blackwell experiments are Blackwell-consistent. Clearly, if beliefs are updated via Bayes' rule, then by Blackwell's theorem, the Blackwell order is Blackwell-consistent. We characterize all orders of Blackwell experiments that are Blackwell-consistent when beliefs are updated via Bayes' rule in this section, and characterize all updating rules that admit Blackwell-consistent orders in the next section.

\subsubsection{Bayes-Plausible and Blackwell-Consistent Orders}
To begin with, we say that an order $\preceq$ on $\Delta(X)$ is \textit{\textbf{Bayes-plausible}} if 
\[\mu \preceq \nu \implies \int x \mu(\d x) = \int x \nu(\d x)\, \eqtag[-2.5em]{\textbf{Bayes Plausibility}}{eq:bayes-plausible}\,,\]
where the integral is in the sense of the Pettis integral, i.e., the two belief distributions must have the same barycenter. 

An immediate consequence of \Cref{thm:main2} is the following: 
\begin{cor}\label{cor:blackwell-consistent-orders}
For any pair $(\C, \Pc)$, the following are equivalent:
\begin{itemize}
    \item[(a)] $\preceq_\C = \preceq^\Pc$ is Bayes-plausible.
    \item[(b)] $\C$ is a max-closed superset of all 
    convex functions, and $\Pc = \Psi(\C)$.
    \item[(c)] $\Pc$ is a composition-closed subset of 
    all martingale kernels, and $\C = \Phi(\Pc)$.
\end{itemize}
In particular, any Bayes-plausible and Blackwell-consistent order strengthens the Blackwell order, and the Blackwell 
order is the weakest such order.
\end{cor}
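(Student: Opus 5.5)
The plan is to read the corollary as a specialization of \Cref{thm:main-compare}, with Bayes plausibility translated into a containment on each side of the consistent pair. I would also read (a) as asserting that $\preceq_\C=\preceq^\Pc$ holds and that this common order is Bayes-plausible, so that $(\C,\Pc)$ is a Blackwell-consistent pair. The key observation concerns the affine functions. Bayes plausibility of $\preceq_\C$ is equivalent to every continuous affine function on $X=\Delta(\Omega)$ (together with its negative) being ``$\C$-neutral'', i.e., the affine functions lie in the linear span of $\C$ in the appropriate closure. In kernel language, it is equivalent to every $P\in\Pc$ being a martingale kernel.

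\textbf{(a)$\Rightarrow$(c).} By \Cref{thm:main-compare}, consistency gives that $\Pc$ is composition-closed and $\C=\Phi(\Pc)$. It remains to show that each $P\in\Pc$ is a martingale. For any $x$ we have $\delta_x\preceq^\Pc P*\delta_x$. Bayes plausibility then says the barycenter of $P(\cdot\mid x)$ equals $x$, which is exactly the martingale property.

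\textbf{(c)$\Rightarrow$(b).} Set $\C=\Phi(\Pc)$. Since $\Pc$ consists of martingale kernels, Jensen's inequality gives $g*P\ge g$ for every convex lower semicontinuous bounded $g$ on $X$, so $\C$ contains all convex functions. By \Cref{thm:main-compare} (d)$\Rightarrow$(e), $\C$ is max-closed and $\Pc=\Psi(\C)$.

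\textbf{(b)$\Rightarrow$(a).} By \Cref{thm:main-compare} (e)$\Rightarrow$(a), the pair is consistent. Because $\C$ contains every continuous affine function $\ell$ and also $-\ell$, the relation $\mu\preceq_\C\nu$ forces $\int\ell\,\d\mu=\int\ell\,\d\nu$ for all such $\ell$. These functions separate points of $\Delta(\Omega)$ (take $\ell(x)=\int\phi\,\d x$ for $\phi\in C(\Omega)$), so the Pettis barycenters coincide.

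For the final claim, take $\C$ as in (b). Then $\mu\preceq_\C\nu$ implies $\int g\,\d\mu\le\int g\,\d\nu$ for all convex $g$, which is Blackwell dominance, so $\preceq_\C$ strengthens the Blackwell order. The Blackwell order itself is attained by the pair (convex functions, martingales), which the paper has already noted is Blackwell-invariant. It is therefore the weakest such order.

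The main obstacle is the regularity bookkeeping. First, $\Phi(\Pc)$ must consist of lower semicontinuous functions, and convex functions on $\Delta(\Omega)$ must be handled in the infinite-dimensional weak-* setting, where convex bounded lsc functions are exactly what Jensen requires. Second, in (a)$\Rightarrow$(c) one needs the barycenter identity for every single $x$ rather than almost everywhere. Both points should follow from the definitions, using the fact that Dirac measures $\delta_x$ are legitimate priors in the order.
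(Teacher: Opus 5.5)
Your proof is correct, and it takes a somewhat different route from the paper's sketch.

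The paper argues as follows. Bayes plausibility puts every continuous affine $\ell$ (and $-\ell$) into the test cone. Max-closure from \Cref{thm:main2}, together with the fact that a convex function is a supremum of affine ones, then gives that all convex functions lie in $\C$. To make that rigorous, one must pass from finite maxima of affine functions to an arbitrary bounded lsc convex function. This requires a countable increasing approximation and closure under bounded increasing limits (as in \Cref{lem:piecewiseconvexapprox}), a step the sketch leaves implicit.

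You instead go through (c). Dirac priors turn Bayes plausibility into the martingale property of every $P\in\Pc$. Jensen's inequality (via Fenchel--Moreau) then places every bounded lsc convex function directly in $\Phi(\Pc)=\C$. Because the defining inequality of $\Phi(\Pc)$ survives arbitrary pointwise suprema, you need no approximation argument, and you obtain the kernel-side characterization (c) along the way.

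The price is that you rely on the full invariance identity $\C=\Phi(\Pc)$ from \Cref{thm:main-compare}, not only on max-closure. The paper's phrasing, by contrast, isolates max-closure as the property that forces a Bayes-plausible consistent order to contain all decision problems. Your (b)$\Rightarrow$(a) step and your argument for the final claim are fine.

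One small remark: your opening heuristic that affine functions lie in the ``linear span'' of $\C$ is imprecise. What is true, by a bipolar argument using sup-norm closedness of $\C\cap C(X)$, is that Bayes plausibility forces $\pm\ell\in\C$. Nothing in your proof depends on the heuristic, though.
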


Indeed, \Cref{cor:blackwell-consistent-orders} follows because Bayes plausibility implies that all affine functions must be in the cone of test functions, which combined with \Cref{thm:main2} and \Cref{thm:main-compare} implies that all convex functions must be in the test cone since the test cone must be max-closed and any convex function is a pointwise supremum of a collection of affine functions. Therefore, any Bayes-plausible weakening of the Blackwell order through a subset of convex functions, such as the Lehmann order (\citealt{Lehmann}) or the linear convex order (\citealt{chen2025experiments}), cannot be consistent in our sense.\footnote{It is immediate to verify that neither of these orders is defined by a set of test functions that is max-closed.} \Cref{cor:blackwell-consistent-orders} shows that any Bayes-plausible and Blackwell-consistent order $\preceq$ must imply the Blackwell order---therefore, the Blackwell order is necessary for a consistent comparison, while maintaining Bayes plausibility.  In other words, any Bayes-plausible, Blackwell-consistent order $\preceq$ must rely on a set $\C$ of problems that includes all decision problems, and on a set $\Pc$ of information technologies that is a subset of martingales. 

\subsubsection{Constrained Information Design}
An alternative interpretation of \Cref{cor:blackwell-consistent-orders} is a characterization of what kinds of constrained information technologies admit a consistent representation based on the instrumental value of information. In particular, \Cref{cor:blackwell-consistent-orders} identifies a class of \textbf{\textit{constrained information design}} problems---where the feasible information technology is constrained to a subset $\Pc$ of martingales---that are particularly tractable and preserve many of the desirable properties of the unconstrained problem.  

A rectangular set of transition kernels $\Pc$ equivalently defines a set of feasible experiments $\mathcal{P}_x$ at each $x$. Thus, $\Pc$ defines a constrained information design problem: 
\[V^{\mathcal{P}}_f(\delta_{x}) := \max_{\delta_{x} \preceq^{\mathcal{P}} \nu} \int f(y)  \nu(\d y)\,,\]
where $\nu$ is the posterior distribution given the prior $x$ and an experiment chosen from $\mathcal{P}_x$. In persuasion problems, the function $f$ is the indirect utility function of the sender. In flexible learning problems with a uniformly posterior separable cost function, the function $f(x) = v(x) - h(x)$ where the convex function $v$ is the indirect utility function from a decision problem, and the convex function $h$ is the measure of uncertainty. 

By \Cref{thm:main-compare}, if $\Pc$ is composition-closed, then $\preceq^{\mathcal{P}} = \preceq_{\mathcal{C}}$ for a max-closed $\mathcal{C}$, which by \Cref{thm:main} then implies the following envelope characterization: 
\[V^{\mathcal{P}}_f(\mu) = \int \Big( \max_{\delta_{x} \preceq_{\mathcal{C}} \nu} \int f(y)  \nu(\d y)\Big) \mu(\d x) = \int \widehat{f}(x)  \mu(\d x) \,,\]
where $\widehat{f}$ is the $[-\mathcal{C}]$-envelope of $f$. Importantly, the envelope $\widehat{f}$ here is \textit{\textbf{prior-independent}}, just like in the unconstrained case. This contrasts with the case where the constraints are moment constraints on the posterior distributions $\nu$---such as the ones studied by \citet{doval2024constrained}, who show that the value of the constrained problem can be found by concavifying the Lagrangian given the optimal multipliers which depend on the prior. Indeed, such moment constraints on the posterior distributions are generally \textit{not} composition-closed in terms of the implied set $\mathcal{P}$, which as our results show are also necessary for the existence of prior-independent envelope characterization. An immediate consequence of this prior-independent envelope is the following characterization generalizing the unconstrained case: 
\begin{cor}\label{cor:envelope}
For any composition-closed $\mathcal{P}$, a feasible $\nu$ is constrained-optimal if and only if 
\[\supp(\nu) \subseteq \Big\{y \in X: f(y) = \widehat{f}(y)\Big\}\]
and 
\[\int f(y)\nu(\d y) = \widehat{f}(x)\,, \]
where $x$ is the prior and $\widehat{f}$ is the $[-\Phi(\mathcal{P})]$-envelope of $f$. The value of a constrained optimal signal is $\widehat{f}(x)$, and the designer benefits from constrained signals if and only if $\widehat{f}(x) > f(x)$. 
\end{cor}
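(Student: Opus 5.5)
The plan is to reduce the constrained problem to the envelope characterization from \Cref{thm:main}, applied to the min-closed cone $[-\C]$ with $\C:=\Phi(\Pc)$. Since $\Pc$ is composition-closed, \Cref{thm:main-compare} (d)$\Rightarrow$(a) gives $\preceq^{\Pc}=\preceq_\C$ with $\C$ max-closed, so $[-\C]$ is min-closed. The feasible set at prior $x$ is $\{\nu:\delta_x\preceq_\C\nu\}=\{\nu:\nu\preceq_{[-\C]}\delta_x\}$. Applying \Cref{cor:pointwise} to $[-\C]$, with the flipped direction as in the remark after \Cref{thm:main}, gives $V^{\Pc}_f(\delta_x)=\widehat f(x)$, where $\widehat f(x)=\inf\{h(x): h\in[-\C]\text{-envelope class},\,h\ge f\}$ is the $[-\C]$-envelope. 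By \Cref{lem:envelope}, $\widehat f$ itself lies in the cone. This proves the value statement. The benefit statement follows at once: the designer can always take $\nu=\delta_x$, since the identity kernel is in $\Pc$, which yields $f(x)$; so constrained signals help if and only if $\widehat f(x)>f(x)$.

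For the optimality characterization, first take a feasible $\nu$, meaning $\nu\preceq_{[-\C]}\delta_x$. Because $\widehat f$ belongs to the cone defining the order,
\[
\int f\,\d\nu\le\int\widehat f\,\d\nu\le\widehat f(x).
\]
So $\nu$ is optimal if and only if both inequalities hold with equality. The second equality, $\int f\,\d\nu=\widehat f(x)$, is exactly the stated value condition, and it already forces the first inequality to be an equality. Conversely, if both stated conditions hold, then $\int f\,\d\nu=\widehat f(x)=V^{\Pc}_f(\delta_x)$, so $\nu$ is optimal. It remains to show that optimality implies the support condition.

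From $\int(\widehat f-f)\,\d\nu=0$ and $\widehat f\ge f$ we get $\widehat f=f$ $\nu$-a.e. To upgrade this to $\supp(\nu)\subseteq\{\widehat f=f\}$, I need the set $\{\widehat f=f\}$ to be closed, or more generally need $\widehat f-f$ to be lower semicontinuous. I expect this to be the main technical point, since it depends on the semicontinuity conventions. In this section $\C\subseteq\C_0$ consists of lower semicontinuous functions, so $-\C$ consists of upper semicontinuous functions, and the envelope of $f$ is an infimum of such functions; this does not immediately give the needed regularity. The cleanest route is to treat $f$ as continuous, as in the applications. Then $\widehat f$ is an infimum of usc functions and hence usc, and a separate argument is needed to get lower semicontinuity of $\widehat f$, or to interpret the support condition up to $\nu$-null sets. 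I would first try to show $\widehat f$ is continuous when $f$ is continuous, using the continuous approximation property of regular cones. If that fails, I would state the support inclusion in the $\nu$-a.e. sense, which is what the argument genuinely delivers.
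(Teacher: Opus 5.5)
Your reduction is the paper's own route. The paper presents the corollary as an immediate consequence of the prior-independent envelope, obtained from \Cref{thm:main-compare} ($\preceq^{\Pc}=\preceq_{\C}$ with $\C=\Phi(\Pc)$ max-closed) together with \Cref{thm:main} applied to the min-closed cone $[-\C]$. No ``flip'' is needed, since the feasible set is already $\{\nu:\nu\preceq_{[-\C]}\delta_x\}$. The value claim, the benefit claim via the identity kernel, and the sandwich $\int f\,\d\nu\le\int\widehat f\,\d\nu\le\widehat f(x)$ are correct. So is your remark that $\int f\,\d\nu=\widehat f(x)$ by itself characterizes optimality.

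On the support inclusion your suspicion is right, and your first fallback (continuity of $\widehat f$) cannot succeed. In general $\widehat f(z)=\max_{\eta\in\Pc_z}\int f\,\d\eta$ is only upper semicontinuous, because regularity gives $z\mapsto\Pc_z$ a closed graph but no lower hemicontinuity. The inclusion, read with the closed support, can then fail even for continuous $f$.

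Here is an example with two states, so $X=[0,1]$.
\begin{itemize}
\item Let $U$ be uniform on $[\tfrac14,\tfrac34]$ and $\kappa:=\tfrac13\delta_0+\tfrac23\delta_{3/8}$.
\item Let $\Pc$ be the rectangular set with sections $\Pc_{1/2}=\{(1-p)\delta_{1/2}+pU:p\in[0,1]\}$, $\Pc_{1/4}=\{(1-p)\delta_{1/4}+p\kappa:p\in[0,1]\}$, and $\Pc_z=\{\delta_z\}$ otherwise.
\item These are martingale kernels, and $\Pc$ is convex, has closed graph and contains the identity.
\item $\Pc$ is composition-closed because $U$ does not charge $\{\tfrac14,\tfrac12\}$.
\item $\Phi(\Pc)$ admits continuous approximations from below: subtract vanishing multiples of narrow tents at $\tfrac14$ and $\tfrac12$ from any continuous $g_n\uparrow g$. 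So the pair is regular.
\item Let $f$ be continuous with $f(z)=|z-\tfrac12|$ on $[\tfrac14,\tfrac34]$, $f(0)=10$, affine on $[0,\tfrac14]$ and constant on $[\tfrac34,1]$.
\end{itemize}
Then $\widehat f=f$ except at two points: $\widehat f(\tfrac12)=\tfrac18>0=f(\tfrac12)$ and $\widehat f(\tfrac14)=\tfrac{41}{12}>\tfrac14=f(\tfrac14)$. At the prior $x=\tfrac12$, $U$ is the unique constrained optimum, with $\int f\,\d U=\tfrac18=\widehat f(\tfrac12)$. Yet $\tfrac14,\tfrac12\in\supp(U)$ lie outside $\{f=\widehat f\}$.

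So what the argument, and the corollary, genuinely deliver is $\nu(\{f<\widehat f\})=0$, which is your second option. The closed-support form additionally holds when $\{f=\widehat f\}$ is closed (e.g.\ in the unconstrained finite-state case, where the concave envelope is continuous) or when $\nu$ is finitely supported. The a.e.\ form is also all the paper uses downstream, in the proof of \Cref{prop:comparison}.
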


\Cref{cor:envelope} shows that many structural properties derived in  \citet{Kamenica2011} can be immediately generalized to the constrained case where signals need not be fully flexible as long as $\mathcal{P}$ is composition-closed. The key geometric structure can be obtained by replacing the concave envelope with the $[-\Phi(\mathcal{P})]$-envelope. 

How does the constrained optimal signal compare to the KG signal? 
We say that the constrained problem is  \textit{\textbf{value-eliminating}} if there exists some prior $x \in X$ under which the constrained designer cannot benefit from information design while the unconstrained designer strictly benefits. A consequence of \Cref{cor:envelope} is the following comparison: 
\begin{prop}[Constrained vs. KG Signals]\label{prop:comparison}
For finite $\Omega$, composition-closed $\mathcal{P}$, and any $f$,
\begin{itemize}
    \item[(i)] either the constrained problem is value-eliminating\,;     
    \item[(ii)] or at any prior $x$ where the unconstrained optimal signal is unique, any constrained optimal signal cannot be strictly Blackwell-dominated by the unconstrained optimal signal, and always weakly Blackwell-dominates the unconstrained optimum when $|\Omega| = 2$. 
\end{itemize}
\end{prop}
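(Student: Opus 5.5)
The plan is to reduce the dichotomy to a statement about contact sets, and then compare signals using only the support characterization in \Cref{cor:envelope} together with its unconstrained analogue.

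\textbf{Setup.} By \Cref{cor:blackwell-consistent-orders}, composition-closure of $\Pc$ means $\Pc$ consists of martingale kernels. It also means $\C=\Phi(\Pc)$ is a max-closed superset of the convex functions, so $[-\C]$ contains all concave functions. Let $\mathrm{cav} f$ denote the concave envelope of $f$. Since $\widehat f$ is an infimum over a larger class of dominating functions than $\mathrm{cav} f$, we get
\[
f\le \widehat f\le \mathrm{cav} f .
\]
Consider the contact sets $S:=\{f=\mathrm{cav} f\}$ and $\widehat S:=\{f=\widehat f\}$. The sandwich gives $S\subseteq \widehat S$ immediately.

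\textbf{Step 1: the dichotomy.} Suppose the constrained problem is not value-eliminating. Then at every prior $y$ with $\mathrm{cav} f(y)>f(y)$ we must have $\widehat f(y)>f(y)$ (here I use \Cref{cor:envelope} for the constrained benefit criterion and its Kamenica--Gentzkow analogue for the unconstrained one). This says $\widehat S\subseteq S$, so $\widehat S=S$. Consequently, by \Cref{cor:envelope}, every constrained optimal $\nu$ at prior $x$ is supported on $S$. It also has barycenter $x$, because $\Pc$ consists of martingales.

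\textbf{Step 2: no strict domination.} Fix a prior $x$ at which the unconstrained optimum $\nu^\star$ is unique, and suppose a constrained optimum $\nu$ satisfies $\nu\preceq_B\nu^\star$. Since $\mathrm{cav} f$ is concave and $\nu^\star$ is a mean-preserving spread of $\nu$,
\[
\int f\,\d\nu=\int \mathrm{cav} f\,\d\nu\ \ge\ \int \mathrm{cav} f\,\d\nu^\star=\int f\,\d\nu^\star=\mathrm{cav} f(x),
\]
where the first equality uses $\supp(\nu)\subseteq S$. On the other hand, $\nu$ is Bayes-plausible at $x$, so $\int f\,\d\nu\le \mathrm{cav} f(x)$. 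Hence $\nu$ is unconstrained optimal, and uniqueness forces $\nu=\nu^\star$. So $\nu$ cannot be strictly Blackwell-dominated by $\nu^\star$.

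\textbf{Step 3: two states.} Identify $X$ with $[0,1]$. If $x\in S$, then $\nu^\star=\delta_x$, which is Blackwell-dominated by every Bayes-plausible $\nu$, so we are done. Otherwise let $a:=\sup(S\cap[0,x])$ and $b:=\inf(S\cap[x,1])$. By construction $\mathrm{cav} f$ is affine on $[a,b]$, and I will argue that the unique unconstrained optimum is the binary split of $x$ onto $\{a,b\}$.

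The constrained optimum $\nu$ has mean $x$ and is supported on $S\subseteq[0,a]\cup[b,1]$. Collapse the mass of $\nu$ on $[0,a]$ to its conditional mean, which is at most $a$, and the mass on $[b,1]$ to its conditional mean, which is at least $b$. The result is a two-point distribution that is a mean-preserving contraction of $\nu$. Its two points bracket $[a,b]\ni x$, so the binary split on $\{a,b\}$ is in turn a mean-preserving contraction of that two-point distribution. By transitivity, $\nu^\star\preceq_B\nu$.

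\textbf{Main obstacle.} The delicate part is the one-dimensional regularity in Step 3. I need $S$ to be closed, so that $a,b\in S$ and $\nu^\star$ is really supported on $\{a,b\}$. This should follow from $f$ being upper semicontinuous and $\mathrm{cav} f$ being continuous on $(0,1)$; endpoint jumps of $\mathrm{cav} f$ will be handled separately. I also need to check that uniqueness of $\nu^\star$ is consistent with $\nu^\star$ being this split. I would first try to verify both points via the standard one-dimensional concavification argument.
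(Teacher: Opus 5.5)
Your proof is correct. For the non-domination claim it takes a different route from the paper.

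\textbf{How the paper argues.} The paper works with the support of the unique, hence finitely supported, optimum $\nu_{\text{KG}}$. It shows by a perturbation argument that any constrained optimum puts zero mass on $\mathrm{conv}(\supp\nu_{\text{KG}})\setminus\supp\nu_{\text{KG}}$. Otherwise some $y$ there would be a contact point of $\widehat f$, hence of $\mathrm{cav}\,f$. Moving a little mass from $\supp\nu_{\text{KG}}$ to $y$ along $y$'s barycentric coordinates would then give a second unconstrained optimum. This one structural fact delivers both the non-domination claim and, by fusing, the two-state claim.

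\textbf{How you argue.} You turn the non-value-eliminating hypothesis into the global identity $\{f=\widehat f\}=\{f=\mathrm{cav}\,f\}$. You then use a Jensen chain: a constrained optimum concentrated on the KG contact set and Blackwell-dominated by $\nu^\star$ already attains $\mathrm{cav}\,f(x)$, so uniqueness forces it to equal $\nu^\star$.

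\textbf{What each route buys.} Yours is shorter and more general for the first claim. It never uses finiteness of $\Omega$, finite support of $\nu^\star$, relative interiors, or barycentric coordinates. The cost appears in the two-state case. There you must identify $\nu^\star$ explicitly as the split onto the nearest contact points $a<x<b$, which needs closedness of the contact set. The paper simply reads off the two support points of $\nu_{\text{KG}}$ and never needs closedness.

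\textbf{Closing the loose ends you flag.} These are routine.
\begin{itemize}
\item[(1)] $\mathrm{cav}\,f$ is concave on $[0,1]$, so $\mathrm{cav}\,f(t)\ge(1-t)\,\mathrm{cav}\,f(0)+t\,\mathrm{cav}\,f(1)$. This gives $\liminf_{t\downarrow 0}\mathrm{cav}\,f(t)\ge\mathrm{cav}\,f(0)$, and similarly at $1$. Hence $\mathrm{cav}\,f$ is lower semicontinuous on all of $[0,1]$, and there is no harmful endpoint jump.
\item[(2)] Since $f$ is upper semicontinuous, $f-\mathrm{cav}\,f$ is upper semicontinuous, so $S=\{f-\mathrm{cav}\,f\ge 0\}$ is closed.
\item[(3)] $0,1\in S$, because only the Dirac masses there have barycenter $0$ or $1$. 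Together with $x\notin S$, this gives $a<x<b$ and $a,b\in S$.
\item[(4)] You do not need affinity of $\mathrm{cav}\,f$ on $[a,b]$. Since $\nu^\star(S)=1$ and $S\subseteq[0,a]\cup[b,1]$, your fusion argument gives $\beta\preceq_B\nu^\star$ for the split $\beta$ onto $\{a,b\}$. Your Step 2 chain then shows $\beta$ is optimal, so $\beta=\nu^\star$ by uniqueness.
\end{itemize}

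\textbf{A small misattribution in your setup.} Composition-closure does not by itself make $\Pc$ a set of martingale kernels; that is part of the constrained-design setting. Given it, \Cref{cor:blackwell-consistent-orders}, or Jensen's inequality directly, gives that $\Phi(\Pc)$ contains the convex functions.
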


Indeed, to see \Cref{prop:comparison}, note that if the constrained problem is not value-eliminating, then by \Cref{cor:envelope} 
\[\widehat{f}^{\text{cav}}(x) > f(x) \implies \widehat{f}^{\C}(x) > f(x)\,,\]
where we use the superscripts to distinguish the concave envelope versus $\C$-envelope where $\C = -\Phi(\Pc)$. By  \Cref{cor:envelope} again, this implies that any constrained optimal $\nu^\star$ must satisfy
\[\nu^\star\Big(\text{conv}\big(\supp(\nu_{\text{KG}})\big) \backslash \supp(\nu_{\text{KG}})\Big) = 0\,,\]
since otherwise there exists some $y$ in the above set such that $\widehat{f}^{\text{cav}}(y) = \widehat{f}^\C(y) = f(y)$ which would then contradict that $\nu_{\text{KG}}$ is the unique unconstrained optimum. This then implies that $\nu^\star$ cannot be strictly Blackwell-dominated by $\nu_{\text{KG}}$ with any finite $\Omega$, and must weakly Blackwell-dominate $\nu_{\text{KG}}$ with two states.

Next, we show two examples of constrained signals where $\Pc$ is composition-closed. 

\paragraph{Privacy-Preserving Signals.}\hspace{-2mm}Consider the setting adopted from \citet{strack2024privacy}. Let $\Omega$ be a compact Polish space and let $\mathcal{E}$ be a sub-$\sigma$-algebra on $\Omega$. An event $E \in \mathcal{E}$ describes a \emph{\textbf{privacy event}} that needs to be protected. Given a prior $x_0$, a signal is said to be \emph{\textbf{privacy-preserving}} if under any signal realization, the posterior belief about any event $E \in \mathcal{E}$ remains the same as the prior $x_0(E)$. For any $x \in X=\Delta(\Omega)$, let 
\[
K(x):=\{y \in X: y(E)=x(E)\,, \, \, \forall E \in \mathcal{E}\}
\]
be the set of posterior beliefs that preserves privacy when the prior is $x$. The set of privacy-preserving signals can be equivalently represented by a set $\Pc$ of martingales such that for all $P \in \Pc$ and for all $x \in X$,
\[
P(K(x)\mid x)=1\,.
\]
Note that $\mathcal{K}:=\{K(x): x \in X\}$ forms a partition of $X$. As a result, $\Pc$ is composition-closed. 

The order $\preceq^{\Pc}$ has a natural interpretation. Two distributions $\mu,\nu \in \Delta(X)$ are ordered by $\preceq^{\Pc}$ if and only if $\nu$ can be attained from $\mu$ via privacy-preserving signals $P \in \Pc$ that do not further reveal any information about events in $\mathcal{E}$ than what $\mu$ had already revealed.\footnote{In the language of \citet{strack2024privacy}, this means that $\nu$ is the distribution of posterior beliefs induced by a conditionally privacy-preserving signal given the information revealed by $\mu$.} Since $\Pc$ is composition-closed, \Cref{thm:main2} ensures that $\preceq^{\Pc}$ is a Blackwell-consistent order. As a result, any distribution of posterior beliefs $\nu$ that can be obtained from $\mu$, via disclosing privacy-preserving information, gives a higher instrumental value in all problems $g \in \C$, where 
\[
\C:=\Phi(\Pc)\,.
\]
The set $\C$ of problems also has an intuitive description: every $g \in \C$ must be such that $g$ is convex when restricted to each partitional element $K \in \mathcal{K}$. In other words, information disclosed by privacy-preserving signals can be ordered by their instrumental values for decision problems, restricted to the domain of beliefs that preserve privacy. In particular, for a fixed prior $x_0$, the order over privacy-preserving signals defined by $\Pc_{x_0}$ coincides with the Blackwell order. In general, however, the order $\preceq^{\Pc}=\preceq_{\C}$ is a strengthening of the Blackwell order, and there are signals that are Blackwell-ordered but are not ordered by the privacy-preserving order. These are exactly signals that disclose more information in ways that violate privacy. 

An immediate consequence of \Cref{thm:main-compare} here is an envelope characterization of the optimal value of privacy-preserving persuasion. Given a public signal $\mu \in \Delta(X)$, a sender chooses a (conditionally) privacy-preserving signal that does not reveal more information regarding $\mathcal{E}$. The sender's problem, by \Cref{thm:main-compare}, can be written as: 
\[
\max_{\nu: \mu \preceq_{\C} \nu} \int f \d \nu\,,
\]
where $f:X \to \R$ is the indirect utility of the sender. Since $\C$ is max-closed, $-\C$ is min-closed.  \Cref{thm:main} then implies that the value of the problem is characterized by 
\begin{equation*}
V^\star_f(\mu)=\int \widehat{f}^{\C} \d \mu\,, \eqtag[-2.5em]{\textbf{Privacy-Preserving Persuasion}}{eq:privacy}
\end{equation*}
where 
\[
\widehat{f}^{\C}(x):=\inf\{g(x):g \in -\C\,, \, \, g \geq f\}
\]
is the $[-\C]$-envelope of $f$. The expression of \weqref{\textbf{Privacy-Preserving Persuasion}}{eq:privacy} further simplifies and generalizes Proposition 6 of \citet{strack2024privacy}, which characterizes the value of privacy-preserving persuasion for a fixed prior via concavification and optimal transport. In contrast, \weqref{\textbf{Privacy-Preserving Persuasion}}{eq:privacy} does not involve solving an optimal transport problem, and allows for arbitrary public information $\mu$.  Operationally, the $[-\C]$-envelope is also often easier to compute relative to the concave closure. Indeed, instead of taking the concave closure of the entire function $f$, computing the $[-\C]$-envelope only requires taking the lower-dimensional concave envelope of $f$ on each partitional element $K \in \mathcal{K}$. 

For instance, consider a simple privacy-preserving persuasion problem where the sender is a doctor and the receiver is an insurance company. The doctor can provide information about a patient to the insurance company. The state of the world could be one of three states: $\omega_1$, where the patient is genetically prone to cancer; $\omega_2$, where the patient is not genetically prone but often exposed to chemicals; and $\omega_3$, where the patient is unlikely to have cancer. The receiver can choose whether to insure the patient. The doctor prefers the patient to be insured: they get a payoff of 1 if the patient is insured, and a payoff of zero otherwise. The insurance company prefers to insure a healthy patient: it gets payoff $1$ if $\omega=\omega_3$, $0$ if $\omega=\omega_2$, and $-1$ if $\omega=\omega_1$. The privacy events $\mathcal{E}$ are $\{\{\omega_1\},\{\omega_2,\omega_3\}\}$, since HIPAA prohibits the doctor from disclosing genetic information about a patient. Suppose that there is already some public information about the patient known to the insurance company. Denote by $\mu \in \Delta(X)$ the receiver's initial distribution of posterior beliefs. The sender's privacy-preserving persuasion problem can be written as 
\[
\max_{\nu: \mu \preceq_{\C} \nu} \int f \d \nu\,,
\]
where $f(x):=\ind\{x(\omega_3) \geq x(\omega_1)\}$ is the sender's indirect utility as a function of posterior beliefs. It can be readily shown that the $[-\C]$-envelope of $f$ is given by 
\[
\widehat{f}^\C(x):=\ind\Big\{x(\omega_1) \leq \frac{1}{2}\Big\}\cdot \min\left\{1,\frac{x(\omega_3)}{x(\omega_1)}\right\}\,,
\]
and therefore the sender's value is 
\[
\int_X \ind\Big\{x(\omega_1) \leq \frac{1}{2}\Big\}\cdot \min\left\{1,\frac{x(\omega_3)}{x(\omega_1)}\right\} \mu(\d x)\,.
\]
In contrast, the concave envelope of $f$ is given by 
\[
\widehat{f}^{\text{cav}}(x):=\min\Big\{1,1-x(\omega_1)+x(\omega_3)\Big\}\,,\]
which is greater than $\widehat{f}^\C$. Thus, the sender's value, had there been no privacy constraints, would be 
\[
\int_X \min\Big\{1,1-x(\omega_1)+x(\omega_3)\Big\}  \mu(\d x) \geq \int_X \ind\Big\{x(\omega_1) \leq \frac{1}{2}\Big\}\cdot \min\left\{1,\frac{x(\omega_3)}{x(\omega_1)}\right\} \mu(\d x)\,.
\]
\Cref{fig:persuasion} plots these two envelopes, where the functions are plotted on the $(x(\omega_3),x(\omega_1))$-plane, using the fact that $x(\omega_2)=1-x(\omega_1)-x(\omega_3)$.

\begin{figure}
\centering
\tikzset{
solid node/.style={circle,draw,inner sep=1.25,fill=black},
hollow node/.style={circle,draw,inner sep=1.25}
}
\begin{subfigure}[b]{0.4\linewidth}
\centering
    \includegraphics[scale=0.5]{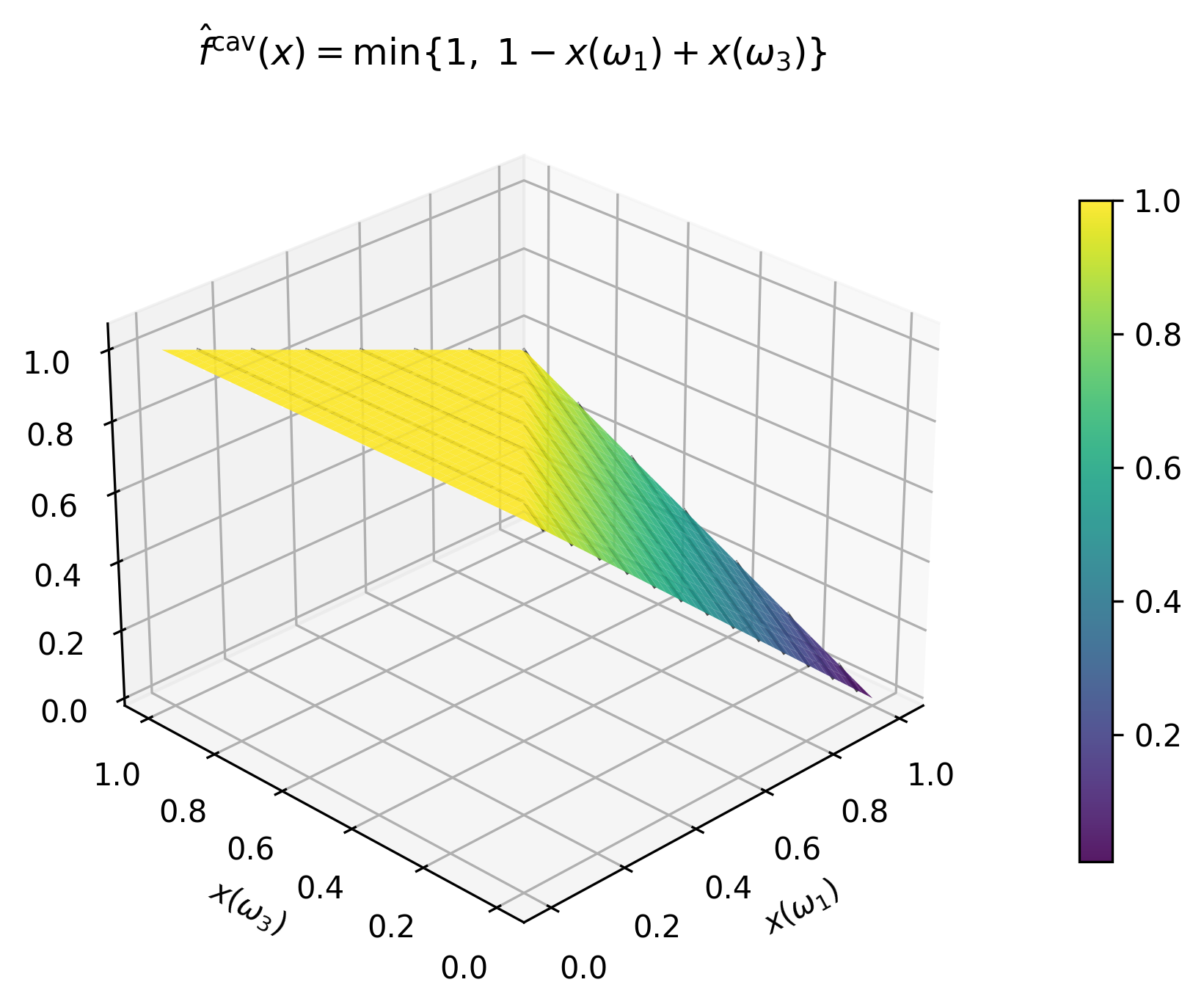}
\caption{Concave envelope $\widehat{f}^{\text{cav}}$}
\label{fig1a}
\end{subfigure}
\quad \quad
\begin{subfigure}[b]{0.4\linewidth}
\centering
 \includegraphics[scale=0.5]{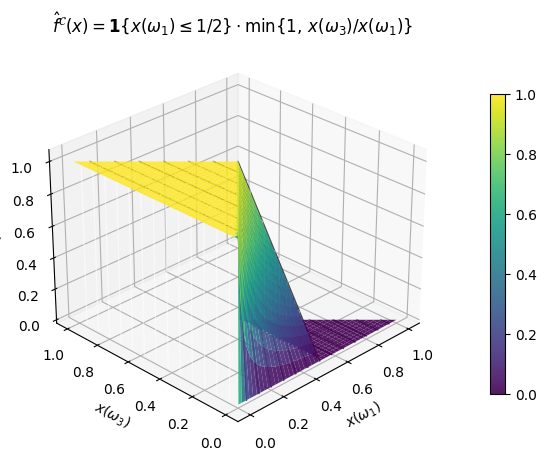}
\caption{Privacy-preserving envelope $\widehat{f}^{\C}$}
\label{fig1b}
\end{subfigure}
\caption{Comparison of envelopes}
\label{fig:persuasion}
\end{figure}

\paragraph{Sequential Sampling.}\hspace{-2mm}Another class of technologies that is automatically composition-closed is \textit{\textbf{dynamic sampling/stopping}} technologies. For example, \citet{brocas2007influence}, \citet{henry2019research}, \citet{morris2019wald}, and \citet{ball2021experimental} consider the classic Wald sampling model (\citealt{Wald}) adopted in information design settings. We describe a general setting in discrete time. Fix a finite state space $\Omega$. The belief process $(x_n)_{n \geq 0}$ taking values in $X = \Delta(\Omega)$ under any monitoring technology in the style of \citet{Wald} would evolve according to a time-homogeneous Markov martingale with some transition kernel $Q$.

A designer chooses a \textit{\textbf{randomized stopping time}} $\tau$ with
respect to the natural filtration of $(x_n)_{n \geq 0}$. The induced posterior distribution is the law of the stopped belief $x_\tau$. Every such $\tau$ generates a transition kernel $P^\tau$ on $X$ defined by 
\[P^\tau(\,\cdot\, \mid x) = \P\big[x_\tau \in \,\cdot\, \mid x_0 = x\big]\,.\]
Let 
\[\mathcal{P} := \Big\{P^\tau: \tau \text{ is a randomized stopping time for $(x_n)_{n\geq 0}$}\Big\}\,.\]
Then, it is evident that $\mathcal{P}$ is composition-closed and regular. Therefore, by \Cref{thm:main-compare}, there exists an order $\preceq_\C$ stronger than the Blackwell order on $\Delta(X)$ such that 
\[\mu \preceq_\C \nu \iff \text{there exists a randomized stopping time $\tau$ s.t. $x_\tau  \sim \nu$ with $x_0 \sim \mu$\,.}\]
Therefore, any dynamic sampling/stopping problem is equivalent to a \textit{\textbf{reduced-form constrained information design}} for which \Cref{cor:envelope} and \Cref{prop:comparison} hold immediately---in particular, they admit envelope characterizations and satisfy structural properties just like in the unconstrained case studied in \citet{Kamenica2011}. Moreover, as \Cref{prop:comparison} shows, unless the constrained problem is value-eliminating,  the induced posterior distributions for any such problem must weakly Blackwell dominate the KG signal with a binary state, and cannot be strictly Blackwell dominated by the KG signal with any finite state. 

What is the appropriate envelope here? By \Cref{cor:envelope}, the appropriate envelope $\widehat{f}$ is defined by $[-\Phi(\mathcal{P})]$-envelope of $f$. By definition, $g \in -\Phi(\mathcal{P})$ if and only if 
\[\int g(y) P(\d y\mid x) \leq g(x) \text{ for all $x$ and all $P \in \mathcal{P}$}\,,\]
which is equivalent to 
\[\int g(y) P^\tau(\d y\mid x) \leq g(x) \text{ for all $x$ and all randomized stopping times $\tau$}\,.\]
Since the Markov process is time-homogeneous, note that the above is equivalent to 
\[\int g(y) Q(\d y\mid x) \leq g(x) \text{ for all $x$}\,,\]
where $Q$ is the transition kernel of the Markov process. In the language of optimal stopping theory, such a function is called a \textit{\textbf{superharmonic/excessive function}}. This provides a point of connection to the well-known \textit{\textbf{Dynkin's theorem}} (\citealt{dynkin1963optimal}) in optimal stopping. It is further generalized in  \citet{dayanik2003optimal} and states that the value function  
\[V = \inf\Big\{g : g \text{ is excessive and } g \geq f \Big\}, \tag{\textbf{Dynkin}}\]
which our results immediately recover. 

Combining \Cref{thm:main-compare} and the above observation about time-homogeneous Markov chains, we also immediately obtain: 
\[\int g \d \mu \geq \int g \d \nu \text{ for all excessive $g$} \iff \text{$\exists$ $\tau$ s.t. $x_\tau  \sim \nu$ with $x_0 \sim \mu$\,,} \tag{\textbf{Rost}}\]
which is the well-known \textit{\textbf{Rost's theorem}} (\citealt{rost1971stopping}) in the theory of Skorokhod embeddings for Markov
processes. Rost's theorem is extremely useful for identifying the set of inducible posterior distributions from stopping a Markov chain, and has been applied in, e.g., \citet{morris2019wald}. 

More generally, our results apply to the cases where (i) the Markov chain need not be time-homogeneous and (ii) the choice need not be an optimal stopping problem but a \textit{\textbf{control problem}}. Indeed, suppose in each period, the designer can choose from a feasible set of experiments (e.g., as studied in \citealt*{ni2023sequential}), which corresponds to a base set of Markov transition kernels. By the same logic as above, it follows immediately that the problem admits an envelope characterization, and the inducible posterior belief distributions can be characterized by an integral stochastic order.

\subsection{Which Updating Rules Admit Consistent Orders?}
In \Cref{sec:Bayes-plausible}, we show that any Blackwell-consistent Bayes-plausible order on $\Delta(X)$ must be a strengthening of the Blackwell order. In particular, if the updating rule is fixed to be Bayes' rule, then the garbling order is necessary for there to be a Blackwell-consistent value-based order of information. Given this result, a natural question arises: are there other Blackwell-consistent value-based orders of information if the updating rule is allowed to be non-Bayesian? Without Bayesian updating, orders on $\Delta(X)$ need not be Bayes-plausible. We now apply \Cref{thm:main2} and \Cref{thm:main-compare} to characterize all possible updating rules that can admit a consistent order $\preceq$ on the set of posterior belief distributions (which need not be Bayes-plausible). It turns out that if we want to compare all experiments that are comparable by Blackwell's garbling criterion, then essentially the only possible updating rules are homeomorphic to Bayes' rule. To begin with, we first introduce a general framework for non-Bayesian updating. 

\subsubsection{A General Framework for Non-Bayesian Updating}
Let $\Omega$ be a finite set of states, and let $X:=\Delta(\Omega)$ be the set of beliefs over $\Omega$. \citet{cripps2018divisible} develops a general theory for \textit{\textbf{distorted updating rules}} which distort Bayesian posterior beliefs. A \textit{\textbf{(prior-dependent) systematic distortion}} of belief updates is a continuous function $d:X\times X \to X$, where $d(x, y) \in X$ denotes the distorted posterior belief when the prior is $x$ and the Bayesian posterior is $y$. For example, $d^{\mathrm{B}}(x, y) := y$ corresponds to the Bayesian updating rule without any distortion. We impose several regularity conditions on $d$: (i) $d$ is continuous; (ii) for any $x \in X^0:= \text{int } \Delta(\Omega), z \mapsto d(x,z)$ is injective on $X^0$; (iii) $d(x,x)=x$ for all $x \in X$.\footnote{These conditions, together with the convention that distorted updating rules are defined in terms of beliefs, correspond to Axioms 1-3 of \citet{cripps2018divisible}.} Let $\mathcal{D}$ be the set of such updating rules. 

For any $d \in \mathcal{D}$, it is convenient to define
\[
\hat{d}(x_B,x_D,y):=d\left(x_D,\frac{x_D \oslash x_B \odot y}{\langle x_D \oslash x_B,y\rangle}\right)\,,
\]
for all $x_B,x_D,y \in X^0$, where $\oslash$ denotes the component-wise division of two vectors, and $\odot$ denotes the component-wise product of two vectors.\footnote{Since $d(x,y)$ is continuous, $\hat{d}$ can be extended continuously on $X \times X \times X$.} The function $\hat{d}(x_B,x_D,y)$ corresponds to the distorted belief where the distorted prior is $x_D$, and the Bayesian posterior is $y$, which is updated from a (potentially different) Bayesian prior $x_B \in X$. When the Bayesian prior is different from the distorted prior, the Bayesian posterior updated from the distorted prior is distinct from, but is determined by (up to a likelihood ratio), the Bayesian posterior updated from the Bayesian prior $x_B$. The second argument of the right-hand side makes this correction. It follows immediately from the definition that 
\[
\hat{d}(x_B,x_B,y)=d(x_B,y)\,,\quad \quad \mbox{ and } \quad \quad \hat{d}(x_B,x_D,x_B)=x_D\,.
\]
That is, when the Bayesian prior $x$ is the same as the distorted prior, the distorted posterior given the Bayesian posterior is exactly given by $d(x,y)$. Moreover, when the Bayesian posterior equals the Bayesian prior $x_B$ (i.e., there is no Bayesian updating), the distorted posterior must be the same as the distorted prior $x_D$ (i.e., there is no distorted updating either). 

While an updating rule $d \in \mathcal{D}$ describes only how beliefs will be updated (relative to Bayesian updating) from a single signal, it implies a natural description of how beliefs will be updated from multiple signals that arrive sequentially. To see this, for any updating rule $d \in \mathcal{D}$, let $d_{\mathrm{II}}:X \times X \times X \to X$ be defined by 
\[
d_{\mathrm{II}}(x,z;y):=\hat{d}(y,d(x,y),z)=d\left(d(x,y),\frac{d(x,y)\oslash y \odot z}{\langle d(x,y)\oslash y,z \rangle}\right)\,,
\]
for all $x,y,z \in X^0$.\footnote{Since $d(x,y)$ is continuous, $d_{\mathrm{II}}$ can be extended continuously on $X \times X \times X$.} Under this definition, $d_{\mathrm{II}}(x,z;y)$ describes the terminal distorted posterior belief under a two-stage updating process, where one signal is drawn from a Blackwell experiment first, and then a second signal is drawn from another Blackwell experiment (which could potentially depend on the realization of the first signal). In this process, the prior is $x$, the first-stage Bayesian posterior belief is $y$ (and hence the first-stage interim distorted posterior belief is $d(x,y)$), and the second-stage Bayesian posterior belief is $z$. Clearly, if the updating rule is Bayesian, then sequentially updating must lead to the same outcome as if updating is done at once, and the terminal posterior belief would not depend on the realization of the interim belief $y$, conditional on $z$ itself. That is, 
\[
d_{\mathrm{II}}^\mathrm{B}(x,z;y)=d^\mathrm{B}(x,z)=z\,,
\]
for all $x,y,z \in X^0$. We refer to updating rules that satisfy the same property as being \textit{\textbf{divisible}}. That is, we say that $d\in \mathcal{D}$ is divisible if $d(x,\cdot)$ is bijective for all $x \in X$, and
\[
d_{\mathrm{II}}(x,z;y)=d(x,z)\,,\tag{\textbf{Divisibility}}
\]
for all $x,y,z \in X^0$. Divisible updating rules have a compact characterization, due to \citet{cripps2018divisible}. Namely, for $d \in \mathcal{D}$, $d$ is divisible if and only if there exists a homeomorphism $F:X \to X$ such that
\[
d(x,y)=F^{-1}\left(\frac{F(x)\oslash x \odot y}{\langle F(x)\oslash x,y \rangle}\right)
\]
for all $x,y \in X^0$. When $F$ is the identity function, $d(x,y)$ is the Bayesian updating rule. In a sense, a divisible updating rule $d$ is \emph{essentially} Bayesian: it distorts the prior via a homeomorphism $F$, updates via Bayes' rule, and distorts the posterior back again via $F^{-1}$. 

\subsubsection{Consistent Comparisons under Non-Bayesian Updating}
Let $\Pc_M$ be the set of all martingale kernels. For any $P \in \Pc_M$ and for any $x \in X$, we may interpret $P(\,\cdot\,\mid x)$ as the distribution over (Bayesian) posterior beliefs induced by a Blackwell experiment when the (Bayesian) prior is $x$. As noted before, the garbling order of Blackwell experiments can be represented by the order $\preceq^{\Pc_M}$ on $\Delta(X)$. Therefore, for there to be a Blackwell-consistent, value-based order on $\Delta(X)$, there has to be a class of problems under which the expected value under any $P(\,\cdot\,\mid x)$ is higher than the value at $x$, for all priors $x$. When $d \in \mathcal{D}$ is the Bayesian updating rule, by Blackwell's theorem, the class of problems are exactly decision problems (i.e., convex functions of posterior beliefs). When the updating rule $d \in \mathcal{D}$ is non-Bayesian, however, this is generally not the case. 

To evaluate the value of information under a potentially non-Bayesian updating rule $d \in \mathcal{D}$, a natural perspective is to evaluate the expected value of a problem \emph{\textbf{paternalistically}}, as in \citet{whitmeyerforthcomingBlackwell}. From the paternalistic perspective, a decision is made under the distorted belief, whereas the expected value is computed using the Bayesian belief. Therefore, a problem would be described by a function $g:X \times X \to \R$, where $g(x_B,x_D)$ denotes the indirect value of a problem when the Bayesian belief, which is used to evaluate the expected value, is $x_B$, and when the distorted belief, which is used to determine the action, is $x_D$. For example, if the problem is a decision problem, with action set $A$ and payoff function $u$, then 
\[
g(x_B,x_D)=\E_{\omega \sim x_B} [u(\omega,a^*(x_D))]
\]
is the indirect utility, where $a^*(x_D)$ is the optimal action under belief $x_D$. However, the problem need not be a decision problem; for example, it can be a non-Bayesian persuasion problem where $a^*(x_D)$ is the optimal action taken under the distorted belief $x_D$ and an objective $\tilde{u}$ different from $u$, with the payoff evaluated under the Bayesian posterior $x_B$ and the sender's objective $u$.  

For the garbling order to have a Blackwell-consistent, value-based order, the relevant class of problems must be such that 
\[
\int g(y, \hat{d}(x_B,x_D,y))P(\d y\mid x_B) \geq g(x_B,x_D)\,.
\]
That is, given any pair of Bayesian and distorted priors $(x_B,x_D)$, the expected paternalistic value of the problem with more information---given by $P(\,\cdot\,\mid x_B)$---must be higher than the value without any information. Let $\C^d \subseteq B(X \times X)$ be such functions. 

Naturally, one might conjecture that the class of problems $\C^d$ dualizes the garbling-based comparison under the updating rule $d$. However, our next result shows that this is in general not the case. In fact, we show the garbling-based comparison admits a Blackwell-consistent value-based order if and only if $d$ is divisible. As a result, there is generally no instrumental-value-based representation of the garbling order once the updating rule is non-Bayesian, with the only exception being the case when the updating rule is divisible (i.e., being ``essentially'' Bayesian).   

For any $(x_B,x_D)$, and for any $P \in \Pc_M$, let $Y$ be the random variable that follows $P(\,\cdot\,\mid x_B)$ and consider the distribution $Q$ of 
\[
(Y,\hat{d}(x_B,x_D,Y))\,.
\]
If we interpret $x_B$ as the Bayesian prior, $x_D$ as a distorted prior, and $P(\,\cdot\,\mid x_B)$ as a Blackwell experiment (identified by the distribution of \emph{Bayesian} posterior beliefs), then $Q(\,\cdot\,\mid x_B,x_D)$ is the joint distribution of the Bayesian and $d$-distorted posterior beliefs under the same experiment. Let $\mathcal{Q}_0$ be the set of transition kernels on $X\times X$. Let 
\[\mathcal{Q}^{d}_{(x_B, x_D)}:= \Big\{ \text{Law}(Y, \hat{d}(x_B,x_D,Y)): \text{ $Y$ is a random variable with $\E[Y]=x_B$ }\Big\}\,.\]
Let 
\[
\cQ^d:=\Big\{Q \in \cQ_0:  Q(\,\cdot\,\mid x_B, x_D) \in \mathcal{Q}^{d}_{(x_B, x_D)} \text{ for all $x_B, x_D$} \Big\}
\]
denote the set of all distributions of posterior beliefs under the distorted updating rule $d$. Note that $\cQ^d$ is regular. Moreover, $\cQ^d$ defines a binary relation on $\Delta(X\times X)$, where $\mu \preceq^{\cQ^d} \nu$ if and only if $\nu=Q*\mu$ for some $Q \in \cQ^d$. By the definition of $\cQ^d$, the comparison given by $\preceq^{\cQ^d}$ encodes the garbling order of Blackwell experiments: for any $(x_B, x_D)$, by construction, $\delta_{(x_B, x_D)} \preceq^{\cQ^d} \nu$ if and only if $\nu$ is the joint  distribution of the Bayesian posterior and the distorted posterior under the updating rule $d$ for \textit{some} experiment $P$.
 
Our next result characterizes updating rules that admit consistent comparisons in the Blackwell sense by applying \Cref{thm:main2} to the space $X \times X$. 

\begin{prop}[Consistent Comparisons under Non-Bayesian Updating]\label{prop:divisible}
The garbling-based comparison $\preceq^{\cQ^d}$ is a Blackwell-consistent order if and only if $d$ is divisible.  
\end{prop}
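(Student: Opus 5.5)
Since $\cQ^d$ is regular (in particular rectangular and convex), I would apply \Cref{thm:main-compare} on the space $X\times X$. By the equivalence of (a) and (d) there, $\preceq^{\cQ^d}$ is Blackwell-consistent if and only if $\cQ^d$ is composition-closed, with the dual cone then given by $\Phi(\cQ^d)$, which is exactly $\C^d$. The proposition therefore reduces to a concrete statement: $\cQ^d$ is composition-closed if and only if $d$ is divisible. The two-step law $d_{\mathrm{II}}$ is built exactly to describe compositions, so I would make the link explicit.

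First I would compute a composition. Take $Q_1,Q_2\in\cQ^d$ and a starting point $(x_B,x_D)$. The first step draws $Y$ with $\E[Y]=x_B$ and moves to $(Y,\hat d(x_B,x_D,Y))$. The second step, from the state $(Y,\hat d(x_B,x_D,Y))$, draws $Z$ with $\E[Z\mid Y]=Y$ and moves to $(Z,\hat d(Y,\hat d(x_B,x_D,Y),Z))$. So the composed kernel at $(x_B,x_D)$ is the law of $(Z,\hat d(Y,\hat d(x_B,x_D,Y),Z))$, where $Z$ has mean $x_B$ and $(x_B,Y,Z)$ is a two-step martingale. Rectangularity of $\cQ^d$ lets each step be chosen freely for each state.

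The composition lies in $\cQ^d_{(x_B,x_D)}$ exactly when the second coordinate equals $\hat d(x_B,x_D,Z)$ almost surely. This requires two things. The second coordinate must be a deterministic function of $Z$, and that function must be $\hat d(x_B,x_D,\cdot)$, because the martingale in the first coordinate is unrestricted.

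For sufficiency, assume $d$ is divisible. Using the Cripps representation through a homeomorphism $F$, I would check the identity
\[
\hat d(y,\hat d(x_B,x_D,y),z)=\hat d(x_B,x_D,z)\quad\text{for all } x_B,x_D,y,z \text{ in } X^0.
\]
In transformed coordinates $F$, the map $\hat d$ is multiplication by likelihood ratios followed by normalization, and ratios compose multiplicatively. The boundary cases follow by continuity. Hence every composition lies in $\cQ^d$.

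For necessity, assume $\cQ^d$ is composition-closed. Choose a two-step martingale in which $Y$ takes two values $y_1,y_2$, and from each $y_i$ the variable $Z$ can reach a common point $z$ with positive probability. Concretely, pick $y_1,y_2$ in the interior on either side of $x_B$ and $z$ in the interior of the segment between them, and add small splits so that every martingale constraint holds. Membership of the composition in $\cQ^d$ then forces
\[
\hat d(y_i,\hat d(x_B,x_D,y_i),z)=\hat d(x_B,x_D,z).
\]
Setting $x_D=x_B=x$ gives $d_{\mathrm{II}}(x,z;y)=d(x,z)$ for a dense set of triples, and continuity extends this to all interior triples. The bijectivity of $d(x,\cdot)$ required by the definition of divisibility must also be obtained. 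Injectivity holds by assumption. For surjectivity, I would use the invariance identity to show that $d(x,\cdot)$ is an injective continuous self-map of the simplex that fixes $x$ and is compatible across priors, then apply invariance of domain together with a boundary argument.

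The main obstacle is the construction in the necessity direction. A single $z$ must be reachable from distinct interim beliefs while the martingale constraints hold, and this has to give the identity on a dense set of $(x,y,z)$ so that continuity finishes the argument. The surjectivity step is the other delicate point.
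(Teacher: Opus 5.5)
\textbf{Verdict: there is a genuine gap, in the sufficiency direction.} Your composition formula is correct. Your identity $\hat d(y,\hat d(x_B,x_D,y),z)=\hat d(x_B,x_D,z)$ under divisibility is also correct. The paper verifies it directly from $d_{\mathrm{II}}$ via the change of variables $\tilde y,\tilde z$; your route through $F$ works equally well.

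The gap is in your first sentence. \Cref{thm:main-compare} is a statement about \emph{regular} pairs, and an order is Blackwell-consistent only if some regular pair represents it. To use (d)$\Rightarrow$(a), you must therefore show that $(\Phi(\cQ^d),\cQ^d)$ is regular. Concretely, every $g\in\C^d=\Phi(\cQ^d)$ must be an increasing pointwise limit of continuous functions that themselves lie in $\C^d$.
\begin{itemize}
\item \Cref{lem:basic}(c) gives every other regularity property of $\Phi(\cQ^d)$, but not this one.
\item You cannot avoid it by choosing another cone. By (a)$\Rightarrow$(b) of \Cref{thm:main-compare} and \Cref{lem:rectangle}, the cone of any regular consistent pair for $\preceq^{\cQ^d}$ must equal $\Phi(\cQ^d)$.
\item It is also exactly the hypothesis \Cref{thm:main} needs when it is applied to $[-\C]$ in the step (e)$\Rightarrow$(a), through which (d)$\Rightarrow$(a) is proved.
\item It is not routine. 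The defining inequality ties $g(x_B,x_D)$ to the values of $g$ along the curved fiber $y\mapsto(y,\hat d(x_B,x_D,y))$, so generic smoothing leaves the cone.
\end{itemize}
Most of the paper's sufficiency proof is spent here. First, $\xi(x_B,x_D)=(x_B,F(x_D))$ transports $\C^d$ to the Bayesian cone $\C^{d^{\mathrm B}}$. Next, $\Theta(x_B,x_D)=(x_B,\ell_{x_B,x_D})$, with $\ell_{x_B,x_D}=\hat d^{\mathrm B}(x_B,x_D,\cdot)$ ranging over a compact set $\Lambda$, is a homeomorphism onto $X\times\Lambda$; divisibility gives $\ell_{y,\ell(y)}=\ell$. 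This map flattens the fibers, so $\C^{d^{\mathrm B}}$ becomes the bounded lower semicontinuous functions on $X\times\Lambda$ that are convex in the first argument. \Cref{lem:piecewiseconvexapprox} then supplies the continuous approximations from below.

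\textbf{Smaller points on necessity.} You should say why the kernel set of an arbitrary consistent pair is $\cQ^d$; this follows from \Cref{lem:rectangle}, since both sets are rectangular. Your construction is also more elaborate than needed. As you yourself note, the composed law must lie on the graph of $\hat d(x_B,x_D,\cdot)$, so a single two-step path suffices, as in the paper. Split $x$ into $y$ and $(x-\varepsilon y)/(1-\varepsilon)$, then split $y$ into $z$ and $(y-\zeta z)/(1-\zeta)$. This puts mass $\varepsilon\zeta$ on $(z,d_{\mathrm{II}}(x,z;y))$ for every interior triple, with no density or continuity step. If instead $z$ is confined to the segment through $x_B$ between $y_1$ and $y_2$, you only reach collinear triples, which are not dense once $|\Omega|\ge 3$.

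Your attention to bijectivity goes beyond the paper, whose necessity proof only derives $d_{\mathrm{II}}=d$. Invariance of domain does give surjectivity onto $X^0$ for interior priors. The set $d(x,X^0)$ is open in $X^0$. By your identity, $d(w,X^0)=d(x,X^0)$ for every $w\in d(x,X^0)$, which makes the set also closed in the connected set $X^0$.
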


Naturally, one might conjecture that the garbling comparison always has a dual representation in terms of its instrumental value just like in Blackwell's theorem under Bayes' rule. However, \Cref{prop:divisible} shows that this is \textit{false}. According to \Cref{prop:divisible}, when the updating rule $d$ is not divisible, there cannot be a value-based order that represents the garbling-based comparison $\preceq^{\cQ^d}$. In particular, there exist distributions of posterior beliefs $\mu,\nu \in \Delta(X \times X)$ such that $\nu=Q*\mu$ and the instrumental value for any problem $g \in \C^d$ is increased: 
\[
\int g(x_B,x_D) \mu(\d x_B, \d x_D) \leq \int g(x_B,x_D) \nu (\d x_B, \d x_D)\,,
\]
and yet $Q \not\in \mathcal{Q}^d$, i.e., $Q$ cannot be represented via a garbling of an experiment under the updating rule $d$. Moreover, \Cref{prop:divisible} shows that when $d$ is divisible, $\preceq^{\cQ^d}$ is a Blackwell-consistent order---equivalent to the value-based order $\preceq_{\C^d}$---and thus Blackwell's theorem is preserved in this regard. 

Given that the garbling-based comparison only has a dual representation under divisible updating rules, which are homeomorphic to Bayes' rule, one might naturally further conjecture that the class of problems $\C^d$, which dualizes the garbling order under a divisible updating rule $d$, is the class of decision problems just like in Blackwell's theorem. However, this is again \textit{false}. Indeed, by the characterization of \citet{whitmeyerforthcomingBlackwell}, any non-Bayesian divisible updating rule is \textit{not} Blackwell-monotone, and hence there exists a decision problem under which a more informative experiment according to the garbling order leads to a lower paternalistic payoff. Nonetheless, by \Cref{prop:divisible}, a divisible updating rule is Blackwell-consistent, and therefore such a decision problem cannot appear in the class of problems that dualizes the garbling order under the divisible updating rule. In other words, combining \citet{whitmeyerforthcomingBlackwell} and \Cref{prop:divisible}, we then know that the class of problems that dualizes the garbling order under a divisible, non-Bayesian updating rule is \textit{not} the class of all decision problems. It necessarily excludes \textit{some} decision problem and may include \textit{some} persuasion problem or costly information acquisition problem---and yet, it dualizes the garbling order for a divisible updating rule according to \Cref{prop:divisible}. 

\subsubsection{Dynamic Values and Optimization under Non-Bayesian Updating}
Similar to \Cref{sec:Bayes-plausible}, by virtue of \Cref{thm:main}, our results have immediate implications for optimization problems. In particular, we now show how \Cref{prop:divisible} yields new insights into information design problems with non-Bayesian updating (see, e.g., \citealt{alonso2016bayesian,de2022non,azrieli2025sequential,kobayashi2025dynamic}). 

In information design problems with a non-Bayesian receiver, the sender's problem is equivalent to choosing a martingale $P \in \Pc_M$ to maximize 
\[
\int f(y,\hat{d}(x_B,x_D,y))P(\d y\mid x_B)
\]
for a fixed (potentially heterogeneous) prior $(x_B,x_D)$, where $f:X \times X \to \R$ is the (Bayesian) sender's indirect utility that depends on the sender's Bayesian belief and the receiver's $d$-distorted belief. The optimal value of such a persuasion problem, as shown by \citet{alonso2016bayesian} and \citet{de2022non}, can be geometrically characterized by the concave envelope of the function $y \mapsto f(y,\hat{d}(x_B,x_D,y))$ evaluated at the prior $(x_B,x_D)$. However, a crucial implication of \Cref{prop:divisible} is that, when the updating rule is not divisible, such envelopes do not correspond to the functions that dualize the garbling order. In particular, the envelope is prior-dependent. Indeed, by \Cref{thm:main-compare}, the sender may improve their payoff by sending signals \textit{\textbf{sequentially}}, and static/one-shot persuasion would not be optimal in general. 

To formalize this, for any $d \in \mathcal{D}$ and $x_B,x_D \in X$, consider the problem 
\begin{equation*}
\sup_{Q \in \cQ^{d} \otimes \cQ^d} \int f(y_B,y_D) Q(\d y_B, \d y_D \mid x_B,x_D)\,. \eqtag[-2.5em][0.8\baselineskip]{\textbf{Dynamic}}{eq:two-stage}
\end{equation*}
In particular, \weqref{\textbf{Dynamic}}{eq:two-stage} could be interpreted as a dynamic information design problem with a non-Bayesian receiver, potentially with a different prior than the Bayesian sender, in which case $f(y_B,y_D)=v(y_B,a^*(y_D))$ is the sender's payoff as a function of the sender's own Bayesian prior and the receiver's optimal action taken in the second period given the distorted posterior belief. When information is costless and the sender only cares about the receiver's action, $v(y_B, a) = \E_{\omega \sim x_B} [u(\omega, a)]$ would be affine in $y_B$; however, when information is costly to the sender as in \citet{gentzkow2014costly}, or when the sender cares about the information intrinsically for their own action, $v(y_B, a)$ would not be affine in $y_B$. More generally, our formulation captures all information design problems in which the sender's payoff depends on the Bayesian and distorted posterior beliefs.

When $d$ is the Bayesian updating rule, in our setting, it is well-known that dynamic information design does not have any additional value compared to one-shot information design. Without Bayesian updating, there could be a benefit for dynamic information design. We say that $d \in \mathcal{D}$ admits \textit{\textbf{no dynamic value}} if for all $x_B,x_D \in X$, and for any upper semicontinuous $f: X\times X \to \R$, 
\[
\sup_{Q \in \cQ^{d} \otimes \cQ^d} \int f(y_B,y_D) Q(\d y_B, \d y_D\mid x_B,x_D)=\sup_{Q \in \cQ^d} \int f(y_B,y_D) Q(\d y_B, \d y_D\mid x_B, x_D)\,.
\]

\begin{cor}\label{cor:dynamic}
An updating rule $d \in \mathcal{D}$ admits no dynamic value if and only if $d$ is divisible. 
\end{cor}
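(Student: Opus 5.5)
The plan is to read \Cref{cor:dynamic} as the conjunction of \Cref{prop:divisible} with the equivalence (a)$\iff$(c) in \Cref{thm:main-compare}, applied on the product space $X\times X$ with kernel set $\cQ^d$. The first step is to check that ``no dynamic value'' is exactly the statement that $\cQ^d$ is one-shot sufficient in the sense of \weqref{\textbf{One Shot}}{eq:one-shot-sufficient}. Both conditions quantify over all upper semicontinuous $f$ on $X\times X$ and all $(x_B,x_D)$, and both compare $\sup$ over $\cQ^d$ with $\sup$ over $\cQ^d\otimes\cQ^d$. The only thing to confirm is that the composition $Q_1\circ Q_2$ of two elements of $\cQ^d$ is the two-stage process described by $d_{\mathrm{II}}$. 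It is: the second-stage kernel is applied at the interim pair $(y,d(x,y))$, so its distorted coordinate is $\hat d(y,d(x,y),z)=d_{\mathrm{II}}(x,z;y)$. The paper already notes that $\cQ^d$ is regular, so \Cref{thm:main-compare} applies to it on $X\times X$.

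For the ``if'' direction, suppose $d$ is divisible. By \Cref{prop:divisible}, $\preceq^{\cQ^d}$ is Blackwell-consistent. By the uniqueness clause and the equivalence (a)$\iff$(d) of \Cref{thm:main-compare}, the consistent pair is $(\Phi(\cQ^d),\cQ^d)$ and $\cQ^d$ is composition-closed. Composition-closure trivially gives one-shot sufficiency. One can also see this directly: divisibility gives $d_{\mathrm{II}}(x,z;y)=d(x,z)$, so the composed kernel is again of the form $\text{Law}(Z,\hat d(x_B,x_D,Z))$ with $Z$ a martingale posterior.

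For the ``only if'' direction, suppose $d$ admits no dynamic value, so $\cQ^d$ is one-shot sufficient. Set $\C:=\Phi(\cQ^d)$. By (c)$\Rightarrow$(a) of \Cref{thm:main-compare}, the pair $(\C,\cQ^d)$ is Blackwell-consistent, so $\preceq^{\cQ^d}=\preceq_\C$ is a Blackwell-consistent order. \Cref{prop:divisible} then forces $d$ to be divisible.

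The main obstacle is technical rather than conceptual. The definition of one-shot sufficiency in \Cref{thm:main-compare} ranges over some class of $f$, and this must match the upper semicontinuous class used in the definition of no dynamic value. One must also verify that $\Phi(\cQ^d)$ satisfies the regularity requirements that \Cref{thm:main-compare} needs. If the classes differ, I would approximate: upper semicontinuous functions are decreasing limits of continuous ones, and monotone convergence passes the sup equality through, using compactness of the sets $\cQ^d_{(x_B,x_D)}$.
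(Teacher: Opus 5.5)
Your overall plan (identify ``no dynamic value'' with one-shot sufficiency of $\cQ^d$ on $X\times X$, then combine \Cref{thm:main-compare} with \Cref{prop:divisible}) is the intended one, and the ``if'' direction is fine. The proof of \Cref{prop:divisible} already shows that divisibility makes $\cQ^d$ composition-closed. Together with $\mathrm{Id}\in\cQ^d$ this gives $\cQ^d\otimes\cQ^d=\cQ^d$, hence equal suprema for every $f$.

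The gap is in the ``only if'' direction, at the point you flagged but left open. You invoke (c)$\Rightarrow$(a) of \Cref{thm:main-compare} for the pair $(\Phi(\cQ^d),\cQ^d)$ and then conclude that $\preceq^{\cQ^d}$ is a Blackwell-consistent order. That order notion requires a \emph{regular} representing pair, so you need $\Phi(\cQ^d)$ to admit continuous approximations from below within the cone. \Cref{lem:basic}(c) supplies every other regularity property. This one the paper proves only for divisible $d$: it transports $\Phi(\cQ^d)$ via $\xi$ and $\Theta$ to functions convex in the first argument and then applies \Cref{lem:piecewiseconvexapprox}. For a $d$ assumed only to have no dynamic value, that argument is unavailable, so your step presupposes what is to be proved. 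The remedy you sketch (approximating upper semicontinuous $f$ by continuous functions) addresses a different issue.

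The repair is to skip Blackwell consistency entirely.
\begin{itemize}
\item The proof of (c)$\Rightarrow$(d) in \Cref{thm:main-compare} never uses $\C$. It needs only that $\Pc$ contains the identity, is rectangular, and has closed convex sections $\Pc_x$. It tests one-shot sufficiency against a single continuous separating function.
\item $\cQ^d$ is regular, and continuous functions are upper semicontinuous. So no dynamic value yields $\cQ^d\otimes\cQ^d=\cQ^d$ directly. This also makes your concern about the class of test functions moot.
\item Composition-closure alone then forces divisibility, by the construction in the ``$\Rightarrow$'' part of the proof of \Cref{prop:divisible}. Suppose $d_{\mathrm{II}}(x,z;y)\neq d(x,z)$. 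Take the martingale kernels that split $x$ toward $y$ and $z$, and then $y$ toward $z$. The composition $Q_2\circ Q_1(\,\cdot\,\mid x,x)$ puts mass at least $\varepsilon\zeta/2$ off the graph of $d(x,\cdot)$. No element of $\cQ^d_{(x,x)}$ can do that.
\end{itemize}
Nothing about $\Phi(\cQ^d)$ is needed along this route.
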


The sufficiency part of \Cref{cor:dynamic} immediately reproduces the result of \citet{kobayashi2025dynamic}, and extends it to \emph{all} information design problems, including persuasion problems where the sender's payoff is state-dependent. Moreover, \Cref{cor:dynamic} also answers the question raised in \citet{azrieli2025sequential} and \citet{kobayashi2025dynamic} by proving the converse---whenever the updating rule is not divisible, there exists \textit{some} information design problem where the sender can strictly benefit from sending signals sequentially.

\section{Nested Optimization and Stackelberg Principals}\label{sec:stackelberg}

We now describe in more detail the nested optimization problem and give a few examples that arise in information design, decision theory, and mechanism design. 

\subsection{General Framework}
There are two principals $A$ and $B$. Principal $A$ is the \textit{\textbf{Stackelberg leader}} who chooses a measure $\mu \in M \subseteq \Delta(X)$, where $M$ is a compact convex set. Principal $B$ is the \textit{\textbf{Stackelberg follower}} who chooses a measure $\nu \preceq_{\C} \mu$ after principal $A$ has moved, where $\preceq_\C$ is an integral stochastic order. Principal $B$ solves a stochastic optimization problem as in \Cref{sec:abstract}: 
\[\max_{\nu \preceq_\C \mu}\int f(x) \nu(\d x)\,.\]
Recall that $X^\star_f(\mu)$ is the solution correspondence of this problem. 

Principal $A$ solves a nested optimization problem anticipating the effect on the follower. We assume that principal $B$ breaks ties in favor of principal $A$ so that principal $A$'s problem always has a solution. Formally, principal $A$ cares about the pair of measures $(\mu, \nu)$ selected at the end of the game, and thus solves the nested optimization as in \Cref{sec:abstract}: 
\begin{equation*}
\text{ } \qquad \max_{(\mu,\nu) \in G^M_f} W(\mu,\nu)\,, 
\end{equation*}
where $W$ is some continuous quasi-convex functional describing the principal's preferences and $G^M_f$ is the graph of the solution correspondence of principal $B$, i.e., 
\[G^M_f = \big\{(\mu,\nu): \mu \in M\,, \nu \in X^\star_f(\mu)\big\} \subseteq \Delta(X) \times \Delta(X)\,.\]
Formally, our solution concept is \textit{\textbf{leader-optimal subgame perfect equilibria}} (also known as strong Stackelberg equilibria).

By \Cref{cor:nested-optimization}, which is an immediate consequence of \Cref{thm:main}, we know that whenever $\C$ satisfies the min-closure property, there must exist an equilibrium in which (i) principal $A$ selects $\mu \in \Ext(M)$ and then (ii) principal $B$ selects $\nu \in \Ext(X^{\star}_f(\mu))$. 

Moreover, in the case where $W(\mu, \nu)$ is a linear functional and hence can be written as $\int w_A(x) \mu(\d x) + \int w_B(x) \nu(\d x)$, we can explicitly construct all equilibria by invoking \Cref{thm:main}. To see this, let  
\[w_B^\star(x):= \max_{\eta \in X^\star_f(\delta_x)} \int w_B(y) \eta(\d y) \quad \text{ and }\quad \mathcal{P}^\star_B(x):= \argmax_{\eta \in X^\star_f(\delta_x)} \int w_B(y) \eta(\d y)\,.\]
Then, we immediately have:  
\begin{cor}\label{cor:modified}
Suppose $\C$ is min-closed. The leader value of any leader-optimal equilibrium is
\[\max_{\mu \in M} \int (w_A(x) + w^\star_B(x)) \mu(\d x)\,.   \eqtag[-2.5em]{\textbf{Modified Objective}}{eq:modified}\]
Moreover, $(\mu, \nu)$ is a leader-optimal equilibrium if and only if 
\begin{itemize}
    \item[(i)] $\mu$ maximizes \weqref{\textbf{Modified Objective}}{eq:modified} ;
    \item[(ii)] $\nu = P * \mu$ for some $P$ such that $P(\,\cdot\,\mid x) \in \mathcal{P}^\star_B(x)$ for $\mu$-a.e. $x$.
\end{itemize}
\end{cor}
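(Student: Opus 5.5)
The plan is to reduce the leader's problem to a linear program over $M$, using the pointwise structure that \Cref{thm:main} gives for min-closed $\C$. Since $\C$ is min-closed, \Cref{thm:main} yields order-preserving couplings and an affine value $V^\star_f(\mu)=\int V^\star_f(\delta_x)\mu(\d x)$, as in \Cref{cor:pointwise}. The first step characterizes the follower's solution set. I will show that $\nu\in X^\star_f(\mu)$ if and only if $\nu=P*\mu$ for some kernel $P$ with $P(\,\cdot\,\mid x)\in X^\star_f(\delta_x)$ for $\mu$-a.e.\ $x$.

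For the ``if'' direction, take such a $P$. Integrating $\int g\,\d P(\,\cdot\,\mid x)\le g(x)$ over $\mu$ gives $\nu\preceq_\C\mu$. Moreover $\int f\d\nu=\int V^\star_f(\delta_x)\mu(\d x)=V^\star_f(\mu)$, so $\nu$ is optimal. For the ``only if'' direction, take $\nu\in X^\star_f(\mu)$ and apply (c) of \Cref{thm:main} to obtain $P$ with $\nu=P*\mu$ and $P*\delta_x\preceq_\C\delta_x$. Then $\int f\,\d P(\,\cdot\,\mid x)\le V^\star_f(\delta_x)$ pointwise. Integrating gives $V^\star_f(\mu)$ on the left, which matches the right-hand side, so equality holds $\mu$-a.e. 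This is exactly the argument already sketched in the proof of $(c)\implies(d)$.

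The second step computes the leader's value at a fixed $\mu$. For any $\nu=P*\mu$ as above, $\int w_B\d\nu=\int (w_B*P)(x)\mu(\d x)$. Since $P(\,\cdot\,\mid x)\in X^\star_f(\delta_x)$ a.e., we have $w_B*P\le w_B^\star$ a.e., so $W(\mu,\nu)\le\int(w_A+w_B^\star)\d\mu$. For the reverse inequality I will pick a measurable selection $P^\star(\,\cdot\,\mid x)\in\mathcal{P}^\star_B(x)$. The correspondence $x\mapsto X^\star_f(\delta_x)$ is compact-valued with closed graph, because it equals $\{\eta\preceq_\C\delta_x:\int f\d\eta\ge \overline f(x)\}$ and both conditions are closed under the paper's standing semicontinuity assumptions. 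Its argmax for the upper semicontinuous linear functional $\eta\mapsto\int w_B\d\eta$ therefore admits a measurable selector by the Kuratowski--Ryll-Nardzewski theorem or the measurable maximum theorem. The same argument makes $w^\star_B$ measurable, indeed upper semicontinuous. Then $\nu^\star=P^\star*\mu\in X^\star_f(\mu)$ attains the bound. Hence $\max_{\nu\in X^\star_f(\mu)}W(\mu,\nu)=\int(w_A+w^\star_B)\d\mu$.

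Maximizing over $\mu\in M$ then gives the value formula; the maximum exists by compactness of $M$ and upper semicontinuity of the integrand. For the equivalence, $(\mu,\nu)$ is leader-optimal if and only if $\mu$ attains the modified objective, which is (i), and $\nu$ attains the bound at $\mu$. The latter holds if and only if $w_B*P=w_B^\star$ $\mu$-a.e. for the representing kernel from step one, that is, $P(\,\cdot\,\mid x)\in\mathcal{P}^\star_B(x)$ a.e., which is (ii). The main obstacle is the measurability and selection step, together with the fact that the representing $P$ is only determined $\mu$-a.e. The a.e.\ formulation of (ii) absorbs the latter point.
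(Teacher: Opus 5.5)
Your overall route is the one the paper intends. The corollary is read off from the pointwise characterization in the proof of $(c)\implies(d)$ of \Cref{thm:main}: $\nu\in X^\star_f(\mu)$ iff $\nu=P*\mu$ with $P(\,\cdot\,\mid x)\in X^\star_f(\delta_x)$ for $\mu$-a.e.\ $x$. After that, the leader's payoff is linear in the kernel and is maximized pointwise. Your steps one and two and the final equivalence are correct.

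The step that fails is the measurability/existence justification. The graph of $x\mapsto X^\star_f(\delta_x)$ is \emph{not} closed in general. The constraint $\int f\,\d\eta\ge\overline f(x)$ passes to limits $(x_n,\eta_n)\to(x,\eta)$ only if $\liminf_n\overline f(x_n)\ge\overline f(x)$. But $\overline f\in\C$ is merely upper semicontinuous, which gives the opposite inequality.

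For a concrete failure, take $X=[0,1]$, $\C$ the bounded u.s.c.\ nondecreasing functions (min-closed and satisfying the standing assumptions), and the u.s.c.\ $f=\ind_{\{1/2\}}$.
\begin{itemize}
\item[(1)] Then $\overline f=\ind_{[1/2,1]}$, with $X^\star_f(\delta_x)=\Delta([0,x])$ for $x<1/2$ but $X^\star_f(\delta_x)=\{\delta_{1/2}\}$ for $x\ge 1/2$. So $(x_n,\delta_0)\to(1/2,\delta_0)$ exits the graph.
\item[(2)] With $w_B(y)=-y$ one gets $w^\star_B=-\tfrac12\ind_{[1/2,1]}$, which is not u.s.c.
\item[(3)] Adding $w_A(x)=x$ and $M=\Delta([0,1/2])$, the modified objective has supremum $1/2$, which no $\mu\in M$ attains. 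Hence no leader-optimal equilibrium exists.
\end{itemize}
So both ``indeed upper semicontinuous'' and ``the maximum exists by compactness of $M$ and upper semicontinuity of the integrand'' are false.

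Neither claim is needed. Let $\mathrm{Gr}:=\{(x,\eta):\eta\preceq_\C\delta_x,\ \int f\,\d\eta\ge\overline f(x)\}$. This set is Borel (a closed set intersected with a superlevel set of a Borel function), and its sections are compact because $\eta\mapsto\int f\,\d\eta$ is u.s.c. The same holds for $\{(x,\eta)\in\mathrm{Gr}:\int w_B\,\d\eta\ge c\}$ for each $c$, using that $w_B$ is u.s.c.

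The Arsenin--Kunugui theorem says a Borel set with compact (even $K_\sigma$) sections has Borel projection and a Borel uniformization; see \citealt{kechris2012classical}. Applying it:
\begin{itemize}
\item[(1)] $\{w^\star_B\ge c\}$ is the projection of the second set, so it is Borel for every $c$, and $w^\star_B$ is Borel.
\item[(2)] The argmax graph $\{(x,\eta)\in\mathrm{Gr}:\int w_B\,\d\eta\ge w^\star_B(x)\}$ is then Borel with nonempty compact sections $\mathcal{P}^\star_B(x)$, so it admits a Borel selector $P^\star$. This is exactly what your step two needs.
\end{itemize}
Alternatively, a Jankov--von Neumann universally measurable selector suffices, as in the paper's proof of \Cref{prop:exposed}, since only $\mu$-a.e.\ behavior matters.

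For existence, your step two already shows $\sup_{G^M_f}W=\sup_{\mu\in M}\int(w_A+w^\star_B)\,\d\mu$. So whenever a leader-optimal equilibrium exists, which is what the statement presupposes, its first component attains this supremum. The ``$\max$'' comes from that hypothesis, not from semicontinuity.
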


Moreover, if the convex set $M$ for principal $A$ is also described by a stochastic orbit of the form $\{\mu: \mu \preceq_{\C_A} \gamma\}$ where $\C_A$ is min-closed and $\gamma$ is an exogenous measure, then by \Cref{thm:main} again, the leader value is simply given by 
\[\int \overline{w^\star}(x) \gamma(\d x)\]
where $w^\star = w_A + w^\star_B$ and the envelope above is taken with respect to the set $\C_A$ and any leader-optimal strategy $\mu$ can only be supported on $\{x: \overline{w^\star}(x) = w^\star(x)\}$. While the above setup is described via the leader-optimal selection, any selection rule is compatible with our approach with the only difference being that the definition of $w^\star_B$ and $\Pc^\star_B$  will be modified accordingly.

\subsection{Economic Applications}

We now describe a series of examples where our general results apply. 

\paragraph{Sequential Persuasion.}\hspace{-2mm}Consider the following setting adapted from \citet{LiNorman2021SequentialPersuasion}. There is a finite state space $\Omega$. There are two senders. Sender $A$ designs a Blackwell experiment first, and then after observing $A$'s choice, sender $B$ designs an additional Blackwell experiment that can depend on the signal realization of $A$'s experiment. The receiver observes both experiments and both realized signals and takes an action from a finite set. This model corresponds to the case where $X = \Delta(\Omega)$, $\preceq_\C$ is the concave order, and $M = \{\mu: \mu \preceq_\C \delta_{x_0}\}$ where $x_0$ is the common prior about the state. As an immediate consequence of our results (\Cref{cor:modified}), there always exists a \textit{\textbf{sequential extremal equilibrium}} in which every sender $i$ selects an extreme point $\mu_i$ of $\text{MPS}(\mu_{i-1})$ where $\mu_{i-1}$ is the belief distribution selected by the previous sender. In particular, in this equilibrium, along any history, \textit{every} sender sends a signal with support size at most $|\Omega|$ \textit{as if} the subsequent senders do not exist, unlike in the \textit{\textbf{one-shot equilibria}} characterized in \citet{LiNorman2021SequentialPersuasion} in which the first sender splits beliefs in the set of ``stable beliefs'' and the subsequent senders all stay silent (also known as \textit{\textbf{silent equilibria}} in \citealt{wu2023sequential}).

\paragraph{Robust Persuasion.}\hspace{-2mm}Consider the following setting adapted from \citet{DworczakPavan2022RobustPersuasion}. Again, there is a finite state space $\Omega$ and two senders who move sequentially just like in sequential persuasion. However, the second sender, who chooses an additional experiment after observing the realized signal of the first sender, is the \textit{\textbf{adversarial nature}} whose objective is to minimize the payoff of the first sender. Because the second mover has completely opposite preferences to the first mover, the leader-optimal selection rule here has no bite: The set of all SPEs---which correspond to the \textit{\textbf{worst-case optimal signals}} in \citet{DworczakPavan2022RobustPersuasion}---is exactly the leader-optimal SPEs, characterized by \Cref{cor:modified}. In particular, the modified objective in \Cref{cor:modified} is given by the \textit{\textbf{lower convex envelope}} of the initial sender's indirect utility function $V$ (assuming the receiver has a unique best response at any posterior belief). Since the initial sender's problem has the constraint set $M = \{\mu: \mu \preceq_\C \delta_{x_0}\}$ also given by a stochastic orbit, the leader value of this game must be given by the concavification of the lower convex envelope $V_{\text{lcx}}$, evaluated at the prior $x_0$, which is the \textit{\textbf{full information payoff}} where the initial sender just releases all information. Moreover, a posterior distribution $\mu \in \Delta(X)$ is induced by a worst-case optimal signal if and only if \[\supp(\mu) \subseteq \Big\{x:  V_{\text{lcx}}(x) = \widehat{V_{\text{lcx}}}^{\text{cav}}(x)  = V_{\text{full}}(x)\Big\} =: Y\,,\]
where $V_{\text{full}}(x)$ is the affine function describing the payoff of the initial sender given prior $x$ and release of full information. \citet{DworczakPavan2022RobustPersuasion} consider a refinement of worst-case optimal signals, the \textit{\textbf{robust solutions}}, which maximize the initial sender's indirect utility function among the worst-case optimal signals and therefore corresponds to solving a standard persuasion problem with a support restriction that $\supp(\mu) \subseteq Y$. For any non-affine $V_{\text{lcx}}$, note that any interior belief $x \in \text{int}(\Delta(\Omega))$ cannot be in $Y$ because then by construction it implies that $V_{\text{lcx}}$ is an affine function. Therefore, as observed in \citet{DworczakPavan2022RobustPersuasion}, any worst-case optimal signal must induce \textit{\textbf{separation}} of states where the support of any worst-case optimal $\mu$ must rule out some states. In fact, for any subset $B \subset \Omega$, the same argument implies that for any $x \in \text{int}(\Delta(B))$ to be in $Y$,  $V_{\text{lcx}} \mid_{\Delta(B)}$ must be an affine function coinciding with $V_{\text{full}} \mid_{\Delta(B)}$ which happens if and only if $V\mid_{\Delta(B)} \geq V_{\text{full}} \mid_{\Delta(B)}$; conversely, if $V_{\text{lcx}} \mid_{\Delta(B)}$ coincides with $V_{\text{full}} \mid_{\Delta(B)}$, then any $x \in \Delta(B)$ is in $Y$. Together, these characterize a set $\mathcal{F}$ of subsets $B \subset \Omega$ such that $x \in Y$ if and only if $\supp(x) \in \mathcal{F}$, which is the main separation theorem (Theorem 1) of \citet{DworczakPavan2022RobustPersuasion}.

\paragraph{Objective Ambiguity Aversion.}\hspace{-2mm}Consider the following setting adapted from \citet{olszewski2007preferences} and \citet{Ahn2008AmbiguityWithoutStateSpace} who study \textit{\textbf{objective ambiguity aversion}} of a decision maker by interpreting ambiguity as a menu of lotteries. The set of outcomes $X$ is a compact set in $\R^n$. A \textit{\textbf{lottery}} $\mu \in \Delta(X)$ specifies a distribution over outcomes. A \textit{\textbf{menu}} $A$ is a set of lotteries. \citet{olszewski2007preferences} and \citet{Ahn2008AmbiguityWithoutStateSpace} characterize preferences over menus of lotteries and formalize the notion of objective ambiguity aversion. Specifically, under a standard set of axioms, \citet{olszewski2007preferences}  shows that the representation of objective ambiguity-averse preference takes the following form:
\[
\mathcal{V}(A) := \alpha \cdot \max_{\mu \in A} \int u(x) \mu(\d x) + (1 - \alpha) \cdot \min_{\mu \in A} \int u(x) \mu(\d x)\,,
\]
for some function $u$, and some constant $\alpha \in [0, 1]$. In other words, the objective ambiguity-aversion representation takes the Hurwicz max-min form: the decision maker evaluates menus by computing the ``best-case'' payoff and the ``worst-case'' expected utility (evaluated by $u$) from lotteries in that menu, and takes a convex combination of the best and the worst case payoff. For any $u$ and any $\alpha \in [0,1]$, the representation $\mathcal{V}$ defines a preference $\preceq$ by 
\[
A \preceq  B \iff \mathcal{V}(A) \leq \mathcal{V}(B)\,,
\]
for all menus $A$ and $B$. 

Suppose now that the menus are parametrized by a single lottery via a stochastic order. That is, each feasible menu $A$ takes the form of $A=\{\nu: \nu \preceq_\C \mu\}=:A_\C(\mu)$, for some convex cone $\C$ of test functions that contains $-1,1$ and can be approximated by continuous functions from above. As menus are parametrized by single lotteries, both the ambiguity-aversion preference $\preceq $ and its representation $\mathcal{V}$ immediately imply a preference $\widetilde{\preceq} $ and a representation $\widetilde{V}$ over lotteries, via 
\[
\mu_1 \widetilde{\preceq} \mu_2 \iff A_\C(\mu_1) \preceq A_\C(\mu_2)\,
\]
and 
\[
\widetilde{V}(\mu):=\mathcal{V}(A_\C(\mu))\,.
\]
Two natural questions then arise: (i) when can the representation $\widetilde{V}$ be distinguished from a vNM \textit{expected utility representation} over lotteries (i.e., when is it the case that $\widetilde{V}(\mu)=\int \hat{u} \d \mu$ for some $\hat{u}:X \to \R$)? (ii) When can the preference $\widetilde{\preceq} $ be distinguished from an \textit{expected utility preference} (i.e., when does the preference $\widetilde{\preceq}$ violate the vNM expected utility axioms)? 

Applying our results, we can answer these questions. Indeed, under this parameterization of menus, note that the representation of an ambiguity-averse decision maker is exactly a Stackelberg leader who chooses a lottery $\mu$ anticipating the nature as a follower to choose $\nu \preceq_\C \mu$---with probability $\alpha$, the nature wants to maximize the DM's payoff (an optimistic conjecture) and with probability $1-\alpha$, the nature wants to minimize the DM's payoff (a pessimistic conjecture). 

To state our result, recall that $\C$ is not min-closed if there exist $g_1, g_2 \in \C$ such that $\min\{g_1,g_2\} \notin \C$. By \Cref{thm:main}, for any continuous $f:X \to \R$, and for any $\mu \in \Delta(X)$, the dual minimization problem 
\[
\min_{g \in \C,\, g \geq f} \int g \d \mu \eqtag[-2.5em]{\textbf{Dual}}{eq:dual}
\]
has a unique solution (up to a $\mu$-measure zero set) whenever $\C$ is min-closed (and the solution equals the $\C$-envelope of $f$). Motivated by this, we say that $\C$ is \textit{\textbf{strongly non-min-closed}} if there exist a continuous function $f:X \to \R$ and a measure $\mu_0 \in \Delta(X)$ that is absolutely continuous with respect to the Lebesgue measure and has a density bounded away from zero, such that the dual minimization problem \weqref{\textbf{Dual}}{eq:dual} has two distinct continuous solutions $g_1, g_2$ that are not affinely related. Note that $\C$ is not min-closed whenever it is strongly non-min-closed.\footnote{Indeed, suppose that $\C$ is min-closed. Then $\min\{g_1,g_2\} \in \C$ must also solve the dual problem, which implies that $\int g_1 \d \mu_0=\int g_2 \d \mu_0=\int \min\{g_1,g_2\} \d \mu_0$, and hence $g_1=g_2$ $\mu_0$-a.e., a contradiction.}  As an example, the set of convex functions (which contains all affine functions, and is not min-closed) is in fact strongly non-min-closed.\footnote{For example, consider any convex $X$ with affine dimension at least $2$. Let $x_0$ be the barycenter of the Lebesgue measure on $X$. Take two non-affinely-related affine functions $l_1, l_2$ that cross at $x_0$: $l_1(x_0)=l_2(x_0)$. Let $f:=\min\{l_1,l_2\}$, and let $\mu_0$ be the Lebesgue measure. By Jensen's inequality, every $g \in \C$ with $g \geq f$ yields $\int g \d \mu_0 \geq g(x_0) \geq f(x_0)$, and $\int l_1 \d \mu_0=\int l_2 \d \mu_0=l_1(x_0)=l_2(x_0)=f(x_0)$. Thus, both $l_1$ and $l_2$ are solutions of the dual problem. One can also verify that the set of convex functions is strongly non-min-closed even if $X$ has affine dimension $1$.}

\begin{prop}\label{prop:ambiguity}
Consider stochastic-orbit ambiguity sets defined by $\preceq_\C$. Then: 
\begin{enumerate}
\item [(i)] There exists an ambiguity-aversion representation $\widetilde{V}$ that is not an expected utility representation if and only if $\C$ is not min-closed. 
\item [(ii)] There exists an ambiguity-aversion preference $\widetilde{\preceq}$ that is not an expected utility preference if $\C$ is strongly non-min-closed. 
\end{enumerate}
\end{prop}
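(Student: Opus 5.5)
We write $\widetilde V(\mu)=\alpha V^\star_u(\mu)-(1-\alpha)V^\star_{-u}(\mu)$, since $\min_{\nu\preceq_\C\mu}\int u\,\d\nu=-V^\star_{-u}(\mu)$, and prove each part by relating $\widetilde V$ to the value functions of \Cref{thm:main}.

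\emph{Part (i), ``if $\C$ min-closed then always EU''.} By \Cref{cor:pointwise}, $V^\star_u(\mu)=\int \overline{u}\,\d\mu$ and $V^\star_{-u}(\mu)=\int\overline{-u}\,\d\mu$. Hence $\widetilde V(\mu)=\int \hat u\,\d\mu$ with
\[
\hat u:=\alpha\overline{u}-(1-\alpha)\overline{-u},
\]
which is an expected utility representation. For general bounded usc $u$, I would note that the orbit is compact and that the envelopes are well defined by \Cref{lem:envelope}.

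\emph{Part (i), ``not min-closed gives non-EU''.} Take $\alpha=1$. By the $(b)\Leftrightarrow(a)$ direction of \Cref{thm:main}, there is some $f$ for which $V^\star_f$ is not affine. The proof of $(d)\Rightarrow(a)$ in fact produces a specific $k=\min\{g_1,g_2\}$ and a pair $\nu\preceq_\C\mu$ with $V^\star_k(\mu)\ne\int V^\star_k(\delta_x)\,\mu(\d x)$. I would set $u=k$, which is usc as required. If $\widetilde V(\mu)=\int\hat u\,\d\mu$ held, then evaluating at Dirac measures would force $\hat u(x)=V^\star_u(\delta_x)$. Mixture-linearity would then give $V^\star_u(\mu)=\int V^\star_u(\delta_x)\,\mu(\d x)$, which is affinity and a contradiction. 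If the representation is required for continuous $u$, I would instead approximate $k$ from above by continuous functions, using property (iii) of $\C$ together with u.s.c.\ convergence of the value functions via Berge.

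\emph{Part (ii).} Let $f,\mu_0,g_1,g_2$ be as in the strong non-min-closedness definition, and again take $\alpha=1$ with $u=f$, so $\widetilde V=V^\star_f$. This functional is concave. Duality from the proof of $(a)\Rightarrow(b)$, which requires no min-closure, gives
\[
V^\star_f(\mu)=\min_{g\in\C,\,g\ge f}\int g\,\d\mu .
\]
Thus $V^\star_f$ is a lower envelope of linear functionals $L_g(\mu)=\int g\,\d\mu$, and at $\mu_0$ both $L_{g_1}$ and $L_{g_2}$ are active. Because $g_1,g_2$ are not affinely related, $V^\star_f$ has a kink at $\mu_0$. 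Concretely, I would pick a signed perturbation $\eta$ with $\int\eta=0$, supported where $g_1-g_2$ varies, and small enough that $\mu_0\pm\epsilon\eta\in\Delta(X)$; the density bounded away from zero allows this. Choose $\eta$ so that $\int g_1\,\d\eta>0>\int g_2\,\d\eta$, which is possible precisely because $g_1-g_2$ is not constant. Then
\[
V^\star_f(\mu_0\pm\epsilon\eta)\le \min\{L_{g_1},L_{g_2}\}<V^\star_f(\mu_0)
\]
on both sides. This gives lotteries $\mu_\pm=\mu_0\pm\epsilon\eta$ with $\mu_0=\tfrac12\mu_++\tfrac12\mu_-$ and both $\mu_\pm$ strictly worse than $\mu_0$. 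To exhibit a vNM violation of the preference itself, I would compare $\mu_+$ and $\mu_-$. Suppose WLOG $\mu_+\widetilde\succeq\mu_-$. Independence together with transitivity gives $\tfrac12\mu_++\tfrac12\mu_-\widetilde\preceq\mu_+$, i.e.\ $\mu_0\widetilde\preceq\mu_+$, contradicting $\mu_0\widetilde\succ\mu_+$.

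\emph{Main obstacle.} I expect the hardest step to be verifying rigorously that the kink is strict. This requires $\int g_i\,\d\eta$ to have opposite signs, while keeping $\mu_0\pm\epsilon\eta$ as probability measures. Continuity of $g_i$ and the positive density of $\mu_0$ should handle this.
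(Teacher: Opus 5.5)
Your part (i) is the paper's argument. Under min-closure, \Cref{cor:pointwise} gives $\widetilde V(\mu)=\int[\alpha\overline{u}-(1-\alpha)\overline{-u}]\,\d\mu$. Without min-closure, \Cref{thm:main} supplies an $f$ with $V^\star_f$ non-affine, which you use with $u=f$ and $\alpha=1$. The Dirac-measure detour and the approximation remark are unnecessary: an expected utility representation is affine, and the proof of $(d)\Rightarrow(a)$ already produces a continuous $k^n$.

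For (ii) you take a genuinely different route. The paper rules out every representation $V^\star_f(\mu)=\phi(\int u\,\d\mu)$ with $\phi$ strictly increasing. It shows that any supergradient of $V^\star_f$ at $\mu_0$ must annihilate every bounded-density $\eta$ with $\eta(X)=\int u\,\d\eta=0$, and hence lies in $\mathrm{span}\{u,1\}$ $\mu_0$-a.e. Since the zero duality gap makes both $g_1,g_2$ supergradients, they would then be affinely related. You instead exhibit a strict kink, with $\mu_0$ strictly preferred to both $\mu_0\pm\epsilon\eta$, and derive a direct violation of independence. Both proofs rest on the same two inputs: both $g_i$ are supergradients at $\mu_0$, and they are not affinely related. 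Yours is more elementary and yields an explicit three-lottery violation of betweenness, which fits the outside-observer interpretation well. The paper's is a clean statement about ordinal representations and avoids constructing a perturbation.

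The step you flag as the main obstacle is, however, misjustified. Non-constancy of $g_1-g_2$ does not give an $\eta$ with $\eta(X)=0$ and $\int g_1\,\d\eta>0>\int g_2\,\d\eta$. For example, if $g_2=2g_1+c$ with $g_1$ non-constant, then $\int g_2\,\d\eta=2\int g_1\,\d\eta$ for every mass-zero $\eta$, so opposite signs are impossible. What you need is that $1,g_1,g_2$ are linearly independent in $L^1(\mu_0)$. This is exactly the hypothesis that $g_1,g_2$ are not affinely related, transferred to $\mu_0$-a.e.\ statements via continuity and the full support of $\mu_0$; that transfer is where the positive density of $\mu_0$ is really needed. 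Then $h\mapsto(\int h\,\d\mu_0,\int hg_1\,\d\mu_0,\int hg_2\,\d\mu_0)$ maps $L^\infty(\mu_0)$ onto $\R^3$, so you may take $\eta=h\,\mu_0$ with image $(0,1,-1)$. Boundedness of $h$ already ensures $\mu_0\pm\epsilon\eta\in\Delta(X)$ for small $\epsilon$. With this correction, your proof of (ii) is complete.
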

Consequently, whenever $\C$ is min-closed, any ambiguity-averse preference, when translated to the space of lotteries, can in fact be represented by an expected utility \emph{as if} it were an expected utility preference. For example, when the ambiguity menus are given by mean-preserving spread orbits, every ambiguity-aversion preference over these menus has an expected utility representation. Conversely, an outside observer who observes all binary choices over lotteries can distinguish ambiguity-aversion preferences from expected utility preferences whenever $\C$ is strongly non-min-closed. For instance, when the ambiguity menus are given by the mean-preserving contraction orbits, an outside observer who observes all binary choices over lotteries can distinguish an ambiguity-averse DM from an expected-utility DM who faces no ambiguity. 

\paragraph{Property Right Design.}\hspace{-2mm}Consider the following setting adapted from \citet{DworczakMuir2024}. There is a \textit{\textbf{seller}} and a \textit{\textbf{buyer}} who contract on the sale of a single good. Let $x \in [0, 1]$ denote the allocation probabilities and $t \in \R_+$ denote the transfers. The buyer has private value $\theta \in [\underline{\theta}, \overline{\theta}] \subset \R_+$ for the good. The seller maximizes some linear objective $\E[V_S(\theta) x(\theta) + \lambda_S t(\theta)]$, where $\lambda_S \geq 0$, by posting a menu $M_S = \{(x, t)\}$ of lotteries and prices for the buyer to choose. However, before the seller moves, a \textit{\textbf{designer}} first designs the property right by posting an initial menu $M_D = \{(x, t)\}$ that the buyer treats as \textit{\textbf{outside options}} and hence can select if declining to contract with the seller. The designer maximizes a different linear objective $\E[V_D(\theta) x(\theta) + \lambda_D t(\theta)]$, where $\lambda_D \geq 0$. 

Without loss of generality, we may assume the seller always induces the buyer to contract with them. Thus, the seller's problem is equivalent to a mechanism design problem with a \textit{\textbf{type-dependent participation constraint}} $U(\theta) \geq R(\theta)$, where $U(\theta)$ is the indirect utility function of the buyer under the seller's menu and $R(\theta)$ is the indirect utility function of the buyer under the designer's menu. Equivalently, as in \citet{DworczakMuir2024}, the seller maximizes 
\[\max_{x,\, \underline{u} \geq 0}\int_{\underline{\theta}}^{\overline{\theta}} W_S(\theta) x(\theta) \d \theta - \alpha \underline{u}\]
subject to $x$ being non-decreasing, and 
\[\underline{u} + \int_{\underline{\theta}}^{\theta} x(s) \d s \geq \underline{u}_0 + \int_{\underline{\theta}}^{\theta} x_0(s) \d s \, \text{ for all $\theta$}\,,\]
where $W_S(\theta)$ is an appropriate virtual value function for the seller, $\underline{u}_0 = R(\underline{\theta})$ and $x_0(\theta) = R'(\theta)$. Anticipating the seller solving this problem, the designer chooses $(\underline{u}_0, x_0)$ in the first stage where $x_0$ is non-decreasing and $\underline{u}_0 \geq 0$. Theorem 1 of \citet{DworczakMuir2024} shows that despite the designer not directly controlling the allocation rule and having a different objective from the seller, the designer's optimal property right menu is very simple: $M_D = \{(1, p)\}$ interpreted as an \textit{\textbf{option-to-own}} property right where the buyer can always own the good at a price of $p$. \citet{DworczakMuir2024} proves this result by first developing a \textit{\textbf{generalized ironing procedure}} that solves the seller's problem subject to the type-dependent IR constraints, and then uncovering a surprising feature that the seller's optimal mechanism depends on the outside-option function $R(\theta)$ only through a \textit{\textbf{linear transformation}} via the ironing procedure. Thus, there exists a designer's solution that is an extreme point of the feasible $R(\theta)$ functions, corresponding to a menu $\{(1, p)\}$. 

We now describe how to understand this result via our framework and the trapezoid property. Since we assume the transfers are non-negative, as is standard in this environment, any IC mechanism can be equivalently represented as a \textit{\textbf{randomized posted price}} (with the price potentially lower than $\underline{\theta}$).\footnote{Indeed, any such mechanism can be extended to be an IC mechanism on $[0,\overline{\theta}]$ where type $0$ gets $0$ payoff given non-negative transfers. Thus, there exists a randomized posted price to replicate the original mechanism.} Thus, consider the lotteries of the posted prices $\mu, \nu \in \Delta([0,\overline{\theta}])$ chosen by the designer and the seller, respectively. In this language, the type-dependent IR constraint is exactly that 
\[\int \max\{\theta - p, 0\} \nu(\d p) \geq \int \max\{\theta - p, 0\} \mu(\d p) \,\text{ for all $\theta \in [0,\overline{\theta}]$ }\,,\]
which is exactly that 
\[\nu \succeq_{\text{decreasing convex}} \mu \]
or equivalently 
\[\nu \preceq_{\text{increasing concave}} \mu\,.\]
Let $\C$ be the set of \textit{\textbf{increasing concave}} functions on $[0, \overline{\theta}]$. Both the seller and the designer's objectives are linear functionals of $\nu$. Therefore, the seller solves the problem 
\[\max_{\nu: \nu \preceq_\C \mu} \int f(p) \nu(\d p)\,,\]
for some function $f$. The feasible set of $\mu$ here is unrestricted, i.e., $M = \Delta([0, \overline{\theta}])$. The designer solves the problem
\[\max_{(\mu, \nu) \in G^M_f} \int w(p) \nu(\d p)\,,\]
for some function $w$. Since the set of increasing concave functions is min-closed, by
\Cref{cor:nested-optimization}, we immediately obtain that there exists $(\mu^\star, \nu^\star)$ solving the nested optimization such that $\mu^\star \in \Ext(M)$, i.e., $\mu^\star = \delta_p$ for some $p$, and $\nu^\star \in \Ext(\{\nu: \nu \preceq_\C \mu^\star\})$ which takes the form of $\nu^\star = \alpha \delta_{p_1} + (1 - \alpha) \delta_{p_2}$ where $p_1 \leq p_2$. Therefore, the designer's optimal menu takes the form $\{(1, p)\}$, i.e., an option-to-own design, and then subsequently the seller's optimal menu takes the form $\{(\alpha, \alpha p_1), (1, \alpha p_1 + (1 - \alpha) p_2)\}$. 

\section{Conclusion}\label{sec:conclusion}

We study optimization problems in which a linear functional is maximized over probability measures that are dominated by a given measure according to an integral stochastic order in an arbitrary dimension. We show that the following four properties are equivalent for any such order: (i) the test function cone is closed under pointwise minimum, (ii) the value function is affine, (iii) the solution correspondence has a convex graph with decomposable extreme points, and (iv) every ordered pair of measures admits an order-preserving coupling. As corollaries, we derive the extreme and exposed point properties involving integral stochastic orders such as multidimensional mean-preserving spreads and stochastic dominance. Applying these results, we generalize Blackwell's theorem by completely characterizing the comparisons of experiments that admit two equivalent descriptions---through instrumental values and through information technologies. We also show that these results immediately yield new insights into information design, mechanism design, and decision theory.

\appendix 
\section{Omitted Proofs}
\subsection{Technical Lemmas}
\subsubsection{Technical Lemmas for \texorpdfstring{\Cref{sec:abstract}}{Section 3}}
\begin{lemma}\label{lem:conti-approx}
There exists a countable set of continuous functions $\widehat{\C} \subseteq \C \cap C(X)$ such that for all $g \in \C$, there exists a sequence $\{g_n\} \subseteq \widehat{\C}$ such that $g_n \geq g$ for all $n$, and $g_n \rightarrow g$ pointwise.  
\end{lemma}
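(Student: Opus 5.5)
Since $X$ is compact Polish, $C(X)$ with the sup-norm is separable, and any subset of a separable metric space is separable. So the plan is to take $\widehat{\C}$ to be a countable sup-norm dense subset of $\C \cap C(X)$, enlarged by adding all rational constants (or small rational shifts of its elements), and then check the approximation property using assumption (iii) on $\C$.

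The key steps, in order, are as follows.

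First, fix $g \in \C$. By assumption (iii) there is a decreasing sequence $\{h_n\} \subseteq \C \cap C(X)$ with $h_n \downarrow g$ pointwise.

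Second, for each $n$, density gives $k_n \in \widehat{\C}_0$ (the countable dense subset) with $\|k_n - h_n\|_\infty < 1/n$.

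Third, define
\[
g_n := k_n + \tfrac{1}{n}\,.
\]
Because $\C$ is a convex cone containing constants, $g_n \in \C \cap C(X)$. Moreover $g_n \ge h_n \ge g$, and
\[
|g_n - h_n| \le \tfrac{2}{n}\,.
\]
Hence $g_n(x) \to \lim_n h_n(x) = g(x)$ for every $x$.

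To keep the family countable, I set
\[
\widehat{\C} := \Big\{\, k + q \;:\; k \in \widehat{\C}_0\,,\ q \in \Q_{\ge 0} \,\Big\}\,,
\]
which is countable and contained in $\C \cap C(X)$, since $\C$ is a cone containing constants and is closed under addition.

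The only point needing care is that the dense subset is chosen inside $\C \cap C(X)$ rather than inside $C(X)$. This is fine because separability passes to subspaces of metric spaces. Note that sup-norm closedness of $\C \cap C(X)$ (assumption (i)) is not even needed here. The lemma does not require monotonicity of $\{g_n\}$, so no further adjustment is necessary. If monotonicity were wanted, I would replace $g_n$ by $\min_{m \le n} g_m$, but that would require min-closure, which is not assumed; so I would not claim it.
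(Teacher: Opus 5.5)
Your proof is correct and is essentially the paper's argument. Both take a countable sup-norm dense subset of $\C \cap C(X)$, shift it by nonnegative rationals, and combine this with the decreasing continuous approximation $h_n \downarrow g$ from assumption (iii). Your single shift $1/n$ replaces the paper's choice of $q \in (\varepsilon/3, 2\varepsilon/3)$ with tolerance $2^{-n}$; the difference is cosmetic.
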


\begin{proof}
Since $X$ is a compact metric space, $C(X)\cap \C$ is separable under the sup-norm. Take a countable dense subset $\widetilde{\C}=\{h_m\}_{m=1}^\infty$ of $\C \cap C(X)$. Let 
\[
\widehat{\C}:=\{h_m+q: m \in \N\,, q \in \Q_+ \}\,.
\]
Since $\C$ is a convex cone that contains all constants, and since the sum of a continuous function and a constant is continuous, it follows that $\widehat{\C} \subseteq \C \cap C(X)$. Clearly, $\widehat{\C}$ is also countable. 

We first claim that for any $h \in \C \cap C(X)$ and for any $\varepsilon>0$, there exists $\hat{h} \in \widehat{\C}$ such that $h < \hat{h} < h+\varepsilon$. Indeed, for any such $h \in \C \cap C(X)$ and $\varepsilon$, take $m \in \N$ so that $\|h_m-h\|<\nicefrac{\varepsilon}{3}$ and $q \in \Q_+ \cap (\nicefrac{\varepsilon}{3},\nicefrac{2\varepsilon}{3})$. Then, for all $x \in X$, 
\[
h_m(x)+q > h(x)-\frac{\varepsilon}{3}+\frac{\varepsilon}{3}=h(x)\,,
\]
and 
\[
h_m(x)+q < h(x)+\frac{\varepsilon}{3}+\frac{2\varepsilon}{3}=h(x)+\varepsilon\,,
\]
proving the claim. 

Now, consider any $g \in \C$. Since $g$ is approximated by continuous functions in $\C$ from above, there exists $\{g_n\} \subseteq \C \cap C(X)$ such that $\{g_n\} \downarrow g$. For each $n \in \N$, take $\hat{h}_n \in \widehat{\C}$ such that $g_n<\hat{h}_n<g_n+2^{-n}$. Together, we have 
\[
g \leq g_n<\hat{h}_n<g_n+2^{-n}\,,
\]
for all $n \in \N$. Therefore, for all $x \in X$, 
\[
\limsup_{n \to \infty} |\hat{h}_n(x)-g(x)| \leq \limsup_{n \to \infty}\Big(|g_n(x)-g(x)|+2^{-n}\Big)=0\,.
\]
Therefore, the sequence $\{\hat{h}_n\} \subseteq \widehat{\C}$ satisfies that $\hat{h}_n \geq g$ for all $n$ and $\hat{h}_n \rightarrow g$ pointwise, as desired. 
\end{proof}

\begin{lemma}\label{lem:envelope}
Suppose $\C$ is min-closed. Fix any upper semicontinuous function $f:X \to \R$. Let 
\[
\overline{f}(x):=\inf \Big\{g(x): g \in \C\,, g \geq f\Big\}\,,
\]
for all $x \in X$. Then $\overline{f} \in \C$. 
\end{lemma}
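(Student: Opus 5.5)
The strategy is to reduce the infimum over the possibly uncountable family $\{g\in\C: g\ge f\}$ to a decreasing sequence of elements of $\C$, and then use min-closure together with property (iv), closure under bounded decreasing pointwise limits. The countable family $\widehat{\C}$ of \Cref{lem:conti-approx} supplies the reduction.

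First I will show that
\[
\overline{f}(x)=\inf\Big\{h(x): h\in\widehat{\C},\ h> f\Big\}\quad\text{for all }x\in X .
\]
The inequality ``$\le$'' is immediate because $\widehat{\C}\subseteq\C$. For ``$\ge$'', fix $x$, $\varepsilon>0$, and some $g\in\C$ with $g\ge f$. By property (iii), choose $g_n\in\C\cap C(X)$ with $g_n\downarrow g$. I claim that $g_n+\varepsilon>f$ on all of $X$ for large $n$. Indeed, the sets $\{y: f(y)-g_n(y)\ge \varepsilon\}$ are closed, since $f$ is upper semicontinuous and $g_n$ is continuous. They are also decreasing in $n$, and their intersection is empty because $g\ge f$. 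Compactness of $X$ then forces one of them to be empty, which is a Dini-type step. Next, the first claim in the proof of \Cref{lem:conti-approx} gives $\hat h\in\widehat{\C}$ with $g_n+\varepsilon<\hat h<g_n+2\varepsilon$. Hence $\hat h>f$ and $\hat h(x)<g(x)+\varepsilon+(g_n(x)-g(x))+\varepsilon$, and this can be made arbitrarily close to $g(x)$. The same argument applied to $g\equiv\sup f$, a constant in $\C$, shows the index set is nonempty, so $\overline f$ is bounded above. It is bounded below by $f\ge\min f$, since $f$ is bounded.

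Second, enumerate the countable set $\{h\in\widehat{\C}: h>f\}$ as $h_1,h_2,\dots$ and define $k_n:=\min\{h_1,\dots,h_n\}$. By min-closure and induction, each $k_n\in\C$. Moreover $k_n$ is decreasing in $n$, bounded below by $f$, and converges pointwise to $\inf_m h_m=\overline f$ by the first step. Property (iv) then gives $\overline f\in\C$.

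The main obstacle is the first step. The pointwise infimum over an uncountable family must equal an infimum over a countable one at every $x$, and this is exactly where upper semicontinuity of $f$ and compactness of $X$ are used. Working with strict domination $h>f$ together with the uniform slack $\varepsilon$ avoids the problem that approximations from above in $\widehat{\C}$ might fail to dominate $f$.
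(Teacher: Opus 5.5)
Your proof is correct and is essentially the paper's argument: restrict the countable family $\widehat{\C}$ of \Cref{lem:conti-approx} to members dominating $f$, take running minima $k_n$, and invoke min-closure together with closure under bounded decreasing limits. The Dini-type step and the strict domination $h>f$ are not needed, because $g_n\downarrow g$ already gives $g_n\ge g\ge f$ everywhere, so the approximants from \Cref{lem:conti-approx} automatically dominate $f$; this is how the paper proceeds, using $h\ge f$ directly. (Also note that the sets $\{f-g_n\ge\varepsilon\}$ increase in $n$ rather than decrease, though all of them are empty.)
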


\begin{proof}
Fix a countable approximating class $\widehat{\C} \subseteq \C \cap C(X)$ from \Cref{lem:conti-approx}. Let 
\[\widehat{\C}_f := \Big\{h \in \widehat{\C}: h \geq f\Big\}\]
which is non-empty by \Cref{lem:conti-approx} since $f$ is bounded from above and $\C$ contains all constants and thus there exists some $h \in \widehat{\C}$ with $h \geq f$. Enumerate $\widehat{\C}_f$ as $\{h_n\}_{n=1}^\infty$ and write 
\[k_n := \min\{h_1, \dots, h_n\}\,.\]
Since $\C$ is min-closed, note that $k_n \in \C \cap C(X)$. Moreover, the sequence $\{k_n\}$ is decreasing with $k_n \geq f$. Fix any $x \in X$. We claim that  
\[\lim_{n \rightarrow \infty} k_n(x) = \inf_{h \in \widehat{\C}_f} h(x) = \inf_{g \in \C\,, g \geq f} g(x) = \overline{f}(x)\,.\]
Indeed, the first equality holds by definition, and $\inf_{h \in \widehat{\C}_f} h(x) \geq \overline{f}(x)$ follows by that $\widehat{\C}_f \subseteq \{g \in \C: g \geq f\}$. Now, to see the reverse inequality, fix any $g \in \C$ with $g \geq f$, and any $\varepsilon > 0$. By \Cref{lem:conti-approx}, there exists $h \in \widehat{\C}$ such that $g(x) \leq h(x) \leq g(x) + \varepsilon$. Since $g \geq f$, we have $h \geq f$ and hence $h \in \widehat{\C}_f$. Therefore, 
\[\inf_{h \in \widehat{\C}_f} h(x) \leq g(x) + \varepsilon\,.\]
Taking $\varepsilon \downarrow 0$ gives  
\[\inf_{h \in \widehat{\C}_f} h(x) \leq g(x)\,,\]
for all $g \in \C$ with $g \geq f$. Therefore, we have 
\[\inf_{h \in \widehat{\C}_f} h(x) \leq \inf_{g \in \C,\, g \geq f} g(x) = \overline{f}(x)\,,\]
as claimed. Thus, $\{k_n\} \subseteq \C \cap C(X)$ is a bounded decreasing sequence that converges pointwise to $\overline{f}$. Since $\C$ is closed under pointwise convergence from above, $\overline{f} \in \C$. 
\end{proof}

\begin{lemma}\label{lem:compactK}
For any $\mu \in \Delta(X)$, the set 
\[\K_\mu:= \Big\{\nu \in \Delta(X):  \nu \preceq_\C \mu\Big\}\]
is closed and hence compact. 
\end{lemma}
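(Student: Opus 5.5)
The plan is to write $\K_\mu$ as an intersection of closed sets, one for each test function, and then use compactness of $\Delta(X)$. For each $g \in \C$ set
\[
F_g := \Big\{\nu \in \Delta(X): \int_X g \d \nu \leq \int_X g \d \mu\Big\}\,,
\]
so that $\K_\mu = \bigcap_{g \in \C} F_g$. It then suffices to show that each $F_g$ is weak-* closed, since an intersection of closed sets is closed. Compactness follows because $\Delta(X)$ is weak-* compact: $X$ is a compact Polish space, so Prokhorov's theorem (or Riesz–Markov together with Banach–Alaoglu) applies.

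The main obstacle is that a general $g \in \C$ is only bounded and upper semicontinuous, not continuous, so $\nu \mapsto \int g \d\nu$ need not be weak-* continuous. I will handle this with semicontinuity. Since $X$ is metrizable, every bounded upper semicontinuous $g$ is the pointwise decreasing limit of continuous functions. Hence $\nu \mapsto \int g \d \nu$ is weak-* upper semicontinuous on $\Delta(X)$, by the portmanteau theorem, or directly as the infimum over $n$ of the continuous maps $\nu \mapsto \int g_n \d\nu$, using monotone convergence. A sublevel set $\{\nu: \Phi(\nu) \le c\}$ of an upper semicontinuous functional is not automatically closed, however, so this alone does not finish the argument.

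To resolve this, I will reduce to continuous test functions using assumption (iii), or equivalently \Cref{lem:conti-approx}. I claim that
\[
\K_\mu = \bigcap_{h \in \C \cap C(X)} F_h\,.
\]
The inclusion ``$\subseteq$'' is trivial. For ``$\supseteq$'', take $\nu$ in the right-hand side and any $g \in \C$, and pick $\{g_n\} \subseteq \C \cap C(X)$ with $g_n \downarrow g$ pointwise. For every $n$ we have $\int g_n \d\nu \le \int g_n \d\mu$. Everything is bounded, since $g_1$ is continuous and $g$ is bounded, so monotone (or dominated) convergence applied under both $\nu$ and $\mu$ gives $\int g \d\nu \le \int g\d\mu$. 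Hence $\nu \in \K_\mu$.

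For continuous $h$, the map $\nu \mapsto \int h \d\nu$ is weak-* continuous by the definition of the topology. Each $F_h$ with $h \in \C \cap C(X)$ is therefore closed, so $\K_\mu$ is closed, and as a closed subset of the compact space $\Delta(X)$ it is compact. It is also nonempty, since $\mu \in \K_\mu$, though that is not needed here.
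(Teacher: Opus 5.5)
Your proof is correct and follows the same route as the paper's: use the decreasing approximations in $\C \cap C(X)$ from assumption (iii) together with monotone convergence to reduce to continuous test functions, then use weak-* continuity of $\nu \mapsto \int h \d\nu$ for continuous $h$. The difference is only in packaging. The paper takes a convergent sequence $\nu_n \to \nu$ in $\K_\mu$ and passes to the limit, which is legitimate because $\Delta(X)$ is metrizable. You instead write $\K_\mu$ as an intersection of closed half-spaces indexed by $\C \cap C(X)$, which avoids sequences altogether.
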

\begin{proof}
For any $\{\nu_n\} \subseteq \mathcal{K}_\mu$ such that $\{\nu_n\} \to \nu$ for some $\nu \in \Delta(X)$, since $\nu_n \preceq_\C \mu$, 
\[
\int g \d \nu_n \leq \int g \d \mu
\]
for all $g \in \C$. Therefore, for all $g \in \C$ that is continuous, 
\[
\int g \d \nu =\lim_{n \to \infty} \int g \d \nu_n \leq \int g \d \mu\,.
\]
Since for any $g \in \C$, there exists a sequence of continuous functions $\{g_m\} \subseteq \C$ such that $\{g_m\} \downarrow g$. As a result, by the monotone convergence theorem, 
\[
\int g \d \nu=\lim_{m \to \infty} \int g_m \d \nu \leq \lim_{m \to \infty} \int g_m \d \mu=\int g \d \mu\,.
\]
Therefore, $\nu \in \mathcal{K}_\mu$. That is, $\mathcal{K}_\mu\subseteq \Delta(X)$ is closed, and hence, compact.     
\end{proof}

\begin{lemma}\label{lem:ri-contact-ae}
Let $X\subset \mathbb{R}^n$ be a full-dimensional  compact convex set. Consider any upper semicontinuous concave function $f:X \to \R$. For each differentiability point $x$ of $f$, let
\[
h^x(y):=f(x)+\langle \nabla f(x),  y-x \rangle
\]
be the supporting hyperplane of $f$ at $x$, and let
\[
R_{h^x}:=\Big\{y\in X:f(y)=h^x(y)\Big\}\,.
\]
Then, for Lebesgue-almost every $x \in X$, we have 
\[
x\in \mathrm{ri}(R_{h^x})\,.
\]
\end{lemma}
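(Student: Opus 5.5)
The plan is to partition the differentiability points of $f$ according to their supporting hyperplane, and then show that the points lying on the relative boundary of their own contact set form a Lebesgue-null set. Write $D$ for the set of differentiability points; by Rademacher's theorem (a finite concave function on the interior of $X$ is locally Lipschitz) and the fact that $\partial X$ is null, $D$ has full measure.

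\textbf{Step 1.} For $x\in D$, the set $R_{h^x}$ is convex and contains $x$: it is the set where the concave $f$ touches its affine majorant $h^x\ge f$. If $y\in R_{h^x}\cap D$, then $h^x$ supports $f$ at $y$. Since the supporting hyperplane at a differentiability point is unique, $h^y=h^x$. Hence the sets $F_p:=R_{h}$, with $p=\nabla h$ ranging over gradients, partition $D$ into the pieces $D\cap F_p$. The bad set is
\[
B:=\{x\in D: x\in F_{\nabla f(x)}\setminus \mathrm{ri}(F_{\nabla f(x)})\}.
\]
Stratify $B=\bigcup_{k=0}^n B_k$, where $B_k$ collects the points whose face $F_{\nabla f(x)}$ has affine dimension $k$.

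\textbf{Step 2: the case $k=n$.} Full-dimensional faces have pairwise disjoint interiors, each of positive Lebesgue measure. There are therefore only countably many of them. The relative boundary of a full-dimensional convex set is Lebesgue-null, so $B_n$ is a countable union of null sets and is null.

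\textbf{Step 3: the case $k<n$.} This is where the real work lies, and I expect it to be the main obstacle. A purely pointwise argument at Alexandrov points does not suffice. For example, $f(y)=-(y_1)_+^4$ is twice differentiable at points with $y_1=0$, yet those points lie on the boundary of their face. So a genuine measure-theoretic argument is needed.

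My first attempt would pass to the conjugate $g^*$ of $g:=-f$. In that language $F_p=\partial g^*(p)$, and the faces are the values of the maximal monotone map $\partial g^*$.
\begin{itemize}
\item First, use Alexandrov's theorem: at a.e. $x\in B_k$, $\nabla f$ is differentiable along $D$ with a negative semidefinite Hessian $H$. Since $f$ is affine on $F_{\nabla f(x)}$, the direction space $L$ of that face lies in $\ker H$.
\item Next, near a Lebesgue density point of $B_k$, this shows that the faces through nearby points of $B_k$ are $k$-dimensional with directions close to $L$. They can be parametrized over a transversal $(n-k)$-dimensional slice $\Sigma\perp L$ through $x$, and the map sending a point to the face through it is Lipschitz in a neighbourhood. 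This uses monotonicity of $\nabla f$: distinct faces cannot cross, and faces of nearly the same direction are separated at a rate controlled by the difference of their gradients.
\item Then, in these local coordinates, apply Fubini (or the coarea formula) to the set $B_k$. Each leaf contributes only the relative boundary of a $k$-dimensional convex set, which has $k$-dimensional Hausdorff measure zero. So $B_k$ has zero Lebesgue measure near every one of its density points.
\item Finally, this contradicts the Lebesgue density theorem unless $B_k$ is null.
\end{itemize}
If establishing the Lipschitz transversal structure proves too delicate, my fallback is to cover $B_k$ by countably many sets on which the face direction lies in a small cone around a fixed $k$-plane $L_j$. On each such piece I would slice by the lines parallel to a fixed unit vector $e\in L_j$. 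Convexity of each face makes a one-dimensional slice of $F_p$ an interval, and the points of $B_k$ on any such line are endpoints of these intervals; disjointness of faces then makes the endpoints on a line at most countable. Fubini along the $e$-direction then gives that each piece of $B_k$ is null.

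Combining Steps 2 and 3 shows that $B$ is Lebesgue-null. Since $D$ has full measure, $x\in\mathrm{ri}(R_{h^x})$ for Lebesgue-a.e. $x\in X$.
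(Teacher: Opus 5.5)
Your Steps 1 and 2 are sound. Your remark that $F_p=\partial g^*(p)$ is exactly the identification the paper makes: with $g=-f$ on $X$, $g=+\infty$ off $X$, and $p_x=-\nabla f(x)$, one has $R_{h^x}=\argmax_y[\langle p_x,y\rangle-g(y)]=\partial g^*(p_x)$. Hence the bad set lies in $\bigcup_{p}\mathrm{rb}(\partial g^*(p))$.

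The gap is Step 3, which carries the whole content of the lemma and is not proved. Your main route rests on the claim that, near a density point of $B_k$, the faces admit a Lipschitz parametrization over a transversal slice. You give no derivation of this. The monotonicity estimate you allude to at best controls the face direction at points lying a definite distance inside their faces. That control degenerates as one approaches a relative boundary, which is exactly where the points of $B_k$ sit. For example, the faces of $f(y)=-|y|$ are segments fanning out from the origin, and their directions are not Lipschitz near the common endpoint. So the Fubini/coarea step has nothing to stand on.

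The fallback fails concretely, for two reasons.
\begin{enumerate}
\item A line parallel to a fixed $e\in L_j$ meets a $k$-dimensional face $F$ in a nondegenerate interval only if $e$ lies exactly in the direction space of $F$. Faces whose directions are merely close to $L_j$ are met in single points. Every point of $B_k$ on such a line is then trivially an ``endpoint'', and no countability follows.
\item Even when $e$ is in the direction space of $F$, a point of $\mathrm{rb}(F)$ need not be an endpoint of its $e$-chord. The chord can run along a lower-dimensional face of $F$, e.g.\ an edge of a square, so $B_k\cap\ell$ may contain whole segments.
\end{enumerate}

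More fundamentally, disjointness and convexity of the faces alone cannot yield nullness. Larman constructed a compact family of pairwise disjoint segments in $\R^3$ whose endpoint set has positive Lebesgue measure. Any argument must therefore exploit the subgradient structure in an essential way.

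The paper does not attempt this by hand. After the same conjugate identification, it invokes Theorem 3.3 of \citet{drusvyatskiy2011generic}, which states that $\bigcup_{p\in\R^n}\mathrm{rb}(\partial g^*(p))$ is Lebesgue-null for any convex $g^*:\R^n\to\R$. That result, cited or proved, is the missing ingredient. With it, your Step 1 identification already completes the proof, and Steps 2--3 become unnecessary.
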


\begin{proof}
Define $g:\R^n\to(-\infty,+\infty]$ by
\[
g(y):=
\begin{cases}
-f(y), & \text{ if $y\in X$}\,,\\
+\infty, & \text{ otherwise}\,.
\end{cases}
\]
Since $f$ is upper semicontinuous and concave on the compact convex set $X$, the function $g$ is proper, lower semicontinuous, and convex. Its conjugate is therefore
\[
g^*(p)=\sup_{y\in \R^n}[\langle p,y\rangle-g(y)]
      =\sup_{y\in X}[\langle p,y\rangle+ f(y)],
\]
which is convex, as it is the pointwise supremum of a family of affine functions. It is also finite for every $p \in \R^n$ as $X$ is compact and $f$ is upper semicontinuous. 

Let $D$ denote the set of differentiability points of $f$ in the interior of the full-dimensional $X$. Since $f$ is concave and $\partial X$ has Lebesgue measure zero, $D$ is a full-Lebesgue-measure subset of $X$. Thus, it suffices to show that
\[
x\in \mathrm{ri}(R_{h^x})
\]
for a full-Lebesgue-measure subset of $D$. 

To this end, fix any $x\in D$ and let
\[
p_x:=-\nabla f(x)\,.
\]
Since $\nabla f(x)$ is a supergradient of $f$ at $x$,
\[
f(y)\leq f(x)+\langle \nabla f(x), y-x\rangle
\]
for all $y \in X$. Equivalently, we have
\[
\langle p_x, y\rangle-g(y)\leq \langle p_x,x\rangle-g(x)
\]
for all $y \in \R^n$. Note that the equality holds if and only if $y\in R_{h^x}$---indeed, if the equality holds, then we must have $y \in X$, and 
\[f(y)= f(x)+\langle \nabla f(x), y-x\rangle = h^x(y)\,,
\]
and thus $y \in R_{h^x}$; conversely, if $y \in R_{h^x}$, then $y \in X$ and the above also hold. Therefore, we have 
\[
R_{h^x}=\argmax_{y\in \R^n}[\langle p_x,y\rangle-g(y)]\,.
\]

We claim that, for every $p\in\R^n$, we have 
\[
\argmax_{y\in \R^n}[\langle p,y\rangle-g(y)]
=\partial g^*(p)\,,
\]
where $\partial g^*(p)$ is the subdifferential of $g^*$ at $p$. Indeed, for any 
\[y \in \argmax_{y' \in \R^n}[\langle p, y' \rangle-g(y')]\,,\]
we have
\[
g^*(p)=\langle p,y\rangle-g(y)\,,
\]
and therefore for every $q\in\R^n$, by definition of $g^*$, we have 
\[
g^*(q)\geq \langle q,y\rangle-g(y) = g^*(p)+\langle y, q-p\rangle\,,
\]
and hence $y\in \partial g^*(p)$. Conversely, if $y\in \partial g^*(p)$, then we have 
\[
g^*(q)\geq g^*(p)+\langle y,q-p\rangle\,,
\]
for all $q \in \R^n$. Rearranging, we have
\[
\langle q,y\rangle-g^*(q)\leq \langle p,y\rangle-g^*(p)
\]
for all $q \in \R^n$. Since $g$ is proper, lower semicontinuous, and convex, we have $g = g^{**}$. Now, taking the supremum of the above over $q$ and using $g = g^{**}$, we obtain
\[
g(y) = g^{**}(y) \leq \langle p,y\rangle-g^*(p)\,.
\]
By Fenchel's inequality, we also have 
\[
g(y) \geq \langle p,y \rangle - g^*(p)\,.
\]
Therefore, we conclude that 
\[
g^*(p)=\langle p,y\rangle-g(y)\,,
\]
and thus $y$ attains the maximum of $\langle p, y' \rangle-g(y')$ over all $y' \in \R^n$, proving the claim.

Applying the previous claim with $p=p_x$ yields
\[
 \partial g^*(p_x) = \argmax_{y \in \R^n}[\langle p_x, y\rangle - g(y)] = R_{h^x} \,.
\]
Therefore, we have 
\[
B := \Big\{x \in D : x \notin \mathrm{ri}\big(R_{h^x}\big) \Big\} = \Big\{x \in D : x \in \mathrm{rb}\big(R_{h^x}\big) \Big\} = \Big\{x \in D : x \in \mathrm{rb}\big(\partial g^*(p_x)\big) \Big\}\,,
\]
where $\mathrm{rb}(A)$ denotes the relative boundary of a convex set $A$. Hence, we conclude that 
\[
B \subseteq \bigcup_{p\in \R^n}\mathrm{rb}\big(\partial g^*(p)\big)\,.
\]
Since $g^*: \R^n \to \R$ is convex, by Theorem 3.3 of \citet{drusvyatskiy2011generic}, we conclude that 
\[
\bigcup_{p\in \R^n}\mathrm{rb}\big(\partial g^*(p)\big)
\]
has Lebesgue measure zero. Thus, $B$ has Lebesgue measure zero, as desired. 
\end{proof}

\subsubsection{Technical Lemmas for \texorpdfstring{\Cref{sec:compare}}{Section 4}}
\begin{lemma}\label{lem:rectangle}
For any rectangular $\Pc,\Pc'$, $\Pc=\Pc'$ if and only if $\preceq^{\Pc}=\preceq^{\Pc'}$.
\end{lemma}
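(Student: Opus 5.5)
The ``only if'' direction is trivial: if $\Pc=\Pc'$, the two relations are defined by the same set of kernels and therefore coincide. All the work is in the ``if'' direction. Assume $\preceq^{\Pc}=\preceq^{\Pc'}$. By symmetry it suffices to show $\Pc\subseteq\Pc'$. Since $\Pc'$ is rectangular, it is enough to show that $\Pc_x\subseteq\Pc'_x$ for every $x\in X$. Indeed, if this holds, then for any $P\in\Pc$ we have $P(\,\cdot\,\mid x)\in\Pc_x\subseteq\Pc'_x$ for all $x$, and rectangularity gives $P\in\Pc'$.

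To prove $\Pc_x\subseteq\Pc'_x$, I will test the relations on Dirac masses. Fix $x$ and $\eta\in\Pc_x$, so $\eta=P(\,\cdot\,\mid x)$ for some $P\in\Pc$. Then
\[
P*\delta_x=\int P(\,\cdot\,\mid y)\,\delta_x(\d y)=P(\,\cdot\,\mid x)=\eta,
\]
so $\delta_x\preceq^{\Pc}\eta$. By hypothesis, $\delta_x\preceq^{\Pc'}\eta$. Hence there is $P'\in\Pc'$ with $\eta=P'*\delta_x=P'(\,\cdot\,\mid x)$, that is, $\eta\in\Pc'_x$.

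The step that needs some care is the use of rectangularity. It requires that the kernel assembled pointwise from the fibers $\Pc'_x$ is actually a measurable Markov kernel. In our argument this is not an issue: the kernel we feed into the rectangularity of $\Pc'$ is $P$ itself, which is already a measurable element of $\Pc_0$. We only verify that each of its fibers $P(\,\cdot\,\mid x)$ lies in $\Pc'_x$. No measurable selection is needed, and the argument is purely set-theoretic.
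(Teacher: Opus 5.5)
Your proposal is correct and matches the paper's proof: both evaluate the relations at Dirac masses to get $\Pc_x=\Pc'_x$ for every $x$, then use rectangularity to conclude $\Pc=\Pc'$. Your remark that no measurable selection is needed, because the kernel fed into rectangularity is already the measurable kernel $P\in\Pc\subseteq\Pc_0$, is a correct clarification that the paper leaves implicit.
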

\begin{proof}
The ``only if'' direction is clear. To prove the ``if'' direction, fix any $x \in X$. Note that for any $P \in \Pc_0$, $\nu=P*\delta_x$ if and only if $\nu=P(\,\cdot\,\mid x)$. Since $\preceq^{\Pc}=\preceq^{\Pc'}$, we have 
\[
\{\nu: \nu=P*\delta_x \text{ for some $P \in \Pc$}\}=\{\nu:  \delta_x \preceq^\Pc \nu \} = \{\nu: \delta_x  \preceq^{\Pc'} \nu \} = \{\nu: \nu=P*\delta_x \text{ for some $P \in \Pc'$}\}\,.
\]
Thus, 
\[
\Pc_x = \Big\{P(\,\cdot\,\mid x): P \in \Pc \Big\}=\Big\{P(\,\cdot\,\mid x): P \in \Pc'\Big\}= \Pc'_x \,.\]
Since this holds for all $x$, and $\Pc, \Pc'$ are rectangular, it follows that $\Pc=\Pc'$, as desired. 
\end{proof}

\begin{lemma}\label{lem:basic}
The operators $\Phi$ and $\Psi$ have the following properties: 
\begin{enumerate}
\item [(a)] $\Phi$ and $\Psi$ are decreasing in the set-inclusion order. 
\item [(b)] For any $\C \subseteq \C_0$ that admits continuous approximations from below, $\Psi(\C) \subseteq \Pc_0$ is convex, rectangular, composition-closed, and has a closed graph.  
\item [(c)] For any $\Pc \subseteq \Pc_0$, $\Phi(\Pc)$ is a max-closed convex cone that contains $1,-1$ and is closed under bounded increasing pointwise limits, with $\Phi(\Pc) \cap C(X)$ being closed under the sup-norm. 
\item [(d)] For any $\C \subseteq \C_0$ , $\C \subseteq \Phi(\Psi(\C))$
\item [(e)] For any $\Pc \subseteq \Pc_0$, $\Pc \subseteq \Psi(\Phi(\Pc))$. 
\end{enumerate}
\end{lemma}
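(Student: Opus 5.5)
The plan is to verify the five items directly from the definitions of $\Phi$ and $\Psi$. In these definitions the quantifier over $g\in\C$ (resp.\ $P\in\Pc$) sits inside, which makes (a), (d), (e) essentially immediate.

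\emph{Items (a), (d), (e).} For (a), note that if $\C\subseteq\C'$, then any $P$ with $g*P\ge g$ for all $g\in\C'$ satisfies the same inequality for all $g\in\C$, so $\Psi(\C')\subseteq\Psi(\C)$; the argument for $\Phi$ is symmetric. For (d), fix $g\in\C$. Every $P\in\Psi(\C)$ satisfies $g*P\ge g$ by definition, so $g\in\Phi(\Psi(\C))$. Item (e) is the mirror argument.

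\emph{Item (c).} Let $g_1,g_2\in\Phi(\Pc)$. For nonnegative $a,b$, linearity of $g\mapsto g*P$ gives $(ag_1+bg_2)*P\ge ag_1+bg_2$, so $\Phi(\Pc)$ is a convex cone. Both $1$ and $-1$ belong to it because $P(\cdot\mid x)$ is a probability measure. For max-closure, monotonicity of $g\mapsto g*P$ gives $\max\{g_1,g_2\}*P\ge g_i*P\ge g_i$ for each $i$, hence $\max\{g_1,g_2\}*P\ge\max\{g_1,g_2\}$. For bounded increasing limits $g_n\uparrow g$: $g$ is a supremum of lower semicontinuous functions, so it is lower semicontinuous and bounded, and monotone convergence gives $g*P=\lim g_n*P\ge\lim g_n=g$. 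For sup-norm closedness of $\Phi(\Pc)\cap C(X)$, uniform convergence passes through the integral, so the inequality is preserved in the limit.

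\emph{Item (b).} Convexity and rectangularity are pointwise conditions: the constraint $\int g(y)P(\d y\mid x)\ge g(x)$ involves only $P(\cdot\mid x)$ and is linear in it. So $\Psi(\C)_x=\{\eta:\int g\,\d\eta\ge g(x)\ \forall g\in\C\}$ is convex, and any measurable selection from these sets lies in $\Psi(\C)$. For composition-closure, take $P_1,P_2\in\Psi(\C)$ and $g\in\C$. Then
\[
g*(P_1\circ P_2)(x)=\int [g*P_1](y)\,P_2(\d y\mid x)\ge\int g(y)\,P_2(\d y\mid x)\ge g(x),
\]
using monotonicity of integration against $P_2(\cdot\mid x)$; this needs $g*P_1$ to be measurable, which holds because $P_1$ is a measurable kernel and $g$ is bounded Borel.

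The closed graph is the main obstacle, since elements of $\C$ are only lower semicontinuous. This is where continuous approximation from below is used. Write each $g\in\C$ as $g=\sup_m g_m$ with $g_m\in\C\cap C(X)$ increasing. Then
\[
\int g\,\d\eta\ge g(x)\ \text{for all } g\in\C
\quad\Longleftrightarrow\quad
\int h\,\d\eta\ge h(x)\ \text{for all } h\in\C\cap C(X).
\]
The forward direction is trivial. For the reverse, $\int g\,\d\eta=\lim_m\int g_m\,\d\eta\ge\lim_m g_m(x)=g(x)$ by monotone convergence. For continuous $h$, the set $\{(x,\eta):\int h\,\d\eta-h(x)\ge0\}$ is closed in $X\times\Delta(X)$ under the weak-$*$ topology. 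The graph is the intersection of these sets over all $h$, hence closed. The same reduction also shows each $\Psi(\C)_x$ is closed, which is consistent with the regularity claims elsewhere.
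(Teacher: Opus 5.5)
Your proposal is correct and follows essentially the same route as the paper. Items (a), (c), (d), (e), and the convexity, rectangularity and composition-closure parts of (b) are verified directly from the definitions. The closed graph in (b) is handled exactly as in the paper: continuous approximation from below plus monotone convergence lets you write the graph as the intersection, over continuous $h$ in the cone, of the closed sets $\{(x,\eta): \int h\,\mathrm{d}\eta \ge h(x)\}$.
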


\begin{proof}
$(a)$: Consider any $\C, \C'$ such that $\C \subseteq \C' \subseteq \C_0$. For any $P \in \Psi(\C')$, we have $g*P \geq g$ for all $g \in \C'$ by definition. Since $\C \subseteq \C'$, we have $g*P \geq g$ for all $g \in \C$ as well. Thus, $P \in \Psi(\C)$. Likewise, consider any $\Pc,\Pc'$ such that $\Pc \subseteq \Pc' \subseteq \Pc_0$. For any $g \in \Phi(\Pc')$, we have $g*P \geq g$ for all $P \in \Pc'$ by definition. Since $\Pc \subseteq \Pc'$, we have $g*P \geq g$ for all $P \in \Pc$ as well. Thus, $g \in \Phi(\Pc)$. \\
\mbox{}

\noindent $(b)$: It is easy to see that $\Psi(\C)$ is convex and rectangular by definition. To see that the graph of $x \mapsto \Psi(\C)_x$ is closed, note that the graph can be written as:
\[\text{Gr} = \bigcap_{g \in \C \cap C(X)} \Big\{(x, \eta): \int g \d \eta \geq g(x) \Big\}\]
by the assumption that any $g \in \C$ can be approximated from below via $\{g_n\} \subseteq \C \cap C(X)$ and the monotone convergence theorem. Since the graph is an intersection of a collection of closed sets, it is closed. To see that $\Psi(\C)$ is composition-closed, take any $P_1,P_2 \in \Psi(\C)$ and note that for any $g \in \C$ and for any $x \in X$
\[
[g * (P_1 \circ P_2)] (x)=\int \left(\int g(z)P_1(\d z\mid y)\right)P_2(\d y\mid x) \geq \int g(y) P_2(\d y\mid x) \geq g(x)\,.
\]
Thus, $P_1 \circ P_2 \in \Psi(\C)$.\\
\mbox{}

\noindent $(c)$: It is easy to see that $\Phi(\Pc)$ is a convex cone and that $1,-1 \in \Phi(\Pc)$ since $P(\,\cdot\,\mid x)$ is a probability measure for all $x \in X$. To see that $\Phi(\Pc)$ is max-closed, consider any $g_1,g_2 \in \Phi(\Pc)$. For any $P \in \Pc$ and for any $x \in X$, 
\begin{align*}
[\max\{g_1,g_2\}*P](x) = \int \max\{g_1(y),g_2(y)\}P(\d y\mid x) \geq& \max\left\{\int g_1(y)P(\d y \mid x), \int g_2(y) P(\d y\mid x)\right\} \\
\geq& \max\{g_1(x),g_2(x)\}\,.
\end{align*}
Thus, $\max\{g_1,g_2\} \in \Phi(\Pc)$. 

To see that $\Phi(\Pc) \cap C(X)$ is closed under the sup-norm, consider any sequence $\{g_n\} \subseteq \Phi(\Pc) \cap C(X)$ and suppose that $g_n \to g$ under the sup-norm. Since $C(X)$ is closed under the sup-norm, $g \in C(X)$. Moreover, since $g_n \in \Phi(\Pc)$ for all $n$, for any $P \in \Pc$ and for any $x \in X$, we have 
\begin{align*}
    0 &\geq \limsup_{n \to \infty} \left[g_n(x)-\int g_n(y)P(\d y \mid x)\right]\\
    &= g(x)-\int g(y)P(\d y \mid x)\,,
\end{align*}
where the equality follows from $\{g_n\} \to g$ in the sup-norm, and hence pointwise, as well as the dominated convergence theorem. 

Lastly, to see that $\Phi(\Pc)$ is closed under bounded increasing pointwise limits, consider any $\{g_n\} \subseteq \Phi(\Pc)$ such that $\{g_n\} \uparrow g$ for some $g$. Then, since $g$ is the limit of an increasing sequence of lower semicontinuous functions, $g$ is lower semicontinuous. Combining the previous argument and the monotone convergence theorem, we have $g \in \Phi(\Pc)$. \\
\mbox{}

\noindent $(d)$: Consider any $\C \subseteq \C_0$ and any $g \in \C$. For any $P \in \Psi(\C)$, by definition, $g*P \geq g$. Therefore, $g \in \Phi(\Psi(\C))$.   \\
\mbox{}

\noindent $(e)$: Consider any $\Pc \subseteq \Pc_0$ and any $P \in \Pc$. For any $g \in \Phi(\Pc)$, by definition $g*P \geq g$. Therefore, $P \in \Psi(\Phi(\Pc))$.
\end{proof}

\begin{lemma}\label{lem:compactm}
For any regular $\Pc$ and for any $\mu \in \Delta(X)$, let 
\[
M(\mu):=\{P*\mu: P \in \Pc\}\,.
\]
Then, $M(\mu)$ is a compact and convex subset of $\Delta(X)$. 
\end{lemma}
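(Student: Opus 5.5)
Convexity comes from convexity and rectangularity of $\Pc$, and compactness from the closed graph of $x\mapsto\Pc_x$ together with compactness of $X$. The plan has two steps.

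\emph{Convexity.} Take $\nu_1=P_1*\mu$ and $\nu_2=P_2*\mu$ with $P_1,P_2\in\Pc$, and let $\lambda\in[0,1]$. Linearity of $P\mapsto P*\mu$ gives
\[
\lambda\nu_1+(1-\lambda)\nu_2=[\lambda P_1+(1-\lambda)P_2]*\mu\,.
\]
Since $\Pc$ is convex, the kernel on the right lies in $\Pc$, so $M(\mu)$ is convex.

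\emph{Compactness.} Because $\Delta(X)$ is compact, it suffices to show $M(\mu)$ is closed. I will recast $M(\mu)$ in terms of couplings. Let $\Gamma:=\{(x,\eta):\eta\in\Pc_x\}\subseteq X\times\Delta(X)$, which is closed and hence compact. Let $\Lambda$ be the set of $\pi\in\Delta(X\times\Delta(X))$ with first marginal $\mu$ and $\pi(\Gamma)=1$. The set $\Lambda$ is weak-* closed: the marginal constraint is continuous, and $\pi_n(\Gamma)=1$ passes to limits by the Portmanteau theorem since $\Gamma$ is closed. Hence $\Lambda$ is compact. The barycentric map
\[
\beta(\pi)(A):=\int \eta(A)\,\pi(\d x,\d\eta)
\]
is weak-* continuous: for $h\in C(X)$, the function $(x,\eta)\mapsto\int h\,\d\eta$ is continuous and bounded. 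So $\beta(\Lambda)$ is compact, and the remaining task is to show $\beta(\Lambda)=M(\mu)$.

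The inclusion $M(\mu)\subseteq\beta(\Lambda)$ is direct. Given $P\in\Pc$, take $\pi:=\mu(\d x)\,\delta_{P(\cdot\mid x)}(\d\eta)$. Then $\pi(\Gamma)=1$ because $P(\cdot\mid x)\in\Pc_x$ for every $x$, and $\beta(\pi)=P*\mu$.

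The reverse inclusion is the main obstacle, and I would handle it in three steps.
\begin{enumerate}
\item Disintegrate $\pi\in\Lambda$ as $\pi(\d x,\d\eta)=\mu(\d x)\,\kappa_x(\d\eta)$, where $\kappa_x$ is concentrated on $\Pc_x$ for $\mu$-a.e.\ $x$.
\item Define $P(\cdot\mid x):=\int\eta\,\kappa_x(\d\eta)$, the barycenter of $\kappa_x$. It belongs to $\Pc_x$ because $\Pc_x$ is convex (by convexity of $\Pc$) and closed (as a section of the closed graph), and closed convex subsets of a compact convex set contain the barycenters of measures supported on them.
\item Handle the $\mu$-null exceptional set: choose a measurable selection of $x\mapsto\Pc_x$ by the Kuratowski--Ryll-Nardzewski theorem, which applies since the correspondence has closed graph, is nonempty-valued because it contains $\delta_x$, and has compact values. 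Setting $P(\cdot\mid x)$ equal to this selection there produces a kernel with $P(\cdot\mid x)\in\Pc_x$ for all $x$.
\end{enumerate}
Rectangularity then gives $P\in\Pc$, and by construction $P*\mu=\beta(\pi)$. One routine point remains: $x\mapsto P(\cdot\mid x)$ must be measurable, which follows from measurability of the disintegration.
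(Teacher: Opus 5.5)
Your proof is correct, and it reaches compactness by a genuinely different route from the paper.

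\textbf{The paper's route.} The paper proves sequential closedness directly at the level of kernels. Given $P_n*\mu\to\nu$, it applies Koml\'os' theorem to the functions $x\mapsto\int h_m\,\d P_n(\cdot\mid x)$ for a countable dense family $\{h_m\}\subseteq C(X)$. It then diagonalizes to obtain a subsequence whose Ces\`aro averages $\frac1k\sum_{j\le k}P_{n_j}$ (which lie in $\Pc$ by convexity) converge weak-* at $\mu$-a.e.\ $x$. The closed graph places the pointwise limit in $\Pc_x$, a measurable selection patches the null set, and the argument concludes with rectangularity and dominated convergence.

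\textbf{Your route.} You relax to randomized kernels, namely measures on the graph $\Gamma$ with first marginal $\mu$. Compactness is then immediate from compactness of $\Delta(X\times\Delta(X))$ and the Portmanteau theorem, and $M(\mu)$ is identified as the continuous (indeed affine) image of this compact set under the barycenter map. Convexity of $\Pc_x$ enters only to collapse the randomized kernel $\kappa_x$ to its barycenter. Here you implicitly use that $\int h\,\d P(\cdot\mid x)=\int\big(\int h\,\d\eta\big)\kappa_x(\d\eta)$ for $h\in C(X)$, so your setwise barycenter is the weak-* one, and a separation argument keeps it in the closed convex set $\Pc_x$.

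\textbf{What each buys.} Your approach avoids Koml\'os' theorem, the diagonal argument, and the pointwise-convergence and measurability check for the limit kernel. The price is a disintegration on $X\times\Delta(X)$ and the barycenter fact. It also delivers convexity and compactness from one representation, since $\Lambda$ is convex and $\beta$ is affine. The paper's route is more hands-on and exhibits the limit kernel explicitly as a pointwise limit of averages of the approximating kernels.

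\textbf{Two small remarks.} First, the exceptional set $\{x:\kappa_x(\Pc_x)<1\}$ is Borel because $\Gamma$ is Borel; you need this for the patched kernel to be measurable. Second, Kuratowski--Ryll-Nardzewski is unnecessary: the identity kernel lies in $\Pc$, so $x\mapsto\delta_x$ is already a measurable selection of $x\mapsto\Pc_x$.
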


\begin{proof}
Clearly, $M(\mu)$ is convex since $\Pc$ is regular, and in particular, convex. To see that $M(\mu)$ is compact, it suffices to show that $M(\mu)$ is closed, since $M(\mu) \subseteq \Delta(X)$ and $\Delta(X)$ is compact. To this end, consider any sequence $\{\nu_n\} \subseteq M(\mu)$ such that $\{\nu_n\} \to \nu$ for some $\nu \in \Delta(X)$. We show that $\nu=P*\mu$ for some $P \in \Pc$. To begin with, for each $n \in \N$, since $\nu_n \in M(\mu)$, there exists $P_n \in \Pc$ such that $\nu_n=P_n*\mu$.  

Since $C(X)$ is separable, there exists a countable set $\{h_m\} \subseteq C(X)$ with $\|h_m\| \leq 1$ that is dense in the unit ball $\{h \in C(X): \|h\| \leq 1\}$. For each $m \in \N$ and $n \in \N$, let 
\[
g_{n,m}(x):=\int_X h_m(y)P_n(\d y\mid x)\,,
\]
for all $x \in X$. Since $\|h_m\| \leq 1$ and since $P_n(\,\cdot\,\mid x) \in \Delta(X)$, we have that $|g_{n,m}(x)| \leq 1$ for all $n,m\in \N$ and for all $x \in X$. Note that, by definition, 
\[
\int_X g_{n,m}\d \mu=\int_X \left(\int_X h_m(y)P_n(\d y\mid x)\right) \mu(\d x)=\int_X h_m \d P_n*\mu=\int_X h_m \d \nu_n\,, 
\]
for all $n,m \in \N$, and hence, since $\{\nu_n\} \to \nu$, 
\begin{equation}\label{eq:converge}
\lim_{n \to \infty} \int_X g_{n,m} \d \mu=\lim_{n \to \infty} \int_X h_m \d \nu_n=\int_X h_m \d \nu\,,
\end{equation}
for all $m \in \N$. 

Since $g_{n,m}$ has bounded $L^1(\mu)$-norm for all $n,m \in \N$, by Koml\'{o}s' theorem, there exists a subsequence $\{g_{n_j^{(1)},1}\}\subseteq \{g_{n,1}\}$ such that the Ces\`{a}ro sum 
\[
\frac{1}{k} \sum_{j=1}^k g_{n_j^{(1)},1}
\]
converges as $k \to \infty$, $\mu$-almost surely. Then, for this subsequence, by Koml\'{o}s' theorem again, there exists a further subsequence $\{g_{n_j^{(2)}, 2}\}\subseteq \{g_{n_j^{(1)}, 2}\} \subseteq \{g_{n, 2}\}$ such that the Ces\`{a}ro sum converges. Recursively, for each $m$, we obtained a subsequence $\{g_{n_j^{(m)},m}\} \subseteq \{g_{n_j^{(m-1)},m}\} \subseteq \{g_{n,m}\}$ such that the Ces\`{a}ro sum converges $\mu$-almost surely. Now, using the Cantor diagonalization argument, and the fact that countable union of $\mu$-measure zero sets is still $\mu$-measure zero, there exists a subsequence $\{n_j\}$ and a measurable set $X_\infty \subseteq X$ such that $\mu(X_\infty)=1$ and that the Ces\`{a}ro sum 
\[
\ub{g}_{k,m}(x):=\frac{1}{k} \sum_{j=1}^k g_{n_j,m}(x)
\]
converges as $k \to \infty$ for all $m$ and all $x \in X_\infty$. Let $g_m$ be (in the sense of $\mu$-a.s. equivalence class) the pointwise limit of $\{\ub{g}_{k,m}\}$ for all $m \in \N$. As a $\mu$-a.s. pointwise limit, $g_m$ is measurable. Moreover, since $|g_{n,m}| \leq 1$ for all $n,m \in \N$, $|g_m| \leq 1$ for all $m \in \N$ as well. 

In the meantime, for each $k \in \N$, and for the same subsequence $\{n_j\}$, let 
\[
\ub{P}_k:=\frac{1}{k}\sum_{j=1}^k P_{n_j}\,.
\]
Since $\Pc$ is regular, and thus is convex, $\ub{P}_k \in \Pc$ for all $k \in \N$. In particular, $\{\ub{P}_k(\,\cdot\,\mid x)\}_{k=1}^\infty \subseteq \Pc_x$ for all $x \in X$. Then, by definition, 
\begin{equation}\label{eq:ubg}
\ub{g}_{k,m}(x)=\frac{1}{k} \sum_{j=1}^k g_{n_j,m}(x)=\frac{1}{k} \sum_{j=1}^k \int_X h_m(y) P_{n_j}(\d y\mid x)=\int_X h_m(y) \ub{P}_k(\d y\mid x)\,,
\end{equation}
for all $x \in X$, and for all $k,m \in \N$. Since $\Pc$ has a closed graph, $\Pc_x$ is closed and hence compact. Fix any $x \in X_\infty$. Since $\Pc_x$ is compact, there exists a subsequence $\{\ub{P}_{k_j}(\,\cdot\,\mid x)\}$ that converges to some $\eta_x$. We claim that every convergent subsequence of $\{\ub{P}_k(\,\cdot\,\mid x)\}$ converges to the same $\eta_x$, and thus---since $\Pc_x$ is a compact metric space---the entire sequence $\{\ub{P}_k(\,\cdot\,\mid x)\}$ converges to $\eta_x$. Indeed, for any $m \in \N$, for any two convergent subsequences $\{P_{k^{(1)}_j}(\,\cdot\,\mid x)\}$ and $\{P_{k^{(2)}_j}(\,\cdot\,\mid x)\}$  of $\{\ub{P}_k(\,\cdot\,\mid x)\}$, let $\eta^{(1)}_x$ and $\eta^{(2)}_x$ be their limits, respectively. By definition, we have
\[
\lim_{j \to \infty} \int_X h_m(y) \ub{P}_{k^{(i)}_j}(\d y\mid x)=\int_X h_m \d\eta^{(i)}_x\,,
\]
for $i \in \{1,2\}$ and for all $m \in \N$. In the meantime, since $x \in X_\infty$, $\{\ub{g}_{k,m}(x)\} \to g_m(x)$ as $k \to \infty$. Together with \eqref{eq:ubg}, we have
\begin{equation}\label{eq:eta}
\int_X h_m \d \eta^{(i)}_x=\lim_{j \to \infty} \int_X h_m(y) \ub{P}_{k^{(i)}_j}(\d y\mid x)=\lim_{j \to \infty} \ub{g}_{k^{(i)}_j,m}(x)=g_m(x)\,,
\end{equation}
for $i \in \{1,2\}$, and for all $m \in \N$. In particular, for all $m \in \N$, 
\[
\int_X h_m \d\eta^{(1)}_x=\int_X h_m \d\eta^{(2)}_x\,.
\]
Moreover, since $\{h_m\}$ is dense in $\{h\in C(X): \|h\| \leq 1\}$, for any $h \in C(X)$ with $\|h\| \leq 1$, there exists $\{h_{m_l}\} \subseteq \{h_m\}$ such that $\|h_{m_l}-h\| \to 0$ as $l \to \infty$. Therefore, for any continuous function $h \in C(X)$ with $\|h\| \leq 1$, by the dominated convergence theorem, 
\[
\int_X h \d \eta^{(1)}_x=\lim_{l \to \infty} \int_X h_{m_l} \d \eta_x^{(1)}=\lim_{l \to \infty} \int_X h_{m_l} \d \eta^{(2)}_x=\int_X h \d \eta^{(2)}_x\,.
\]
Thus, for any $h \in C(X)$ with $h \neq 0$,
\[
\int_X h \d \eta^{(1)}_x=\|h\| \cdot \int_X \frac{h}{\|h\|} \d \eta^{(1)}_x=\|h\| \cdot \int_X \frac{h}{\|h\|} \d \eta^{(2)}_x=\int_X h \d \eta^{(2)}_x\,.
\]
As a result, by Riesz's representation theorem, $\eta^{(1)}_x=\eta^{(2)}_x=\eta_x$. Furthermore, as $\{\ub{P}_{k}(\,\cdot\,\mid x)\} \to \eta_x$ and $\Pc_x$ is closed, we have that $\eta_x \in \Pc_x$ as well. Thus, for all $x \in X_\infty$, $\{\ub{P}_k(\,\cdot\,\mid x)\} \to \eta_x$ for some $\eta_x \in \Pc_x$. Since $\Pc$ is rectangular and  $x \mapsto \Pc_x$ has a closed graph and compact values, by the Kuratowski–Ryll-Nardzewski measurable selection theorem, there exists a measurable selection $s: X \to \Delta(X)$ so that $s(x) \in \Pc_x$ for all $x \in X$. Let
\[
P(\,\cdot\,\mid x):=\begin{cases}
\eta_x,&\mbox{if } x \in X_\infty\,;\\
s(x),&\mbox{otherwise.} 
\end{cases}
\]
Since $x \mapsto \eta_x$ is the pointwise limit of the transition kernels $\{\ub{P}_k\}$ on $X_\infty$ and the pointwise limit of measurable functions into a metrizable space is measurable, we have that $x \mapsto \eta_x$ is measurable on $X_\infty$.  Moreover, since $s$ is measurable, $P$ by construction is a valid Markov kernel. Furthermore, since $\Pc$ is rectangular,  $\eta_x \in \Pc_x$ for all $x \in X_\infty$, and $s(x) \in \Pc_x$ for all $x \in X$, we have that $P \in \Pc$. 

Now, for each $k \in \N$, let 
\[
\ub{\nu}_k:=\ub{P}_k *\mu=\frac{1}{k}\sum_{j=1}^k P_{n_j}*\mu=\frac{1}{k} \sum_{j=1}^k \nu_{n_j}\,.
\]
Since $\{\nu_n\} \to \nu$, $\{\ub{\nu}_k\} \to \nu$ as well. For any $m,k \in \N$, by \eqref{eq:ubg}, we have 
\[
\int_X h_m \d \ub{\nu}_k= \int_X\left( \int_X h_m(y)\ub{P}_{k}(\d y\mid x)\right)\mu(\d x)=\int_X \ub{g}_{k,m} \d \mu\,.  
\]
In the meantime, since $\{\ub{g}_{k,m}\} \to g_m$ $\mu$-almost surely for all $m$, and since $\|\ub{g}_{k,m}\| \leq 1$, by the dominated convergence theorem, 
\[
\lim_{k \to \infty} \int_X \ub{g}_{k,m} \d \mu =\int_X g_m \d \mu\,.
\]
Together, since $\{\ub{\nu}_k\} \to \nu$, we have that for all $m \in \N$,
\begin{align*}
\int_X h_m \d \nu=\lim_{k \to \infty} \int_X h_m \d \ub{\nu}_k=\lim_{k \to \infty} \int_X \ub{g}_{k,m} \d \mu=&\int_X g_m \d \mu\\
=&\int_X \left(\int_X h_m(y)P(\d y\mid x)\right)\mu(\d x)\\
=& \int_X h_m \d P*\mu\,,
\end{align*}
where the fourth equality follows from \eqref{eq:eta} with the conclusion that $\eta^{(1)}_x=\eta^{(2)}_x=\eta_x$ for all $x \in X_\infty$ and from the definition of $P$---in particular, from $P(\,\cdot\,\mid x)=\eta_x$ for all $x \in X_\infty$. Again, since $\{h_m\}$ is dense in $\{h \in C(X): \|h\| \leq 1\}$, for any $h \in C(X)$ with $\|h\| \leq 1$, there exists $\{h_{m_l}\}\subseteq \{h_m\}$ such that $\|h_{m_l}-h\| \to 0$ as $l \to \infty$, and hence, by the dominated convergence theorem, 
\[
\int_X h \d \nu=\lim_{l \to \infty} \int_X h_{m_l} \d \nu=\lim_{l \to \infty} \int_X h_{m_l} \d P*\mu=\int_X h \d P*\mu\,.
\]
As a result, for any $h \in C(X)$ with $h \neq 0$,  
\[
\int_X h \d \nu=\|h\| \cdot \int_X \frac{h}{\|h\|} \d \nu=\|h\|\cdot \int_X \frac{h}{\|h\|} \d P*\mu=\int_X h \d P*\mu\,. 
\]
Thus, by Riesz's representation theorem again, we have $\nu=P*\mu$. Since $P \in \Pc$, it follows that $\nu \in M(\mu)$, as desired. 
\end{proof}

\begin{lemma}\label{lem:piecewiseconvexapprox}
Let $X$ be a compact convex metrizable subset of a locally convex Hausdorff topological
vector space and let $Y$ be a metric space. Let
$f: X \times Y \to \R$ be a bounded lower semicontinuous function where $x\mapsto f(x,y)$ is convex for all $y\in Y$. Then, there exists a sequence of continuous functions $f_n: X \times Y \to \R$ such that:
\begin{enumerate}
\item[(i)] $f_n \leq f_{n+1} \leq f$ for all $n$,
\item[(ii)] $f_n \uparrow f$ pointwise on $X\times Y$,
\item[(iii)] for every $y \in Y$, the function $x \mapsto f_n(x,y)$ is convex.
\end{enumerate}
\end{lemma}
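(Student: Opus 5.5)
The plan is to build each $f_n$ as a finite maximum of continuous functions that are \emph{affine} in $x$. Such functions are automatically convex in $x$ and jointly continuous, and they can be pushed up to $f$ using affine minorants of each section $f(\cdot,y)$. The ingredients are the following.

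\textbf{Step 1: a countable dense family of affine functions.} Let $A(X)$ be the set of continuous affine functions on $X$, i.e.\ restrictions to $X$ of continuous affine functions on the ambient locally convex space. Since $X$ is compact metrizable, $C(X)$ is separable in the sup-norm, and so is its subset $A(X)$. Fix a sup-norm dense sequence $\{a_k\}_{k\ge1}\subseteq A(X)$.

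\textbf{Step 2: slack functions and their lower approximations.} For each $k$ define the slack
\[
s_k(y):=\min_{x\in X}\,[f(x,y)-a_k(x)] .
\]
The minimum is attained because $f(\cdot,y)-a_k$ is lower semicontinuous on the compact set $X$. The function $s_k$ is bounded, since $f$ is bounded and $a_k$ is bounded on the compact $X$. It is also lower semicontinuous in $y$: the infimum over a compact factor of a jointly lower semicontinuous function is lower semicontinuous. The proof is a routine net or sequence argument using compactness of $X$. On the metric space $Y$, a bounded lower semicontinuous function is the increasing pointwise limit of the Lipschitz functions
\[
\sigma_{k,n}(y):=\inf_{y'}\bigl[s_k(y')+n\,d(y,y')\bigr],
\]
by the standard Baire approximation. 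These satisfy $\sigma_{k,n}\le\sigma_{k,n+1}\le s_k$ and $\sigma_{k,n}\uparrow s_k$. Put $g_{k,n}(x,y):=a_k(x)+\sigma_{k,n}(y)$. Each $g_{k,n}$ is continuous on $X\times Y$ and affine in $x$. Moreover
\[
g_{k,n}(x,y)\le a_k(x)+s_k(y)\le f(x,y),
\]
where the last inequality is just the definition of $s_k$.

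\textbf{Step 3: definition and properties (i) and (iii).} Let $M$ be a bound with $|f|\le M$, and set
\[
f_n:=\max\Bigl\{-M,\ \max_{k\le n} g_{k,n}\Bigr\}.
\]
This is a finite maximum of continuous functions, hence continuous. For each $y$ it is a maximum of functions affine in $x$, hence convex in $x$, which is (iii). We have $f_n\le f$. We also have $f_n\le f_{n+1}$, because both the index set $k\le n$ and each $\sigma_{k,n}$ are increasing in $n$. This gives (i).

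\textbf{Step 4: pointwise convergence (ii).} This is the main point. Fix $(x,y)$ and $\varepsilon>0$. The section $f(\cdot,y)$ is bounded, convex and lower semicontinuous on a compact convex subset of a locally convex Hausdorff space. By the Hahn--Banach separation theorem, applied to its closed convex epigraph (extended by $+\infty$ off $X$), it is the pointwise supremum of its continuous affine minorants. So there is $a\in A(X)$ with $a\le f(\cdot,y)$ and $a(x)>f(x,y)-\varepsilon$. Choose $k$ with $\|a_k-a\|_\infty<\varepsilon$. Then $a_k\le f(\cdot,y)+\varepsilon$, so $s_k(y)\ge-\varepsilon$. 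Since $\sigma_{k,n}(y)\uparrow s_k(y)$, for all large $n\ge k$ we get
\[
f_n(x,y)\ \ge\ a_k(x)+\sigma_{k,n}(y)\ \ge\ a(x)-\varepsilon-2\varepsilon\ >\ f(x,y)-4\varepsilon .
\]
Together with $f_n\le f$, this gives $f_n(x,y)\uparrow f(x,y)$.

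The two delicate points are the lower semicontinuity of $s_k$ in $y$, which uses compactness of $X$ and joint lower semicontinuity of $f$, and the Hahn--Banach representation of each section in the general locally convex setting. No separability of $Y$ is needed, because the only countability used is in the dense family $\{a_k\}$, which lives on $X$.
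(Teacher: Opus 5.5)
Your proof is correct and follows essentially the same route as the paper. Both use a countable sup-norm dense family $\{a_k\}$ of continuous affine functions, the lower semicontinuous slack functions $\min_{x}[f(x,y)-a_k(x)]$ approximated from below by continuous functions of $y$, finite maxima of $a_k(x)+\phi(y)$, and Fenchel--Moreau plus density for pointwise convergence. The differences are cosmetic: you index diagonally over $k\le n$, use explicit Lipschitz inf-convolutions, and add a floor at $-M$ (harmless but unnecessary), whereas the paper enumerates the doubly indexed family $\{h_{k,l}\}$ and takes running maxima.
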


\begin{proof}
Let $\mathcal{A}(X)$ denote the space of continuous affine functions on $X$.
Since $X$ is compact and metrizable, $C(X)$ is separable in the sup-norm. Since
$\mathcal{A}(X)$ is a closed subspace of $C(X)$, $\mathcal{A}(X)$ is also separable.
Fix a countable sup-norm dense subset
\[
\{a_k\}\subset \mathcal{A}(X).
\]

For each $k\in \N$, define
\[
m_k(y):=\min_{x\in X}(f(x,y)-a_k(x))\,,
\]
for all $y \in Y$.
Since $(x,y)\mapsto f(x,y)-a_k(x)$ is lower semicontinuous and $X$ is compact, the minimum is attained, and the map $m_k:Y\to\mathbb{R}$ is lower semicontinuous by Berge's theorem of maximum. 

Since $Y$ is a metric space and $m_k$ is real-valued and lower semicontinuous, there exists an increasing sequence of continuous functions
$\phi_{k,l}:Y\to\mathbb{R}$ such that $\{\phi_{k,l}\}_{l}\uparrow m_k$ pointwise on $Y$.

Now, for each $k$ and each $l$, let
\[
h_{k,l}(x,y):=a_k(x)+\phi_{k,l}(y).
\]
Then, by construction, $h_{k,l}$ is continuous on $X \times Y$, and for each fixed $y$,
$x\mapsto h_{k,l}(x,y)$ is affine. Moreover, since
$\phi_{k,l}\leq m_k$ for all $l$ and for all $k$, we have
\[
h_{k,l}(x,y)=a_k(x)+\phi_{k,l}(y) \leq a_k(x)+m_k(y) \leq f(x,y)
\]
for all $(x,y)\in X\times Y$, where the last inequality follows from the definition of $m_k$. Thus, for all $k$ and all $l$, we have that $h_{k,l} \leq f$ is a continuous function where $x \mapsto h_{k, l}(x, y)$ is affine for all $y$. 

Enumerate the countable family $\{h_{k,l}\}$ as
$\{h_j\}_{j\in \N}$, and define
\[
f_n(x,y):=\max\Big\{h_1(x,y)\,,\,\ldots\,,\,h_n(x,y)\Big\}\,,
\]
for all $n \in \N$ and for all $(x,y) \in X \times Y$. By construction, for each $n \in \N$, $f_n \leq f$ and $f_n$ is continuous on $X\times Y$. Moreover, for every fixed $y$, the map $x\mapsto f_n(x,y)$ is convex, since it is the pointwise maximum of finitely many affine functions. Also, $f_n\leq  f_{n+1}$ for all $n \in \N$.

It remains to prove that $f_n\uparrow f$ pointwise. Since $f_n\uparrow
\sup_j h_j$ pointwise, it is enough to show that
\[
\sup_{j\in \N} h_j(x,y)=f(x,y)
\]
for all $(x,y) \in X \times Y$. To see this, fix $(x_0,y_0)\in X\times Y$ and $\varepsilon>0$. Let
\[
g(x):=f(x,y_0)\,,
\]
for all $x \in X$. Then $g$ is a lower semicontinuous, convex function on $X$. We claim that there exists $a\in \mathcal{A}(X)$ such that $a \leq g$ and $a(x_0) \geq g(x_0)-\varepsilon/4$. Indeed, let $E \supset X$ be the ambient locally convex topological vector space, and extend $g$ to a function $\tilde g:E\to(-\infty,+\infty]$ on $E$ by
\[
\tilde{g}(x):=
\begin{cases}
g(x), &\text{ if $x\in X$},\\
+\infty, &\text{ otherwise}.
\end{cases}
\]
Since $X$ is compact and hence closed, $\tilde{g}$ is a proper lower semicontinuous convex function.
By the Fenchel–Moreau theorem (see, e.g., \citealt{rockafellar1970convex}), a proper lower semicontinuous convex function is the supremum of its continuous affine minorants.
Thus, we have 
\[\tilde{g}(x_0) = \sup\Big\{b(x_0): b\text{ is a continuous affine function on $E$}\,,\ b \leq \tilde{g}\Big\}\,.\]
Then, since for any continuous affine $b$ on $E$ with $b \leq \tilde{g}$, $b \mid_X$ is a continuous affine function on $X$ with $b \mid_X \leq g$, and since $x_0 \in X$, it follows that 
\[
g(x_0)= \tilde{g}(x_0) \leq \sup\Big\{a(x_0): a\in \mathcal{A}(X)\,,\ a \leq g\Big\} \leq g(x_0)\,.
\]
Thus, we have 
\[g(x_0) = \sup\Big\{a(x_0): a\in \mathcal{A}(X)\,,\ a \leq g\Big\}\,.\]
Therefore, there exists $a\in \mathcal{A}(X)$ such that $a \leq g$ on $X$ and $a(x_0) \geq g(x_0)-\varepsilon/4$, as desired. 

Now, since $\{a_k\}$ is dense in $\mathcal{A}(X)$, there exists $k \in \N$ such that $\|a_k-a\|_\infty<\varepsilon/4$. Then
\[
a_k(x_0)\geq a(x_0)-\varepsilon/4 \geq g(x_0)-\varepsilon/2
= f(x_0,y_0)-\varepsilon/2\,.
\]
Meanwhile, for every $x\in X$, we have 
\[
a_k(x)\leq a(x)+\varepsilon/4 \leq g(x)+\varepsilon/4
= f(x,y_0)+\varepsilon/4\,,
\]
which implies that 
\[
m_k(y_0)
=\min_{x\in X}(f(x,y_0)-a_k(x))
\geq -\varepsilon/4\,.
\]
Since $\{\phi_{k,l}(y_0)\}_{l} \uparrow m_k(y_0)$, there exists some $l \in \N$ such that
\[
\phi_{k,l}(y_0)\geq-\varepsilon/2 \,.
\]
Then, we have 
\[
h_{k,l}(x_0,y_0)
= a_k(x_0)+\phi_{k,l}(y_0)
\geq f(x_0,y_0)-\varepsilon\,.
\]
Since $h_{k,l} \leq f$, it follows that 
\[
f(x_0,y_0)-\varepsilon
\leq \sup_{j\in \N}h_j(x_0,y_0)
\leq f(x_0,y_0) \,.
\]
Taking $\varepsilon \downarrow 0$ gives 
\[
\sup_{j\in \N}h_j(x_0,y_0)=f(x_0,y_0) \,.
\]
Since $(x_0,y_0)$ is chosen arbitrarily, it follows that $f_n\uparrow f$ pointwise, as desired. 
\end{proof}

\subsection{Proof of \texorpdfstring{\Cref{thm:main}}{Theorem 1}}
$(a) \implies (b)$: Suppose $\C$ is min-closed. First, observe that for any $\nu \preceq_\C \mu$, 
\[
\int_X f \d \nu \leq \int_X \overline{f} \d \nu \leq \int_X \overline{f} \d \mu\,,
\]
where the second inequality follows from $\overline{f} \in \C$ by \Cref{lem:envelope}. Therefore, 
\[
\max_{\nu \in \Delta(X): \nu \preceq_\C \mu}\int_X f \d \nu \leq \int_X\overline{f} \d \mu\,.
\]

We now show that 
\[
\int_X \overline{f} \d \mu \leq \max_{\nu \in  \Delta(X): \nu \preceq_\C \mu}\int_X f \d \nu\,. 
\]
To this end, let $C(X)$ be the normed linear space of continuous functions on $X$ under the sup-norm. By Riesz's representation theorem, the dual $[C(X)]^*$ of $C(X)$ is isomorphic to the set of regular Radon measures on $X$.  Define $\Theta:C(X) \to \R \cup \{\infty\}$ and $\Xi:C(X) \to \R \cup \{\infty\}$ as: 
\[
\Theta(g):=\begin{cases}
0,&\mbox{if } g \geq f\\
+\infty,& \mbox{otherwise}
\end{cases}\,,
\quad \mbox{ and } \quad 
\Xi(g):=\begin{cases}
\int_X g \d \mu,& \mbox{if } g \in \tilde{\C}\\
+\infty,&\mbox{otherwise}
\end{cases}\,,
\]
where 
\[\tilde{\C} = \C \cap C(X)\,.\]
Note that $\tilde{\C}$ is a convex cone. Therefore, $\Theta$ and $\Xi$ are convex functions with values in $\R \cup \{+ \infty\}$. 

Next, define the dual functions $\Theta^*:[C(X)]^* \to \R \cup \{\infty\}$ and $\Xi^*:[C(X)]^* \to \R \cup \{\infty\} $
\[
\Theta^*(\nu):=\sup_{g \in C(X)}\left[\int_X g \d \nu-\Theta(g)\right]=\sup_{g \geq f} \int_X g \d \nu
\]
and 
\[
\Xi^*(\nu):=\sup_{g \in C(X)}\left[\int_X g \d \nu-\Xi(g)\right]=\sup_{g \in \tilde{\C}} \left(\int_X g \d \nu-\int_X g \d \mu\right)\,.
\]
Note that for any $\nu \in [C(X)]^*$ that is not a positive measure, there exists $\varphi \geq 0$ such that 
\[
\int_X \varphi \d \nu<0
\]
and hence by considering the sequence of functions $g_t :=  \varphi \cdot t + M$, where $M$ is a continuous function above $f$, we have 
\[
\inf_{g \geq f} \int_X g \d \nu=-\infty\,.
\]
Meanwhile, note that for any $\nu \in [C(X)]^*$ such that $\nu \npreceq_\C \mu$, there exists $g \in \C$ such that 
\[
\int_X g \d \nu > \int_X g \d \mu\,,
\]
which implies that the same holds for some $\tilde{g} \in \tilde{\C}$ as well since any such $g$ can be approximated by a decreasing sequence $\{g_n\} \subseteq \tilde{\C}$ that converges pointwise to $g$ and hence $\int g_n \d(\nu - \mu) \rightarrow \int g \d(\nu - \mu) > 0$ by the dominated convergence theorem. Thus, since $\tilde{\C}$ is a cone, we have 
\[
\inf_{g \in \tilde{\C}} \left(\int_X g \d \mu -\int_X g \d \nu\right)=-\infty\,.
\]
Together, since $1,-1 \in \tilde{\C}$,  it then follows that 
\[
\inf_{g \geq f} \int_X g \d \nu+\inf_{g \in \tilde{\C}}\left( \int_X g \d \mu-\int_X g \d \nu\right)>-\infty
\]
if and only if $\nu \in \Delta(X)$ and $\nu \preceq_\C \mu$. Lastly, note that for any $\nu \in \Delta(X)$ such that $\nu \preceq_\C \mu$, 
\[
\inf_{g \in \tilde{\C}} \left(\int_X g \d \mu -\int_X g \d \nu\right)=0\,.
\]
Consequently, we have 
\begin{align*}
\sup_{\nu \in [C(X)]^*}[-\Theta^*(-\nu)-\Xi^*(\nu)]=& \sup_{\nu \in [C(X)]^*} \left[\inf_{g \geq f} \int_X g \d \nu+\inf_{g \in \tilde{\C}}\left(\int_X g \d \mu-\int_X g \d \nu\right)\right]\\
=& \sup_{\nu \in \Delta(X): \nu \preceq_\C \mu} \left[\inf_{g \geq f} \int_X g \d \nu+\inf_{g \in \tilde{\C}}\left(\int_X g \d \mu-\int_X g \d \nu\right)\right]\\
=& \sup_{\nu \in \Delta(X): \nu \preceq_\C \mu} \inf_{g \geq f} \int_X g \d \nu\\
=& \max_{\nu \in \Delta(X):\nu \preceq_\C \mu} \int_X f \d \nu\,,
\end{align*}
where the last equality also uses that any bounded upper semicontinuous function on a compact metric space admits a decreasing sequence of continuous functions approximating it pointwise from above.

By \Cref{lem:envelope}, $\overline{f} \in \C$ and hence $\overline{f} + 1 \in \C$. Moreover, by assumption, there exists a sequence $\{h_n\}$ where $h_n \in \tilde{\C}$ such that $h_n$ converges to $\overline{f} + 1$ pointwise from the above. Let $g_0 = h_1$. Then $g_0 \in \tilde{\C}$ and $g_0 \geq f+1$. Note that at this $g_0$, we have that (i) $\Theta$ and $\Xi$ are finite and (ii) $\Theta$ is continuous at $g_0$ in the sup norm. Therefore, by the Fenchel-Rockafellar duality theorem (\citealt{villani2003topics}, Theorem 1.9), we have 
\begin{align*}
&\max_{\nu \in \Delta(X):\nu \preceq_\C \mu} \int_X f \d \nu\\
=& \sup_{\nu \in [C(X)]^*} \left[\inf_{g \geq f} \int_X g \d \nu+\inf_{g \in \tilde{\C}} \left(\int_Xg \d \mu-\int_X g \d \nu\right)\right]\\
=& \sup_{\nu \in [C(X)]^*}[-\Theta^*(-\nu)-\Xi^*(\nu)]\\
=&\inf_{g \in C(X)}[\Theta(g)+\Xi(g)]\\
=& \inf_{g \in \tilde{\C}, \, \, g \geq f} \int_X g \d \mu\\
\geq & \int_X \overline{f} \d \mu\,,
\end{align*}
where the last inequality follows from that for any $g \in \tilde{\C}$ and $g \geq f$, by the definition of $\overline{f}$, we have that for all $x$
\[g(x) \geq \overline{f}(x)\,.\]
This completes the proof. \\
\mbox{}

\noindent \noindent $(b) \implies (c)$: The proof is by the Krein-Milman theorem. Consider the orbit set 
\[\mathcal{K}_\mu = \Big\{ \nu \in \Delta(X): \nu \preceq_{\C} \mu \Big\}\,,\]
which is clearly convex. Moreover, by \Cref{lem:compactK}, $\mathcal{K}_\mu$ is also compact. 

We first claim that for any exposed point $\nu \in \Exp(\mathcal{K}_\mu)$, there exists some $P$ such that $\nu = P * \mu$ and $P *\delta_x \preceq_\C \delta_x$ for all $x$. To see this, consider any exposed point $\nu \in \Exp(\mathcal{K}_\mu)$. 
By definition, there exists some continuous $f$ such that $\nu$ is the unique solution to \weqref{\textbf{Optimization}}{eq:stochastic-optimization}. Thus,
\[ \int f \d \nu = V^\star_f(\mu) = \int V^\star_f(\delta_x) \mu(\d x) = \int \Big(\max_{\eta: \eta \preceq_\C \delta_x} \int f \d\eta\Big) \mu(\d x) \,,\]
where the second equality follows from $(b)$ since $V^\star_f(\mu)$ is an upper semicontinuous, and hence Baire-one, affine function and thus satisfies the barycentric formula (see Corollary 4.22 of \citealt{lukevs2009integral}). By the Kuratowski–Ryll-Nardzewski measurable selection theorem, the correspondence 
\[
x \mapsto \argmax_{\eta \preceq_\C \delta_x} \int f \d \eta
\]
admits a measurable selection. Fix any such measurable selection $x \mapsto \eta_x$, and define $P$ by $P(\,\cdot\, \mid x) = \eta_x$ for all $x \in X$. Then, $P$ is a transition kernel with $P*\delta_x \preceq_\C \delta_x$ for all $x \in X$. Moreover, by construction, 
\[\int f \d P*\mu=\int \left(\int f(y) P(\d y \mid x)\right) \mu(\d x) = \int V^\star_f(\delta_x) \mu(\d x) = V^\star_f(\mu) = \max_{\nu: \nu \preceq_{\C} \mu} \int f \d\nu\,,\]
and hence $P * \mu$, which satisfies $P * \mu \preceq_\C \mu$, must also be an optimal solution. But since $\nu$ is the unique solution, it follows that $\nu = P * \mu$, proving the claim. 

Next, we show that the same property holds for any $\nu \in \mathrm{co}(\Exp(\mathcal{K}_\mu))$. Indeed, for any $\nu \in \mathrm{co}(\Exp(\mathcal{K}_\mu))$, $\nu=\sum_{i=1}^n \lambda_i \nu_i$ for some $\{\nu_i\}_{i=1}^n \subseteq \Exp(\mathcal{K}_\mu)$, and some $\{\lambda_i\} \subseteq \R_+$ with $\sum_{i=1}^n \lambda_i=1$. For each $i \in \{1,\ldots,n\}$, as shown above, since $\nu_i \in \Exp(\mathcal{K}_\mu)$, there exists a transition kernel $P_i$ such that $\nu_i=P_i*\mu$ and $P_i*\delta_x \preceq_\C \delta_x$ for all $x \in X$. Let 
\[
P:=\sum_{i=1}^n \lambda_iP_i\,.
\]
Then, 
\[
\nu=\sum_{i=1}^n \lambda_i \nu_i=\sum_{i=1}^n \lambda_i P_i*\mu=P*\mu\,.
\]
Moreover, for any $g \in \C$, and for any $x \in X$
\[
\int g(y)P(\d y\mid x )=\sum_{i=1}^n \lambda_i\int g(y)P_i(\d y\mid x) \leq \sum_{i=1}^n \lambda_i g(x)=g(x)\,.
\]
Thus, $P*\delta_x \preceq_\C \delta_x$ for all $x \in X$. 

Now, consider any $\nu \in \mathcal{K}_\mu$. Since $\mathcal{K}_\mu$ is a compact convex subset of a compact metric space, by a theorem of \citet{Kleejr1958} (in particular, see Theorem 2.51 of \citealt{lukevs2009integral}), there exists a sequence $\{\nu_n\} \subseteq \mathrm{co}(\Exp(\mathcal{K}_\mu))$ such that $\{\nu_n\} \to \nu$. For each $n \in \N$, as shown above, there exists a transition kernel $P_n$ such that $\nu_n=P_n*\mu$ and $P_n*\delta_x \preceq_\C \delta_x$ for all $x \in X$. For each $n \in \N$, define $\gamma_n \in \Delta(X \times X)$ as 
\[
\gamma_n(B\times A):=\int_B P_n(A\mid x)\mu(\d x)\,,
\]
for all measurable $A,B \subseteq X$. Since $X$ is a compact Polish space, $\Delta(X\times X)$ is also a compact Polish space. Therefore, by possibly taking a subsequence, $\{\gamma_n\}$ converges to some $\gamma \in \Delta(X \times X)$. By the disintegration theorem \citep[see, e.g.,][pp. 154, Theorem 2.18]{cinlarprobability2010}, there exists a transition kernel $P$ such that  
\[
\gamma(B\times A)=\int_B P(A\mid x)\mu(\d x)\,,
\]
for all measurable $A,B \subseteq X$. As a result, for any continuous function $h:X \to \R$. 
\begin{align*}
\int_X h \d \nu=\lim_{n \to \infty} \int_X h \d \nu_{n}=&\lim_{n \to \infty }\int_X \left(\int_X h(y) P_{n}(\d y\mid x)\right) \mu(\d x)\\
=&\lim_{n\to \infty}\int_{X \times X} h(y) \gamma_{n}(\d x, \d y)\\
=&\int_{X \times X} h(y) \gamma(\d x,\d y)\\
=&\int_X  \left(\int_X h(y) P(\d y\mid x)\right)\mu(\d x)\\
=&\int_X h \d (P* \mu)\,.
\end{align*}
Therefore, $\nu=P*\mu$. From this $P$, we now construct a transition kernel $\widetilde{P}$ such that $\widetilde{P}*\delta_x \preceq_\C \delta_x$ for all $x \in X$ and that $\nu=\widetilde{P}*\mu$. To begin with, we first show that for any continuous $g \in \C$, there exists a measurable set $X_g \subseteq X$ with $\mu(X_g)=1$, such that 
\[ \Phi^g(x) := \int g(y)P(\d y\mid x)  \leq g(x)\]
for all $x \in X_g$. 
Indeed, fix any $g \in \C$. For each $n \in \N$, let 
\[
\Phi^g_n(x):=\int g(y) P_n(\d y\mid x)\,,
\]
for all $x \in X$. By construction, since $P_n*\delta_x \preceq_\C \delta_x$ for all $x$ and since $g \in \C$, we have $\Phi^g_n(x) \leq g(x)$ for all $x$. Therefore, for any nonnegative continuous function $\psi:X \to \R$,
\begin{align*}
\int_{X} \psi(x)\Phi^g(x) \mu(\d x)=&\int_X \psi(x) \left(\int_X g(y) P(\d y\mid x)\right)\mu(\d x)\\
=&\int_{X\times X} \psi(x)g(y)\gamma(\d x,\d y)\\
=&\lim_{n\to \infty} \int_{X \times X} \psi(x) g(y) \gamma_{n}(\d x,\d y)\\
=& \lim_{n \to \infty} \int_{X} \psi(x) \left(\int g(y)P_{n}(\d y\mid x)\right)\mu(\d x)\\
=&\lim_{n \to \infty} \int_X \psi(x) \Phi^g_n(x) \mu(\d x)\\
\leq& \int_X \psi(x) g(x) \mu(\d x)\,,
\end{align*}
where the third equality follows from $\psi\cdot g$ being continuous on $X \times X$ and $\{\gamma_n\} \to \gamma$. Thus, $\psi \mapsto \int \psi [g - \Phi^g] \d \mu$ is a positive linear functional and hence by the Riesz-Markov-Kakutani representation theorem, $[g(x) - \Phi^g(x)] \mu(\d x)$ is identified by a unique positive Radon measure, which immediately implies that $g(x) - \Phi^g(x) \geq 0$ for $\mu$-a.e. $x$. Therefore, there exists a measurable set $X_g \subseteq X$ with $\mu(X_g)=1$ such that $\Phi^g(x) \leq g(x)$ for all $x \in X_g$. 

Next, we show that there exists a set $X_\infty$ where this property holds uniformly across all $g \in \C$. By \Cref{lem:conti-approx}, there exists a countable set $\{g_m\}_{m=1}^\infty \subseteq \C \cap C(X)$ that can approximate any $g \in \C$ from above. As shown above, for each $m \in \N$, since $g_m \in \C$ is continuous, there exists a measurable set $X_m$ such that $\mu(X_m)=1$ and $\Phi^{g_m}(x) \leq g_m(x)$ for all $x \in X_m$. Let $X_\infty:=\cap_{m=1}^\infty X_m$. Clearly, $\mu(X_\infty)=1$ as well. Now, fix any $g \in \C$. Let $\{g_{m_k}\} \subseteq \{g_m\}$ be a sequence such that $g_{m_k} \geq g$ for all $k$ and $g_{m_k} \rightarrow g$ as $k \rightarrow \infty$. Then, for all $x \in X_\infty$, $\Phi^{g_{m_k}}(x) \leq g_{m_k}(x)$ for all $k \in \N$. Therefore, for all $x \in X_\infty$ and all $k \in \N$, we have 
\[
\int g(y) P(\d y\mid x) \leq  \int g_{m_k}(y) P(\d y\mid x) =\Phi^{g_{m_k}}(x) \leq  g_{m_k}(x)\,,
\]
which, by taking limit over $k$, implies that 
\[\Phi^g(x) =\int g(y) P(\d y\mid x) \leq \lim_{k \rightarrow \infty} g_{m_k}(x) = g(x)\,. \]
Since $g$ is an arbitrary element of $\C$, we have for all $g \in \C$, $\Phi^g \leq g$ on $X_\infty$ where $\mu(X_\infty)=1$. 

Now, let 
\[
\widetilde{P}(\,\cdot\,\mid x):=\begin{cases}
P(\,\cdot\,\mid x),&\mbox{if } x \in X_\infty\,;\\
\delta_x,&\mbox{otherwise.}
\end{cases}
\]
By construction, since $P$ is a measurable Markov kernel, $\widetilde{P}$ is also a measurable Markov kernel. Moreover, for all $x \in X$ and all $g \in \C$,
\[
\int g(y) \widetilde{P}(\d y\mid x) =\begin{cases}
\int g(y) P(\d y\mid x) \leq g(x),&\mbox{if } x \in X_\infty\\
g(x),&\mbox{otherwise.}
\end{cases}
\]
Therefore, $\widetilde{P}*\delta_x \preceq_\C \delta_x$ for all $x \in X$. Meanwhile, since $P(\,\cdot\,\mid x)=\widetilde{P}(\,\cdot\,\mid x)$ for $\mu$-almost all $x \in X$, and since $\nu=P*\mu$, we have $\nu=\widetilde{P}*\mu$ as well. Together, we have that for any  $\nu \in \mathcal{K}_\mu$, there exists a transition kernel $\widetilde{P}$ such that $\nu=\widetilde{P}*\mu$ and $\widetilde{P}*\delta_x \preceq_\C \delta_x$ for all $x \in X$, as desired. \\
\mbox{}

\noindent $(c) \implies (d)$: Fix any closed and convex subset $M$ of $\Delta(X)$. Fix any upper semicontinuous $f:X \to \R$. Let 
\[
h(x):=\max_{P: P*\delta_x \preceq_\C \delta_x} \int f(y) P(\d y\mid x)\,,
\]
for any $x \in X$. From $(c)$, note that for any $\mu \in \Delta(X)$, we have 
\[V^\star_f(\mu) = \max_{\nu \preceq_{C} \mu} \int f \d\nu = \int \Big(\max_{P: P* \delta_x \preceq_\C \delta_x} \int f(y) P(\d y \mid x)\Big) \mu(\d x) =\int h \d\mu\,,\]
where the second equality also uses the Kuratowski–Ryll-Nardzewski measurable selection theorem as before. Therefore, $V^\star_f$ is affine and satisfies the barycentric formula. This immediately implies that $G^M_f$ has a convex graph. Indeed, for any $(\mu_1, \nu_1), (\mu_2, \nu_2) \in G^M_f$ and any $\alpha \in (0, 1)$, we have 
\[\nu := \alpha \nu_1 + (1 - \alpha) \nu_2 \preceq_{\C} \alpha \mu_1 + (1 - \alpha) \mu_2 =: \mu\]
and thus 
\[\int f(x) \nu (\d x) = \alpha V^\star_f(\mu_1) + (1-\alpha) V^\star_f(\mu_2) = V^\star_f(\mu)\,,\]
and therefore $\nu \in X^\star_f(\mu)$. To see that $G^M_f$ has decomposable extreme points, observe that the above also shows that $\nu \in X^\star_f(\mu)$ if and only if $\nu = P * \mu$ for some $P$ such that $P(\,\cdot\,\mid x)$ solves 
\[
\max_{\eta:\eta \preceq_\C \delta_x} \int f \d \eta\,,
\]
for $\mu$-almost all $x \in X$. It is easy to see that if $\mu \in \Ext(M)$ and $\nu \in \Ext(X^\star_f(\mu))$, then $(\mu, \nu) \in \Ext(G^M_f)$. Indeed, if $\mu \in \Ext(M)$, $\nu \in \Ext(X^\star_f(\mu))$, and $(\mu, \nu)=\frac{1}{2}(\mu_1, \nu_1) + \frac{1}{2}(\mu_2, \nu_2)$ where $(\mu_1, \nu_1), (\mu_2, \nu_2) \in G^M_f$, then we must have $\mu_1 = \mu_2 = \mu$ since $\mu \in \Ext(M)$, which implies that $\nu_1, \nu_2 \in X^\star_f(\mu)$ and hence $\nu_1 = \nu_2 = \nu$ since $\nu \in \Ext(X^\star_f(\mu))$. 

Now, we prove the other direction. Fix any $(\mu, \nu) \in \Ext(G^M_f)$. Clearly, we have that $\nu \in \Ext(X^\star_f(\mu))$. Now, suppose for contradiction that $\mu \not \in \Ext(M)$. Then $\mu = \frac{1}{2} \mu_1 + \frac{1}{2} \mu_2$ for some $\mu_1 \neq \mu_2 \in M$. Note that $\mu_1$ and $\mu_2$ are both absolutely continuous with respect to $\mu$. By the above observation, there exists some $P$ such that $P * \mu = \nu$ and $P(\,\cdot\,\mid x)$ solves the above pointwise maximization problem $\mu$-a.e. Let $\nu_1 = P * \mu_1$ and $\nu_2 = P * \mu_2$. Since $\mu_i$ are absolutely continuous with respect to $\mu$, it follows immediately that $\nu_i \preceq_\C \mu_i$, and 
\[\int f(x) \nu_i (\d x) = \int f(y) P(\d y \mid x) \mu_i (\d x) = \int h(x) \mu_i (\d x) = V^\star_f(\mu_i)\,, \]
where the second equality uses the $\mu$-a.e., and hence $\mu_i$-a.e., pointwise optimality of $P$, and the third equality uses the barycentric formula for $V^\star_f$. Therefore, $\nu_i \in X^\star_f(\mu_i)$. It follows that $(\mu_i, \nu_i) \in G^M_f$ for $i \in \{1, 2\}$. However, note that, by construction, $\frac{1}{2}(\mu_1, \nu_1) + \frac{1}{2}(\mu_2, \nu_2) = (\mu, \nu)$ and $\mu_1 \neq \mu_2$---contradicting $(\mu, \nu)$ is an extreme point of $G^M_f$. \\
\mbox{}

\noindent $(d) \implies (a)$: Note that since $G^M_f$ is convex for $M = \Delta(X)$, we must have that $V^\star_f$ is affine. Indeed, for any $\mu_1, \mu_2 \in \Delta(X)$. Fix any $\nu_1 \in X^\star_f(\mu_1), \nu_2 \in X^\star_f(\mu_2)$. By convexity of the graph, we have that for any $\alpha \in (0, 1)$, $\alpha \nu_1 + (1 -\alpha) \nu_2 \in X^\star_f\big(\alpha \mu_1 + (1-\alpha) \mu_2\big)$ and thus 
\[V^\star_f(\alpha \mu_1 + (1-\alpha) \mu_2) = \alpha \int f(x) \nu_1 (\d x) + (1 - \alpha) \int f(x) \nu_2(\d x) = \alpha V^\star_f(\mu_1) + (1 - \alpha) V^\star_f(\mu_2)\,.\]
Therefore, $V^\star_f$ is an upper semicontinuous affine function on $\Delta(X)$. Now, suppose for contradiction that $\C$ is not min-closed. Then there exist $g_1, g_2 \in \C$ and 
\[k := \min\{g_1, g_2\} \notin \C\,.\]
By assumption, there exists a decreasing sequence $\{g^n_i\} \subseteq \C \cap C(X)$ such that $g^n_i \downarrow g_i$ for each $i \in \{1, 2\}$. Let $k^n := \min\{g^n_1, g^n_2\}$. Note that each $k^n$ is continuous and the decreasing sequence $\{k^n\}\downarrow k$ by construction. Therefore, there must exist some $n$ such that $k^n \not\in \C$, because otherwise by assumption we must have $k \in \C$. Fix such $n$. Then, $k^n \in C(X)$ but $k^n \not\in \C \cap C(X)$.

Let $\tilde{\C} := \C \cap C(X)$. Since $\tilde{\C}$ is a closed convex set in the space of $C(X)$ under the sup-norm, which is a locally convex topological vector space, by the strict Hahn–Banach separation theorem (see, e.g., \citealt{lukevs2009integral}, Theorem A.1), there exists a continuous linear functional that separates $\{k^n\}$ and $\tilde{\C}$. In particular, there exists a signed finite Borel measure $\sigma$ and some $\alpha \in \R$ such that 
\[\int k^n(x) \sigma(\d x) > \alpha \geq \sup_{g \in\tilde{\C}} \int g(x) \sigma(\d x)\,. \]
Since $\tilde{\C}$ is a cone, it must be that 
\[\sup_{g \in \tilde{\C}} \int g(x) \sigma(\d x) \leq 0\,.\]
Since $\tilde{\C}$ contains $-1$ and $1$, we must also have $\sigma(X) = 0$, and $\int k^n(x) \sigma( \d x) > \alpha \geq 0$. By the Jordan decomposition theorem, there exist Borel measures $\sigma^+$ and $\sigma^-$ on $X$ such that $\sigma=\sigma^+-\sigma^-$. Since $\sigma(X)=0$ and $\int k^n(x) \sigma(\d x) > 0$, $\sigma^+(X)=\sigma^-(X) > 0$.  Let $m:=\sigma^+(X)$, and let $\nu:=\sigma^+/m$, $\mu:=\sigma^-/m$. Then, $\mu,\nu \in \Delta(X)$, and for any $g \in \tilde{\C}$, we have 
\begin{align*}
\int_X g \d \nu -\int_X g \d \mu=\frac{1}{m}\left(\int_X g \d \sigma^+-\int_X g \d \sigma^-\right) \leq 0\,,
\end{align*}
and hence for any $g \in \C$, we have 
\[\int_X g \d \nu \leq \int_X g \d \mu\]
by the assumption that any $g \in \C$ admits a sequence of continuous approximations from above and the monotone convergence theorem. Therefore, $\nu \preceq_\C \mu$. Moreover, by construction, we also have 
\[\int k^n(x) \nu(\d x) > \int k^n(x) \mu(\d x) \,.\]
Now, note that for any $x \in X$ and for any $\eta \preceq_\C \delta_x$, since $g^n_1,g^n_2 \in \C$, we have 
\[
\int k^n \d \eta=\int \min\{g^n_1,g^n_2\}\d \eta \leq \min \left\{\int g^n_1 \d \eta ,\int g^n_2 \d \eta\right\} \leq \min\{g^n_1(x),g^n_2(x)\}=k^n(x)\,.
\]
It then follows that, for all $x \in X$,
\[
V_{k^n}^\star(\delta_x)=k^n(x)\,.
\]
Together, since $\nu \preceq_\C \mu$ and since $V^\star_{k^n}$ is affine, we have
\[\int k^n \d\nu \leq V^\star_{k^n}(\mu)= \int V^\star_{k^n}(\delta_x) \mu (\d x) = \int k^n \d \mu < \int k^n \d \nu\,,\]
where the first equality uses that $V^\star_{k^n}$ is an upper semicontinuous, and hence Baire-one, affine function and thus satisfies the barycentric formula (see Corollary 4.22 of \citealt{lukevs2009integral})---a contradiction.  \hfill \qedsymbol

\subsection{Proof of \texorpdfstring{\Cref{prop:exposed}}{Proposition 1}}
$(i)$: Consider any $\nu \in \mathcal{K}_\mu$ such that $\nu$ is uniquely rationalized by some order-preserving $P$ with $P(\,\cdot\,\mid x)$ being an extreme point of $\mathcal{K}_{\delta_x}$ $\mu$-a.e.. Suppose for contradiction that there exist $\nu_1,\nu_2 \in \mathcal{K}_\mu$ and $\lambda \in (0,1)$, with $\nu_1 \neq \nu_2$, such that $\nu=\lambda\nu_1+(1-\lambda)\nu_2$. Since $\nu_i \in \mathcal{K}_\mu$, by \Cref{thm:main}, there exists an order-preserving kernel $P_i$ such that $P_i*\delta_x \preceq_\C \delta_x$ for all $x$ and $i \in \{1,2\}$. Thus, 
\[
P*\mu=\nu=\lambda \nu_1+(1-\lambda) \nu_2=(\lambda P_1+(1-\lambda)P_2)*\mu\,,
\]
which is also order-preserving. Since $\nu$ is uniquely rationalized, we have that 
\[
P(\,\cdot\,\mid x)=\lambda P_1(\,\cdot\,\mid x)+(1-\lambda)P_2(\,\cdot\,\mid x)\,,
\]
for $\mu$-almost all $x \in X$. However, since $\nu_1\neq \nu_2$, there exists a subset $A \subseteq X$ with $\mu(A)>0$ such that $P_1(\,\cdot\,\mid x)\neq P_2(\,\cdot\,\mid x)$ for all $x \in A$. Then, for this $\mu$-positive measure of $x \in A$, we have that $P(\,\cdot\,\mid x)$ is not an extreme point of $\mathcal{K}_{\delta_x}$, a contradiction. 

For the converse, consider any extreme point $\nu$ of $\mathcal{K}_\mu$. By \Cref{thm:main}, $\nu=P*\mu$ for some order-preserving $P$ such that $P*\delta_x \preceq_\C \delta_x$ for all $x \in X$. We first show that $\nu$ must be uniquely rationalized. Indeed, suppose for contradiction that there exist order-preserving kernels $P_1, P_2$ where $P_1, P_2$ differ on a $\mu$-positive measure set such that $P_i*\delta_x \preceq_\C \delta_x$ for $\mu$-a.e. $x$ and  
\[
\nu=P_i*\mu,
\]
for all $i \in \{1, 2\}$. Since $X$ is Polish, $\mathcal{B}(X)$ is generated by a countable algebra $\mathcal{A}$. Therefore, for any $x$, if $P_1(\,\cdot\,\mid x) \neq P_2(\,\cdot\, \mid x)$, then there exists $B \in \mathcal{A}$ such that $P_1(B\mid x) \neq P_2(B \mid x)$. By countability of $\mathcal{A}$, this implies that there exists a single measurable set $B \subseteq X$ such that $\mu\big(\{x: P_1(B\mid x) \neq P_2(B\mid x)\}\big) > 0$. Let 
\[
A:=\big\{x \in X: P_1(B\mid x)>P_2(B\mid x)\big\}\,.
\]
By possibly relabeling, it is without loss of generality to assume that $\mu(A)>0$. Then, let 
\[
Q_1(\,\cdot\,\mid x):=
\begin{cases}
P_1(\,\cdot\,\mid x),&\mbox{if } x \in A\\
P_2(\,\cdot\,\mid x),&\mbox{otherwise } 
\end{cases}\,, \quad \quad \mbox{ and } \quad \quad 
Q_2(\,\cdot\,\mid x):=\begin{cases}
P_2(\,\cdot\,\mid x),&\mbox{if } x \in A\\
P_1(\,\cdot\,\mid x),&\mbox{otherwise }
\end{cases}\,.
\]
Clearly, $Q_1,Q_2$ are transition kernels since $A$ is a measurable set. Moreover, since $P_i*\delta_x \preceq_\C \delta_x$ for $\mu$-a.e. $x$ and for $i \in \{1,2\}$, we have that $Q_i *\delta_x \preceq_\C \delta_x$ for $\mu$-a.e. $x \in X$ and $i \in \{1,2\}$. 

Let $\nu_i:=Q_i*\mu$ for all $i \in \{1,2\}$. It then follows that $\nu_1,\nu_2 \in \mathcal{K}_\mu$. Furthermore, by construction, we have  
\[
\frac{1}{2}Q_1+\frac{1}{2}Q_2=\frac{1}{2}P_1+\frac{1}{2}P_2\,,
\]
and thus, 
\[
\nu=\left(\frac{1}{2}P_1+\frac{1}{2}P_2\right)*\mu=\frac{1}{2}Q_1*\mu+\frac{1}{2}Q_2*\mu=\frac{1}{2}\nu_1+\frac{1}{2}\nu_2\,.
\]
In the meantime, by construction, we have 
\begin{align*}
\nu_1(B)-\nu_2(B)=&Q_1*\mu(B)-Q_2*\mu(B)\\
=&\int_A \left(P_1(B\mid x)-P_2(B\mid x)\right)\mu(\d x)+\int_{A^c} \left(P_2(B\mid x)-P_1(B\mid x)\right) \mu(\d x)\\
\geq &\int_A \left(P_1(B\mid x)-P_2(B\mid x)\right)\mu(\d x)\\
>&0\,,
\end{align*}
where the first inequality follows from $P_1(B\mid x) \leq P_2(B\mid x)$ for all $x \in A^c$  by construction; and the second inequality follows from $P_1(B\mid x)>P_2(B\mid x)$ for all $x \in A$ and $\mu(A)>0$. Therefore, $\nu_1\neq \nu_2$, and hence $\nu$ is not an extreme point, a contradiction. 

Next, we show that for the (essentially) unique $P$ such that $\nu=P*\mu$, it must be that $P(\,\cdot\,\mid x) \in \Ext(\mathcal{K}_{\delta_x})$ for $\mu$-almost all $x \in X$. To prove this, we identify any $\mu$-measurable Markov kernel $P$ as a joint distribution $\gamma \in \Delta(X \times X)$ whose first marginal is $\mu$. We claim that the set of order-preserving $\mu$-measurable Markov kernels can be identified by 
\[\Gamma_\mu := \Bigg\{\gamma: (\pi_1)_*\gamma = \mu\,,\int h(x)g(y) \gamma(\d x, \d y) \leq \int h(x)g(x) \mu(\d x) \text{ for all } h \in C_+(X), g \in \mathcal{C} \cap C(X)\Bigg\}\,, \]
where $C_+(X)$ is the set of nonnegative continuous functions on $X$ and $(\pi_1)_*\gamma$ denotes the first marginal of $\gamma$.  By inspection, any order-preserving $P$ can be identified by an element in $\Gamma_\mu$. To see the converse, fix any $\gamma \in \Gamma_\mu$. Note that by the monotone convergence theorem and the assumption that any element $g \in \C$ can be approximated pointwise from above by continuous functions in $\C$, we have 
\[\int h(x)g(y) \gamma(\d x, \d y) \leq  \int h(x)g(x) \mu(\d x) \text{ for all } h \in C_+(X), g \in \mathcal{C}\,.\]
By the disintegration theorem, there exists a Markov kernel $Q$ such that $\gamma(\d x, \d y) = Q(\d y \mid x) \mu(\d x)$, and hence 
\[\int h(x)  \Big[\int g(y) Q(\d y \mid x)\Big]  \mu(\d x) \leq  \int h(x)g(x) \mu(\d x) \text{ for all } h \in C_+(X), g \in \mathcal{C}\,.\]
Thus, for any $g \in \C$, we have 
\[\int h(x)  \Big[g(x) - \int g(y) Q(\d y \mid x)\Big]  \mu(\d x) \geq  0 \text{ for all } h \in C_+(X)\,,\]
which implies that 
\[q_g(x) :=  g(x) - \int g(y) Q(\d y \mid x) \geq 0 \text{ for $\mu$-a.e. $x$}\,.\]
Indeed, to see the above, note that $h \mapsto \int h q_g \d \mu$ is a positive linear functional and hence by the Riesz-Markov-Kakutani representation theorem, $q_g(x) \mu(\d x)$ is identified by a unique positive Radon measure, which immediately implies that $q_g(x) \geq 0$ for $\mu$-a.e. $x$. Let $X_g := \{x: q_g(x) \geq 0\}$. Then, $\mu(X_g) = 1$ for all $g \in \C$. Let $\{g_m\}_{m\in\N} \subseteq \C \cap C(X)$ be the countable collection given by \Cref{lem:conti-approx}. Let 
\[X_\infty := \bigcap_{m=1}^\infty X_{g_m}\,.\]
Note that $X_\infty$ is measurable and satisfies $\mu(X_\infty) = 1$. Now, fix any $x \in X_\infty$ and any $g \in \C$. By \Cref{lem:conti-approx}, there exists a sequence $\{g_{m_k}\} \subseteq \{g_m\}_{m\in\N}$ such that $g_{m_k} \geq g$ for all $k$, and $g_{m_k}(x) \rightarrow g(x)$ as $k \rightarrow \infty$. Then, by construction, for any $k \in \N$, we have 
\[\int g(y) Q(\d y \mid x) \leq \int g_{m_k}(y) Q(\d y \mid x) \leq  g_{m_k}(x)\,,\]
where the last inequality uses that $x \in X_\infty$. Taking $k \rightarrow \infty$ then gives that 
\[\int g(y) Q(\d y \mid x) \leq g(x)\,. \]
Since this holds for all $g \in \C$, it follows immediately that 
\[Q * \delta_x \preceq_\C \delta_x \]
for all $x \in X_\infty$. Since $\mu(X_\infty) = 1$, modifying $Q$ on a $\mu$-null set if necessary, we conclude that $Q$ is an order-preserving kernel. 

We equip $\Delta(X \times X)$ with the weak-* topology. Note that $\Delta(X \times X)$ is compact. Moreover, $\Gamma_\mu$ is a closed subset of $\Delta(X \times X)$ and hence compact. Now consider the map $T: \Delta(X \times X) \rightarrow \Delta(X)$ defined by: for all $\gamma \in \Delta(X \times X)$,
\[T(\gamma) := (\pi_2)_* \gamma\,,\]
i.e., the projection to the second marginal of $\gamma$. Clearly, $T$ is a continuous linear map given the weak-* topology. Moreover, note that $T(\Gamma_\mu) = \mathcal{K}_\mu$ by construction and our previous argument. Since $T$ is continuous and $\Gamma_\mu$ is compact, $\mathcal{K}_\mu$ is compact. By Proposition 8.26 of \citet{simon2011convexity}, we conclude that 
\[\Ext(\mathcal{K}_\mu) \subseteq T(\Ext(\Gamma_\mu))\,.\]
For any $\nu \in \Ext(\mathcal{K}_\mu)$, by our previous argument, there exists a unique $\gamma \in \Gamma_\mu$ such that $(\pi_2)_* \gamma = \nu$. Thus, the unique $\gamma \in \Ext(\Gamma_\mu)$. 

It remains to argue that for the unique $\gamma \in \Ext(\Gamma_\mu)$,  its $\mu$-a.e. unique disintegration satisfies $Q(\,\cdot\,\mid x) \in \Ext(\K_{\delta_x})$ for $\mu$-a.e. $x$. Note that, by the disintegration theorem, without loss of generality, we can take $Q$ as a Borel-measurable kernel. Now, suppose for contradiction that the set 
\[S:= \Big\{x\in X: Q(\,\cdot\,\mid x) \not \in \Ext(\mathcal{K}_{\delta_x})\Big\}\]
satisfies $\mu(S)> 0$. Define 
\[\Sigma:= \Big\{(x, \eta_1, \eta_2)\in X \times \Delta(X) \times \Delta(X): \eta_1 \in \K_{\delta_x},\, \eta_2 \in \K_{\delta_x},\, Q(\,\cdot\,\mid x) = \frac{1}{2}\eta_1 + \frac{1}{2}\eta_2,\, \eta_1 \neq \eta_2 \Big\}\,.\]
Since the set 
\[\big\{(x, \eta) \in X \times \Delta(X): \eta \in \K_{\delta_x}\big\} = \bigcap_{g \in \C \cap C(X)} \Big\{(x, \eta): \int g \d \eta \leq g(x)\Big\}\]
is an intersection of closed sets and hence closed; the set 
\[ \Big\{(x, \eta_1, \eta_2): Q(\,\cdot\,\mid x) = \frac{1}{2}\eta_1 + \frac{1}{2}\eta_2\Big\}\]
is Borel since the Markov kernel $Q$ is Borel-measurable; and the set $\{(\eta_1, \eta_2): \eta_1 \neq \eta_2\}$ is open. Therefore, $\Sigma$ is a Borel set. Moreover, Note that 
\[S = \pi_{X}(\Sigma)\,,\]
i.e., the set $S$ is exactly the projection of $\Sigma$, and thus $S$ is an analytic set. By the Jankov-von Neumann uniformization theorem (\citealt{kechris2012classical}, Theorem 18.1), it follows that there exists a universally measurable selection $x \mapsto (\eta_1(x), \eta_2(x))$ mapping from $S$ to $\Delta(X) \times \Delta(X)$. Now, for $i = 1, 2$, let  
\[Q_i(\,\cdot\, \mid x) := \begin{cases}
    \eta_i(x)  &\text{ if } x \in S\,, \\
    Q(\,\cdot\, \mid x)  &\text{ otherwise } \,.\\
\end{cases}\]
By construction, $x \mapsto Q_i(\,\cdot\,\mid x)$ is $\mu$-measurable, and hence $Q_i$'s are $\mu$-measurable Markov kernels. Moreover, $Q_i(\cdot\mid x) \in \mathcal{K}_{\delta_x}$ for $\mu$-a.e. $x$ by construction. Thus, each $Q_i$ is identified by some $\gamma_i \in \Gamma_\mu$. By construction, note that $\frac{1}{2}\gamma_1 + \frac{1}{2}\gamma_2 = \gamma$. Finally, to verify $\gamma_1 \neq \gamma_2$, fix a countable collection of determining functions $\{\psi_n\}$ where $\psi_n \in C(X)$. For each $n$, define 
\[S^+_n:= \Big\{ x \in S: \int \psi_n \d \eta_1(x) > \int \psi_n \d \eta_2(x) \Big\} \,\text{ and }\,
S^-_n:= \Big\{ x \in S: \int \psi_n \d \eta_1(x) < \int \psi_n \d \eta_2(x) \Big\}\,.\]
Since $\{\psi_n\}$ is a determining class, every $x \in S$ belongs to $S^+_n \cup S^-_n$ for some $n$. Hence, $S \subseteq \bigcup_n S^+_n \cup S^-_n$. Since $\mu(S) > 0$, there exists some $n$ such that $\mu(S^+_n) > 0$ or $\mu(S^-_n) > 0$. Without loss of generality, say $\mu(S^+_n) > 0$. Then 
\begin{align*}
&\int \int \ind_{S^+_n}(x) \psi_n(y) \gamma_1(\d x, \d y) - \int \int \ind_{S^+_n}(x) \psi_n(y) \gamma_2(\d x, \d y)\\
=& \int_{S^+_n} \Big[\int \psi_n \d \eta_1(x) - \int \psi_n \d \eta_2(x)\Big] \mu(\d x)\\
>& 0\,,
\end{align*}
and thus $\gamma_1 \neq \gamma_2$. Therefore, $\gamma \not\in \Ext(\Gamma_\mu)$---a contradiction.  \\
\mbox{}

\noindent $(ii)$: Suppose that there exist some continuous $f$ and some order-preserving kernel $P$ such that $\nu = P * \mu$, and $\mu$-a.e. we have $\int f(y) \eta(\d y)= \overline{f}(x) \text{ and } \eta \preceq_\C \delta_x \iff \eta = P_x$. Let $A$ be the set under which the second claim holds. So $\mu(A) = 1$. We claim that $\nu = P* \mu$ must be an exposed point. Fix the objective given by $f$. Suppose for contradiction that $\nu' \neq \nu$ is also optimal under $f$. Then by \Cref{thm:main}, there exists an order-preserving kernel $P'$ such that $\mu$-a.e. $P'_x$ is optimal for the pointwise optimization problem, i.e., there exists a set $B$ with $\mu(B) = 1$ such that for all $x \in B$, we have $P'_x$ is a solution to: 
\[\max_{\eta \preceq_\C \delta_x} \int f(y) \eta(\d y)\,.\]
Clearly, $\mu(A \cap B)=1$ and for all $x \in A\cap B$, we have that 
\[\max_{\eta \preceq_\C \delta_x} \int f(y) \eta(\d y)= \overline{f}(x)\,,\]
and $P'_x = P_x$. But then immediately we have $\nu' = P' * \mu = P * \mu = \nu$, a contradiction.

For the converse, suppose $\nu$ is an exposed point. Let $f$ be any continuous function such that $\int f(x) \nu(\d x)$ exposes $\nu$. Let 
\[S(x):= \argmax_{\eta \preceq_\C \delta_x }\int f(y) \eta(\d y)\,.\]
Note that by \Cref{thm:main}, $\nu$ is optimal for $f$ if and only if 
\[\nu \in \mathcal{S} := \Big\{\nu = P * \mu \text{ and $\mu\Big(P_x \in S(x)\Big) = 1$}   \Big\}\,.\]
Suppose for contradiction that 
\[\mu\Big(|S(x)| > 1\Big) > 0\,.\]
Let $E$ be the above set with $\mu(E) > 0$. Fix a countable set $\{\varphi_n\}_n \subset C(X)$ that separates measures on $\Delta(X)$. Let 
\[R_n(x):= \Big\{\int \varphi_n(y) \eta(\d y): \eta \in S(x)\Big\} \subset \mathbb{R}\,.\]
Since $S(x)$ is compact and $\eta \mapsto \int \varphi_n(y) \eta(\d y)$ is continuous, $R_n(x)$ is compact. Moreover, $R_n(x)$ is also convex, since $S(x)$ is convex. Thus, $R_n(x)$ is a compact interval. 

For any $S(x)$ with $|S(x)| > 1$, there exist two distinct $\eta_1, \eta_2 \in S(x)$. Since $\{\varphi_n\}$ separates measures, there exists some $n$ such that 
\[\int \varphi_n(y) \eta_1(\d y) \neq \int \varphi_n(y) \eta_2(\d y)\,,\]
and hence $|R_n(x)| > 1$. Let 
\[E_n:= \{x: |R_n(x)| > 1\}\,.\]
Then, we have 
\[E \subseteq \bigcup_{n \geq 1} E_n\,,\]
and hence for some $n$, $\mu(E_n) > 0$. Fix such $n$, and let $\varphi := \varphi_n$. 

Let 
\[S^+(x) := \argmax_{\eta \in S(x)} \int \varphi(y) \eta(\d y)\qquad \qquad S^-(x) := \argmin_{\eta \in S(x)} \int \varphi(y) \eta(\d y)\,.\]
Note that each is nonempty compact-valued. By the Kuratowski–Ryll-Nardzewski theorem, there exist measurable selections 
\[\eta^+_x \in S^+(x) \qquad \qquad \eta^-_x \in S^-(x)\,.\]
Note that by construction, for any $x \in E_n$,  we have 
\[\int \varphi(y) \eta^+_x(\d y) - \int \varphi(y) \eta^-_x(\d y) = \int \varphi_n(y) \eta^+_x(\d y) - \int \varphi_n(y) \eta^-_x(\d y) > 0\,.\]
Fix any $P$ such that $\nu = P * \mu$ and $P_x \in S(x)$ $\mu$-a.e. Construct $P^+$ and $P^-$ by  \[P^+_x=\begin{cases}
  \eta^+_x &\text{ if $x \in E_n$} \\
  P_x &\text{ otherwise}\,.
\end{cases} \qquad P^-_x=\begin{cases}
  \eta^-_x &\text{ if $x \in E_n$} \\
  P_x &\text{ otherwise}\,.
\end{cases} \]
It follows immediately that $\nu^+ := P^+ * \mu \preceq_\C \mu$ and $\nu^- := P^- * \mu \preceq_\C \mu$. Moreover, by construction, we also have that $P^+_x \in S(x)$ $\mu$-a.e. and similarly $P^-_x \in S(x)$ $\mu$-a.e. Therefore, $\nu^+$ and $\nu^-$ must be optimal for $f$. Now, note that by construction 
\[\int \varphi(x) \nu^+(\d x) - \int \varphi(x) \nu^-(\d x) = \int_{E_n} \Big(\int \varphi(y) \eta^+_x(\d y) - \int \varphi(y) \eta^-_x(\d y) \Big)\mu(\d x)> 0\,,\]
where the strict inequality follows since $\mu(E_n) > 0$ and on $E_n$, the difference is strictly positive pointwise. Therefore, $\nu^+ \neq \nu^-$---a contradiction. \hfill \qedsymbol

\subsection{Proof of \texorpdfstring{\Cref{prop:mps}}{Proposition 2}}

\noindent $(a) \implies (b)$: Suppose $\nu$ is an exposed point and let $f$ be the continuous function that exposes $\nu$. Let $\mathcal{A}$ be the set of differentiability points of $\overline{f}$, which is a Lebesgue full-measure set in $X$. By  \Cref{prop:exposed}, since $\mu$ is mutually absolutely continuous with the Lebesgue measure, for Lebesgue-almost all $x$, we have that the pointwise problem 
\[\max_{\eta \preceq_{\text{concave}} \delta_x} \int f(y) \eta(\d y) = \overline{f}(x) \,\]
admits a unique solution. We claim that $f$ must have a strict contact---indeed, if there exists a positive measure of $x \in S \cap \mathcal{A}$ such that $x \in \text{conv}(T_{h^{x}}\backslash \{x\})$, then for such a positive measure of $x$, which by definition satisfies $f(x) = \overline{f}(x)$, the pointwise maximization problem admits multiple solutions, contradicting the above. 

Now, we claim that for a full-measure set of $x \in S^c \cap \mathcal{A}$, the affine envelope region $S^c_{h^x}$ must be simplicial-touching. We first note that, by \Cref{lem:ri-contact-ae}, there exists a full-measure set $\mathcal{B}$ in $X$ such that $x$ is in the relative interior of $R_{h^x}$ for all $x \in \mathcal{B}$. 

Now, consider the set 
\[\mathcal{Z}:=\Big\{x \in S^c \cap \mathcal{A} \cap \mathcal{B}: \text{ pointwise problem admits a unique solution at $x$}\Big\}\,.\]
By \Cref{prop:exposed} and that $\mathcal{B}$ has full measure, $\mathcal{Z}$ is a full-measure subset of $S^c \cap \mathcal{A}$. We claim that for all $x \in \mathcal{Z}$, the affine envelope region $S^c_{h^x}$ must be simplicial-touching. To see this, fix any such $x$. Let $h:= h^x$ to simplify the notation. 

Note that by construction, $h \geq \overline{f} \geq f$, and hence 
\[T_h \subseteq R_h\,.\]
Note that $R_h$ is a convex set. Thus, $\text{conv}(T_h) \subseteq R_h$. Now, we show that $R_h \subseteq \text{conv}(T_h)$. Fix any $z \in R_h$. By definition, we have $\overline{f}(z) = h(z)$. Moreover, note that 
\[V^\star_f(\delta_z) = \overline{f}(z) = h(z)\,.\]
By compactness, as argued before, the pointwise problem always admits a solution. In particular, there exists a solution $\eta_z$ to the pointwise problem at $z$. Then, $\eta \preceq_{\text{concave}} \delta_z$ and $\int f \d \eta_z  = \overline{f}(z) = h(z) = \int h \d \eta_z$. Hence, $\eta_z$ must be supported on $T_h$ and have barycenter $z$. Thus, $z \in \text{conv}(T_h)$. Therefore, $R_h \subseteq \text{conv}(T_h)$, proving that $\text{conv}(T_h) = R_h$. 

Note that by construction $x$ is in the relative interior of $R_{h} = \text{conv}(T_h)$. This implies that $T_h$ must be affinely independent, because otherwise at the fixed $x \in \text{conv}(T_h)$, the optimization problem would admit multiple pointwise solutions that are supported on $T_h$. Indeed,if $T_h$ is not affinely independent, by Radon’s Theorem, there exists some point $w \in \text{conv}(T_h)$ that has two distinct barycentric representations. Since $x \in \text{ri}(\text{conv}(T_h))$, there exists some $\lambda \in [0, 1)$ and some $y \in \text{conv}(T_h)$ such that 
\[x = \lambda y + (1- \lambda) w\,.\]
Therefore, the point $x$ must have two distinct barycentric representations supported on $T_h$. Note that each of which would be optimal for the pointwise problem, contradicting the uniqueness property of the pointwise problem at $x$. Therefore, $T_h$ is affinely independent, and hence we have 
\[\Ext(R_h) = \Ext(\text{conv}(T_h))= T_h\,,\]
and hence $R_h$ is a simplex. 

By \Cref{prop:exposed}, there exists an order-preserving coupling $P^f$ such that $\nu = P^f * \mu$ and that $P^f * \delta_x$ is exposed by $f$ for $\mu$-almost all, and hence Lebesgue-almost all, $x \in X$. This implies that $P^f = \delta_x$ for a full-measure set of $x \in S$, and $P^f$ is the unique barycentric splitting for Lebesgue-almost all $x \in \mathcal{Z}$, which is a full-measure set in $S^c$, since, as we have shown, the unique pointwise solution for all $x \in \mathcal{Z}$ is given by the barycentric splitting. Thus, if necessary, modifying the kernel $P^f(\,\cdot\, \mid x)$ to be $\delta_x$ on a $\mu$-null set gives the desired construction, concluding the proof.\\

\noindent $(b) \implies (a)$: Fix any such $f$, $P^f$, and $\nu = P^f * \mu$. We verify the conditions in \Cref{prop:exposed}. Consider the set of differentiability points $\mathcal{A}$ of $\overline{f}$. We claim that for Lebesgue-almost all $x \in \mathcal{A}$, which is a Lebesgue full-measure set in $X$ by concavity of $\overline{f}$, we have 
\[\int f(y) \eta(\d y) = \overline{f}(x) \text{ and } \eta \preceq_{\text{concave}} \delta_x \iff \eta = P^f_x \,.\]
Indeed, consider the full-measure set of $x \in S \cap \mathcal{A}$ for which $x \not \in \text{conv}(T_{h^x} \backslash \{x\})$. Fix any such $x$. Note that $\overline{f}(x) = f(x)$ and hence $\eta = P^f_x = \delta_x$ clearly satisfies the left side; moreover, if there exists some $\eta$ satisfying the left side, then $\eta$ must be supported on $T_{h^x}$ and have barycenter $x$, but then it must be that $\eta = \delta_x$. 

Now, consider the full-measure set of $x \in S^c \cap \mathcal{A}$ such that $S^c_{h^x}$ is simplicial-touching. Fix any such $x$. Let $h := h^x$. Since $S^c_h$ is simplicial-touching, if $\eta = P^f_x$, then by construction we have the left side; conversely, if the left side is satisfied by $\eta$, then $\eta$ must be supported on $T_{h}$, but since $T_h = \Ext(R_h)$, where $R_h$ is a simplex, is affinely independent, there exists a unique mean-preserving splitting of $x$ supported on $T_h$ given by $P^f_x$, and hence $\eta = P^f_x$.

Since the equivalence relation holds almost everywhere on $\mathcal{A}$ and $\mathcal{A}$ is a full-measure set in $X$, by \Cref{prop:exposed}, $\nu$ must be an exposed point. \hfill \qedsymbol

\subsection{Proof of \texorpdfstring{\Cref{prop:fosd}}{Proposition 3}}

\noindent $(a) \implies (b)$: Suppose $\nu$ is an exposed point of $\text{LSD}(\mu)$. Let $f$ be a continuous function such that $\int f \d \nu$ exposes $\nu$. By \Cref{prop:exposed}, there exists an order-preserving kernel $P^f$ such that $\nu = P^f * \mu$, and for $\mu$-a.e. $x \in X$, and hence Lebesgue a.e. $x\in X$ (since $\mu$ is mutually absolutely continuous with Lebesgue measure), we have   
\begin{equation}
    \int f(y) \eta(\d y) = \overline{f}(x) \text{ and } \eta \preceq_{\text{nondecreasing}} \delta_x \iff \eta = P^f_x \,. \label{eq:uniquefosd}
\end{equation}
We claim that $f$ must have a strict monotone contact---indeed, if there exists a positive-Lebesgue-measure set of $x \in S$ such that there exists $y \leq x$ and $y\neq x$ with $f(y) = f(x) = \overline{f}(x)$, then for such a positive measure of $x$, the pointwise maximization problem admits multiple solutions, contradicting the above. 

Now, fix $z \in \text{Ran}(\overline{f})$. Let  
\[A_z := \Big\{x: \text{there exist two distinct $t, t' \in T_z$ with $t\leq x$, $t' \leq x$} \Big\} \cup \Big\{x: \overline{f}(x) > z\Big\}\,.\]
Note that $A_z$ is upward-closed. We claim that 
\[R_z \backslash A_z = \biguplus_{\underline{x} \in T_z}\Big(\big\{\,x: x \geq \underline{x} \,\big\}\, \backslash \,A_z\Big) \,.\]
Indeed, if $x \in R_z \backslash A_z$, then we have $\overline{f}(x) = z = \max_{y \leq x} f(y)$, and hence there exists $t \leq x$ such that $f(t) = z$. Since $t \leq x$, we have \[\overline{f}(t) \leq \overline{f}(x) = z\,,\] 
but with $\overline{f}(t) \geq f(t) = z$, we have $t \in T_z$. Moreover, since $x\not\in A_z$, there exists a unique $t \leq x$ such that $t \in T_z$. Thus, $x$ belongs to exactly one piece on the right-hand side $\big\{\,x: x \geq t \,\big\}\, \backslash \,A_z$. Conversely, if $x \geq t$ for some $t \in T_z$ and $x \not\in A_z$, then 
\[\overline{f}(x) \geq \overline{f}(t) = f(t) = z\,,\]
and moreover, 
\[\overline{f}(x) \leq z\]
since $x \not\in A_z$, and hence 
\[\overline{f}(x) = z\,,\]
and so $x \in R_z \backslash A_z$. This then proves that 
\[R_z \backslash A_z = \biguplus_{\underline{x} \in T_z}\Big(\big\{\,x: x \geq \underline{x} \,\big\}\, \backslash \,A_z\Big) \,,\]
with the disjointness follows from that, as we have shown, any element in the LHS belongs to a unique piece of the RHS. 

Now, fix any $x \in S^c$ such that \eqref{eq:uniquefosd} holds, which is a full-measure set in $S^c$ by \Cref{prop:exposed}. Let $z^x := \overline{f}(x)$. By definition, $x \in R_{z^x}$. We show that $x \not\in A_{z^x}$. Indeed, suppose for contradiction that $x \in A_{z^x}$. Then, since $\overline{f}(x) = z^x$, we must have that there exist two distinct $t, t' \in T_{z^x}$ with $t, t' \leq x$---however, then there exist two distinct solutions to the pointwise problem at $x$---a contradiction. Therefore, $x \in R_{z^x}\backslash A_{z^x}$. Then, by our previous argument, it follows that there exists a unique $\underline{x} \in T_{z^x}$ with $\underline{x} \leq x$. Since $P^f_x$ is the unique pointwise solution at $x$, it must be that $P^f_x = \delta_{\underline{x}}$, which is exactly the unique downward transport of $x$. 

Thus, for a full-measure set of $x \in S^c$, the region $S^c_{z^x}$ is downward staircase-like, $x \in R_{z^x} \backslash A_{z^x}$, and $P^f(\,\cdot\,\mid x)$ is the unique downward transport. Moreover, since $f$ has a strict monotone contact, $P^f(\,\cdot\,\mid x) = \delta_x$ for a full-measure set of $x \in S$. Thus, without loss of generality, we may take $P^f(\,\cdot\,\mid x) = \delta_x$ for all $x \in S$, yielding the desired construction.  \\

\noindent $(b) \implies (a)$: Now suppose $(b)$ holds. Fix any such $f$, $P^f$, and $\nu = P^f * \mu$. We verify the pointwise uniqueness condition in \Cref{prop:exposed}. 

First, fix any $x$ in the full-measure subset of $S$ where strict monotone contact holds. Since $\overline{f}(x) = f(x)$, $\delta_x$ is feasible and optimal. Moreover, any optimal $\eta$ solving the pointwise problem at $x$ must be supported on points $y \leq x$ with $f(y) = f(x)$, which, by strict monotone contact, implies that $\eta = \delta_x$. 

Now, fix any $x$ in the full-measure subset of $S^c$ from statement $(b)$. Let $z^x := \overline{f}(x)$. By assumption, 
\[x \in R_{z^x} \backslash A_{z^x} = \biguplus_{\underline{x} \in T_{z^x}}\Big(\big\{\,x': x' \geq \underline{x} \,\big\}\, \backslash \,A_{z^x}\Big) \,,\]
and hence there exists a unique $\underline{x} \in T_{z^x}$ with $\underline{x} \leq x$. By construction, $P^f(\,\cdot\,\mid x) = \delta_{\underline{x}}$, which is a feasible solution and in fact optimal since $f(\underline{x}) = z^x = \overline{f}(x)$. Moreover, if 
\[\eta \preceq_{\text{nondecreasing}} \delta_x  \text{ and } \int f(y) \eta(\d y) = \overline{f}(x) = z^x\,,\]
then $\eta$ must be supported on points $y \leq x$ with $f(y) = z^x$. For any such $y$, in fact, we have 
\[\overline{f}(y) \leq \overline{f}(x) = z^x \,\text{ and }\, \overline{f}(y) \geq f(y) = z^x\,,\]
we must have 
\[\overline{f}(y) = f(y) = z^x\,,\]
and hence $y \in T_{z^x}$. Hence, $\eta$ must be supported on points of $T_{z^x}$ that are weakly below $x$. However, as we have argued, $\underline{x}$ is the unique such point, and hence $\eta = \delta_{\underline{x}} = P^f(\,\cdot\,\mid x)$. 

Therefore, we conclude that for a full-measure set of $x \in X$, $P^f(\,\cdot\,\mid x)$ is the unique solution to the pointwise problem at $x$ with the objective given by $f$. By \Cref{prop:exposed}, $\nu$ is an exposed point of $\text{LSD}(\mu)$, completing the proof. \hfill \qedsymbol

\subsection{Proof \texorpdfstring{\Cref{thm:main-compare}}{Theorem 3}}
We prove this in the order of 
\[
(a) \implies (b) \implies (c) \implies (d) \implies (e) \implies (a)\,.
\]

\noindent $(a) \implies (b)$: We first show that $\Pc=\Psi(\C)$. First, note that $\Pc \subseteq \Psi(\C)$. Indeed, for any $P \in \Pc$ and for any $x \in X$, since $P(\,\cdot\,\mid x)=P*\delta_x$ and since $(\C,\Pc)$ is Blackwell-consistent, $P(\,\cdot\,\mid x) \succeq_{\C} \delta_x$. Thus, for all $g \in \C$
\[
\int g(y)P(\d y\mid x) \geq g(x)
\]
for all $x \in X$, which implies that $P \in \Psi(\C)$. As a result, 
\[
\nu=P*\mu\,\text{ for some $P \in \Pc$} \implies \nu=P*\mu\,\text{ for some $P \in \Psi(\C)$}\,.
\]

Next, consider any $\mu,\nu \in \Delta(X)$ such that $\nu=P*\mu$ for some $P \in \Psi(\C)$. Since $P \in \Psi(\C)$, we have that $g*P \geq g$ for all $g \in \C$. Therefore, for any $g \in \C$, 
\[
\int g \d \nu=\int g \d (P*\mu)=\int g*P \d \mu \geq \int g \d \mu\,.
\]
That is, $\mu \preceq_{\C} \nu$. As a result, since $(\C,\Pc)$ is Blackwell-consistent, we have 
\[
\nu=P*\mu\,\text{ for some $P \in \Psi(\C)$} \implies \mu \preceq_{\C} \nu \iff \nu=P*\mu\,\text{ for some $P \in \Pc$}\,.
\]
Together, these imply that 
\[
\nu=P*\mu\,\text{ for some $P \in \Pc$} \iff \nu=P*\mu\,\text{ for some $P \in \Psi(\C)$}\,.
\]
Furthermore, by \Cref{lem:basic}, $\Psi(\C)$ is rectangular. Also, $\Pc$ is rectangular as it is regular. Therefore, by \Cref{lem:rectangle}, we conclude that $\Pc=\Psi(\C)$.  

Next, we show that $\C=\Phi(\Pc)$. Since $\Pc=\Psi(\C)$, \Cref{lem:basic} implies that $\C \subseteq \Phi(\Psi(\C))=\Phi(\Pc)$. We show that $\Phi(\Pc) \subseteq \C$. By the proof of the ``only if'' part of \Cref{thm:main2}, since the order $\preceq_\C$ is Blackwell-consistent, $\C$ must be max-closed. Now, fix any $g \in \Phi(\Pc)$. Let 
\[\widehat{g} = - \overline{(-g)}^{[-\C]}\,,\]
i.e., $-\widehat{g}$ is the $[-\C]$-envelope of $-g$. Note that 
\[-\widehat{g}(x) = \inf\big\{h(x): h \in [-\C]\,,\, h \geq - g\big\} = -\sup\big\{k(x): k \in \C\,,\, k \leq g\big\}\,.\]
Note that under our regularity on $\C$, the convex cone $[-\C]$ satisfies all the assumptions in \Cref{sec:abstract}. Moreover, since $\C$ is max-closed, $[-\C]$ is min-closed. By \Cref{lem:envelope}, $\overline{(-g)}^{[-\C]} \in -\C$ and hence $\widehat{g} \in \C$. We now argue that $\widehat{g} = g$. Indeed, consider the maximization problem 
\[V^\star_{-g}(\delta_x) := \max_{\eta \preceq_{[-\C]} \delta_x} \int [-g](x) \eta(\d x)\,.\]
Since $[-\C]$ is min-closed, by $(a) \implies (b)$ direction of \Cref{thm:main}, we have the following strong duality: for any $x \in X$, we have 
\[\max_{\eta \preceq_{[-\C]} \delta_x} \int [-g](y) \eta(\d y) = V^\star_{-g}(\delta_x) = -\widehat{g} (x) = \min_{h \in [-\C],\, h \geq -g} h(x)\,.\]
For any $x \in X$, note that 
\[\eta \in \Psi(\C)_{x} \iff \int k(y) \eta(\d y) \geq k(x) \,\,\,\forall k \in \C \iff \delta_x \preceq_{\C} \eta \iff \eta \preceq_{[-\C]} \delta_x\,.\]
Therefore, since $g\in \Phi(\Pc) = \Phi(\Psi(\C))$, for any $x$, and any $P \in \Psi(\C)$, we have 
\[\int g(y) P(\d y \mid x) \geq g(x)\,,\]
which is equivalent to that for any $x$, we have 
\[\inf_{P \in \Psi(\C)} \int g(y) P(\d y \mid x) \geq g(x)\,,\]
which, in turn, is equivalent to 
\[\inf_{\eta \in \Psi(\C)_x} \int g(y) \eta(\d y) \geq g(x)\,.\]
For any $x$, since $\delta_x \in \Psi(\C)_x$ by definition, the above holds if and only if 
\[\min_{\eta \in \Psi(\C)_x} \int g(y) \eta(\d y) = g(x)\,,\]
which holds, by our previous observation, if and only if 
\[\min_{\eta \preceq_{[-\C]} \delta_x} \int g(y) \eta(\d y) = g(x)\,.\]
Thus, for any $x$, we obtain
\[-\widehat{g}(x) = \max_{\eta \preceq_{[-\C]} \delta_x} \int [-g](y) \eta(\d y) = - \min_{\eta \preceq_{[-\C]} \delta_x} \int g(y) \eta(\d y) = -g(x)\,,\]
and hence $\widehat{g}(x) = g(x)$. Thus, $g = \widehat{g} \in \C$. Hence, $\Phi(\Pc) \subseteq \C$, as desired. \\
\mbox{}

\noindent $(b) \implies (c)$: Since $(\C,\Pc)$ is Blackwell-invariant, we have $\C=\Phi(\Pc)$. It thus suffices to show that $\Pc$ is one-shot sufficient. Indeed, by Blackwell invariance, $\Pc =\Psi(\C)$, which by \Cref{lem:basic}, is composition-closed and hence one-shot sufficient. \\
\mbox{}

\noindent $(c) \implies (d)$: Since $\Pc$ is regular, $\mathrm{Id} \in \Pc$, and hence $\Pc \subseteq \Pc \otimes \Pc$. To see that $\Pc \otimes \Pc \subseteq \Pc$, consider $Q \in \Pc \otimes \Pc$ and suppose that, by way of contradiction, $Q \notin \Pc$. Since $Q \notin \Pc$ and $\Pc$ is rectangular, there exists $x \in X$ such that $Q(\,\cdot\,\mid x) \notin \Pc_x$. Fix such $x$. Since $\Pc$ has a closed graph, $\Pc_x$ is closed. Since $\Pc_x$ is a closed convex set in $M(X)$, the set of finite signed Borel measures under the weak-* topology, which is a locally convex topological vector space, by the strict Hahn–Banach separation theorem (see, e.g., \citealt{lukevs2009integral}, Theorem A.1), there exists a continuous function $h \in C(X)$ (since $C(X) = [M(X)]^*$) such that 
\[
\int h(y) Q(\d y\mid x) > \sup_{\mu \in \Pc_x}\int h(y) \mu(\d y)\,.
\]
By definition, note that we have 
\[
\sup_{\mu \in \Pc_x} \int h(y) \mu (\d y)= \sup_{\mu \in \{P(\,\cdot\, \mid x):\, P \in \Pc\}} \int h(y) \mu (\d y) =\sup_{P \in \Pc} \int h(y) P(\d y \mid x)\,.
\]
Meanwhile, since $Q \in \Pc \otimes \Pc$, we have 
\[
\int h(y) Q(\d y\mid x) \leq \sup_{P \in \Pc \otimes \Pc} \int h(y) P(\d y\mid x)\,.
\]
Together, these imply that 
\[
\sup_{P \in \Pc} \int h(y) P(\d y\mid x)<\int h(y) Q(\d y\mid x) \leq \sup_{P \in \Pc \otimes \Pc} \int h(y) P(\d y\mid x)\,,
\]
contradicting $\Pc$ being one-shot sufficient. Therefore, we have $\Pc=\Pc \otimes \Pc$. \\
\mbox{}

\noindent $(d) \implies (e)$: By \Cref{lem:basic}, $\C =\Phi(\Pc)$ is max-closed. We show that $\Psi(\C)=\Pc$. By the rectangularity property of $\Psi(\C)$ and $\Pc$, it suffices to show $\preceq^{\Psi(\C)} = \preceq^\Pc$ by \Cref{lem:rectangle}. To show this, we will show that 
\[ \preceq^{\Psi(\C)} \implies \preceq_{\C} \implies \preceq^\Pc \implies \preceq^{\Psi(\C)}\,.\]

To see $\preceq^{\Psi(\C)} \implies \preceq_{\C}$, note that if $\mu \preceq^{\Psi(\C)} \nu$, then there exists some $P \in \Psi(\C)$ such that $\nu = P* \mu$. By the definition of $\Psi(\C)$, for all $g \in \C$ and all $x \in X$, we have 
\[\int g(y) P(\d y \mid x) \geq g(x)\,.\]
Thus, for all $g \in \C$, we have 
\[\int g \d \nu = \int g \d (P * \mu) = \int \Big(\int g(y) P(\d y \mid x) \Big) \mu(\d x)\geq \int g \d \mu \,,\]
and hence $\mu \preceq_\C \nu$. 

To see $\preceq^\Pc \implies \preceq^{\Psi(\C)}$, note that $\Pc \subseteq \Psi(\Phi(\Pc))) = \Psi(\C)$ by \Cref{lem:basic}. Thus, for any $\mu, \nu \in \Delta(X)$ where  $\nu = P * \mu$ for some $P \in \Pc$, we immediate have that  $\nu = P * \mu$ for some $P \in \Psi(\C)$. Thus, $\preceq^\Pc \implies \preceq^{\Psi(\C)}$. 

Thus, to complete the argument, it suffices to show $\preceq_{\C}  \implies \preceq^{\Pc}$. To this end, fix any $\mu,\nu \in \Delta(X)$ such that $\mu \preceq_\C \nu$ and let 
\[
M(\mu):=\{P*\mu: P \in \Pc\}\,.
\]
Suppose for contradiction that $\nu \notin M(\mu)$. By \Cref{lem:compactm}, $M(\mu) \subseteq \Delta(X)$ is compact and convex. Since $M(\mu)$ is compact, $M(\mu)$ is closed. Since $M(\mu)$ is a closed convex set in the space of finite signed Borel measures equipped with the weak-* topology, by the strict Hahn–Banach separation theorem (see, e.g., \citealt{lukevs2009integral}, Theorem A.1), there exists a continuous linear functional that strictly separates $\{\nu\}$ from $M(\mu)$. In particular, there exists a continuous function $f \in C(X)$ such that 
\[
\int f \d \nu<\inf_{\eta \in M(\mu)} \int f \d \eta = \inf_{P \in \Pc} \int f \d(P*\mu)= \int \Big(\inf_{\eta \in \Pc_x} \int f \d \eta\Big) \mu(\d x) = \int \Big(\min_{\eta \in \Pc_x} \int f \d \eta\Big) \mu(\d x)\,,
\]
where we have also used the rectangularity of $\Pc$, the Kuratowski–Ryll-Nardzewski measurable selection theorem, and the fact that $f$ is continuous and $\Pc_x$ is compact for all $x \in X$ (which we have shown previously).

Now, for any $x \in X$, let 
\[
\underline{f}(x) := \min_{\eta \in \Pc_x} \int f(y) \eta(\d y)\,. 
\]
We claim that $\underline{f}$ is lower semicontinuous. Indeed, the correspondence $x \mapsto \Pc_x$ has a closed graph and hence is upper hemicontinuous. The objective is given by $- \max_{\eta \in \Pc_x} \int -f(y) \eta(\d y)$ where $-f$ is continuous. Thus, by Berge's maximum theorem, $-\underline{f}$ is upper semicontinuous, and hence $\underline{f}$ is lower semicontinuous. So $\underline{f}\in \C_0$.

We claim that $\underline{f} \in \Phi(\Pc)$. Indeed, for any $P \in \Pc$ and any $x \in X$, we have 
\[
\int \underline{f}(y) P(\d y\mid x) = \int \Big(\min_{\eta \in \Pc_y} \int f(z) \eta(\d z) \Big) P(\d y\mid x)  =  \min_{Q \in \Pc} \int  f(y)  (Q \circ P)(\d y\mid x) 
\]
where the second equality again uses that $\Pc$ is rectangular and the Kuratowski–Ryll-Nardzewski measurable selection theorem. Moreover, for any $P \in \Pc$ and any $x \in X$, note that 
\[\min_{Q \in \Pc} \int  f(y)  (Q \circ P)(\d y\mid x)  \geq \inf_{Q \in \Pc \otimes  \Pc} \int  f(y)  Q(\d y\mid x) = \inf_{Q \in \Pc } \int  f(y)  Q(\d y\mid x) = \underline{f}(x)\,,\]
where the first equality uses that $\Pc$ is composition-closed and the second equality uses that the $\inf$ can be attained by the continuity of $f$ and compactness of $\Pc_x$ with the value given by $\underline{f}(x)$. Together, these imply that for any $P \in \Pc$ and any $x \in X$, 
\[\int \underline{f}(y) P(\d y\mid x) \geq \underline{f}(x)\,,\]
and hence $\underline{f} \in \Phi(\Pc)$. As a result, since $\mu \preceq_{\Phi(\Pc)} \nu$, we must have 
\[
\int \underline{f} \d \mu \leq \int \underline{f} \d \nu\,.
\]
Moreover, since $\mathrm{Id} \in \Pc$, we have $\underline{f} \leq f$ by construction. Together, these imply that 
\begin{align*}
\int \underline{f} \d \nu \leq \int f \d \nu
 < \int \underline{f} \d \mu \leq \int \underline{f} \d \nu\,,
\end{align*}
where the strict inequality was shown earlier using the fact that the separating linear functional is given by $f$---a contradiction. Therefore, $\preceq_{\C} \implies \preceq^\Pc$. 

Together, these imply that 
\[
\preceq^{\Psi(\C)}= \preceq_{\C} = \preceq^{\Pc}\,.
\]
Then, since $\Psi(\C)$ is rectangular by \Cref{lem:basic} and $\Pc$ is rectangular as it is regular, by \Cref{lem:rectangle}, it follows that $\Pc=\Psi(\C)$, as desired. \\
\mbox{}

\noindent $(e)\implies (a)$: Suppose $\C$ is max-closed. We claim that $\preceq_{\C}=\preceq^{\Psi(\C)}$. By the same argument as in the proof of $(d)\implies (e)$, we have that $\preceq^{\Psi(\C)} \implies \preceq_{\C}$. Now, to see the other direction, fix any $\mu \preceq_\C \nu$. Then, $\nu  \preceq_{[-\C]} \mu$ where the convex cone $[-\C]$ is min-closed. By \Cref{thm:main},\footnote{Note that the technical assumptions required by \Cref{thm:main} hold for $[-\C]$ given that $\C$ is regular.} there exists some $P$ such that $\nu = P * \mu$, and $P * \delta_x \preceq_{[-\C]} \delta_x$ for all $x \in X$. Thus, for all $x \in X$ and all $g \in \C$, we have 
\[ \int [-g(y)]P(\d y \mid x) \leq -g(x)\,,\]
and thus 
\[ \int g(y) P(\d y \mid x) \geq g(x)\,.\]
Therefore, $P \in \Psi(\C)$ by definition, which in turn implies that $\mu \preceq^{\Psi(\C)} \nu$, by definition. Thus, $\preceq_{\C} \implies \preceq^{\Psi(\C)}$. Hence, we have 
\[\preceq_{\C} = \preceq^{\Psi(\C)}\,.\]
Since $\Pc=\Psi(\C)$, it follows $\preceq_{\C}  = \preceq^{\Pc}$. Hence, $(\C,\Pc)$ is Blackwell-consistent.\\
\mbox{}

\noindent \textbf{Uniqueness:} We first show that every Blackwell-consistent pair $(\C, \Pc)$ is maximal in the following sense: $\preceq_{\C'}=\preceq_{\C} \implies \C' \subseteq \C$ for any $\C' \subseteq \C_0$, and $\preceq^{\Pc'}=\preceq^{\Pc} \implies \Pc' \subseteq \Pc$ for any $\Pc' \subseteq \Pc_0$. 

Fix any Blackwell-consistent pair $(\C, \Pc)$. By $(a) \implies (b)$ direction, we know that $(\C, \Pc)$ is Blackwell-invariant. Thus, we have 
\[\Pc = \Psi(\C)\,\,\text{ and }\,\, \C = \Phi(\Pc)\,.\]

First, fix any $\C'\subseteq \C_0$ such that $\preceq_{\C'} = \preceq_\C$. Fix any $g \in \C'$. We claim that $g \in \C$. Indeed, suppose for contradiction $g \not\in \C$. Since $g \in \C_0$ and $\C = \Phi(\Pc)$, for $g \not \in \C$, there must exist some $P \in \Pc$ and some $x \in X$ such that 
\[ \int g(y) P(\d y \mid x) < g(x)\,.\]
By construction, $\delta_x \preceq^\Pc P * \delta_x$. Moreover, by construction, $\delta_x \not \preceq_{\C'} P * \delta_x$, which implies that $\delta_x \not \preceq_{\C} P * \delta_x$---contradicting to $\preceq^\Pc = \preceq_\C$. This shows that $\C$ is maximal. 

Now, we show that $\Pc$ is maximal. Fix any $\Pc'\subseteq \Pc_0$ such that $\preceq^{\Pc'} = \preceq^\Pc$. Fix any $P \in \Pc'$. We claim that $P \in \Pc$. Indeed, suppose for contradiction $P \not\in \Pc$. Since $P \in \Pc_0$ and $\Pc = \Psi(\C)$, for $P \not \in \Pc$, there must exist some $g \in \C$ and some $x \in X$ such that 
\[ \int g(y) P(\d y \mid x) < g(x)\,.\]
By construction, $\delta_x \preceq^{\Pc'} P * \delta_x$, which implies that  $\delta_x \preceq^{\Pc} P * \delta_x$. Moreover, by construction, $\delta_x \not \preceq_{\C} P * \delta_x$---contradicting to $\preceq^\Pc = \preceq_\C$. This shows that $\Pc$ is maximal. 

Now, we prove the unique representation result. Fix any Blackwell-consistent order $\preceq$. Let $(\C, \Pc), (\C', \Pc')$ be two (regular) Blackwell-consistent pairs supporting the order $\preceq$, i.e.,
\[\preceq_{\C'} = \preceq_\C = \preceq = \preceq^\Pc = \preceq^{\Pc'}\,.\] 
By the previous claim about the maximality of $(\C, \Pc)$, we have $\C' \subseteq \C$ and $\Pc' \subseteq \Pc$. Now, applying the same claim to the pair $(\C', \Pc')$ gives $\C \subseteq \C'$ and $\Pc \subseteq \Pc'$. Therefore, we conclude $(\C', \Pc') = (\C, \Pc)$. Thus, there exists a unique pair $(\C, \Pc)$ that represents the Blackwell-consistent order $\preceq$.\hfill \qedsymbol

\subsection{Proof of \texorpdfstring{\Cref{prop:comparison}}{Proposition 4}}

Suppose that the constrained problem is not value-eliminating. Fix any prior $x$, under which the unconstrained optimal signal is unique. Clearly, if $\nu_{\text{KG}} = \delta_x$, then we immediately have that $\nu^\star = \delta_x$ is the unique solution for the constrained problem since $\delta_x$ is feasible. Thus, suppose otherwise.  Then, we must have $\widehat{f}^{\text{cav}}(x) > f(x)$. Let 
\[S = \text{conv}\big(\supp(\nu_{\text{KG}})\big) \backslash \supp(\nu_{\text{KG}})
\,.\]
Fix any constrained optimal $\nu^\star$. We claim that $\nu^\star(S) = 0$. Suppose for contradiction that $\nu^\star(S) > 0$. Then by \Cref{cor:envelope}, there exists some $y \in S$ such that 
\[\widehat{f}^{\C}(y) = f(y)\]
which by the assumption that the constrained problem is not value-eliminating implies  
\[\widehat{f}^{\text{cav}}(y) = \widehat{f}^{\C}(y) =f(y)\,.\]
Now, note that since $\nu_{\text{KG}}$ is finitely supported and assigns strictly positive mass to every point in $\supp(\nu_{\text{KG}})$, its barycenter $x$ is in the relative interior of $\text{conv}(\supp(\nu_{\text{KG}}))$. Then, given the above, $\nu_{\text{KG}}$ cannot be uniquely optimal since we can perturb $\nu_{\text{KG}}$ by constructing 
\[\nu' := \nu_{\text{KG}} + \varepsilon \delta_{y} - \varepsilon \sum_{i} \lambda_i \delta_{x^{\text{KG}}_i}\,,\]
for small enough $\varepsilon > 0$, where $\{x^{\text{KG}}_i\}_i := \supp(\nu_{\text{KG}})$ and $\lambda$ is the barycentric representation of $y$ supported on $\supp(\nu_{\text{KG}})$. Indeed, for small enough $\varepsilon > 0$, $\nu'$ is a feasible signal distinct from $\nu_{\text{KG}}$, and, since $\widehat{f}^{\text{cav}}(y) = f(y)$ and $y \in S$, the perturbed signal weakly improves the objective value and hence is optimal---a contradiction. Thus, $\nu^\star(S) = 0$. 

However, for $\nu^\star$ to be Blackwell-dominated by $\nu_{\text{KG}}$, it must be that 
\[\nu^\star\Big(\text{conv}\big(\supp(\nu_{\text{KG}})\big)\Big) = 1\,,\]
which then combined with $\nu^\star(S) = 0$ implies that 
\[\nu^\star\Big(\supp(\nu_{\text{KG}})\Big) = 1\,,\]
but since $\nu^\star$'s barycenter is also $x$, we must have $\nu^\star = \nu_{\text{KG}}$ given that $\nu_{\text{KG}}$ is the unique solution to the unconstrained problem. Thus, $\nu^\star$ cannot be strictly Blackwell-dominated by $\nu_{\text{KG}}$. Now, consider the case of $|\Omega| = 2$. Since $\nu^\star(S) = 0$, we have that 
\[\supp(\nu^\star) \subseteq [0, 1] \backslash (x^{\text{KG}}_1, x^{\text{KG}}_2)\,,\]
where $\{x^{\text{KG}}_1, x^{\text{KG}}_2\}$ is the support of $\nu^{\text{KG}}$. We claim that $\nu^\star$ weakly Blackwell-dominates $\nu^{\text{KG}}$---indeed, this follows by observing that we can fuse the mass of   $\nu^\star$ supported on $[0, x^{\text{KG}}_1]$ into a single point $x^\star_1$ and the mass of $\nu^\star$ supported on $[x^{\text{KG}}_2, 1]$ into a single point $x^\star_2$. The fused distribution weakly Blackwell-dominates $\nu_{\text{KG}}$ and is weakly Blackwell-dominated by $\nu^\star$. Thus, the result holds. \hfill \qedsymbol

\subsection{Proof of \texorpdfstring{\Cref{prop:divisible}}{Proposition 5}}
Note that $\cQ^d$ is rectangular by definition. Moreover, since $\hat{d}(x_B,x_D,x_B)=x_D$ and $\Pc_M$ contains the identity kernel, $\cQ^d$ contains the identity kernel as well. Lastly, since $d$ is continuous and $\Pc_M$ has a closed graph and is convex, $\cQ^d$ has a closed graph and is convex as well. Therefore, $\cQ^d$ is regular for any $d \in \mathcal{D}$. \\

\noindent $ (\impliedby) $: Suppose that $d$ is divisible. By \Cref{thm:main-compare}, it suffices to prove that $\cQ^d$ is composition-closed and that $\C^d=\Phi(\cQ^d)$ is regular, since these then imply that $(\C^d,\cQ^d)$ is a regular pair, and moreover by \Cref{thm:main-compare}, a Blackwell-consistent pair---and thus, $\preceq^{\cQ^d}$ is a Blackwell-consistent order. 

To see that $\cQ^d$ is composition-closed, consider any $Q_1,Q_2 \in \cQ^d$. Fix any $(x_B,x_D)$. By definition, $Q_1(\,\cdot\, \mid x_B,x_D)$ is the distribution of 
\[
(Y,\hat{d}(x_B,x_D,Y))\,,
\]
where $Y \sim P_1^{(x_B,x_D)}(\,\cdot\,\mid x_B)$ for some $P_1^{(x_B,x_D)} \in \Pc_M$. Moreover, $Q_2 \circ Q_1(\,\cdot\,\mid x_B,x_D)$ is the distribution of 
\[
\left(Z,\hat{d}\left(Y,\hat{d}(x_B,x_D,Y),Z\right)\right)\,,
\]
where $Z \sim P_2^{(x_B,x_D)}(\,\cdot\,\mid y)$ conditional on $Y=y$ for some $P^{(x_B,x_D)}_2 \in \Pc_M$. Note that, for any $y,z \in X^0$, since $d$ is divisible, by letting 
\[
\tilde{y}:=\frac{x_D \oslash x_B \odot  y}{\langle x_D \oslash x_B, y\rangle}\,\, \text{ and }\,\, \tilde{z}:=\frac{x_D \oslash x_B \odot  z}{\langle x_D \oslash x_B, z\rangle}\,,
\]
we have
\begin{align*}
\hat{d}(y,\hat{d}(x_B,x_D,y),z)=&d\left(d\left(x_D,\, \frac{x_D\oslash x_B \odot y}{\langle x_D\oslash x_B,y\rangle}\right),\frac{d\left(x_D,\,\frac{x_D\oslash x_B \odot y}{\langle x_D\oslash x_B,y\rangle}\right)\oslash y \odot z}{\langle d\left(x_D,\,\frac{x_D\oslash x_B \odot y}{\langle x_D\oslash x_B,y\rangle}\right)\oslash y,z\rangle}\right)\\
=&d\left(d(x_D,\tilde{y}),\frac{d(x_D,\tilde{y})\oslash \tilde{y} \odot \tilde{z}}{\langle d(x_D,\tilde{y})\oslash \tilde{y}\,,\,\tilde{z} \rangle}\right)\\
=&d_{\mathrm{II}}(x_D,\tilde{z}\,;\,\tilde{y})\\
=&d(x_D,\,\tilde{z})\\
=&d\left(x_D, \, \frac{x_D \oslash x_B \odot  z}{\langle x_D \oslash x_B, z\rangle}\right)\\
=&\hat{d}(x_B,x_D,z)\,.
\end{align*}
This also extends to all $y,z \in X$ by continuity of $d$. Moreover, since $Z \sim P_2^{(x_B,x_D)}(\,\cdot\, \mid y)$ conditional on $Y=y$, since $Y \sim P_1^{(x_B,x_D)}(\,\cdot\,\mid x_B)$, and $P_1^{(x_B,x_D)},P_2^{(x_B,x_D)} \in \Pc_M$,
\[
\E[Z]=\int_X \left(\int_X z\, P_2^{(x_B,x_D)}(\d z\mid y)\right)P_1^{(x_B,x_D)}(\d y\mid x_B)=\int_X y P_1^{(x_B,x_D)}(\d y\mid x_B)=x_B\,.
\]
As a result, $Q_2 \circ Q_1$ is the distribution of 
\[
(Z,\hat{d}(x_B,x_D,Z))\,,
\]
for some random variable $Z$ with $\E[Z]=x_B$, and hence $Q_2\circ Q_1(\,\cdot\,\mid x_B,x_D) \in \cQ^d_{(x_B,x_D)}$. Since $(x_B,x_D)$ is arbitrary, we have that $Q_2 \circ Q_1  \in \cQ^d$, and thus $\cQ^d$ is composition-closed.

To see that $\C^d=\Phi(\cQ^d)$ is regular. Note that by \Cref{lem:basic}, $\C^d$ is a convex cone that contains $1,-1$ and is closed under bounded increasing pointwise limits, with $\C^d \cap C(X)$ being closed under the sup-norm. Thus, it remains to prove that $\C^d$ can be approximated by continuous functions in $\C^d$ from below. Note that by Theorem 1 of \citet{cripps2018divisible}, there exists a homeomorphism $F:X \to X$ such that for all $x,y \in X^0$,
\[
d(x,y)=F^{-1}\left(\frac{F(x) \oslash x \odot y}{\langle F(x) \oslash x,y \rangle}\right)\,.
\]
Recall that $d^\mathrm{B}(x,y)=y$ denotes the Bayesian updating rule. It follows that 
\[
F(\hat{d}(x_B,x_D,y))=\hat{d}^\mathrm{B}(x_B,F(x_D),y)\,,
\]
for all $x_B,x_D,y \in X^0$. 

Let $\xi:X\times X \to X \times X$ be defined as 
\[
\xi(x_B,x_D):=(x_B,F(x_D))\,,
\]
for all $(x_B,x_D) \in X \times X$. Since $F$ is a homeomorphism, $\xi$ is a homeomorphism as well. We claim that for all $g \in \C_0$, 
\[
g \in \C^d \iff g \circ \xi^{-1} \in \C^{d^\mathrm{B}}\,,
\]
where $\C^{d^\mathrm{B}} = \Phi(\cQ^{d^\mathrm{B}})$. Indeed, note that, for any $g \in \C^d$, for any $Q \in \cQ^d$, and for any $(x_B,x_D) \in X \times X$, letting $\eta \in (\Pc_M)_{x_B}$ be the distribution with barycenter $x_B$ that defines $Q(\,\cdot\,\mid x_B,x_D)$, we can write 
\begin{align*}
\int g(y_B,y_D) Q(\d y_B, \d y_D\mid x_B,x_D)= \int g(y,\hat{d}(x_B,x_D,y))\eta(\d y)= \int g \circ \xi^{-1} (y,\hat{d}^{\mathrm{B}}(x_B,F(x_D),y))\eta ( \d y)\,.
\end{align*}
Thus, $g\in \C^d$ if and only if for all $(x_B, x_D) \in X \times X$, we have 
\[\min_{\delta_{x_B} \preceq_{\text{convex}} \eta}\int g \circ \xi^{-1} (y,\hat{d}^{\mathrm{B}}(x_B,F(x_D),y))\eta ( \d y) = g(x_B, x_D) = g \circ \xi^{-1} (x_B, F(x_D))\,,\]
which, since $F$ is a homeomorphism, is equivalent to that for all $(x_B, x'_D) \in X \times X$, we have 
\[\min_{\delta_{x_B} \preceq_{\text{convex}} \eta}\int g \circ \xi^{-1} (y,\hat{d}^{\mathrm{B}}(x_B,x'_D,y))\eta ( \d y) = g \circ \xi^{-1} (x_B, x'_D)\,,\]
which is equivalent to $g \circ \xi^{-1} \in \Phi(\cQ^{d^\mathrm{B}}) = \C^{d^\mathrm{B}}$, as desired. 

Therefore, since $\xi$ is a homeomorphism, it suffices to show that every $g \in \C^{d^{\mathrm{B}}}$ can be approximated by a sequence of continuous functions from below. To prove this, let
\[
\ell_{x_B,x_D}(y):=\hat{d}^{\mathrm{B}}(x_B,x_D,y)\,,
\]
for all $(x_B,x_D,y) \in X \times X \times X$. 
Since $\hat{d}^{\mathrm{B}}:X \times X\times X\to X$ is continuous and since $X$ is compact, 
\[
\ell_{x_B,x_D}:X\to X
\]
is continuous for all $(x_B,x_D) \in X \times X$. Let $C(X,X)$ be the set of continuous functions from $X$ to $X$ endowed with the sup metric
\[
d_\infty(\ell_1,\ell_2):=\sup_{y\in X}\|\ell_1(y)-\ell_2(y)\|\,.
\]
Since $X$ is compact metric space, $(C(X,X),d_\infty)$ is also a metric space. 

Now, let $\Gamma: X \times X \to C(X,X)$ be defined as
\[
\Gamma(x_B,x_D):=\ell_{x_B,x_D}
\]
for all $(x_B,x_D) \in X \times X$. We claim that $\Gamma$ is continuous. Indeed, suppose that there is a sequence $\{(x_B^n,x_D^n)\}$ that converges to some $(x_B,x_D)$ in $X\times X$. Since $\hat{d}^{\mathrm{B}}$ is continuous on the compact set $X \times X \times X$, it is uniformly continuous. Therefore, we have  
\[
\lim_{n \to \infty }\sup_{y\in X}
\left\|
\hat{d}^{\mathrm{B}}(x_B^n,x_D^n,y)-\hat{d}^{\mathrm{B}}(x_B,x_D,y)
\right\|=0\,,
\]
and hence 
\[
\lim_{n \to \infty} d_\infty(\ell_{x_B^n,x_D^n},\ell_{x_B,x_D})=0\,.
\]
Thus, $\Gamma$ is continuous. Let
\[
\Lambda:= \Big\{\ell_{x_B,x_D}:(x_B,x_D)\in X\times X \Big\}
\subset C(X,X).
\]
Since $X\times X$ is compact and $\Gamma$ is continuous, its image $\Lambda$ is compact in $C(X,X)$ with the sup metric. In particular, $\Lambda$ is compact and metrizable. 

Now note that since $d^{\mathrm{B}}$ is a divisible updating rule, as shown before, we have 
\[
\hat{d}^{\mathrm{B}}(y,\hat{d}^{\mathrm{B}}(x_B,x_D,y),z)=\hat{d}^{\mathrm{B}}(x_B,x_D,z)\,,
\]
for all $x_B,x_D,y, z \in X$. It then follows that 
\[
[\ell_{y,\ell_{x_B,x_D}(y)}](z)=[\ell_{x_B,x_D}](z)\,.
\]
Thus, for any $\ell \in \Lambda$, we have $\ell=\ell_{x,\ell(x)}$ for all $x \in X$. As a result, for $\Theta: X \times X \to X \times \Lambda$ defined as
\[
\Theta(x_B,x_D):=(x_B,\ell_{x_B,x_D})\,,
\]
we have that $\Theta$ is a homeomorphism. Indeed, $\Theta$ is surjective since for any $(x,\ell) \in X \times \Lambda$, $\Theta(x,\ell(x)) =(x,\ell_{x, \ell(x)}) = (x, \ell)$, where $(x,\ell(x)) \in X \times X$. Meanwhile, $\Theta$ is injective because
\[
\ell_{x_B,x_D}(x_B)=x_D\,,
\]
and hence $\Theta(x_B, x_D) = \Theta(x'_B, x'_D) \implies (x_B, x_D) = (x'_B, x'_D)$. Together, since $\Theta$ is continuous, $\Theta: X \times X \to X \times \Lambda$ is a homeomorphism. 

Note that for all $x_B,x_D,y \in X$
\[
\Theta(y,\hat{d}^{\mathrm{B}}(x_B,x_D,y))= (y, \ell_{y,\hat{d}^{\mathrm{B}}(x_B,x_D,y)}) = (y, \ell_{y,\ell_{x_B, x_D}(y)}) = (y,\ell_{x_B,x_D})\,,
\]
where the last equality uses the divisibility of $d^{\mathrm{B}}$. 

Now, let $\mathcal{U}$ be the set of bounded lower semicontinuous functions $u:X \times \Lambda \to \R$ such that $x \mapsto u(x,\ell)$ is convex for all $\ell \in \Lambda$. We claim that 
\[
g \in \C^{d^{\mathrm{B}}} \iff g \circ \Theta^{-1} \in \mathcal{U}\,,
\]
or, equivalently 
\[
u \in \mathcal{U} \iff u \circ \Theta \in \C^{d^{\mathrm{B}}}\,.
\]
Indeed, for any $g \in \C^{d^{\mathrm{B}}}$, since $g$ is bounded and lower semicontinuous and $\Theta$ is continuous, we have that $g \circ \Theta^{-1}$ is bounded and lower semicontinuous. Moreover, note that $g\in \C^{d^{\mathrm{B}}}$ if and only if for all $(x_B, x_D) \in X \times X$ we have 
\[\min_{\delta_{x_B} \preceq_{\text{convex}} \eta}\int g(y, l_{x_B, x_D}(y)) \eta ( \d y) = \min_{\delta_{x_B} \preceq_{\text{convex}} \eta}\int g(y,\hat{d}^{\mathrm{B}}(x_B,x_D,y)) \eta ( \d y) = g(x_B, x_D)\,.\]
Note that for all $(x_B, x_D) \in X \times X$, we can write  
\[\int g(y, l_{x_B, x_D}(y)) \eta ( \d y) = \int g\circ \Theta^{-1} \circ \Theta(y, l_{x_B, x_D}(y)) \eta ( \d y) = \int g\circ \Theta^{-1}(y,\ell_{x_B,x_D}) \eta ( \d y)\,,\]
and write 
\[g(x_B, x_D) = g \circ \Theta^{-1} \circ \Theta(x_B, x_D) = g \circ \Theta^{-1}(x_B,\ell_{x_B,x_D})\,.\]
Thus, $g\in \C^{d^{\mathrm{B}}}$ if and only if for all $(x_B, x_D) \in X \times X$
\[\min_{\delta_{x_B} \preceq_{\text{convex}} \eta} \int g\circ \Theta^{-1}(y,\ell_{x_B,x_D}) \eta ( \d y) =  g \circ \Theta^{-1}(x_B,\ell_{x_B,x_D})\,.\]
It is immediate that if for any $\ell \in \Lambda$, $x \mapsto g\circ\Theta^{-1}(x,\ell)$ is convex, then the above holds for all $(x_B, x_D) \in X \times X$. Conversely, suppose the above holds for all $(x_B, x_D) \in X \times X$. Suppose for contradiction that there exists some $\ell \in \Lambda$ such that $x \mapsto g\circ\Theta^{-1}(x,\ell)$ is not convex. Then, there exists some $x_B \in X$ such that 
\[\min_{\delta_{x_B} \preceq_{\text{convex}} \eta} \int g\circ \Theta^{-1}(y, \ell) \eta ( \d y) < g \circ \Theta^{-1}(x_B, \ell)\,.\]
By definition of $\Lambda$, there exists some $(x'_B, x'_D) \in X \times X$ such that $\ell = \ell_{x'_B, x'_D}$. Now, consider 
\[x_D := \hat{d}^{\mathrm{B}}(x'_B, x'_D, x_B)\,.\]
Note that, by divisibility of $\hat{d}^{\mathrm{B}}$, for all $y \in X$, we have 
\[ \ell_{x_B, x_D}(y) = \hat{d}^{\mathrm{B}}(x_B, \hat{d}^{\mathrm{B}}(x'_B, x'_D, x_B), y) = \hat{d}^{\mathrm{B}}(x'_B, x'_D, y) =    \ell_{x'_B, x'_D}(y) = \ell(y)\,.\]
Thus, $\ell_{x_B, x_D} = \ell$. Hence, we have 
\[\min_{\delta_{x_B} \preceq_{\text{convex}} \eta} \int g\circ \Theta^{-1}(y, \ell_{x_B, x_D}) \eta ( \d y) < g \circ \Theta^{-1}(x_B, \ell_{x_B, x_D})\,,\]
a contradiction. This proves that 
\[
g \in \C^{d^{\mathrm{B}}} \iff g \circ \Theta^{-1} \in \mathcal{U}\,.
\]

Since $\Theta$ is a homeomorphism, to show that any $g \in \C^{d^{\mathrm{B}}}$ can be approximated by continuous functions in $\C^{d^{\mathrm{B}}}$ from below, it suffices to show that this holds for all $u \in \mathcal{U}$. Recall that $u \in \mathcal{U}$ if and only if 
\[u: X \times \Lambda \rightarrow \R\]
is bounded lower semicontinuous and satisfies that $x \mapsto u(x, \ell)$ is convex for all $\ell$. Since $X \subseteq \R^n$ is a compact, convex, metrizable subset of a locally convex Hausdorff topological
vector space, and since $\Lambda$ is a metric space, by \Cref{lem:piecewiseconvexapprox}, any $u \in \mathcal{U}$ can be approximated from below by a sequence of continuous functions $\{u_n\} \subseteq \mathcal{U}$, as desired. \\
\mbox{}

\noindent $ (\implies) $: Suppose that $\preceq^{\cQ^d}$ is a Blackwell-consistent order. Then, by \Cref{thm:main-compare}, there exists a Blackwell-consistent regular pair $(\widetilde{\C}^d, \widetilde{\cQ}^d)$ such that $\preceq^{\cQ^d}=\preceq_{\widetilde{\C}^d}=\preceq^{\widetilde{\cQ}^d}$ where $\widetilde{Q}^d$ is composition-closed. Moreover, by the uniqueness part of \Cref{thm:main-compare}, it follows that $\cQ^d=\widetilde{\cQ}^d$ and thus $\cQ^d$ is composition-closed. 

Now, suppose for contradiction that $d$ is not divisible. Then, there exist $x,y,z \in X^0$, such that $d_{\mathrm{II}}(x,z;y)\neq d(x,z)$. Since $d(x',x')=x'$ for all $x' \in X$, it must be that $x\neq y$ and $y \neq z$. Moreover, since $x,y,z \in X^0$, there exist $\varepsilon, \xi, \zeta>0$ such that 
\[
y^{x,\varepsilon}:=\frac{x-\varepsilon y}{1-\varepsilon}\,, \quad \quad z^{x,\xi}:=\frac{x-\xi z}{1-\xi}\,,\quad \quad \mbox{ and } z^{y,\zeta}:=\frac{y-\zeta z}{1-\zeta} 
\]
are in $X^0$, and that $\varepsilon y +(1-\varepsilon)y^{x,\varepsilon}=x$, $\xi z+(1-\xi)z^{x,\xi}=x$, and $\zeta z+(1-\zeta)z^{y,\zeta}=y$. 

Now, let  
\[
P_1(\,\cdot\,\mid \tilde{x}):=\begin{cases}
\frac{1}{2}\left(\varepsilon \delta_{y}+(1-\varepsilon)\delta_{y^{x,\varepsilon}}\right)+\frac{1}{2} \left(\xi \delta_{z}+(1-\xi)\delta_{z^{x,\xi}} \right),&\mbox{if } \tilde{x}=x\\
\delta_{\tilde{x}},&\mbox{otherwise }
\end{cases}\,,
\]
and 
\[
P_2(\,\cdot\,\mid \tilde{x}):=\begin{cases}
\zeta \delta_{z}+(1-\zeta)\delta_{z^{y,\zeta}},&\mbox{if } \tilde{x}=y\\
\delta_{\tilde{x}},&\mbox{otherwise }
\end{cases}\,.
\]
By construction, $P_1,P_2 \in \Pc_M$. Then, let $Q_1:=Q^{P_1,d}$ and $Q_2:=Q^{P_2,d}$ where $Q^{P, d}$ is defined as the kernel that sends 
\[(x_B, x_D) \mapsto (Y, \hat{d}(x_B, x_D, Y))\]
where $Y \sim P(\,\cdot\,\mid x_B)$. By construction, $Q_1, Q_2 \in \cQ^d$.

Moreover, by construction, $Q_2 \circ Q_1(\,\cdot\,\mid x,x)$ must assign at least probability $\nicefrac{\varepsilon}{2}\cdot\zeta>0$ on $\{(z,d_{\mathrm{II}}(x,z;y))\}$ through the belief path of $x \mapsto y \mapsto z$. Since $d_{\mathrm{II}}(x,z;y)\neq d(x,z)$, we then have 
\[
Q_2 \circ Q_1\Big(\Big\{(\tilde{x}_B,\tilde{x}_D):\tilde{x}_D=d(x,\tilde{x}_B)\Big\}\mid x,x\Big)\leq 1-\frac{\varepsilon}{2}\cdot\zeta<1\,.
\]
However, since $\cQ^d$ is composition-closed, $Q_2 \circ Q_1 \in \cQ^d$. Hence, $Q_2 \circ Q_1(\,\cdot\,\mid x,x)$ must be the distribution of $(Z,d(x,Z))$, where $Z \sim P(\,\cdot\,\mid x)$ for some $P \in \Pc_M$. In particular, it must be that 
\[
Q_2 \circ Q_1\Big(\Big\{(\tilde{x}_B,\tilde{x}_D):\tilde{x}_D=d(x,\tilde{x}_B)\Big\}\mid x,x\Big)=1\,,
\]
a contradiction. Thus, $d$ must be divisible. \hfill \qedsymbol

\subsection{Proof of \texorpdfstring{\Cref{prop:ambiguity}}{Proposition 6}}
$(i)$: Consider any ambiguity-averse preference $\preceq $ defined by representation $\mathcal{V}$. Let $ \widetilde{\preceq} $ and $\widetilde{V}$ be the implied preference relation over lotteries, and its representation, respectively. That is
\[
\mu_1  \widetilde{\preceq}  \mu_2 \iff A_\C(\mu_1) \preceq  A_\C(\mu_2)\,,
\]
for all $\mu_1, \mu_2 \in \Delta(X)$; 
and 
\[
\widetilde{V}(\mu):=\mathcal{V}(A_\C(\mu))\,,
\]
for all $\mu \in \Delta(X)$. 

We first prove the ``if'' part. Suppose that $\C$ is min-closed. By \Cref{thm:main}, for all $\mu \in \Delta(X)$, we have 
\begin{align*}
\widetilde{V}(\mu) &= \alpha \cdot \max_{\nu: \nu \preceq_\C \mu} \int u(x) \nu(\d x) - (1 - \alpha) \cdot \max_{\nu: \nu \preceq_\C \mu} \int -u(x) \nu(\d x)    \\
 &= \alpha \int \overline{u}(x) \mu(\d x) - (1 - \alpha) \cdot \int \overline{-u}(x) \mu(\d x) \\
  &= \int \underbrace{\Big[\alpha\cdot  \overline{u}(x) - (1 - \alpha)\cdot \overline{-u}(x)\Big]}_{\widehat{u}(x)}  \mu(\d x)\,,
\end{align*}
which is an affine function of $\mu$, and hence an expected utility representation.

For the ``only if'' part, suppose that $\mathcal{C}$ is not min-closed. By \Cref{thm:main}, there exists $f:X \to \R$ such that $V_f^\star$ is not affine. Note that $V^\star_f$ corresponds to a representation $\widetilde{V}$ when $u=f$ and $\alpha=1$. Since $V^\star_f$ is not affine, it is not an expected utility representation.  \\
\mbox{}

\noindent $(ii)$: Suppose that $\C$ is strongly non-min-closed. Let $f \in C(X)$ and $\mu_0 \in \Delta(X)$ that is absolutely continuous with a density bounded away from zero, be such that 
\[
\min_{g \in \C,\, g \geq f} \int g \d \mu_0
\]
has two distinct solutions $g_1,g_2$ that are not affinely related. We first show that for this $f$, $V_f^\star(\mu)$ cannot be written as an increasing transformation of an expected utility. That is, there does not exist a strictly increasing function $\phi:\R \to \R$ and a bounded upper semicontinuous $u:X \to \R$ such that 
\[
V_f^\star(\mu)=\phi\left(\int u \d \mu\right)
\]
for all $\mu \in \Delta(X)$. To this end, suppose the contrary and let $\phi$ and $u$ be such functions. We first claim that, for any supergradient $g$ of $V_f^\star$ at $\mu_0$, we have $g \in \mathrm{span}\{u,1\}$ $\mu_0$-a.e. Indeed, consider any finite signed measure $\eta \ll \mu_0$ whose Radon–Nikodym derivative $\d \eta / \d \mu_0$ is essentially bounded with $\eta(X)=\int u \d \eta=0$. Then, there exists $\varepsilon>0$ such that $\mu^\varepsilon_1:=\mu_0+\varepsilon \eta \in \Delta(X)$ and $\mu^\varepsilon_2:=\mu_0-\varepsilon \eta \in \Delta(X)$. Since $\int u \d \eta=0$, we have 
\[
\int u \d \mu^\varepsilon_1=\int u \d \mu_0= \int u \d \mu^\varepsilon_2\,,
\]
and hence, for each $i \in \{1,2\}$,
\[
V_f^\star(\mu^\varepsilon_i)=\phi\left(\int u \d \mu^\varepsilon_i\right)=\phi\left(\int u \d \mu_0\right)=V_f^\star(\mu_0)\,.
\]
Thus, since $g$ is a supergradient of $V_f^\star$ at $\mu_0$, 
\[
V_f^\star(\mu_0)=V_f^\star(\mu_i^\varepsilon)\leq V_f^\star(\mu_0)+\int g \d (\mu_i^\varepsilon-\mu_0)
\]
for $i \in \{1,2\}$. Therefore, 
\[
\int g \d (\mu_1^\varepsilon-\mu_0)=\varepsilon \int g \d \eta \geq 0 
\]
and 
\[
\int g \d (\mu_2^\varepsilon-\mu_0)=-\varepsilon \int g \d \eta \geq 0\,.
\]
Together, we have $\int g \d \eta=0$ for all supergradient $g$ of $V_f^\star$ at $\mu_0$ and for all such $\eta$ satisfying $\int 1 \d \eta=\int u\d\eta=0 $. Now, consider the linear map $T$ that maps from 
\[M^\infty(\mu_0):= \Bigg\{\eta \in M(X): \eta \ll \mu_0,\,\, \frac{\d \eta}{\d \mu_0} \in L^\infty(\mu_0) \Bigg\}\]
to $\R^2$, defined as 
\[
T(\eta):=\left(\int 1 \d \eta, \int u \d \eta\right)\,,
\]
and also consider the linear functional 
\[
\psi_g(\eta):= \int g \d \eta\,.
\] 
We then have that $\mathrm{ker}(T) \subseteq \mathrm{ker}(\psi_g)$. It follows that $\psi_g = l \circ T$ for some linear $l$. Therefore, there exist $a,b \in \R$ such that for all $\eta \in M^\infty(\mu_0)$, 
\[
\int g \d \eta=\langle (a,b), T(\eta) \rangle=\int (bu+a) \d \eta\,. 
\]
Thus, we have $g=bu+a$, and hence $g \in \mathrm{span}\{u,1\}$, $\mu_0$-a.e. 

In the meantime, by the same argument as the proof of \Cref{thm:main} $(a) \implies (b)$, the duality gap is zero, and thus 
\[
V_f^\star(\mu)=\min_{g \in \C,\, g \geq f} \int g \d \mu
\]
for all $\mu \in \Delta(X)$. In particular, 
\[
\int g_1 \d\mu_0=\int g_2 \d \mu_0=V_f^\star(\mu_0)\,.
\]
Thus, for any $\mu \in \Delta(X)$, and for each $i \in \{1,2\}$, since 
\[
\int f \d \nu \leq \int g_i \d \nu \leq \int g_i \d \mu
\]
for all $\nu \preceq_\C \mu$, it must be that for all $\mu \in \Delta(X)$ and for $i \in \{1,2\}$,
\[
V_f^\star(\mu) \leq \int g_i \d\mu=V_f^\star(\mu_0)+\int g_i \d (\mu-\mu_0)\,. 
\]
That is, both $g_1$ and $g_2$ are supergradients of $V_f^\star$ at $\mu_0$. Therefore, by the previous argument, we have that $g_1, g_2 \in \mathrm{span}\{u,1\}$ $\mu_0$-a.e., which implies that $g_1, g_2$ are affinely related since $g_1$ and $g_2$ are continuous functions and $\mu_0$ has full support---contradicting the assumption that $g_1$ and $g_2$ are not affinely related. 

Now, consider the preference $\widetilde{\preceq}$ defined by 
\[
\mu_1 \widetilde{\preceq} \mu_2 \iff V_f^\star(\mu_1) \leq V_f^\star(\mu_2)\,,
\]
for all $\mu_1,\mu_2 \in \Delta(X)$. Then, since $V^\star_f$ corresponds to a representation $\widetilde{V}$ when $u=f$ and $\alpha=1$. Since $V^\star_f$ cannot be written as a monotone transformation of an affine function, it does not represent an expected utility preference \hfill \qedsymbol

\setlength\bibsep{2pt}
\bibliographystyle{ecta} 
\bibliography{references}

\end{document}